\newcommand{\BU}{\mathbf{U}}
\newcommand{\bal}{\bm{\alpha}}
\newcommand{\BA}{\mathbf{A}}
\newtheorem{thm}{Theorem}[section]
\newtheorem {asp}{Assumption}[section]
\newtheorem{lm}{Lemma}[section]
\newtheorem{rmk}{Remark}[section]
\newtheorem{pron}{Proposition}[section]
\newtheorem{deff}{Definition}[section]
\newtheorem{lem}{Lemma}[section]
\newtheorem{prop}{Proposition}[section]
\theoremstyle{definition}
\theoremstyle{remark}
\numberwithin{equation}{section}
\DeclareMathOperator{\suppo}{supp}
\DeclareMathOperator{\Conv}{Conv}
\newcommand{\eps}{\varepsilon}
\newcommand{\M}{\mathcal{M}}
\newcommand{\F}{\mathcal{F}}
\newcommand{\E}{\mathbb{E}}
\newcommand{\BE}{\mathbf{E}}
\newcommand{\BB}{\mathbf{B}}
\newcommand{\BD}{\mathbf{D}}
\newcommand{\BX}{\mathbf{X}}
\newcommand{\BW}{\mathbf{W}}
\newcommand{\BN}{\mathbf{N}}
\newcommand{\bx}{\mathbf{x}}
\newcommand{\BY}{\mathbf{Y}}
\newcommand{\by}{\mathbf{y}}
\newcommand{\BZ}{\mathbf{Z}}
\newcommand{\ba}{\mathbf{a}}
\newcommand{\bb}{\mathbf{b}}
\newcommand{\bc}{\mathbf{c}}
\newcommand{\bp}{\mathbf{p}}
\newcommand{\bz}{\mathbf{z}}
\newcommand{\bu}{\mathbf{u}}
\newcommand{\N}{\mathbb{N}}
\newcommand{\PP}{\mathbb{P}}
\newcommand{\K}{\mathcal{K}}
\newcommand{\R}{\mathbb{R}}
\newcommand{\Z}{\mathbb{Z}}
\newcommand{\U}{\mathcal{U}}
\newcommand{\Se}{\mathcal{S}}
\newcommand{\wtd}{\widetilde}
\numberwithin{equation}{section}
\newcommand{\op}{{\mathcal L}}
\newcommand{\bed}{\begin{displaymath}}
\newcommand{\eed}{\end{displaymath}}
\newcommand{\bea}{\bed\begin{array}{rl}}
\newcommand{\eea}{\end{array}\eed}
\newcommand{\barray}{\begin{array}{ll}}
\newcommand{\earray}{\end{array}}
\newcommand{\diag}{{\rm diag}}
\newcommand{\1}{\boldsymbol{1}}
\newcommand{\0}{\boldsymbol{0}}
\newcommand{\bdelta}{\boldsymbol{\delta}}
\newcommand{\dist}{\mathrm{dist}}
\def\bar{\overline}
\def\hat{\widehat}
\def\a.s{\text{\;a.s.\;}}
\title[A general theory of coexistence and extinction]{A general
theory of coexistence and extinction for stochastic ecological
communities}
\author[A. Hening]{Alexandru Hening }
\address{Department of Mathematics\\
Tufts University\\
Bromfield-Pearson Hall\\
503 Boston Avenue\\
Medford, MA 02155\\
United States
}
\email{alexandru.hening@tufts.edu}
\author[D. Nguyen]{Dang H. Nguyen }
\address{Department of Mathematics \\
University of Alabama\\
345 Gordon Palmer Hall\\
Box 870350 \\
Tuscaloosa, AL 35487-0350 \\
United States}
\email{dangnh.maths@gmail.com}
\author[P. Chesson]{Peter Chesson}
\address{Department of Ecology and Evolutionary Biology\\
The University of Arizona\\
Tucson, AZ\\
United States
}
\email{pchesson@email.arizona.edu}
\keywords{Population dynamics; environmental fluctuations; stochastic difference equations;
stochastic differential equations; environmental fluctuations; auxiliary variables; coexistence; extinction}
\subjclass[2010]{92D25, 37H15, 60H10, 60J05, 60J99, 60F99}
    \newglossaryentry{Z}{name=\ensuremath{\mathbb{Z_+}},
        description={The set of positive integers.},
        type=symbolslist}
\newglossaryentry{X(t)}{name=\ensuremath{\mathbf{X}(t)},
        description={The species densities $(X_1(t),\dots,X_n(t))$ at time $t$.},
        type=symbolslist}
    \newglossaryentry{R_+^n}{name=\ensuremath{\mathbb{R}_+^n},
        description={The positive orthant $[0,\infty)^n$.},
        type=symbolslist}
\newglossaryentry{X_i}{name=\ensuremath{X_i(t)},
        description={The density of species $i$ at time $t$.},
        type=symbolslist}
\newglossaryentry{Y}{name=\ensuremath{\mathbf{Y}(t)},
        description={The value of the auxiliary variable at time $t$.},
        type=symbolslist}
\newglossaryentry{BZ}{name=\ensuremath{\BZ(t)},
description={The process $(\BX(t),\BY(t))$ at time $t$.},
type=symbolslist}
\newglossaryentry{xi}{name=\ensuremath{\xi(t)},
        description={Random variable describing the environment at time $t$ (continuous time case) or on the interval $[t,t+1)$ (discrete time case).},
        type=symbolslist}
\newglossaryentry{S}{name=\ensuremath{\Se},
        description={The state space $\R_+^n\times \R^{\kappa_0}$ of the process $\BZ$.},
    type=symbolslist}
\newglossaryentry{Fi}{name=\ensuremath{F_i},
        description={The fitness functions of species $i$.},
    type=symbolslist}
    \newglossaryentry{G}{name=\ensuremath{G},
        description={Function describing the dynamics of the auxiliary variable.},
    type=symbolslist}
    \newglossaryentry{V}{name=\ensuremath{V},
        description={Lyapunov function coming up in Assumption \ref{a:main} and in Assumption \ref{a:sde}.},
    type=symbolslist}
    \newglossaryentry{cTx}{name=\ensuremath{\bc^\top\bx},
        description={The scalar product $\bc^\top\bx = \sum_{i} c_ix_i$.},
    type=symbolslist}
    \newglossaryentry{constants}{name=\ensuremath{\gamma_1,\gamma_3, C, \rho},
        description={Strictly positive constants coming up in Assumption \ref{a:main}.},
    type=symbolslist}
     \newglossaryentry{h}{name=\ensuremath{h},
        description={The function $$h(\bz,\xi)=\left(\max_{i=1}^n \left\{\max
\left\{F_i(\bz,\xi),
\frac{1}{F_i(\bz,\xi)}\right\}\right\}\right)^{\gamma_3}$$ coming up in Assumption \ref{a:main}.},
    type=symbolslist}
     \newglossaryentry{aob}{name=\ensuremath{a \circ b},
        description={ $a \circ b := (a_1b_1, a_2b_2,\dots, a_nb_n)$.},
    type=symbolslist}
    \newglossaryentry{Se_0}{name=\ensuremath{\Se_0},
        description={ The extinction set $\Se_0:=\{(\bx,\by)\in\Se~:~\min_i x_i=0\}$, where at leaste one of the $n$ species is extinct.},
    type=symbolslist}
    \newglossaryentry{Se_+}{name=\ensuremath{\Se_+},
        description={ The persistence set $\Se_+:=\Se\setminus\Se_0$.},
    type=symbolslist}
    \newglossaryentry{Se_eta}{name=\ensuremath{\Se_\eta},
        description={ $\Se_\eta:=\{(\bx,\by)\in\Se~:~\min_i x_i\leq \eta\}$.},
    type=symbolslist}
    \newglossaryentry{P}{name=\ensuremath{\PP},
        description={ The probability measure.},
    type=symbolslist}
     \newglossaryentry{E}{name=\ensuremath{\E},
        description={ The expectation under the probability measure $\PP$.},
    type=symbolslist}
    \newglossaryentry{Ez}{name=\ensuremath{\E_\bz},
        description={ The expectation under the probability measure $\PP_\bz$.},
    type=symbolslist}
     \newglossaryentry{Pz}{name=\ensuremath{\PP_\bz},
        description={ $\PP_\bz(\cdot)=\PP(\cdot~|~\BZ(0)=\bz)$.},
    type=symbolslist}
\newglossaryentry{delta}{name=\ensuremath{\delta_\bz},
        description={ The Dirac mass at $\bz$.},
    type=symbolslist}
    \newglossaryentry{tPi_t(B)}{name=\ensuremath{\widetilde\Pi_t(B)},
        description={ The random occupation measure of a Borel set $B$: in discrete time $\widetilde\Pi_t(B):=\frac{1}{t}\sum_{s=1}^t\delta_{\BZ(s)}(B)$ and in continuous time $\tilde \Pi_t(B) = \frac{1}{t}\int_0^t \1_{\{\BZ(s)\in B\}}\,ds$.},
    type=symbolslist}
  \newglossaryentry{Pi_t(B)}{name=\ensuremath{\Pi_{t,\bz}(B)},
        description={ The occupation measure $\Pi_{t,\bz}(B):=\E_{\bz}\widetilde\Pi_t(B)$.},
    type=symbolslist}
\newglossaryentry{Bb}{name=\ensuremath{\mathcal{B}_b},
        description={ The set $\mathcal{B}_b:=\{H:\Se\to\R~|~H~\text{Borel, bounded}\}$ of bounded Borel functions.},
    type=symbolslist}
\newglossaryentry{Cb}{name=\ensuremath{\mathcal{C}_b},
        description={ The set  $\mathcal{C}_b:=\{H:\Se\to\R~|~H~\text{continuous, bounded}\}$ of bounded continuous functions.},
    type=symbolslist}
\newglossaryentry{Pp}{name=\ensuremath{P},
        description={ The transition operator of the discrete-time process $\BZ(t)$.},
    type=symbolslist}
\newglossaryentry{P_t}{name=\ensuremath{P_t},
        description={ The $t$-transition operator or semigroup operator that acts on functions $H\in \mathcal{B}_b$ as  $P_t(H)(\bz) = \E_\bz(H(\BZ(t))$.},
    type=symbolslist}
\newglossaryentry{Se^i}{name=\ensuremath{\Se^i},
        description={The set $\Se^i:=\{\bz=(\bx,\by)\in\Se~:~x_i>0\}$ is the subset of the state space where species $i$ persists.},
    type=symbolslist}
\newglossaryentry{Se(mu)}{name=\ensuremath{\Se(\mu)},
        description={ The species supported by the ergodic measure $\mu$, i.e. $\Se(\mu)=\{1\leq i\leq
n~:~\mu(\Se^i)=1\}$.},
    type=symbolslist}
\newglossaryentry{r_i(mu)}{name=\ensuremath{r_i(\mu)},
        description={  The expected per-capita growth rate of species $i$ when
introduced in the community described by $\mu$. In the discrete-time case $r_i(\mu) =\int_{\Se}\E[\log F_i(\bz,\xi(1))]\,\mu(d\bz),$ while in the continuous case $r_i(\mu) =
\int_{\Se}\left(f_i(\bz)-\frac{\sigma_{ii}g_i^2(\bz)}{2}\right)\mu(d\bz)$.},
    type=symbolslist}
\newglossaryentry{fi}{name=\ensuremath{f_i},
        description={ The drift of the dynamics of species $i$ is $X_if_i(\BX)\,dt$ in the continuous time setting.},
    type=symbolslist}
\newglossaryentry{gi}{name=\ensuremath{g_i},
        description={ The diffusion term of the dynamics of species $i$ is $X_ig_i(\BX)\,dE^i$ in the continuous time setting.},
    type=symbolslist}
\newglossaryentry{BM}{name=\ensuremath{\BE, \BW, \BB},
        description={ Brownian motions on $\R^n, \R^{\kappa_0}, \R^{n+\kappa_0}$.},
    type=symbolslist}
\newglossaryentry{M}{name=\ensuremath{\M},
        description={ The set of ergodic invariant probability measures supported on $\Se_0$.},
    type=symbolslist}
\newglossaryentry{M1}{name=\ensuremath{\M^1},
        description={ The set of ergodic invariant probability measures supported on $\partial \R_+^{n}$ and which are attractors: $\M^1:=\left\{\mu\in\M : \mu ~~\text{satisfies Assumption} ~~\ref{a.extnn}\right\}$.  This is in the continuous time setting without an auxiliary variable.},
    type=symbolslist}
\newglossaryentry{M2}{name=\ensuremath{\M^2},
        description={ The set of ergodic invariant probability measures supported on $\partial \R_+^{n}$ and which are not attractors: $\M^2:=\M\setminus\M^1$. This is in the continuous time setting without an auxiliary variable.},
    type=symbolslist}
    \newglossaryentry{R_+^mu}{name=\ensuremath{\R_+^\mu},
        description={ The set $\R_+^\mu:=\{(x_1,\dots,x_n)\in\R^n_+: x_i=0\text{ if } i\in \Se(\mu)^c\}$. This is in the continuous time setting without an auxiliary variable.},
    type=symbolslist}
        \newglossaryentry{R_+^mu_circ}{name=\ensuremath{\R_+^{\mu,\circ}},
        description={ The set $\R_+^{\mu,\circ}:=\{(x_1,\dots,x_n)\in\R^n_+: x_i=0\text{ if } i\in \Se(\mu)^c\text{ and }x_i>0\text{ if  }x_i\in \Se(\mu)\}$. This is in the continuous time setting without an auxiliary variable.},
    type=symbolslist}
    \newglossaryentry{partial R_+^mu_circ}{name=\ensuremath{\partial\R_+^{\mu}},
        description={ The set $\partial\R_+^{\mu}:=\R_+^\mu\setminus\R_+^{\mu,\circ}$. This is in the continuous time setting without an auxiliary variable.},
    type=symbolslist}
 \newglossaryentry{Conv M}{name=\ensuremath{\Conv(\M)},
        description={ The set of invariant probability measures supported on $\Se_0$.},
    type=symbolslist}
     \newglossaryentry{Gamma}{name=\ensuremath{\Gamma_\bz},
        description={ The set $\Gamma_\bz:=\{\tilde
\bz\in\Se~|~\tilde \bz~\text{is accessible from}~\bz\}$ of points $\tilde \bz$ such that for every neighborhood $U$ of $\tilde \bz$ there is $t\geq 0$ for which $P_t(\bz, U)>0$.},
    type=symbolslist}
\newglossaryentry{phi}{name=\ensuremath{\phi},
        description={A function that comes up in Assumption \ref{a:ext} which is needed for our extinction results.},
    type=symbolslist}
\newglossaryentry{Se^I}{name=\ensuremath{\Se^I},
        description={For a nonempty subset $I\subset \{1,\dots,n\}$ we define
\newline $\Se^I:=\{(\bx,\by)\in\Se~|~x_i=0, i\notin I\}$. $\Se^I$ is the subspace in which all species not in $I$ are absent, and some or all species from $I$ are present. The set $\Se^I$ represents a subcommunity where we can define persistence and extinction sets relative to that subcommunity.},
  type=symbolslist}
\newglossaryentry{Se^empty}{name=\ensuremath{\Se^\emptyset},
description={Let $\Se^\emptyset = \{(0,\by)\in\Se\}$ be the set where all species are extinct.},
type=symbolslist}
\newglossaryentry{Se_0^I}{name=\ensuremath{\Se_0^I},
description={If we restrict the
process to $\Se^I$ then the extinction set, where at least one species from $I$ is extinct, is given by
$\Se_0^I:=\{\bz\in \Se^I~|~\prod_{j\in I}x_j=0\}$.},
type=symbolslist}
\newglossaryentry{Se_+^I}{name=\ensuremath{\Se_+^I},
description={If we restrict the
process to $\Se^I$ then the
persistence set, where all species from $I$ persist, is given by $\Se_+^I:=\Se^I\setminus \Se_0^I$.},
type=symbolslist}
\newglossaryentry{M^I}{name=\ensuremath{\M^I},
description={$\M^I:=\{\mu\in\M~|~\mu(\Se^I)=1\}$ is the set of ergodic measures supported on the subspace $\Se^I$.},
type=symbolslist}
\newglossaryentry{M^{I,+}}{name=\ensuremath{\M^{I,+}},
description={$\M^{I,+}:=\{\mu\in\M~|~\mu(\Se_+^I)=1\}$ be the set of ergodic measures supported on the subspace $\Se^{I}_+$.},
type=symbolslist}
\newglossaryentry{M^{I,partial}}{name=\ensuremath{\M^{I,\partial}},
description={$\M^{I,\partial}:=\{\mu\in\M~|~\mu(\Se_0^I)=1\}$ is the set of ergodic probability measures supported on $\Se^I_0$. },
type=symbolslist}
\newglossaryentry{Uomega}{name=\ensuremath{\U=\U(\omega)},
description={The
(random) set of weak$^*$-limit points of $(\widetilde\Pi_t)_{t\in \N}$.},
type=symbolslist}
\newglossaryentry{E1}{name=\ensuremath{E_1},
description={The set of subsets of $\{1,\dots,n\}$ such that if $I\in E_1$ then the invariant measures living on $\Se_+^I$ are attractors. See Definition \ref{a.extn}.},
type=symbolslist}
\newglossaryentry{E2}{name=\ensuremath{E_2},
description={The set $E_2:=\mathcal{P}(n)\setminus E_1$.},
type=symbolslist}
\newglossaryentry{Pn}{name=\ensuremath{\mathcal{P}(n)},
description={The set of all subsets of $\{1,\dots,n\}$.},
type=symbolslist}
\newglossaryentry{L}{name=\ensuremath{\mathcal{L}},
description={The infinitesimal generator of the process $\BZ(t)$ in the continuous time setting.},
type=symbolslist}
 \newglossaryentry{constants2}{name=\ensuremath{\gamma_4, \gamma_5, C_4},
        description={Strictly positive constants coming up in Assumption \ref{a:sde}.},
    type=symbolslist}
  \newglossaryentry{Delta}{name=\ensuremath{\Delta},
        description={The set  $\Delta:=\{\bx\in
\R^n_+~|~\sum_i x_i=1\}$.},
    type=symbolslist}
    \newglossaryentry{Gammag}{name=\ensuremath{\Gamma, \Sigma, (\sigma_{ij})},
        description={$\Gamma$ is a $(n+{\kappa_0})\times (n+{\kappa_0})$ matrix such that
$\Gamma^\top\Gamma=\Sigma=(\sigma_{ij})_{(n+{\kappa_0})\times (n+{\kappa_0})}$. The matrix $\Sigma$ encodes the covariance structure of the Brownian motions from the continuous time setting \eqref{e:system}.},
    type=symbolslist}
\begin{document}
\begin{abstract}

We analyze a general theory for coexistence and extinction of
ecological communities that are influenced by stochastic
temporal environmental fluctuations. The results apply to
discrete time (stochastic difference equations), continuous time
(stochastic differential equations), compact and non-compact
state spaces and degenerate or non-degenerate noise. In
addition, we can also include in the dynamics auxiliary
variables that model environmental fluctuations, population
structure, eco-environmental feedbacks or other internal or
external factors.

We are able to significantly generalize the recent discrete time
results by Benaim and Schreiber (Journal of Mathematical Biology
'19) to non-compact state spaces, and we provide stronger persistence and
extinction results. The continuous time results by Hening and
Nguyen (Annals of Applied Probability '18) are strengthened to
include degenerate noise and auxiliary
variables.

Using the general theory, we work out several examples. In
discrete time, we classify the dynamics when there are one or two
species, and look at the Ricker model, Log-normally distributed
offspring models, lottery models, discrete Lotka-Volterra models
as well as models of perennial and annual organisms. For the
continuous time setting we explore models with a resource
variable, stochastic replicator models, and three dimensional
Lotka-Volterra models.
\end{abstract}

\maketitle

\tableofcontents

\section{Introduction}

Stochastic variation over time in the physical environment
is an ever-present feature of natural ecosystems that has major
effects on the lives of organisms (\cite{RN1}). This fact means
that the dynamics of natural populations of plants and animals
are not modeled well deterministically. Moreover, the usual
deterministic attractors cannot reasonably characterize the
asymptotic dynamics of natural communities.
Although deterministic attractors, such as point equilibria, are
often regarded as central tendencies about which populations fluctuate
(\cite{MayRM1974}), a stochastic environment
can introduce qualitative changes in the nature of long-term
dynamics that depart dramatically from the predictions of a deterministic attractor.
For instance, in some cases, the deterministic attractor may predict species
extinctions while the system in a stochastic environment predicts that all species persist
(\cite{CW81, LC16}).

Prominent ecologists have long argued for stochastic models of
the physical environment in ecological theory (\cite{RN2, RN5,
RN3, RN4}), but models of ecological dynamics in stochastic
environments have posed serious challenges. Nevertheless, over
the years, understanding has steadily improved. Of much
significance is the recent development of general theory and
techniques for establishing population persistence. Following
early beginnings (\cite{C82, CE89, E89}), recent work
(\cite{SBA11, S12, BS19, HN16}) has provided key results for
analyzing both discrete- and continuous-time variable
environment models with multiple species. At first, these
developments were restricted to the case of white noise
environments (\cite{SBA11}), which means that the vector of
population sizes is a Markovian state variable. However,
\cite{BS19} goes beyond this case to allow auxiliary state
variables permitting Markovian environmental variation,
structured populations and feedbacks from population densities
through other species. As remarkable as these developments are,
they nevertheless have the serious limitation that the state
space is required to be compact. Many ecological models naturally have non compact state spaces.
Although it has been argued that finiteness of the universe, and finite capabilities
of organisms, justify restriction to compact state spaces, modifying existing ecological models to compact
state spaces requires unnatural assumptions, and restricts consideration to only a subset of ecological
models that are useful in practice.

A key innovation of this work is removal of this compactness
restriction. With this generalization, a number of ecological models that
have led to new theoretical understanding can be fully rigorously analyzed. Moreover, boundedness restrictions on environmental variables
in discrete-time models are removed, allowing a broader array of valuable ecological models to be analyzed.
As a further generalization, in case not all species persist, we identify conditions for convergence to an invariant measure supported on the persistence set or the extinction set. Our results make use of the significant recent advances by
\cite{BS19, B18, HN16}. In particular, we explicitly use some of
the examples and computations by \cite{SBA11, BS19} in order to
highlight how our analysis can strengthen earlier results as well
as settle some previous conjectures. This work also highlights how the results apply to key ecological models that have been instrumental in showing strong effects of stochastic environmental variation on species coexistence. In this way, it strives to connect key ecological findings with the mathematics of stochastic persistence.

The paper is structured as follows. In Section \ref{s:results0} we present the framework and results in discrete time. The continuous time results appear in Section \ref{s:results}. Section \ref{s:app_disc} showcases how our results can be used to better understand specific discrete time ecological examples. Similarly, Section \ref{s:cont} applies the developed theory to continuous time examples. We discuss the relevance of our work in Section \ref{s:discussion}.

\printglossary[type=symbolslist,style=notationlong]   

\glsaddall

\section{Discrete Time Results} \label{s:results0}
We study a system of $n$ species that interact nonlinearly
and are influenced by temporal environmental variation.
The species densities at time $t\in \Z_+:=\{k\in\Z~|~k\geq 0\}$
are denoted by the vector
$\BX(t)=(X_1(t),\dots,X_n(t))$ from $\R_+^n:=[0,\infty)^n$. The
species dynamics are influenced by and influence the auxiliary
variable $\BY(t)$, which takes values in $\R^{\kappa_0}$. This
variable can correspond to eco-environmental feedbacks, forcing, the structure of each species or other factors. The dynamics of the species
and of the feedback variable are changed by stochastic temporal
variations. These environmental variations are
described by the sequence of independent identically distributed
random variables $\xi(0), \xi(1),\dots$ which take values in a
Polish space $\Xi$, that is, a separable completely metrizable topological space of which multidimensional Euclidean space is a key example. The state of the environment on the interval
$[t,t+1)$ is described by  $\xi(t)$, which then affects what the species densities will be at
 time $t+1$. This is a deviation from a convention in some papers on this topic where the environment on $(t,t+1]$ is given as $\xi(t+1)$, i.e. just a difference in the time index, which has no effect on the results. However, in our view this leads to more consistent terminology, and matches the ecological literature given that the  species $i$ has a fitness
function $F_i$ that at time $t$ depends on the densities
$\BX(t)$ of the various species, the auxiliary variable
$\BY(t)$, and the environmental variable $\xi(t)$. Similarly,
the auxiliary variable at time $t+1$ will be a function of
$\BX(t), \BY(t)$ and $\xi(t)$. The process
$\BZ(t)=(\BX(t), \BY(t))$ tracks jointly the species
densities and the auxiliary variable and lives on the state space
$\Se:=\R_+^n\times\R^{\kappa_0}$.
As a result the dynamics will be
\begin{equation}\label{e:system_discrete}
\begin{aligned}
X_i(t+1)&=X_i(t) F_i(\BX(t), \BY(t),\xi(t)), ~i=1,\dots,n\\
\BY(t+1) &=G(\BX(t),\BY(t), \xi(t)).
\end{aligned}
\end{equation}
We define $F(\cdot)=(F_1(\cdot),\dots,F_n(\cdot))$.
\begin{asp}\label{a:main}
The following assumptions are made throughout the paper:
\begin{itemize}
\item [\textbf{A1)}] For each $i=1,\dots,n$, the fitness
function $F_i(\bz,\xi)$ is continuous in $\bz=(\bx,\by)$ for every $\xi$, where
$\bx=(x_1,\dots x_n)$ and $\by=(y_1,\dots y_{\kappa_0})$,
measurable in $(\bz,\xi)$, and strictly positive.
\item [\textbf{A2)}] The function
$G:\R_+^n\times\R^{\kappa_0}\times \Xi\to\R^{\kappa_0}$ is continuous in
$\bz = (\bx, \by)$ and measurable in $(\bz, \xi)$.
\item [\textbf{A3)}] There exists a function $V:\Se\to\R_+$ and
constants $\gamma_1,\gamma_3, C>0$ and $\rho\in(0,1)$ such that
for all $\bz\in \Se$
\begin{itemize}
\item[i)] $|V(\bx,\by)|\geq |\bx|^{\gamma_1}+1$,
\item[ii)] $\lim_{|\bz|\to\infty} V(\bz)=\infty,$ and
\item[iii)] $\E \left[V(\bx\circ F(\bz,\xi(1)),
G(\bz,\xi(1)))h(\bz,\xi(1))\right]\leq \rho V(\bz)+C$
where \newline $a \circ b = (a_1b_1, a_2b_2,\dots, a_nb_n)$ is the Hadamard product,
\begin{equation}\label{e:h}
h(\bz,\xi)=\left(\max_{i=1}^n \left\{\max
\left\{F_i(\bz,\xi),
\frac{1}{F_i(\bz,\xi)}\right\}\right\}\right)^{\gamma_3}.
\end{equation}
\end{itemize}
\end{itemize}
\end{asp}
Assumptions A1) and A2) make sure that $\BZ(t)$ is a
well-behaved Markov process. Measurability of $F_i(\bz,\xi)$ in $(\bz,\xi)$ is with respect to the natural Borel $\sigma$-algebra of the Polish space $\Xi$. Assumption A3) is also required in order
to ensure that $\BZ(t)$ returns to compact subsets of $\Se$
exponentially fast and that the growth rates do not change too
abruptly near infinity. A standard condition to ensure that $\BZ(t)$ returns to compact subsets of $\Se$
exponentially fast, which is critical to obtain exponential fast convergence to an invariant measure of a Markov chain, is A3)-ii) and a `weaker' version of A3)-iii), namely: $$\E \left[V(\bx\circ F(\bz,\xi(1)),
G(\bz,\xi(1)))\right]\leq \rho V(\bz)+C.$$

Assumption A3) is a Lyapunov type condition one needs in order to make sure the solution $\BZ(t)$ is well behaved and bounded in some sense.
A standard condition which ensures that $\BZ(t)$ returns to compact subsets of $\Se$
exponentially fast, which is critical to obtain exponential fast convergence to an invariant measure of a Markov chain, is A3)-ii) together with a `weaker' version of A3)-iii): $$\E \left[V(\bx\circ F(\bz,\xi(1)),
G(\bz,\xi(1)))\right]\leq \rho V(\bz)+C.$$
This means that $\E_\bz(e^{r\tau})<\infty$ for some $r>0$ where $\tau$ is the first time the process enters a compact set. This finiteness is important in the study of Markov chains. Note that after visiting a compact set the process can leave this compact set for some time, before it returns again.

We next explain why condition A3)-iii) is needed.
In order to better estimate the sample paths of the process, we want the growth rates to not change too abruptly.
Since $X_i(t+1)=X_i(t)F_i(\BX(t),\BY(t),\xi(t))$, to make sure that the growth rates do not change too much in one time step we want to be able to control $X_i(t) F_i(\BX(t),\BY(t),\xi(t))$. We can do this if we know that
$$
\left[F_i(\BX(t),\BY(t),\xi(t))\right]^{1-\gamma_3}\leq  F_i(\BX(t),\BY(t),\xi(t))\leq \left[F_i(\BX(t),\BY(t),\xi(t))\right]^{1+\gamma_3}
$$
for some $\gamma_3>0$.
This is why $h(\bz,\xi)$ has the form \eqref{e:h} in A3)-iii).
Finally, A3)-i) is needed because we need some boundedness of the moments of $\BX$.

In general, if $F_i$ is smaller than 1 when $\bx$ is large but bounded below in a certain sense, i.e., $\E \left(\frac{1}{F_i(\bx,\xi)}\right)$ is bounded, then A3) will be satisfied.

\begin{rmk}
Note that in general it is not always easy to construct the Lyapunov function $V$. There is no definite answer for whether such a $V$ exists. However, in many applications, one can construct such a function. The examples presented throughout the paper show how one can do this. Furthermore, under stronger assumptions, for example, when the process enters and then stays forever in a compact set, the Lyapunov function is not required - see Remark \ref{r:compact}.
\end{rmk}

\begin{rmk}
When we say that $F_i(\bz,\xi)$ is measurable in $(\bz,\xi)$ we note that the Polish space $\Xi$ comes with has a natural Borel $\sigma$-algebra.
\end{rmk}
\begin{rmk}\label{r:compact}
If it is known that the dynamics remain in a compact subset $K\subset
\R_+^n\times \R^{\kappa_0}$, instead of A3), the following assumption is sufficient.
 \begin{itemize}
\item [\textbf{A3C)}] There exists a constant $\gamma_3>0$ such
that for all $\bz\in \Se$ we have
\[
\E \left[h(\bz,\xi(1))\right]<\infty
\]
where $$h(\bz,\xi)=\left(\max_{i=1}^n \left\{\max
\left\{F_i(\bz,\xi),
\frac{1}{F_i(\bz,\xi)}\right\}\right\}\right)^{\gamma_3}.$$
\end{itemize}
\end{rmk}
\begin{rmk}

Assumption A3) (respectively A3C)) is strictly weaker than the
assumptions made by \cite{BS19}. In particular \cite{BS19}
assume
\begin{itemize}
\item [\textbf{A3')}] There is a compact subset $K\subset
\R_+^n\times \R^{\kappa_0}$ such that all solutions $\BZ(t)$
satisfy $\BZ(t)\in K$ for $t\in \Z_+$ sufficiently large.
\item [\textbf{A4')}] For all $i=1,2,\dots,n $,
$\sup_{\bz,\xi}|\ln F_i(\bz,\xi)|<\infty.$
\end{itemize}
Note that assumption A3) need not imply that $\BZ(t)$ remain in
a compact set, as examples from Section \ref{s:app_disc} attest.
Note also that A3') and A4') imply A3C).

\end{rmk}

There are various ways in which one can define the persistence
of the species from the studied ecosystem. We define the
extinction set, where at least one species is extinct, by
\[
\Se_0:=\{(\bx,\by)\in\Se~:~\min_i x_i=0\}
\]
and the persistence set by
\[
\Se_+:=\Se\setminus\Se_0.
\]
For any $\eta>0$ let
\[
\Se_\eta:=\{(\bx,\by)\in\Se~:~\min_i x_i\leq \eta\}
\]
be the subset of $\Se$ where at least one species is within
$\eta$ of extinction. We say \eqref{e:system_discrete} is
\textit{stochastically persistent in probability} (\cite{C82})
if for all $\eps>0$ there exists $\eta(\eps)=\eta>0$ such that
for all $\bz\in \Se\setminus\Se_0$
\[
\liminf_{t\to\infty} \PP_{\bz}\{\BZ(t)\notin S_\eta\}>1-\eps.
\]

In essence, this means that if no species are extinct initially, with high probability their densities will not be found close to any of the axes at any given time in the future. We can also examine the realized frequencies with which the species have specific ranges of values for the period of time 1 to $t$ with the ``random occupation measure."
For any $t\in \N$ define the \textit{random occupation
measure}
\[
\widetilde\Pi_t(B):=\frac{1}{t}\sum_{s=1}^t\delta_{\BZ(s)}(B)
\]
where $\delta_{\BZ(s)}$ is the Dirac measure at $\BZ(s)$ and $B$
is any Borel subset of $\Se$. We note that $\Pi_t$ is a random
probability measure and $\Pi_t(B)$ tells us the proportion of
time the system spends in $B$ up to time $t$.
We also define the occupation measure
\[
\Pi_{t,z}(B):=\E_{\bz}\widetilde\Pi_t(B)
\]

We say
\eqref{e:system_discrete} is \textit{almost surely
stochastically persistent} (\cite{S12, BS19}) if for all
$\eps>0$ there exists $\eta(\eps)=\eta>0$ such that for all
$\bz\in \Se\setminus\Se_0$
\[
\PP_\bz\left(\liminf_{t\to\infty} \widetilde \Pi_t(\Se\setminus \Se_\eta)>1-\eps\right)=1.
\]

\subsection{Invariant measures and criteria for
persistence}Following the seminal deterministic work by
\cite{H81} and the stochastic work by \cite{CE89,
SBA11, B18, HN16} one needs to look at the Lyapunov exponents
(expected per-capita growth rate) of certain invariant
probability measures in order to give conditions for
persistence. Intuitively, these invasion rates tell us if a
species tends to increase or decrease when it is introduced into
a subcommunity of species that is fluctuating according to a stationary probability distribution that it achieves asymptotically. We start by
defining some of the required mathematical concepts.

The transition operator $P: \mathcal{B}_b\to \mathcal{B}_b$ of the
process $\BZ$ is an operator which acts on bounded Borel functions
$\mathcal{B}_b:=\{H:\Se\to\R~|~H~\text{Borel and bounded}\}$ as
\[
PH(\bz)=\E_\bz[H(\BZ(1))]:=\E[H(\BZ(1))~|~\BZ(0)=\bz],
~\bz\in\Se.
\]
Thus, this operator gives conditional expectations of functions for changes of one time step. For multiple time steps, we can define for any $t\in\Z_+$ the $t$-time
transition operator $P_t: \mathcal{B}_b\to \mathcal{B}_b$ via
\[
P_tH(\bz)=\E_\bz[H(\BZ(t))], ~\bz\in\Se.
\]
For any Borel set $B\subset \Se$ the function $\1_{B}$ is the
indicator function, which is $1$ on $B$ and $0$ on the
complement, $B^c$, of B. Sometimes we will write $$P_t(\bz,
B):=P_t \1_{B}(\bz)$$
if $B$ is a Borel set and $\bz\in \Se$ any initial condition, thus defining $t$-step transition probabilities.
We note that our assumptions imply that the process $\BZ(t)$ is
Feller (\cite{B18, BS19}): if $C_b(\Se)$ is the set of continuous
real valued functions defined on $\Se$, then the mapping
\[
(t,\bz)\to P_tf(\bz)
\]
is continuous. On the other hand, the operator $P_t$ can also be used to obtain the probability distribution for $\BZ(t)$ starting with initial probability distribution (probability measure) $\mu$ for $\BZ(0)$. By duality, the operator $P_t$ acts on Borel probability measures $\mu$ by $\mu\to \mu P_t$ where $\mu
P_t$ is the probability measure given by
\[
\int_{\Se}H(\bz) (\mu P_t)(d\bz):=\int_{\Se}P_tH(\bz) \mu(d\bz)
\]
for all $H\in C_b(\Se)$, and is thus the probability distribution of $\BZ(t)$.

A Borel probability measure $\mu$ on $\Se$ is called an
\textit{invariant probability measure} if
\[
\mu P_t = \mu, ~t\geq 0.
\]
The above equation tells us that if the system starts with a certain distribution $\mu$ then it will have this distribution at any future time $t\geq 0$. Intuitively, this is the random analogue of a fixed point from dynamical systems.

The building blocks of the invariant probability measures are
the \textit{ergodic} invariant probability measures. These can
be characterized as the extreme points of the set of all
invariant probability measures. Equivalently, an invariant
probability measure is ergodic if it cannot be written as a
nontrivial convex combination of invariant probability measures. The set of ergodic invariant probability measures analogous to the set of stable points of a dynamical system. The invariant probabilities measures are then analogous to randomly starting the dynamical system at different stable points with given probabilities.

The invariant and ergodic probability measures are related to the long term behavior of the system. Suppose $\BZ(0)$ has distribution $\mu$ and $\mu$ is an invariant probability measure. One can then use Birkhoff's ergodic theorem to note that if $H:\Se\to\R$ is a measurable function that is integrable with respect to $\mu$, that is
\[
\int_\Se |H(\bz)|\,\mu(d\bz)<\infty
\]
then there exists a measurable integrable function $\bar H$ such that, with probability one
\[
\lim_{t\to\infty}\frac{1}{t}\sum_{s=0}^t H(\BZ(s)) = \bar H(\BZ(0))
\]
The measure $\mu$ is a ergodic probability measure if $\bar H$ is a constant function for all bounded measurable functions $h$ and one has, with probability one,
\[
\lim_{t\to\infty}\frac{1}{t}\sum_{s=0}^t H(\BZ(s)) = \int_\Se H(\bz)\mu(d\bz).
\]
Thus in the case of an ergodic initial distribution, the empirical observed averages are equal to the expectation, whereas with a nonergodic invariant initial measure, the empirical process follows one of the ergodic measures from which the initial distribution is constructed, but which of these is uncertain.

Note that these results only tell us about the long term behavior if $\BZ(0)$'s initial distribution is an invariant probability measure. The methods developed below describe the long term behavior of the system for a wide class of ecological models, when $\BZ(0)$ is any positive initial condition.

For any species $i$ let $\Se^i:=\{\bz=(\bx,\by)\in\Se~:~x_i>0\}$
be the subset of the state space for which this species has
strictly positive density. If $\mu$ is an ergodic measure and because $\Se^i$ is an invariant set, i.e. if $\BZ(0)\in \Se^i$ then with probability 1 one has $\BZ(t)\in \Se^i, t\geq0$, we note that
$\mu(\Se^i)\in\{0,1\}$. This allows us to define the
\textit{species support} of $\mu$ by $\Se(\mu)=\{1\leq i\leq
n~:~\mu(\Se^i)=1\}$.

Suppose that we have a strict subcommunity of species at
stationarity characterized by the ergodic measure $\mu$.
Assume that at least one species will be extinct, so that $\Se(\mu)\neq
\{1,\dots,n\}$. Let $i\notin \Se(\mu)$ and introduce species $i$
into the system at an infinitesimally low density. It turns out
(\cite{H81, CE89, C94, BS19}) that a key quantity is the
\textit{expected per-capita growth rate} of species $i$ when
introduced in the community described by $\mu$
\begin{equation}\label{e:r}
r_i(\mu) =\int_{\Se}\E[\log F_i(\bz,\xi(1))]\,\mu(d\bz).
\end{equation}
This quantity is the invasion rate in the parlance of invasibility analysis and  tells us whether a species $i\notin \Se(\mu)$
tends to increase or decrease if introduced at a low density. In
addition, if $i\in \Se(\mu)$ then we can show that $r_i(\mu)=0$,
since the species supported by $\mu$ already are at stationarity
so they do not tend to grow or decrease exponentially fast.
Denote by $\M$ the set of all ergodic invariant probability
measures supported on $\Se_0$. For a subset
$\wtd\M\subset \M$, denote by $\Conv(\wtd\M)$ the convex hull of
$\wtd\M$,
that is the set of probability measures $\pi$ of the form
$\pi(\cdot)=\sum_{\mu\in\wtd\M}p_\mu\mu(\cdot)$
with $p_\mu\geq 0,\sum_{\mu\in\wtd\M}p_\mu=1$.
Using this notation $\Conv(\M)$ is the set of all
invariant probability measures supported on $\Se_0$.
\begin{rmk}
Note that one can show (see \cite{B18}) that under some natural assumptions the set $\Conv(\M)$ is convex and compact and $\mu$ is ergodic if and only if it cannot be written as a nontrivial convex combinations of invariant probability measures. The ergodic decomposition theorem tells us that any invariant probability measure is a convex combination of ergodic measures. Furthermore, it can be shown that any two ergodic probability measures are either identical or mutually singular and that the topological supports of any mutually singular invariant measures are disjoint.
\end{rmk}

\begin{rmk}
In order to make the paper accessible to a wide variety of readers we will present our results without proofs in this section. The interested reader is asked to see Appendices \ref{s:a} and \ref{s:b} for the complete proofs of the various propositions and theorems.
\end{rmk}

The first result tells us that expected growth rates are well-defined and always zero for ``resident species" in the parlance of invasibility analysis.
\begin{prop}\label{p:rate}
Suppose $\mu$ is an ergodic invariant measure. Then $r_i(\mu)$
exists and is finite. Moreover,
\[
r_i(\mu)=0, i\in \Se(\mu).
\]
\end{prop}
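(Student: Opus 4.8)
The plan is to establish the two claims separately: first that $r_i(\mu)$ is well-defined and finite for every $i$, and then that $r_i(\mu) = 0$ when $i \in \Se(\mu)$. For the first claim, the quantity $r_i(\mu) = \int_\Se \E[\log F_i(\bz,\xi(1))]\,\mu(d\bz)$ is an integral of a function that may a priori be neither bounded nor one-signed, so finiteness is the real content. I would control it using Assumption A3)-iii): the function $h(\bz,\xi) = (\max_i \max\{F_i, 1/F_i\})^{\gamma_3}$ dominates both $|\log F_i(\bz,\xi)|$ (since $|\log u| \le C_{\gamma_3}(u^{\gamma_3} + u^{-\gamma_3})$ for $u>0$, a routine elementary inequality) and is integrated against $\mu$ with a bound coming from the Lyapunov estimate. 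More precisely, since $\mu$ is invariant, $\int V\,d\mu$ can be shown to be finite (apply A3)-iii) with $h\ge 1$, integrate against the invariant measure $\mu$, and use $\rho < 1$ to absorb $\int V\,d\mu$ on the left, giving $\int V\,d\mu \le C/(1-\rho)$), and then a second application of A3)-iii) keeping the $h$ factor gives $\int \E[V(\BX(1)\circ\cdots)\,h(\bz,\xi(1))]\,\mu(d\bz) < \infty$. Since $V \ge 1$, this forces $\int \E[h(\bz,\xi(1))]\,\mu(d\bz) < \infty$, and hence $\int \E[|\log F_i(\bz,\xi(1))|]\,\mu(d\bz) < \infty$, so $r_i(\mu)$ exists and is finite.

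For the second claim, fix $i \in \Se(\mu)$, so $\mu(\Se^i) = 1$, i.e. $\mu$-almost surely $X_i(0) > 0$. The idea is the standard telescoping/ergodic-average argument: from $X_i(t+1) = X_i(t) F_i(\BZ(t),\xi(t))$ we get, along any trajectory with $X_i(0) > 0$,
\[
\frac{1}{t}\log X_i(t) - \frac{1}{t}\log X_i(0) = \frac{1}{t}\sum_{s=0}^{t-1} \log F_i(\BZ(s),\xi(s)).
\]
Starting the chain from the invariant measure $\mu$ and invoking the ergodic theorem for the stationary ergodic sequence $(\BZ(s),\xi(s))$ — with the integrability of $\log F_i$ just established — the right-hand side converges $\PP_\mu$-almost surely to $r_i(\mu)$. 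So it suffices to show the left-hand side tends to $0$ almost surely, which reduces to showing $\frac{1}{t}\log X_i(t) \to 0$ almost surely (the $\frac1t \log X_i(0)$ term vanishes trivially since $X_i(0)$ is $\mu$-a.s. finite and positive). To rule out $\limsup \frac1t \log X_i(t) > 0$, note $X_i(t) \le |\BX(t)| \le V(\BZ(t))^{1/\gamma_1}$ by A3)-i), and the Lyapunov bound $\int V\,d\mu < \infty$ together with stationarity gives, via Borel–Cantelli or the ergodic theorem applied to $\log^+ V(\BZ(t))$ (or simply Markov's inequality: $\PP_\mu(V(\BZ(t)) > e^{\eps t})$ is summable when $\int V\,d\mu<\infty$ only if one has a slightly better moment — here one uses that $V(\BZ(t))$ is stationary so $\frac1t V(\BZ(t)) \to 0$ a.s.), that $\frac1t \log V(\BZ(t)) \to 0$ a.s., hence $\limsup \frac1t \log X_i(t) \le 0$. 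For the lower bound $\liminf \frac1t \log X_i(t) \ge 0$: if it were negative on a positive-probability set, then $\frac1t\sum \log F_i(\BZ(s),\xi(s)) = r_i(\mu) < 0$ there, but summing this over $i \in \Se(\mu)$ against suitable weights, or more directly using that $\mu$ is invariant for the process restricted to $\Se^i$, leads to a contradiction with invariance — the clean way is to observe that $r_i(\mu) = \int \E[\log F_i(\bz,\xi(1))]\,d\mu = \int (P\log X_i - \log X_i)\,d\mu$ in a formal sense, and since $\log X_i \in L^1(\mu)$ (which needs a separate check) invariance of $\mu$ gives $\int P(\log X_i)\,d\mu = \int \log X_i\,d\mu$, hence $r_i(\mu) = 0$.

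The main obstacle is the integrability of $\log X_i$ with respect to $\mu$, which is what makes the last "telescoping under the invariant measure" step rigorous: one must show $\int |\log x_i|\,\mu(d\bz) < \infty$ on $\Se^i$. The upper tail $\int \log^+ x_i\,d\mu < \infty$ follows from $\int V\,d\mu < \infty$ and A3)-i). The lower tail $\int \log^-\!x_i\,d\mu$ — i.e. that $\mu$ does not put too much mass near $x_i = 0$ — is the delicate point; one controls it by using A3)-iii) with the $h$ factor, since $h$ contains the $1/F_i$ terms which, combined with $X_i(1) = X_i(0)F_i$, allow one to bound negative moments of $X_i(1)$ in terms of negative moments of $X_i(0)$ and thereby, via invariance and the contraction $\rho<1$, show $\int x_i^{-\gamma_3'}\,\mu(d\bz) < \infty$ for some small $\gamma_3' > 0$ — which suffices since $\log^- u \le C u^{-\gamma_3'}$. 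Once this integrability is in hand, the identity $r_i(\mu) = \int(P\log X_i - \log X_i)\,d\mu = 0$ follows from invariance of $\mu$, and the argument is complete.
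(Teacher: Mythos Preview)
Your finiteness argument for $r_i(\mu)$ is correct and matches the paper's Lemma~\ref{lmA.1}: bound $|\log F_i|$ by a multiple of $h$, show $\int \E[h(\bz,\xi)]\,\mu(d\bz)<\infty$ from A3)-iii) via invariance and $\rho<1$, and conclude. Your first approach to $r_i(\mu)=0$ --- the telescoping identity plus a stationarity contradiction --- is also essentially the paper's route (the paper uses a martingale SLLN decomposition and then cites \cite{BS19}, but the content is the same): once $\tfrac1t\sum_{s<t}\log F_i(\BZ(s),\xi(s))\to r_i(\mu)$ a.s.\ under $\PP_\mu$, a nonzero limit would force $X_i(t)\to 0$ or $X_i(t)\to\infty$ a.s., contradicting that $X_i(t)$ is stationary with $\mu(\{0<x_i<\infty\})=1$.

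The gap is in the route you label the ``clean way'' and on which you spend most of your effort, namely showing $\log x_i\in L^1(\mu)$ via a negative-moment bound $\int x_i^{-\gamma'}\,d\mu<\infty$. The bootstrap you sketch does not close. From A3)-iii) one obtains (as in the paper's Lemma~\ref{lmA.2})
\[
\E_\bz\bigl[V(\BZ(1))\,X_i(1)^{-\gamma'}\bigr]\;\le\;x_i^{-\gamma'}\,(\rho V(\bz)+C)\Bigl(1+\tfrac{\gamma'}{\gamma_3}\Bigr),
\]
and integrating against the invariant $\mu$ yields, with $A=\int V x_i^{-\gamma'}\,d\mu$ and $B=\int x_i^{-\gamma'}\,d\mu$, only $A\le \rho' A + C' B$, i.e.\ $A$ and $B$ are comparable. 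The additive constant $C$ in A3)-iii) becomes $C\,x_i^{-\gamma'}$ after multiplying through, and since the ``compact'' region $\{|\bz|<M\}$ does not bound $x_i$ away from $0$, there is no finite right-hand side to absorb into. So $\int x_i^{-\gamma'}\,d\mu<\infty$ is \emph{not} a consequence of A3) alone, and the invariance identity $\int(P\log X_i-\log X_i)\,d\mu=0$ cannot be justified this way. The paper (via \cite{BS19}) sidesteps this entirely by the stationarity argument you already have; drop the $L^1(\mu)$ route and make that argument the main line.
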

Using this Proposition we can prove the following persistence
result.
\begin{thm}\label{t:pers1_disc}
Suppose that for all $\mu\in\Conv(\M)$ we have
\begin{equation}\label{e:per}
\max_{i} r_i(\mu)>0.
\end{equation}
Then \eqref{e:system_discrete} is almost surely stochastically
persistent and stochastically persistent in probability.
\end{thm}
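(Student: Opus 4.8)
The plan is to show that condition \eqref{e:per} forces the random occupation measures $\widetilde\Pi_t$ to avoid a neighborhood of the extinction set $\Se_0$ with high probability, using a Lyapunov-type argument built from the invasion rates. The natural candidate for the "anti-extinction" Lyapunov function is $U_p(\bz) = \prod_{i=1}^n x_i^{p_i}$ (or rather $\log U_p(\bz) = \sum_i p_i \log x_i$) for a suitable probability vector $p = (p_1,\dots,p_n)$ with strictly positive entries. Along a trajectory, $\log U_p(\BZ(t+1)) - \log U_p(\BZ(t)) = \sum_i p_i \log F_i(\BZ(t),\xi(t))$, so averaging over $s = 0,\dots,t-1$ and invoking the ergodic decomposition, any weak$^*$-limit point $\pi$ of $\widetilde\Pi_t$ that charges $\Se_0$ would lie in $\Conv(\M)$ and would satisfy $\int_\Se \sum_i p_i \E[\log F_i(\bz,\xi(1))]\,\pi(d\bz) = \sum_i p_i r_i(\pi)$. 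The key combinatorial input, which goes back to Hofbauer and to the works cited (\cite{SBA11, B18, HN16, BS19}), is that since $\Conv(\M)$ is compact and convex and $\mu \mapsto (r_1(\mu),\dots,r_n(\mu))$ is continuous and affine, the hypothesis $\max_i r_i(\mu) > 0$ on all of $\Conv(\M)$ implies via a minimax/LP-duality argument the existence of a single strictly positive $p$ and a constant $\rho_0 > 0$ with $\sum_i p_i r_i(\mu) \ge \rho_0$ for every $\mu \in \Conv(\M)$.

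With such $p$ and $\rho_0$ in hand, I would proceed in the standard way. First, establish that $\widetilde\Pi_t$ is almost surely tight — this is exactly where Assumption A3) enters: the Lyapunov bound in A3)-iii) gives $\E_\bz V(\BZ(t)) \le \rho^t V(\bz) + C/(1-\rho)$, hence a uniform moment bound and, by A3)-ii), tightness of the family $\{\Pi_{t,\bz}\}$ and almost-sure tightness of $\{\widetilde\Pi_t\}$. Consequently every subsequential weak$^*$-limit point of $\widetilde\Pi_t$ is an invariant probability measure. Second, suppose for contradiction that with positive probability some limit point $\pi$ puts mass $\ge \eps$ near $\Se_0$; decompose $\pi = \pi_0 + \pi_+$ into its restrictions to (a neighborhood of) $\Se_0$ and to the interior, write $\pi_0$ via ergodic decomposition as an element of $\Conv(\M)$ (up to the mass it carries), and use continuity of $\bz \mapsto \sum_i p_i \E[\log F_i(\bz,\xi(1))]$ near $\Se_0$ together with $\sum_i p_i r_i(\cdot) \ge \rho_0$ on $\Conv(\M)$ and $= 0$ on the persistence part (Proposition \ref{p:rate}) to conclude that the time-averaged increment of $\log U_p$ is bounded below by a positive constant along that subsequence. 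Third, derive the contradiction: the left side of that identity is $\frac{1}{t}\big(\log U_p(\BZ(t)) - \log U_p(\BZ(0))\big)$, and the upper moment bound on $V$ (controlling $|\bx|$ from above) together with the integrability furnished by $h$ in A3)-iii) (controlling $\sum_i p_i |\log F_i|$, hence preventing $\log U_p(\BZ(t))/t$ from having a positive liminf along the bad subsequence) forces this quantity to $0$ almost surely. A careful accounting of the pieces yields both "stochastically persistent in probability" (from the analogous statement for the averaged measures $\Pi_{t,\bz}$) and "almost surely stochastically persistent" (from the pathwise statement).

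The main obstacle I anticipate is the analytical bookkeeping near the boundary $\Se_0$: the function $\log x_i$ is unbounded there, so one cannot directly test $\widetilde\Pi_t$ against it, and one must instead use a truncation $\log(x_i \vee \delta)$ or the mollified indicator-type functions of \cite{SBA11, BS19}, control the error terms uniformly, and then let $\delta \to 0$. This is precisely where the generalization to non-compact $\Se$ bites: one needs the moment bound from A3) to ensure the truncation errors at large $|\bz|$ are negligible and that the family $\{\widetilde\Pi_t\}$ does not leak mass to infinity, which is not automatic as it would be on a compact space. The function $h$ in A3)-iii) is tailored to make $\E[|\log F_i(\bz,\xi(1))|]$ controllable in a way that integrates against these limiting measures. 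Assembling the two modes of persistence then follows the template of \cite{HN16, BS19}, with the occupation-measure formulation handling the "in probability" statement and the pathwise ergodic-theorem argument handling the almost-sure one.
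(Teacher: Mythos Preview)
Your overall strategy---reduce \eqref{e:per} via minimax to a single $\bp>0$ with $\sum_i p_i r_i(\mu)\ge \rho_0$ on $\Conv(\M)$, then run an occupation-measure contradiction argument---is the route of \cite{SBA11,BS19} on compact state spaces. The paper takes a genuinely different route: it builds the Lyapunov function $U(\bz)=V(\bz)\prod_i x_i^{-p_i}$ (note the sign: $U$ blows up both at infinity and at $\Se_0$) and proves, for a small power $\theta>0$ and a suitable time step $n^*T^*$, a geometric drift
\[
\E_\bz U^\theta(\BZ(n^*T^*))\le \kappa\,U^\theta(\bz)+\tilde K,\qquad \kappa<1,
\]
via Lemma~\ref{lmA.2}, Proposition~\ref{prop2.1} and Theorem~\ref{thm3.1}. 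Both modes of persistence then follow directly from this drift by Markov's inequality and return-time estimates, with no detour through weak limits of $\widetilde\Pi_t$.

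The reason the paper chooses that route is exactly the gap in yours. Your argument needs (i) almost-sure tightness of $(\widetilde\Pi_t)_{t\in\N}$ and (ii) the pathwise law of large numbers
\[
\frac{1}{T}\sum_{t=0}^{T-1}\Big(\log F_i(\BZ(t),\xi(t))-\E\big[\log F_i(\BZ(t),\xi(t))\mid\F_t\big]\Big)\to 0\ \text{a.s.}
\]
Both are martingale statements requiring control of a quadratic variation, and Assumption~\ref{a:main} alone does not supply it: A3)-iii) bounds $\E_\bz[V(\BZ(1))h(\bz,\xi(1))]$, hence first moments, but gives no bound on $\E_\bz[(V(\BZ(1))-PV(\bz))^2]$ or on $\E|\log F_i-\E\log F_i|^2$. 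In the paper these second-moment bounds are precisely the content of the \emph{additional} Assumption~\ref{a:ext}, used in Lemma~\ref{lmB.1} for the extinction results but \emph{not} assumed in Theorem~\ref{t:pers1_disc}. As written, your argument would therefore establish the theorem only under Assumption~\ref{a:main} together with Assumption~\ref{a:ext}, which is strictly weaker than what is claimed. The paper's $U^\theta$ approach sidesteps the issue because the product $V(\bz)\prod_i x_i^{-p_i}$ packages tightness at infinity and repulsion from $\Se_0$ into a single function whose small power obeys a clean supermartingale-type inequality; no a.s.\ tightness of occupation measures is ever invoked.
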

This result is a generalization to non compact state spaces of
Theorem 1 by \cite{BS19}.

Although criterion \eqref{e:per} implies intuitively that species tend to increase when rare, and so are pushed away from the extinction set $\Se_0$, some extra condition, which is supplied here by A3), is required, as the following example from \cite{C82} shows. Stochastic
persistence depends not only on the recovery of species from low
densities but also on how violently a population can crash to
low densities. Suppose $N(t)$ is the population density of a
small mammal population that has the dynamics of geometric growth
up to a threshold $K>0$, followed by random crashes
from high density
\[
N(t+1) = \left\{
        \begin{array}{ll}
            2 N(t) & \quad N(t) < K \\
            \xi(t) N(t) & \quad N(t)\geq K.
        \end{array}
    \right.
\]
The environmental variable $\xi(t)$ is assumed to take values in
$(0,1)$ while $\xi(1), \xi(2),\dots$ is an i.i.d. sequence. Let $\delta_0$ be the Dirac mass at zero. This is the invariant measure that describes the `empty subcommunity' where all species are extinct.
Clearly
\[
r(\delta_0) = \ln 2 >0
\]
and $N(t)$ increases at low values no matter what the
distribution of $\xi(t)$ is. Moreover,
$\PP(\lim_{t\to\infty} N(t)=0)=0$ because $N(t)$ cannot only remain below $K$ for finite periods of time. Nevertheless, there are
situations in which $N(t)$ converges to $0$ in probability. For
simplicity assume that $\log_2 N(0), \log_2 \xi(t) \in \Z$. This
implies that $\log_2 N(t)$ is a homogenous Markov chain on $\Z$.
If $\xi(t)$ is such that $\E \log_2 \xi(t) = -\infty$, then the
process $\log_2 N(t)$ is null recurrent and the expected return
times to a given state are infinite. Given that $\log_2 N(t)$ is supported on $(-\infty, 1 + \log_2 K]$, it follows that for all
$x\in \Z$
\[
\lim_{t\to\infty}\PP(\log_2 N(t) >x)=0
\]
and therefore that $N(t)\to 0$ in probability. As a result
$N(t)$ is not persistent in probability even though
$r(\delta_0)>0$.

The above example shows that one needs to ensure that the
population will not violently crash to low densities. In the
generality of our setting, this is accomplished by part A3) of
Assumption \ref{a:main}.
\begin{rmk}
We note that the condition above is equivalent \citep{HN16} to being able to
find positive numbers $p_1,\dots,p_n$ such that
\[
\sum_{i=1}^np_ir_i(\mu)>0
\]
for all $\mu\in\M$. This says that there exist weights of the species such that the weighted average of the expected per-capita growth rates is positive for all ergodic measures supporting a strict subset of the community. This criterion has first been introduced by \cite{H81}.
\end{rmk}
In order to prove stronger persistence results, we need some
assumptions about the points of the state space the process can
visit. We will follow the notation, methods and results
developed by \cite{MT, B18}. A point $\tilde \bz\in \Se$ is said
to be \textit{accessible} from $\bz\in\Se$ if for every
neighborhood $U$ of $\tilde \bz$ there exists $t\geq 0$ such
that $P_t(\bz,U)= P_t\1_U(\bz)>0$. Define $$\Gamma_\bz:=\{\tilde
\bz\in\Se~|~\tilde \bz~\text{is accessible from}~\bz\}$$ and for
$A\subset \Se$
\[
\Gamma_A=\bigcap_{\bz\in A}\Gamma_\bz.
\]
Note that $\Gamma_A$ is the set of points which are accessible
from every point of $A$. We say a set $A$ is \textit{accessible}
if for all $\bz\in \Se_+$
\[
\Gamma_{\bz}\cap A\neq \emptyset.
\]
\begin{thm}\label{t:pers2_disc}
In addition to the assumptions of Theorem \ref{t:pers1_disc}
suppose there exist $\bz^*\in \Gamma_{\Se_+}$, a neighborhood
$U$ of $\bz^*$, a non-zero measure $\iota$ on $\Se_+$, and a
probability measure $\gamma$ on $\mathbb{Z}_{+}$ such that for
all $\bz\in U$
\begin{equation}\label{e:phi}
\sum_{i=1}^\infty\gamma(i)P_i(\bz,\cdot)\geq \iota(\cdot).
\end{equation}
Then there exists a unique invariant probability measure $\pi$
on $\Se_+$ and, with probability one, as $t\to\infty$ the
occupation measures $(\Pi_{t,\bz})_{t\in\N}$ converge weakly to $\pi$
for any $\bz\in\Se_+$.
\end{thm}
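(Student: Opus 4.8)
The plan is to combine the persistence already established in Theorem \ref{t:pers1_disc} with the local minorization \eqref{e:phi} and the accessibility of $\bz^*$, following the Harris-/Meyn--Tweedie-type scheme adapted to this setting in \cite{B18}.

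First I would establish that there is at least one invariant probability measure living on $\Se_+$. Fix $\bz\in\Se_+$; iterating A3)iii) and using $h\ge 1$ gives $\sup_t\E_\bz V(\BZ(t))\le V(\bz)+\tfrac{C}{1-\rho}$, and together with A3)ii) this makes the laws of $\BZ(t)$, hence the occupation measures $(\Pi_{t,\bz})_t$, tight. By the Feller property a Krylov--Bogolyubov argument shows every weak$^*$ limit point $\pi$ of $(\Pi_{t,\bz})_t$ is an invariant probability measure. Almost sure stochastic persistence, passed through Fatou's lemma, gives $\liminf_t\Pi_{t,\bz}(\Se\setminus\Se_{\eta(\eps)})\ge 1-\eps$; since $\{\min_i x_i\ge\eta\}$ is closed, the Portmanteau theorem forces $\pi(\{\min_i x_i<\eta(\eps)\})\le\eps$ for every $\eps>0$, so $\pi(\Se_0)=0$, i.e.\ $\pi(\Se_+)=1$.

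Next I would show that every invariant probability measure $\pi$ on $\Se_+$ charges $U$, and deduce uniqueness. Since $\bz^*\in\Gamma_{\Se_+}$ it is accessible from every point of $\Se_+$, hence from $\pi$-a.e.\ $\bz$; so for any neighbourhood $V$ of $\bz^*$ one has $\sum_{t\ge 0}P_t(\bz,V)>0$ for $\pi$-a.e.\ $\bz$, and integrating against $\pi$ and using $\int P_t(\bz,V)\,\pi(d\bz)=\pi(V)$ gives $\pi(V)>0$; in particular $\pi(U)>0$. Now write $K_\gamma:=\sum_{i\ge 1}\gamma(i)P_i$; any $P$-invariant $\pi$ is $K_\gamma$-invariant, so integrating \eqref{e:phi} over $U$ yields $\pi=\pi K_\gamma\ge \pi(U)\,\iota$. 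Hence any two invariant probability measures $\pi_1,\pi_2$ on $\Se_+$ both dominate a positive multiple of $\iota\neq 0$, so they are not mutually singular; since distinct ergodic invariant measures are mutually singular, there is at most one ergodic invariant probability measure on $\Se_+$, and by the ergodic decomposition at most one invariant probability measure on $\Se_+$. Combined with the previous step, this produces the unique $\pi$.

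Finally, for the convergence of the occupation measures I would argue as follows. Fix $\bz\in\Se_+$. By the tightness above, almost surely the random measures $(\widetilde\Pi_t)_t$ are tight, hence a.s.\ have weak$^*$ limit points; the standard martingale-difference argument (\cite{B18}) shows each such limit point is an invariant probability measure, and almost sure persistence (keeping mass away from $\Se_0$) together with the Lyapunov bound of A3) (keeping mass away from infinity) force it to be supported on $\Se_+$. By the uniqueness just proved, every limit point equals $\pi$, so a.s.\ $\widetilde\Pi_t\Rightarrow\pi$, and taking expectations (using the uniform integrability supplied by A3)) gives $\Pi_{t,\bz}\Rightarrow\pi$. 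The main obstacle is precisely the tightness/no-escape input used in the first and last steps: because $\Se$ is non-compact one must separately rule out leakage of mass of the empirical occupation measures to the extinction boundary $\Se_0$ (handled by almost sure persistence) and to infinity (handled by the Lyapunov structure of A3)), whereas in the compact setting of \cite{BS19} both are automatic; once Step 2 guarantees that every candidate invariant measure meets $U$, the minorization-based uniqueness argument is comparatively soft.
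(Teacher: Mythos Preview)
Your Steps 1 and 2 are correct and give a clean, self-contained route to existence and uniqueness of the invariant measure on $\Se_+$. This differs from the paper's approach: the paper does not argue existence/uniqueness directly but relies on the Lyapunov function $U^\theta(\bz)=\bigl(V(\bz)\prod_i x_i^{-p_i}\bigr)^\theta$ built in the course of proving Theorem~\ref{t:pers1_disc} (specifically Theorem~\ref{thm3.1}, inequality~\eqref{e:lya}). Because $U^\theta\to\infty$ both as $|\bz|\to\infty$ and as $\bz\to\Se_0$, its sublevel sets are compact in $\Se_+$, so \eqref{e:lya} is a genuine Foster--Lyapunov drift toward compacta of $\Se_+$. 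The minorization~\eqref{e:phi} together with $\bz^*\in\Gamma_{\Se_+}$ then makes the chain $\psi$-irreducible with petite compact sets, and the paper simply invokes Meyn--Tweedie/Benaim (positive Harris recurrence $\Rightarrow$ unique invariant $\pi$ and convergence of occupation measures). Your approach trades this single two-sided Lyapunov function for two separate inputs --- $V$ for tightness at infinity, almost-sure persistence for no escape to $\Se_0$ --- which is more elementary but only delivers the Ces\`aro/weak conclusion of Theorem~\ref{t:pers2_disc}, not the total-variation convergence of Theorem~\ref{t:pers3_disc}.

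There is, however, a gap in your Step 3. You assert that ``almost surely the random measures $(\widetilde\Pi_t)_t$ are tight'' by ``the Lyapunov bound of A3)'', but A3) only yields $\sup_t\E_\bz V(\BZ(t))<\infty$, hence tightness of the \emph{deterministic} $\Pi_{t,\bz}$; it does not by itself give almost-sure tightness of $\widetilde\Pi_t$ on a non-compact state space --- this is precisely why the paper introduces the additional Assumption~\ref{a:ext} for the extinction results (see Lemma~\ref{lmB.1}). The paper's route avoids the issue because the $U^\theta$ drift \eqref{e:lya} yields exponential return times to compacts of $\Se_+$ (see~\eqref{e:lya1}), hence Harris recurrence, from which almost-sure convergence of $\widetilde\Pi_t$ is standard. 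Your argument can be repaired, without $U^\theta$, if you are content with convergence of $\Pi_{t,\bz}$ only: Steps 1--2 already show that $(\Pi_{t,\bz})_t$ is tight, that every weak limit point is invariant (Feller plus the Krylov--Bogolyubov identity), and that every such limit is supported on $\Se_+$; uniqueness then forces $\Pi_{t,\bz}\Rightarrow\pi$, with no detour through $\widetilde\Pi_t$.
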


This result can be strengthened if the distribution of times, $\gamma$, can be replaced by a single time $m^*$, as follows.
\begin{thm}\label{t:pers3_disc}
Suppose that the assumptions of Theorem \ref{t:pers2_disc} hold,
and there is a positive integer, $m^*$, for which
\[
P_{m^*}(\bz,\cdot)\geq \iota(\cdot)
\]
for all $\bz\in U$.
Then there exists a unique invariant probability measure $\pi$
on $\Se_+$ for which the distribution of $\BZ(t)$
converges in total variation to $\pi$ as $t\to\infty$ whenever
$\BZ(0)=\bz\in\Se_+$.
\end{thm}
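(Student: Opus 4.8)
The plan is to show that $\BZ(t)$, restricted to the invariant set $\Se_+=\Se\setminus\Se_0=(0,\infty)^n\times\R^{\kappa_0}$, is a $\psi$-irreducible, aperiodic, positive Harris recurrent Markov chain on this Polish space that satisfies a geometric drift condition, and then to invoke the classical geometric ergodic theorem of Meyn--Tweedie (see \cite{MT,B18}): this produces a unique stationary law to which the marginals converge in total variation, in fact geometrically. The stationary law is necessarily the $\pi$ of Theorem~\ref{t:pers2_disc}, since $\pi$ is the unique invariant probability measure supported on $\Se_+$.

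The first and principal step is to produce a Lyapunov function adapted to $\Se_+$. Assumption~\ref{a:main}\,A3) supplies $V$, which forces the chain back from neighbourhoods of infinity; the persistence hypothesis \eqref{e:per}, through the construction already used to prove Theorems~\ref{t:pers1_disc}--\ref{t:pers2_disc} (in the spirit of \cite{H81,HN16,B18,BS19}), supplies a complementary function that blows up as $\bz$ approaches the extinction set $\Se_0$. Combining the two, I would obtain $\widehat V\colon\Se_+\to[1,\infty)$ that is proper on $\Se_+$ --- so $\widehat V(\bz)\to\infty$ both as $|\bz|\to\infty$ and as $\min_i x_i\to 0$ --- and satisfies $P\widehat V\le\kappa\widehat V+b\,\1_C$ with $\kappa\in(0,1)$, $b<\infty$, and $C:=\{\widehat V\le R\}$ a compact subset of $\Se_+$ for $R$ large; iterating, the analogue holds for $P_{m^*}$ with $\kappa^{m^*}<1$. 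Together with the structural properties below, this drift inequality delivers positive Harris recurrence and pins down $\pi$.

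Next I would check the three structural properties required by the ergodic theorem. \emph{$\psi$-irreducibility}: take $\psi=\iota$; for $\bz\in\Se_+$ and any Borel $A$ with $\iota(A)>0$, accessibility of $\bz^*$ from $\bz$ gives $t\ge 0$ with $P_t(\bz,U)>0$, whence $P_{t+m^*}(\bz,A)\ge\int_U P_t(\bz,d\bz')\,P_{m^*}(\bz',A)\ge P_t(\bz,U)\,\iota(A)>0$. \emph{Petiteness of compact subsets of $\Se_+$}: this follows from the Feller property (noted in the text) together with $\psi$-irreducibility and accessibility of $\bz^*$, via the standard theory of $T$-chains; in particular $C$ is petite. \emph{Aperiodicity}: this is exactly where the \emph{single} fixed time $m^*$, rather than the distribution $\gamma$ of Theorem~\ref{t:pers2_disc}, is used --- the bound $P_{m^*}(\bz,\cdot)\ge\iota(\cdot)$ for $\bz\in U$ says that $U$ is a one-step small set for the $m^*$-step kernel $P_{m^*}$ with the fixed, non-zero minorizing measure $\iota$, and a standard cyclic-decomposition argument (a nontrivial period would force $\iota$ to be concentrated on a single cyclic class, which accessibility of $\bz^*$ rules out) then gives aperiodicity of the chain.

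Finally, $\psi$-irreducibility, aperiodicity, the petite set $C$, and the geometric drift toward $C$ feed into the Meyn--Tweedie geometric ergodic theorem, yielding a unique invariant probability measure --- necessarily $\pi$ --- and constants $\Theta<\infty$, $r\in(0,1)$ with $\norm{P_t(\bz,\cdot)-\pi}_{\mathrm{TV}}\le\Theta\,r^{t}\,\widehat V(\bz)$ for all $t$; specializing to $\bz\in\Se_+$ gives the asserted total variation convergence. (Equivalently one may run Harris' theorem on the $m^*$-step kernel $Q=P_{m^*}$, for which $U$ is a one-step small set and $C$ is petite, to obtain geometric convergence of $P_{nm^*}(\bz,\cdot)\to\pi$ in a $\widehat V$-weighted norm, and then, using that $\pi$ is automatically $P$-invariant and that $\int\widehat V\,d(P_r(\bz,\cdot))\le\kappa^r\widehat V(\bz)+b<\infty$, spread the convergence over the residues $r=0,\dots,m^*-1$.) I expect the main obstacle to be the first step: constructing $\widehat V$ with a clean geometric drift toward a \emph{compact} subset of $\Se_+$, reconciling the control near infinity coming from A3) with the control near the extinction set coming from \eqref{e:per}. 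That part is, however, essentially already contained in the proof of Theorem~\ref{t:pers2_disc}; the genuinely new ingredient in Theorem~\ref{t:pers3_disc} is that replacing the distribution $\gamma$ by a single time upgrades the petite/recurrent structure to an aperiodic small-set structure, after which the conclusion is a black-box application of Harris-chain ergodic theory.
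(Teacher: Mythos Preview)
Your approach is essentially the same as the paper's. The paper does not give a standalone proof of Theorem~\ref{t:pers3_disc}; instead, a remark states that once Theorem~\ref{t:pers1_disc} is established the proofs of Theorems~\ref{t:pers2_disc} and~\ref{t:pers3_disc} follow from the standard convergence results of \cite{MT} and \cite{B18}, and the concrete Lyapunov function is built in Appendix~\ref{s:a} (Theorem~\ref{thm3.1}) as $U^\theta(\bz)=\bigl(V(\bz)\prod_i x_i^{-p_i}\bigr)^\theta$, which is exactly the $\widehat V$ you describe abstractly --- blowing up both at infinity and near $\Se_0$, and satisfying the geometric drift inequality $\E_\bz U^\theta(\BZ(n^*T^*))\leq \kappa\,U^\theta(\bz)+\tilde K$. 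Your identification of the single-time minorization as the ingredient that upgrades petiteness to aperiodic smallness, followed by a black-box application of the Meyn--Tweedie geometric ergodic theorem, matches the paper's intent precisely.
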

These results provide significant generalizations to Theorems 2
and 3 by \cite{SBA11} where only compact state spaces were
considered.
\begin{rmk}
For a deeper understanding of the concepts of accessibility and
irreducibility we refer the reader to the work by \cite{MT} and
Sections 4.2 and 4.3 by \cite{B18}. Having proven Theorem
\ref{t:pers1_disc}, the proofs of Theorems
\ref{t:pers2_disc} and \ref{t:pers3_disc} follow from the well
known convergence results of \cite{MT} and \cite{B18}.
\end{rmk}
\subsection{Extinction}
A complete understanding of community dynamics requires not only understanding when species persist, but also when species go extinct. Mathematically,
we are interested in conditions under which the process modelled
by \eqref{e:system_discrete} converges to the boundary $\Se_0$.

We need one extra condition, that ensures that the process can
get close to the boundary, in order to get our extinction
results.
\begin{asp}\label{a:ext}
There exists a function $\phi:\Se\to(0,\infty)$ and constants
$C, \delta_\phi>0$ such that for all $\bz\in\Se$
\begin{equation}\label{e1:ext}
\E_\bz V(\BZ(1))\leq V(\bz)-\phi(\bz)+C
\end{equation}
and
\begin{equation}\label{e2:ext}
\E_{\bz}\left(V(\BZ(1))-PV(\bz)\right)^2+\E\left|\log
F(\bz,\xi(1))-\E\log F(\bz,\xi(1))\right|^2\leq \delta_\phi
\phi(\bz).
\end{equation}
where $V$ is the function satisfying Assumption \ref{a:main}.
\end{asp}

\begin{rmk}
To obtain the extinction results, we need to manage the fluctuation of the process $\BZ(t)$.
This assumption bounds the variation of the process $\BZ(t)$. We need that the next step position $\BZ(n+1)$ (resp. $\log\BZ(n+1)$) does not fluctuate too widely from the current position $\BZ(n)$ (resp. $\log\BZ(n)$). Mathematically, we need certain boundedness of the quadratic terms $\E_{\bz}\left(V(\BZ(1))-PV(\bz)\right)^2$ and $\E\left|\log
F(\bz,\xi(1))-\E\log F(\bz,\xi(1))\right|^2$. Using well-known Lyapunov arguments, the boundedness can be obtained if the quadratic terms above are bounded by the dissipative part $V(\bz)-\E_\bz V(\BZ(1))$ (which is assumed to be bounded below, so it can be of the form $\phi(\bz)-C$ for some nonnegative function $\phi(z)$).

In many cases, the function $\phi(z)$ in \eqref{e1:ext} can be chosen to be $V(\bz)$ or $V^2(\bz)$. In general the function$\phi(\bz)$ by looking at  $V(\bz)-\E_\bz V(\BZ(1))$.

Having Assumption \ref{a:ext}, we can show the
family of random occupation measures $(\widetilde\Pi_t)_{t\in\N}$ is
tight. It is the discrete time analogue of Assumption 1.4 by
\cite{HN16}.
\end{rmk}
The first result tells us that under a certain condition there
is asymptotic extinction with probability $1$ no matter what the
initial densities are.
\begin{thm}\label{t:ext1_disc}
If there exist positive numbers $p_1,\dots,p_n$ such that
\[
\sum_{i=1}^np_ir_i(\mu)<0
\]
for all ergodic probability measures supported by $\Se_0$ and
$\Se_0$ is accessible then there exists $\alpha>0$ such that for
any $\bz\in\Se$
\[
\PP_\bz\left(\limsup_{t\to\infty}\frac{\ln
d(\BZ(t),\Se_0)}{t}=-\alpha\right)=1,
\]
where $d(\bz, \Se_0)= \min_i x_i$.
\end{thm}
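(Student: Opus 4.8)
The plan is to adapt the continuous-time extinction argument of \cite{HN16} to discrete time, replacing It\^o calculus by discrete martingale estimates, and to proceed in four steps. \emph{First}, using Assumption \ref{a:ext} I would show that for every $\bz$, $\PP_\bz$-almost surely the family $(\widetilde\Pi_t)_{t\in\N}$ is tight, so that its weak$^*$-limit points are invariant probability measures of $\BZ$: \eqref{e1:ext} turns $V(\BZ(t))+\sum_{s<t}(\phi(\BZ(s))-C)$ into a nonnegative supermartingale, while \eqref{e2:ext} bounds the conditional second moment of $V(\BZ(t+1))-PV(\BZ(t))$ by $\delta_\phi\phi(\BZ(t))$; a Chebyshev and Borel--Cantelli argument then keeps $\frac1t\sum_{s<t}\phi(\BZ(s))$ bounded a.s., and since $\phi>0$ and sublevel sets of $V$ are compact (A3)-ii)) no mass of $\widetilde\Pi_t$ escapes to infinity. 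This is the discrete analogue of Assumption 1.4 of \cite{HN16}; in particular there is a compact $K\subset\Se$ that $\BZ$ revisits infinitely often a.s.

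\emph{Second} (the crux), with the weights $p_i>0$ from the hypothesis put $U(\bx):=\prod_i x_i^{p_i}$, so that $\ln U(\BX(t))=\ln U(\BX(0))+\sum_{s<t}\Lambda(\BZ(s))+M_t$, where $\Lambda(\bz):=\sum_i p_i\,\E[\ln F_i(\bz,\xi(1))]$ and $M_t$ is a martingale whose increments are $L^2$-controlled by \eqref{e2:ext}. Using $|\ln F_i|\le\gamma_3^{-1}\ln h$ together with A3)-iii), the quantity $\Lambda$ is finite, integrable against every invariant measure (and $\sum_i p_i r_i(\mu)=\int_\Se\Lambda\,d\mu$ is well-defined by Proposition \ref{p:rate}), and $\mu\mapsto\int\Lambda\,d\mu$ is upper semicontinuous; by ergodic decomposition the hypothesis gives $\int_\Se\Lambda\,d\mu<0$ for every invariant $\mu$ supported on $\Se_0$, hence $\max_{\mu\in\Conv(\M)}\int\Lambda\,d\mu<0$ by compactness of $\Conv(\M)$. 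The heart of the proof is then a drift-plus-concentration estimate following \cite{HN16}: for a sufficiently small neighborhood $N_*$ of the relevant part of $\Se_0$ inside $K$, whenever the process starts in $N_*$ and has not yet exited it, the attraction of the boundary dynamics to $\Conv(\M)$ forces the mean of the length-$T$ increment of $\ln U$ to be $\le-\delta T$ for some fixed $\delta>0$ and large $T$, while \eqref{e2:ext} keeps that increment concentrated about its mean. Running this over successive blocks of length $T$ and stopping at the exit time $\tau$ of $N_*$ --- this block/averaged form is needed because $\Lambda$ is \emph{not} negative pointwise near $\Se_0$, low densities giving large fitnesses --- one obtains $\PP_\bz(\tau<\infty)\le\eps(\bz)$ with $\eps(\bz)\to0$ as $d(\bz,\Se_0)\to0$, and, on $\{\tau=\infty\}$, $\limsup_t\frac1t\ln U(\BX(t))\le-\delta<0$. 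Since $\ln d(\BZ(t),\Se_0)=\min_i\ln X_i(t)\le(\sum_i p_i)^{-1}\ln U(\BX(t))$, the process converges to $\Se_0$ exponentially fast on $\{\tau=\infty\}$.

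\emph{Third and fourth}, I would globalize and identify the rate. For arbitrary $\bz\in\Se$: accessibility of $\Se_0$, the Feller property, and the recurrence of $K$ give a uniform lower bound on the probability of entering $\{d(\cdot,\Se_0)<\eta'\}$ (for a fixed $\eta'$ small enough that $\sup_{\{d<\eta'\}\cap K}\eps\le\frac12$) within a bounded time from any point of $K\cap\Se_+$ --- a compactness argument in the spirit of \cite{B18}. Combining this with the bound $\PP_\bz(\tau<\infty)\le\eps(\bz)$ and the strong Markov property, a conditional Borel--Cantelli argument shows that $\PP_\bz$-a.s. the process eventually enters $\{d(\cdot,\Se_0)<\eta'\}$ and never leaves $N_*$, so $d(\BZ(t),\Se_0)\to0$ exponentially a.s. On this event the limit points of $(\widetilde\Pi_t)$ are invariant and, since $d(\BZ(t),\Se_0)\to0$, supported on $\Se_0$; confinement near the boundary pins them to a single ergodic $\mu^*\in\M$, and Birkhoff's theorem applied to $\ln F_i$ along $\widetilde\Pi_t$ (passing from truncations to the full integrand via the moment bounds of A3)) gives $\frac1t\ln X_i(t)\to r_i(\mu^*)$, whence $\frac1t\ln d(\BZ(t),\Se_0)=\min_i\frac1t\ln X_i(t)\to\min_i r_i(\mu^*)$, a quantity that is $\le-(\sum_i p_i)^{-1}\max_{\mu\in\Conv(\M)}\sum_i p_i r_i(\mu)<0$; this is the rate $\alpha$.

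The main obstacle is the drift-plus-concentration estimate of the second step: because $\Lambda$ is not negative pointwise near $\Se_0$, the argument cannot be a one-step supermartingale but must be an averaged/block estimate, carefully localized at the exit time $\tau$, and it must be made \emph{uniform} over the neighborhood $N_*$ so that it dovetails with the accessibility input of the third step. Handling the boundary-face structure of $\Se_0$ and the passage from truncated to full integrands (via the moment controls in Assumptions \ref{a:main} and \ref{a:ext}) is where the bulk of the technical work lies; everything else is bookkeeping around those Lyapunov bounds.
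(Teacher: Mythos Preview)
Your first three steps track the paper closely. The paper does not single out a proof of this theorem, but the extinction machinery of Appendix~\ref{s:b} (Lemma~\ref{lmB.1} for tightness, the block estimate Lemma~\ref{lm4.2}, Proposition~\ref{prop4.1}, and the supermartingale argument proving Theorem~\ref{t:exxx}) is exactly the discrete-time adaptation of \cite{HN16} you sketch. One structural difference: the paper's Lyapunov function carries the factor $V$, working with $W_\theta(\bz)=\big[V(\bz)\prod_i x_i^{p_i}\big]^\theta$ rather than your $U(\bx)=\prod_i x_i^{p_i}$. Including $V$ is what makes the one-step contraction $PW_\theta\le\rho_2^\theta W_\theta$ hold uniformly for $|\bz|\ge M$ (Lemma~\ref{lmA.2}), so that the near-boundary block estimate and the control on excursions to infinity fuse into a single global supermartingale inequality for $\wtd W_\theta:=W_\theta\wedge\varsigma$ (see the proof of Theorem~\ref{t:exxx}). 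Your separate treatment via ``recurrence of $K$'' is workable in principle, but it means you must still track what $U(\BX(t))$ does during excursions outside $K$; the paper's packaging sidesteps this.

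The genuine gap is in your fourth step. The claim that ``confinement near the boundary pins them to a single ergodic $\mu^*\in\M$'' is unjustified and in general false: the random weak$^*$-limit set $\U(\omega)\subset\Conv(\M)$ need be neither a singleton nor consist of ergodic measures. This is precisely why the conclusions of Theorems~\ref{t:ex22} and~\ref{t:ex33} only place the rates in a \emph{set} $\{r_j(\mu):\mu\in\Conv(\M^{I,+})\}$, with a definite rate requiring the extra hypothesis of a unique attractor per subspace. The paper does not identify the rate via Birkhoff at all; the supermartingale bound on $W_\theta$ already delivers $\limsup_t t^{-1}\ln d(\BZ(t),\Se_0)\le -\alpha$ for a deterministic $\alpha>0$ read off from the constant $\kappa_e$ in the proof of Theorem~\ref{t:exxx}, and accessibility of $\Se_0$ then upgrades the local estimate to hold from every $\bz\in\Se_+$. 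Your Birkhoff route to an exact rate would require uniqueness assumptions that are not among the hypotheses of this theorem.
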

The above theorem tells us when there is extinction but does not
tell us exactly which species go extinct. In order to gain this
extra information we need a few more definitions.

For a nonempty subset $I\subset \{1,\dots,n\}$ define
$\Se^I:=\{(\bx,\by)\in\Se~|~x_i=0, i\notin I\}$. Note that $\Se^I$ is the subspace in which all species not in $I$ are absent, and some or all species from $I$ are present. The set $\Se^I$ represents a subcommunity where we can define persistence and extinction sets relative to that subcommunity.

 Let $\Se^\emptyset = \{(0,\by)\in\Se\}$ be the set where all species are extinct. If we restrict the
process to $\Se^I$ then the extinction set, where at least one species from $I$ is extinct, is given by
$\Se_0^I:=\{\bz\in \Se^I~|~\prod_{j\in I}x_j=0\}$ and the
persistence set, where all species from $I$ persist, by $\Se_+^I:=\Se^I\setminus \Se_0^I$.

Let $\M^I:=\{\mu\in\M~|~\mu(\Se^I)=1\}$ be the set of ergodic measures supported on the subspace $\Se^I$. A measure from $\M^I$ will therefore support all, or a subset of, the species from $I$.  Let $\M^{I,+}:=\{\mu\in\M~|~\mu(\Se_+^I)=1\}$ be the set of ergodic measures supported on the subspace $\Se^{I}_+$. A measure from $\M^{I,+}$ does not put any mass on the extinction set $\Se^I_0$, i.e. all the species from $I$ are present and no species is extinct. Finally, we define $\M^{I,\partial}:=\{\mu\in\M~|~\mu(\Se_0^I)=1\}$ to be the set of ergodic probability measures supported on $\Se^I_0$ respectively. If a measure is in $\M^{I,\partial}$ this measure represents a subcommunity from $I$ where at least one species is extinct.

Denote the
(random) set of weak$^*$-limit points of $(\widetilde\Pi_t)_{t\in \N}$ by
$\U=\U(\omega)$.

\begin{deff}\label{a.extn}
Let $E_1$ be the family of subsets $I$ of $\{1,\dots,n\}$ such
that the following properties hold:
\begin{enumerate}
\item For all $I\in E_1$ we have that $\M^{I,+}\neq \emptyset$ and
  \begin{equation}\label{ae3.1}
\max_{i\notin I} r_i(\mu)<0, \mu\in \Conv\left(\M^{I,+}\right).
\end{equation}

  \item If $I\ne\emptyset$, suppose further that
for any $\nu\in\Conv(\M^{I,\partial})$, we have
\begin{equation}\label{ae3.2}
\max_{i\in I} r_i(\nu)>0.
\end{equation}
\end{enumerate}
Let $\mathcal{P}(n)$ be the set of all subsets of
$\{1,\dots,n\}$ and define $E_2:=\mathcal{P}(n)\setminus E_1$.
\end{deff}
\begin{rmk}
The subset $E_1$ tells us which subspaces $S_+^I$ are
attractors. If $I\in E_1$ then invariant probability measures
living on $S_+^I$ are attractors while invariant measures living
on the boundary $\Se^I_0$ are repellers. This way, if $\BZ(t)$
gets close to $\Se^I_+$ using \eqref{ae3.1} one can see the
process gets attracted towards $\Se^I_+$. The second condition
\eqref{ae3.2} ensures that the process gets repelled from the
boundary $\Se^I_0$ where the behaviour can be different.
\end{rmk}

\begin{thm}\label{t:ex}
For any $I \in E_1$ not equal to  $\{1,..,n\}$, there exists
$\alpha_I>0$ such that, for any compact set
$\K^I\subset \Se^{I}_+$,
$$
\lim_{\dist(\bz,\K^I)\to0, \bz\in
S_+}\PP_\bz\left\{\lim_{t\to\infty}\dfrac{\ln X_i(t)}t\leq
-\alpha_I, i\notin I\right\}=1.
$$
\end{thm}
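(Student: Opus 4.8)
The plan is to reduce the statement to the persistence of the subcommunity $I$ together with a shadowing‑plus‑supermartingale argument that drives the coordinates $x_i$, $i\notin I$, to zero exponentially. Restricted to $\Se^I$, the process satisfies Assumption \ref{a:main} with $I$ replacing $\{1,\dots,n\}$, and condition \eqref{ae3.2} is precisely hypothesis \eqref{e:per} for this restricted system; hence by Theorem \ref{t:pers1_disc} the $I$‑subcommunity is stochastically persistent, and the Lyapunov function witnessing this (built as in the proof of Theorem \ref{t:pers1_disc}) together with a maximal inequality yields a compact set $\K_1\subset\Se^I_+$ with $\K^I\subset\inte\K_1$ such that the restricted process started in $\K^I$ stays in $\K_1$ for all time with probability at least $1-\eps$. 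Separately, $\mu\mapsto r_i(\mu)$ is affine and, by the $h$‑moment bound in A3)-iii) (which forces $\E[e^{\gamma_3|\log F_i(\bz,\xi(1))|}]\le\rho V(\bz)+C$, hence uniform integrability of $\log F_i$ on compacts), $\mu\mapsto\max_{i\notin I}r_i(\mu)$ is upper semicontinuous on tight families of measures; since $\Conv(\M^{I,+})$ is weak$^*$ compact, \eqref{ae3.1} then supplies $\alpha_I>0$ with $r_i(\mu)\le-2\alpha_I$ for all $i\notin I$ and $\mu\in\Conv(\M^{I,+})$, and a weak$^*$ neighborhood $\mathcal W$ of $\Conv(\M^{I,+})$ on which $r_i(\cdot)\le-\alpha_I$.

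Fix $\eps>0$ and let $\mathcal T:=\{\bz\in\Se:\dist(\bz,\K_1)<\delta\}$ for a small $\delta>0$ to be chosen; on $\mathcal T$ one has $x_i<\delta$ for $i\notin I$, while the $I$‑coordinates and $\by$ lie in a fixed compact set bounded away from $\Se^I_0$, and $V$ is bounded on $\overline{\mathcal T}$. Let $\tau$ be the first exit time of $\BZ(t)$ from $\mathcal T$. The key claim is that $\PP_\bz(\tau=\infty)\ge1-2\eps$ whenever $\bz\in\Se_+$ is close enough to $\K^I$. To prove it I would introduce, for each $i\notin I$, a bounded corrector $\psi_i:\Se\to\R$ with $P\psi_i-\psi_i\le-\E[\log F_i(\cdot,\xi(1))]-\alpha_I$ on $\mathcal T$, whose existence follows from the persistence of the $I$‑subcommunity (this is the main use of \eqref{ae3.2} beyond ensuring $\M^{I,+}\ne\emptyset$) and is built as the functions used in the persistence arguments of \cite{B18,HN16}. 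Then $t\mapsto\log X_i(t\wedge\tau)+\psi_i(\BZ(t\wedge\tau))+\alpha_I(t\wedge\tau)$ is a supermartingale, and since $\log X_i(0)$ is very negative for $\bz$ near $\K^I$, a maximal inequality bounds by $O(\eps)$ the probability that any $X_i$, $i\notin I$, reaches level $\delta$ before $\tau$; an analogous estimate using the persistence Lyapunov function of $\Se^I$ and the global $V$ of A3) bounds the probability that the $I$‑coordinates or $\by$ leave the compact core of $\mathcal T$. As these are the only ways to exit $\mathcal T$, the claim follows.

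On $\{\tau=\infty\}$ the process stays in $\mathcal T$ forever, so for each $i\notin I$ we may write
\[
\frac{\log X_i(t)}{t}=\frac{\log X_i(0)}{t}+\frac{M_i(t)}{t}+\frac1t\sum_{s=0}^{t-1}\E\!\left[\log F_i(\BZ(s),\xi(s))\,\middle|\,\BZ(s)\right],
\]
where $M_i$ is the martingale of centered increments. On $\{\tau=\infty\}$ the process remains where $V$, hence the conditional second moment of $\log F_i$ (again by the $h$‑bound), is bounded, so $M_i(t)/t\to0$ a.s.; also $\log X_i(0)/t\to0$; and the last term equals $\int\E[\log F_i(\bz,\xi(1))]\,\widetilde\Pi_t(d\bz)$ up to a vanishing boundary term. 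Since on $\{\tau=\infty\}$ the measures $\widetilde\Pi_t$ are supported on the compact set $\overline{\mathcal T}$, their weak$^*$ limit points are invariant probability measures supported on $\overline{\mathcal T}$; for $\delta$ small these are within $O(\delta)$ of invariant measures of the restricted process supported on $\K_1\subset\Se^I_+$, hence lie in $\mathcal W$, so $\limsup_t\int\E[\log F_i(\cdot,\xi(1))]\,d\widetilde\Pi_t\le-\alpha_I$. Combining, $\limsup_t\log X_i(t)/t\le-\alpha_I$ for every $i\notin I$ on $\{\tau=\infty\}$, and $\PP_\bz(\tau=\infty)\ge1-2\eps$ for $\bz\in\Se_+$ close to $\K^I$; letting $\eps\downarrow0$ gives the theorem.

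The step I expect to be the main obstacle is the estimate $\PP_\bz(\tau=\infty)\ge1-2\eps$: it is superficially circular, because confinement to the tube $\mathcal T$ is what pushes the occupation measure into $\mathcal W$, which is in turn what makes the $i\notin I$ species decay and keeps the process in $\mathcal T$. The circularity is broken by the bounded correctors $\psi_i$ coming from the persistence of the $I$‑subcommunity, and — this is the genuinely new difficulty compared with the compact‑state‑space setting of \cite{BS19} — every supermartingale, tightness and semicontinuity estimate above must be made uniform over the non‑compact tail, which is exactly what the $V$‑ and $h$‑controls built into Assumption \ref{a:main} are there to supply.
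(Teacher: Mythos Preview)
Your overall strategy---confine the process to a tube around a compact in $\Se^I_+$, then read off exponential decay of the $i\notin I$ coordinates from occupation measures---is reasonable in spirit, but the argument as written has a genuine gap at the corrector step, and the paper avoids this step entirely by a different (and more direct) construction.

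\textbf{The gap.} You assert the existence of bounded correctors $\psi_i$ satisfying $P\psi_i-\psi_i\le -\E[\log F_i(\cdot,\xi(1))]-\alpha_I$ on $\mathcal T$, ``built as the functions used in the persistence arguments of \cite{B18,HN16}.'' But those references do not produce bounded solutions to Poisson-type inequalities; they produce Lyapunov functions of the form $U^\theta(\bz)=[V(\bz)\prod_j x_j^{-p_j}]^\theta$, which are unbounded near $\Se^I_0$ and at infinity. Bounded correctors of the type you want would typically require unique ergodicity (or at least a spectral gap) of the restricted process on $\Se^I_+$, and the theorem makes no such assumption---indeed $\M^{I,+}$ may contain several ergodic measures. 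Without the correctors you cannot break the circularity you yourself flag: you need confinement in $\mathcal T$ to force the empirical measures into the neighborhood $\mathcal W$ of $\Conv(\M^{I,+})$, but you need the drift estimate (hence the correctors) to establish confinement. A related secondary issue: on $\{\tau=\infty\}$ the weak$^*$ limits of $\widetilde\Pi_t$ are invariant for the \emph{full} process and supported on $\overline{\mathcal T}$, but nothing yet forces them onto $\Se^I$; an ergodic limit $\mu$ with $\mu(\Se^i)=1$ for some $i\notin I$ would have $r_i(\mu)=0$, defeating the $r_i(\cdot)\le-\alpha_I$ bound you want to extract.

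\textbf{What the paper does instead.} The paper combines conditions \eqref{ae3.1} and \eqref{ae3.2} into a \emph{single} weighted growth-rate inequality (this is \eqref{e3.3}): one finds $\hat p_j>0$ for $j\in I$ and a small $\check p>0$ such that $\sum_{j\in I}\hat p_j r_j(\nu)-\check p\max_{i\notin I}r_i(\nu)>3r_e>0$ for every $\nu\in\Conv(\M^I)$, not just $\Conv(\M^{I,+})$. This lets one build, for each $i\notin I$, the explicit function $W(\bz,i)=V(\bz)\,x_i^{\check p}\prod_{j\in I}x_j^{-\hat p_j}$ and then $W_\theta(\bz)=\sum_{i\notin I}W(\bz,i)^\theta$. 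Via the same Taylor/log-Laplace argument used for persistence (Proposition \ref{prop2.1}, here Proposition \ref{prop4.1}), one gets $\E_\bz W_\theta(\BZ(T))\le e^{-\frac12\theta r_e T}W_\theta(\bz)$ for $\|\bz\|\le M$ and $x_i<\delta_e$ ($i\notin I$), and $PW_\theta\le\rho_2^\theta W_\theta$ for $\|\bz\|>M$ from Lemma \ref{lmA.2}. The single function $W_\theta$ thus simultaneously controls escape to infinity, approach to $\Se^I_0$, and growth of the $i\notin I$ coordinates; a supermartingale/Borel--Cantelli argument with the truncated process $\varsigma\wedge W_\theta$ then gives both $\PP_\bz\{\eta^*=\infty\}\ge1-\eps$ and the exponential rate directly, with no correctors and no occupation-measure continuity step. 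The point you are missing is that the persistence weights $\hat p_j$ and the extinction weight $\check p$ can be fused into one Lyapunov function rather than handled by separate devices.
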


\textit{\textbf{Biological Interpretation}: Suppose there exists
at least one subspace $\Se^{I}_+$ for some $I\subset
\{1,\dots,n\}$ that is attracting. An attracting subspace is
loosely speaking a subspace where every supported community at
stationarity cannot be invaded by any outsider species. Then, if
we start the system with densities $\bz\in\Se_+$ which are close
to $\Se_+^I$, the species which are not supported by $I$ go
extinct exponentially fast with a probability that can be made
arbitrarily close to $1$.}

\begin{thm}\label{t:ex22}
Assume either that $E_2=\emptyset$ or that for all $J\in E_2$
\[
\max_{i} r_i(\nu)>0, \nu\in\M^{J,+}.
\]
If $\bigcup_{I\in E_1}\Se^{I}_+$ is accessible then
$$
\sum_{I\in E_1} p_{\bz,I}=1
$$
where
$$p_{\bz,I}:=\PP_\bz\left\{\emptyset\neq\U(\omega)\subset\Conv\left(\M^{I,+}\right)
~\text{and}~\lim_{t\to\infty}\frac{\ln
X_j(t)}{t}\in\left\{r_j(\mu):\mu\in\Conv\left(\M^{I,+}\right)\right\},
j\notin I\right\}.$$
\end{thm}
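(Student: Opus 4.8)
The plan is to combine four ingredients: the tightness of the occupation measures furnished by Assumption~\ref{a:ext}; the exclusion (repulsion) estimate that underlies Theorem~\ref{t:pers1_disc}, applied locally near each boundary ergodic measure carrying a positive invasion rate; the exponential extinction estimate of Theorem~\ref{t:ex}; and the accessibility hypothesis, used through a recurrence/strong-Markov argument. First, by Assumption~\ref{a:ext} the family $(\widetilde\Pi_t)_{t\in\N}$ is almost surely tight (the discrete-time analogue of the tightness used in \cite{HN16}), so $\U(\omega)\neq\emptyset$ with probability one, and standard arguments using the Feller property of $\BZ$ show that every $\pi\in\U(\omega)$ is an invariant probability measure; by the ergodic decomposition it is a mixture of ergodic measures, each with a well-defined species support $I$ and hence lying in $\M^{I,+}$ (for $I=\{1,\dots,n\}$ these are the ergodic measures on $\Se_+$, and for $I\subsetneq\{1,\dots,n\}$ they belong to $\M$).

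The heart of the argument is the \emph{exclusion lemma}: with probability one, no ergodic measure $\mu$ occurring in the decomposition of any $\pi\in\U(\omega)$ has $\max_i r_i(\mu)>0$. This is proved as the repulsion half of Theorem~\ref{t:pers1_disc}, localized: if $r_{i_0}(\mu)>0$, then using a Lyapunov function of the form $\sum_j p_j\ln x_j$ with the fluctuation terms controlled by the $L^2$-estimate \eqref{e2:ext} and by $h$ in A3), one shows that whenever $\widetilde\Pi_t$ is weak$^*$-close to a measure charging $\mu$ the density of species $i_0$ tends to grow, so $\widetilde\Pi_t$ is pushed away; hence $\mu$ cannot be a component of a limit point. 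Granting this, Proposition~\ref{p:rate} (i.e. $r_i(\mu)=0$ for $i\in\Se(\mu)$) forces $\max_i r_i(\mu)=0$ for every ergodic $\mu$ seen by $\U(\omega)$; the theorem's hypothesis then excludes $\Se(\mu)\in E_2$, since $\Se(\mu)=J\in E_2$ would make $\M^{J,+}\ni\mu$ nonempty and hence, by hypothesis, $\max_i r_i(\mu)>0$, a contradiction. Therefore every ergodic component of every $\pi\in\U(\omega)$ has species support in $E_1$. Establishing the exclusion lemma in its uniform (over $\mu$) and almost-sure form is where the principal work lies.

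Next I would show that, almost surely, $\U(\omega)$ is contained in a single $\Conv(\M^{I,+})$ with $I\in E_1$. Since the strata $\Se_+^I$, $I\in E_1$, are pairwise disjoint and (by the previous step) $\U(\omega)$ lives on their union, the process $\BZ(t)$ must come arbitrarily close to some $\Se_+^I$; combining accessibility of $\bigcup_{I\in E_1}\Se_+^I$, the return-to-compacts property from A3), and a Borel--Cantelli/strong-Markov argument, almost surely $\BZ(t)$ eventually enters any prescribed small neighborhood of a compact $\K^I\subset\Se_+^I$ for some $I\in E_1$. If $I\neq\{1,\dots,n\}$, Theorem~\ref{t:ex} applied at the corresponding stopping time gives, with probability close to one, $\frac{\ln X_j(t)}{t}\le-\alpha_I<0$ for all $j\notin I$, so those species converge to $0$; meanwhile, the restricted Markov process on the invariant subspace $\Se^I$ satisfies \eqref{ae3.2}, which is exactly the persistence hypothesis of Theorem~\ref{t:pers1_disc} for the subcommunity $I$, so the process stays near $\Se_+^I$ and its occupation measures accumulate only on invariant measures of that subcommunity, i.e. on $\Conv(\M^{I,+})$ (when $I=\{1,\dots,n\}$, condition \eqref{ae3.2} is the full persistence condition and Theorem~\ref{t:pers2_disc}, or already Theorem~\ref{t:pers1_disc}, applies directly). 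A $0$--$1$ argument promotes ``probability close to one'' to the statement that the events $\{\emptyset\neq\U(\omega)\subset\Conv(\M^{I,+}),\ \dots\}$, $I\in E_1$, exhaust a set of full $\PP_\bz$-measure, which is precisely $\sum_{I\in E_1}p_{\bz,I}=1$.

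Finally, for the growth-rate part of $p_{\bz,I}$: on the event that $\widetilde\Pi_t$ accumulates only on $\Conv(\M^{I,+})$, fix $j\notin I$ and write $\ln X_j(t)=\ln X_j(0)+\sum_{s=0}^{t-1}\ln F_j(\BZ(s),\xi(s))$, splitting $\ln F_j(\BZ(s),\xi(s))=\big(\E[\ln F_j(\bz,\xi(1))]\big)|_{\bz=\BZ(s)}+M_s$ with $(M_s)$ a martingale-difference sequence. The $L^2$-bound \eqref{e2:ext} and the martingale strong law give $\frac1t\sum_{s<t}M_s\to0$, while the moment control from $h$ in A3) makes $\bz\mapsto\E[\ln F_j(\bz,\xi(1))]$ uniformly integrable against $\widetilde\Pi_t$; hence along any subsequence with $\widetilde\Pi_t\to\mu\in\Conv(\M^{I,+})$ one obtains $\frac{\ln X_j(t)}{t}\to r_j(\mu)$, so that all limit points of $\frac{\ln X_j(t)}{t}$ lie in $\{r_j(\mu):\mu\in\Conv(\M^{I,+})\}$, as claimed. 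The main obstacle in the whole proof is the exclusion lemma of the second paragraph; the secondary difficulty is the upgrade in the third paragraph from ``positive probability of entering an attractor'' to the covering identity, which rests on combining accessibility, recurrence, and Theorem~\ref{t:ex} via the strong Markov property.
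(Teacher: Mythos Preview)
Your proposal is correct and follows essentially the same route as the paper, which gives only a one-line proof deferring to Theorem~\ref{t:ex} (restated as Theorem~\ref{t:exxx}) together with Lemmas~5.8, 5.9 and Theorem~5.2 of \cite{HN16}; you have simply unpacked those ingredients---tightness of $(\widetilde\Pi_t)$ from Lemma~\ref{lmB.1}, the exclusion/repulsion lemma, the local extinction estimate, and the accessibility/strong-Markov/Borel--Cantelli upgrade to the covering identity. One minor slip: the claim that Proposition~\ref{p:rate} ``forces $\max_i r_i(\mu)=0$'' fails when $\Se(\mu)=\emptyset$, but this is harmless since your subsequent contradiction argument only uses $\max_i r_i(\mu)\le 0$ from the exclusion lemma.
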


\begin{figure}
\begin{center}
\begin{tikzpicture}[scale=0.7, line width = 1.2 pt]
\draw[->] (0,-0.5) -- (0,6);
\draw (0,6.2) node {$x_3$};
\draw[->](-0.5,0) -- (8,0);
\draw (8.3,0) node {$x_1$};
\draw[->](0.2,0.25) -- (-4,-5);
\draw (-4.1,-5.3) node {$x_2$};
\draw (-0.4,0.3) node {$\delta_0$};

\filldraw[fill=green!20!white,draw=green!50!black] (-1,-3) .. controls +(0.5,0.5) and +(-0.5,0)..(1.8,-2.5) .. controls +(0.5,0) and + (0.3,0.5) .. (2.5,-3.5) .. controls +(-0.3,-0.5) and + (0.3,0.5) .. (0.5,-4).. controls +(-0.3,-0.5) and + (0.1,-0.5) .. (-1.5,-4.5) .. controls +(-0.1,0.5) and + (-0.5,-0.5) .. (-1,-3);
\draw (0.5,-4.5) node {$\mu^1_{12}$};

\filldraw[fill=green!20!white,draw=green!50!black] (4.5,-1) .. controls +(0.5,0.5) and +(-0.5,0)..(6,-0.5) .. controls +(0.5,0) and + (0.3,0.5) .. (7,-1) .. controls +(-0.3,-0.5) and + (0.3,0.5) .. (5,-2).. controls +(-0.3,-0.5) and + (0.1,-0.5) .. (4,-2) .. controls +(-0.1,0.5) and + (-0.5,-0.5) .. (4.5,-1);
\draw (5.5,-2.2) node {$\mu^2_{12}$};

\draw[green!70!black, line width =2pt] (0,2.5)--(0,4.5);
\draw (-0.5,3.5) node {$\mu_3$};

\draw[red!70!black, line width =2pt] (-1,-1.25)--(-3,-3.75);
\draw (-2.6,-2.5) node {$\mu_1$};

\draw[red!70!black, line width =2pt] (2,0)--(6,0);
\draw (4,0.3) node {$\mu_2$};

\filldraw[red!70!black] (0,0) circle (2pt);

\end{tikzpicture}

\underline{}\caption{Example of ergodic invariant measures. Green indicates attractors while red indicates repellers. In this example we picked: $r_i(\delta_0)>0, i=1,2,3$, $r_2(\mu_1)>0, r_1(\mu_2)>0, r_1(\mu_3)<0, r_2(\mu_3)<0, r_3(\mu_{12}^1)<0$ and $r_3(\mu_{12}^2)<0$. This yields $E_1=\{\{1,2\}, \{3\}\}$, $\Se(\mu_1)=\{1\}, \Se(\mu_2)=\{2\}, \Se(\mu_3)=\{3\}, \Se(\mu_{12}^1)=\{1,2\}, \Se(\mu_{12}^2)=\{1,2\}$,  the attractors living on the open $x_1-x_2$ positive quadrant are $\M^{\{1,2\},+}=\{\mu_{12}^1,  \mu_{12}^2\}$ and the attractor living on the open $x_3$ positive axis is $\M^{\{3\},+}=\{\mu_3\}$.}
\label{fig:figure1}

\end{center}
\end{figure}
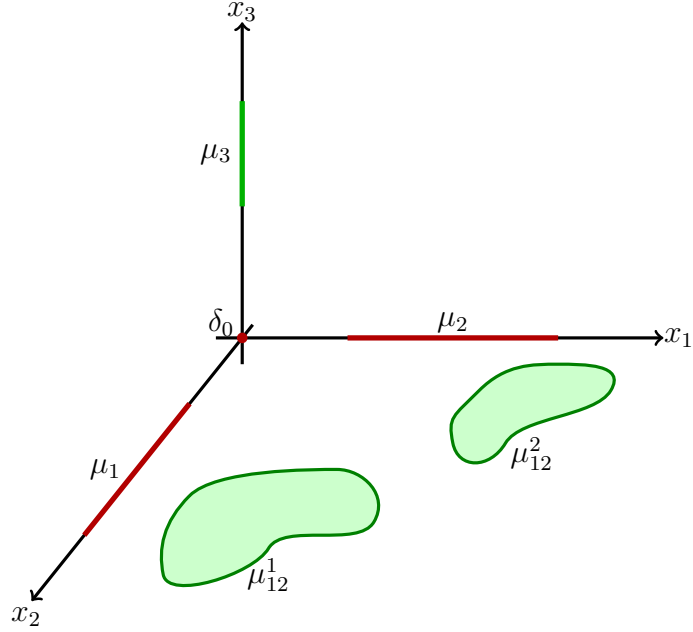

If we assume that there exists a unique attracting ergodic
measure in each attracting subspace we get a stronger result,
which is an immediate corollary of the above.

\begin{thm}\label{t:ex33}
Assume that every ergodic measure constructed as in Definition
\ref{a.extn} is a unique attractor in its subspace, i.e. for any
$I\in E_1$ there exists a unique $\mu_I$ satisfying
\eqref{ae3.1} and \eqref{ae3.2}. Furthermore, assume that
$E_2=\emptyset$ or that for all $J\in E_2$ and $\nu\in\M^{J,+}$
, $\max_{i} r_i(\nu)>0$.
Let $\bz \in S^+$. If $\bigcup_{I\in E_1}\Se^{I}_+$ is accessible then
$$
\sum_{I\in E_1} p_{\bz,I}=1
$$
where
$$p_{\bz,I}:=\PP_\bz\left\{\U(\omega)=\{\mu_I\}
~\text{and}~\lim_{t\to\infty}\frac{\ln X_j(t)}{t}=r_j(\mu_I),
j\notin I\right\}.$$
Furthermore, if for some $I\in E_1$ the subspace $\Se^I_+$ is
accessible one has $p_{\bz,I}>0$.
\end{thm}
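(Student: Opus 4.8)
The plan is to derive Theorem~\ref{t:ex33} from Theorem~\ref{t:ex22} by a specialization, together with one extra positivity argument. First I would unpack the uniqueness hypothesis. Fix $I\in E_1$. Every $\nu\in\M^{I,+}$ lies in $\Conv(\M^{I,+})$, hence satisfies \eqref{ae3.1}, while \eqref{ae3.2} holds on $\Conv(\M^{I,\partial})$ because $I\in E_1$; thus $\nu$ is an ``ergodic measure constructed as in Definition~\ref{a.extn}'', and by the assumed uniqueness $\nu=\mu_I$. So $\M^{I,+}=\{\mu_I\}$ and $\Conv(\M^{I,+})=\{\mu_I\}$. The remaining hypotheses of Theorem~\ref{t:ex22} --- the dichotomy $E_2=\emptyset$ or $\max_i r_i(\nu)>0$ for $\nu\in\M^{J,+}$, $J\in E_2$, and accessibility of $\bigcup_{I\in E_1}\Se^I_+$ --- are assumed verbatim.

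Substituting $\Conv(\M^{I,+})=\{\mu_I\}$ into the $p_{\bz,I}$ of Theorem~\ref{t:ex22}, the condition $\emptyset\neq\U(\omega)\subset\Conv(\M^{I,+})$ collapses to $\U(\omega)=\{\mu_I\}$, and $\lim_{t\to\infty}\tfrac{\ln X_j(t)}{t}\in\{r_j(\mu):\mu\in\Conv(\M^{I,+})\}$ collapses to $\lim_{t\to\infty}\tfrac{\ln X_j(t)}{t}=r_j(\mu_I)$. Hence the events defining $p_{\bz,I}$ in Theorems~\ref{t:ex22} and \ref{t:ex33} are identical, and Theorem~\ref{t:ex22} gives $\sum_{I\in E_1}p_{\bz,I}=1$ immediately.

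It remains to show that accessibility of a particular $\Se^I_+$, $I\in E_1$, forces $p_{\bz,I}>0$. (If $I=\{1,\dots,n\}$ this is just $p_{\bz,I}=1$, from $\Conv(\M^{I,+})=\{\mu_I\}$ being the unique invariant measure on $\Se_+$ and the convergence statement of Theorem~\ref{t:pers2_disc}; so assume $I\neq\{1,\dots,n\}$.) By accessibility there is $\bz^\star\in\Gamma_{\Se_+}\cap\Se^I_+$, so from $\bz\in\Se_+$ the process enters every neighborhood of $\bz^\star$ with positive $\PP_\bz$-probability. From such a neighborhood I would splice, via the strong Markov property, two facts. First, Theorem~\ref{t:ex}: near a compact subset of $\Se^I_+$, with probability close to $1$ the coordinates $X_j$, $j\notin I$, decay at exponential rate at most $-\alpha_I$, so with positive probability the process is drawn toward $\Se^I$ and those coordinates go to $0$. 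Second, since $I\in E_1$, condition \eqref{ae3.2} is precisely the persistence criterion \eqref{e:per} for the subsystem obtained by restricting \eqref{e:system_discrete} to the invariant set $\Se^I$; hence that subsystem is stochastically persistent, and as $\mu_I$ is its unique ergodic invariant measure on $\Se^I_+$, a convergence argument of the type behind Theorems~\ref{t:pers2_disc}--\ref{t:pers3_disc} (with the minorization \eqref{e:phi} and accessibility now read inside $\Se^I$) gives $\widetilde\Pi_t\to\mu_I$ weakly on the corresponding event. Since the $j\notin I$ coordinates contribute no mass in the limit, combining the two facts produces an event of positive $\PP_\bz$-probability on which $\U(\omega)=\{\mu_I\}$ and $\lim_{t\to\infty}\tfrac{\ln X_j(t)}{t}=r_j(\mu_I)$ for $j\notin I$ (the exact rate following from $\widetilde\Pi_t\to\mu_I$ and the integrability controls supplied by Assumption~\ref{a:main}). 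Thus $p_{\bz,I}>0$.

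The substitution step is routine bookkeeping; the real content is the positivity claim, whose delicate point is a \emph{trapping} estimate: once the process is near $\Se^I_+$, show it stays near there with positive probability rather than leaking to a strictly smaller face $\Se^{I'}_+$, $I'\subsetneq I$. This requires using the attracting condition \eqref{ae3.1} (to keep the $j\notin I$ coordinates small) and the repelling condition \eqref{ae3.2} (to keep the $I$-coordinates away from $\Se^I_0$) \emph{simultaneously} along a single excursion --- a localized, pathwise persistence estimate, not the time-averaged statements of Theorems~\ref{t:pers1_disc}--\ref{t:pers3_disc}. This is the mechanism underlying the continuous-time positivity statement and the recurrence arguments of \cite{HN16, B18}, adapted to the present discrete-time, non-compact framework; carrying it out quantitatively is where the effort lies.
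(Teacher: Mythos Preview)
Your reduction of the identity $\sum_{I\in E_1}p_{\bz,I}=1$ to Theorem~\ref{t:ex22} via $\Conv(\M^{I,+})=\{\mu_I\}$ is exactly what the paper does: it states Theorem~\ref{t:ex33} as ``an immediate corollary of the above'' and gives no separate argument. So on the main assertion you match the paper.

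For the positivity claim $p_{\bz,I}>0$, the paper gives no explicit proof either (the appendix defers to the arguments of Lemmas~5.8--5.9 and Theorem~5.2 of \cite{HN16}), and your outline---accessibility of $\Se^I_+$ to reach a small neighborhood of a compact in $\Se^I_+$, then Theorem~\ref{t:ex} via the strong Markov property to force exponential decay of $X_j$, $j\notin I$, with probability close to~$1$---is the correct skeleton. One refinement: you do not need to invoke the minorization \eqref{e:phi} or Theorems~\ref{t:pers2_disc}--\ref{t:pers3_disc} for the subsystem on $\Se^I$; those require extra hypotheses not assumed here. Instead, on the event where the $j\notin I$ coordinates vanish, the tightness of $(\widetilde\Pi_t)$ from Lemma~\ref{lmB.1} together with the fact that every weak limit is an invariant probability measure supported on $\Se^I$ (cf.\ the proof of Theorem~\ref{t:exx2}) forces $\U(\omega)\subset\Conv(\M^{I})$; the repelling condition \eqref{ae3.2} then rules out mass on $\Se^I_0$, leaving $\U(\omega)=\{\mu_I\}$. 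The exact Lyapunov rates $r_j(\mu_I)$ follow, as you say, from $\widetilde\Pi_t\to\mu_I$ and the integrability in Assumption~\ref{a:main}. (A small aside: $I=\{1,\dots,n\}$ cannot lie in $E_1$, since $\M^{I,+}\subset\M$ is by definition supported on $\Se_0$; so that case does not arise.)
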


\textit{\textbf{Biological Interpretation}: Our result says the following. Suppose that the subspaces
$\Se^{I}_+$ for $I\subset \{1,\dots,n\}$ are either of attracting or repelling type. An attracting subspace
is loosely speaking a subspace where every supported community
at stationarity cannot be invaded by any outsider species. A
repelling subspace is a subspace where every supported community
at stationarity can be invaded by at least one outsider species.
If the attracting subspaces $\Se_+^I, I\in E_1$ are accessible,
i.e. if starting at any population density one can get
arbitrarily close to $\Se_+^I$, then with strictly positive
probability the system converges to $\mu_I$. This means that the
species given by $I$ persist and converge to $\mu_I$ while the
species from $I^c$ go extinct.\newline
We note that the problem becomes much more complicated if there are subspaces that are neither attracting nor repelling. See \cite{HNS20} for an example which involves rock-paper-scissors dynamics.
}

\subsection{Robust persistence and extinction}
Since all mathematical models are merely approximations of real
ecological systems, it is important to study how the extinction
and persistence results change under small perturbations of the
analytical models. Ideally, small perturbations of the model
should lead to similar long-term behavior. We next show that this true in
our setting as long as we add one natural assumption.
Assume we perturb the dynamics of the system
\eqref{e:system_discrete}. We call $\tilde \BZ$ a
$\delta$-perturbation of $\BZ$ if the dynamics is given by
\begin{equation}\label{e:system_discrete_per}
\begin{aligned}
\tilde X_i(t+1)&=\tilde X_i(t) \tilde F_i(\tilde \BX(t), \tilde
\BY(t),\xi(t)), ~i=1,\dots,n\\
\tilde \BY(t+1) &=\tilde G(\tilde \BX(t),\tilde \BY(t),
\xi(t))
\end{aligned}
\end{equation}
and
\begin{equation}\label{e:robust1}
\sup_{\bz\in\Se} \sum_i\E\left[\| \log \tilde F_i(\bx,\by,\xi)-
\log F_i(\bx,\by,\xi) \|+\|\tilde G_i(\bx,\by,\xi)-
G_i(\bx,\by,\xi) \| \right]\leq \delta.
\end{equation}
Let $\circ$ denote the element-wise product and $\1_n$ be the
vector in $\R^n$ whose components are all $1$.
Assume that the function $V$ from Assumption \eqref{a:main} satisfies the
robust estimate
\begin{equation}\label{e:rob}
\E \left[V(\bx^\top (F(\bz,\xi)\circ(\1_n+\wtd\eps_1)),
G(\bz,\xi)+\wtd\eps_2))h(\bz,\xi)\right]\leq \rho V(\bz)+C,
\bz\in\Se
\end{equation}
for all vectors $\wtd\eps_1,\wtd\eps_2\in\R^n$ such that
$|\wtd\eps_1|\vee|\wtd\eps_2|<\delta$. We note that this
estimate is trivially satisfied if the state space is compact.
\begin{thm}\label{t:disc_robust}
Assume the robust estimate \eqref{e:rob} holds and the dynamics
\eqref{e:system_discrete} satisfy the hypotheses of one of the
persistence Theorems \ref{t:pers1_disc}-\ref{t:pers3_disc} or
one of the extinction Theorems \ref{t:ext1_disc}, \ref{t:ex22}
or \ref{t:ex33}. Then there exists $\delta>0$ such that every
$\delta$-perturbation of \eqref{e:system_discrete} satisfies the
same hypotheses and therefore has the same long term behavior
(persistence or extinction).
\end{thm}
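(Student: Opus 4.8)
The plan is to argue by induction on the number of species $n$ (so that the statement may be used for the lower-dimensional subcommunities $\Se^I$, $|I|<n$) and to observe that the hypotheses of each of Theorems~\ref{t:pers1_disc}--\ref{t:pers3_disc}, \ref{t:ext1_disc}, \ref{t:ex22} and \ref{t:ex33} split into three kinds of conditions, each of which I would show survives a sufficiently small perturbation: (i) the standing structural assumptions A1)--A3) of Assumption~\ref{a:main}, together with Assumption~\ref{a:ext} for the extinction statements; (ii) the sign conditions on the invasion rates $r_i(\mu)$ for the ergodic measures carried by the boundary faces $\Se_0$, $\Se^I$, $\Se_+^I$; and (iii) the accessibility and minorization hypotheses --- accessibility of $\Se_0$ or of $\bigcup_{I\in E_1}\Se_+^I$, and the local minorizations \eqref{e:phi} for Theorems~\ref{t:pers2_disc}--\ref{t:pers3_disc}. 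Fix a sequence of $\delta_k$-perturbations $\tilde\BZ_k$ with $\delta_k\to 0$; all the bounds below are to be taken uniform in $k$.

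\emph{Stability of (i).} Conditions A1) and A2) pass to $\tilde\BZ_k$ immediately: continuity and measurability are preserved, and $\tilde F_i^{(k)}>0$ since $\log\tilde F_i^{(k)}$ differs from $\log F_i$ by an almost surely finite quantity. For A3) I would feed the multiplicative/additive closeness of $(\tilde F^{(k)},\tilde G^{(k)})$ to $(F,G)$ given by \eqref{e:robust1} into the robust estimate \eqref{e:rob}, obtaining the geometric drift A3)-iii) for $\tilde\BZ_k$ with the same $V$ and with constants $\rho,C,\gamma_3$ independent of $k$. Integrating this inequality against any invariant probability measure $\mu$ of $\tilde\BZ_k$ yields a bound on $\int_\Se V\,d\mu$ --- hence, by A3)-i)--ii), weak$^*$-compactness of the family of all such $\mu$ uniformly in $k$ --- and a bound on $\int_\Se\E\bigl[\,|\log\tilde F_i^{(k)}(\bz,\xi(1))|\,\bigr]\mu(d\bz)$ uniform in $k$. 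These two facts, (a) uniform tightness of invariant measures and (b) uniform integrability of the log-fitness integrands, will be used repeatedly. Assumption~\ref{a:ext} is stable for the same reasons: the dissipativity \eqref{e1:ext} follows from the same computation, and the quadratic bounds \eqref{e2:ext} are stable because $V(\tilde\BZ_k(1))-\tilde P_kV$ and $\log\tilde F^{(k)}-\E\log\tilde F^{(k)}$ are close in mean square to their unperturbed analogues.

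\emph{Stability of (ii).} This is the heart of the argument, and I would establish each sign condition by contradiction. Take \eqref{e:per} for definiteness and suppose it fails for infinitely many $k$: then there are ergodic invariant measures $\mu_k$ of $\tilde\BZ_k$ with $\mu_k(\Se_0)=1$ and $\max_i r_i^{(k)}(\mu_k)\le 0$. By the uniform tightness (i)(a) pass to a weak$^*$ limit $\mu_k\rightharpoonup\mu_*$. Since $\Se_0$ is closed and invariant for every one of the dynamics, $\mu_*(\Se_0)=1$; and since $\tilde P_kH\to PH$ uniformly on compact sets for each $H\in C_b(\Se)$ --- a consequence of \eqref{e:robust1}, of A1)--A2), and of the Feller property --- the limit $\mu_*$ is invariant for the unperturbed $\BZ$. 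Writing $r_i^{(k)}(\mu_k)=\int_\Se\E[\log F_i(\bz,\xi(1))]\,\mu_k(d\bz)+\int_\Se\E\bigl[\log\tilde F_i^{(k)}(\bz,\xi(1))-\log F_i(\bz,\xi(1))\bigr]\mu_k(d\bz)$, the second integral is $O(\delta_k)\to 0$ by \eqref{e:robust1}, while the first tends to $r_i(\mu_*)$ by $\mu_k\rightharpoonup\mu_*$, continuity of $\bz\mapsto\E[\log F_i(\bz,\xi(1))]$, and the uniform integrability (i)(b). Hence $\mu_*\in\Conv(\M)$ and $\max_i r_i(\mu_*)\le 0$, contradicting \eqref{e:per}. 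The weighted reformulation recorded after Theorem~\ref{t:pers1_disc}, the extinction inequalities of Theorems~\ref{t:ext1_disc}, \ref{t:ex22}, \ref{t:ex33}, and the conditions \eqref{ae3.1}--\eqref{ae3.2} are handled identically: the sets $\Conv(\M)$, $\Conv(\M^{I,+})$, $\Conv(\M^{I,\partial})$ are convex and weak$^*$-compact (using A3; cf.\ \cite{B18}), so the relevant strict inequalities hold with a uniform gap and the limiting argument transports that gap to all small perturbations. That $\M^{I,+}\neq\emptyset$ is inherited follows from the inductive hypothesis: inside the closed invariant subspace $\Se^I$ the conditions \eqref{ae3.2} make $I$ a persistent subcommunity via Theorem~\ref{t:pers1_disc}, robustness for that lower-dimensional system keeps it persistent, and a persistent subcommunity carries an ergodic measure on $\Se_+^I$. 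Running these facts through Definition~\ref{a.extn} shows the partition $\mathcal P(n)=E_1\sqcup E_2$ is unchanged for all small $\delta$, so the $E_2$-hypotheses of Theorems~\ref{t:ex22}--\ref{t:ex33} persist as well.

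\emph{Stability of (iii), and the main obstacle.} Accessibility of $\Se_0$ and of $\bigcup_{I\in E_1}\Se_+^I$ is preserved because the $t$-step reachable sets depend lower-semicontinuously on the dynamics while the law of the driving sequence $\xi$ is untouched, so a finite control sequence witnessing accessibility for $\BZ$ still steers $\tilde\BZ_k$ into an arbitrarily small neighbourhood of the same target. The minorizations \eqref{e:phi} (and, for Theorem~\ref{t:pers3_disc}, their single-time version) are statements about finitely many transition kernels near one point $\bz^*$, hence stable by Feller continuity after possibly shrinking $U$ and rescaling $\iota$; likewise, when Theorem~\ref{t:ex33} is invoked the uniqueness of the attractor in each $\Se_+^I$ is in practice secured by a minorization of the type \eqref{e:phi}, which is robust for the same reason. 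The step I expect to be genuinely delicate is the passage to the limit for the invasion rates on the \emph{non-compact} state space: one must extract, from the single uniform geometric-drift bound furnished by \eqref{e:rob}, both the tightness needed for a weak$^*$ limit of $\mu_k$ and the uniform integrability needed to pass to the limit in $\int_\Se\E[\log F_i]\,d\mu_k$ when that integrand is unbounded, while simultaneously confirming that the limit is invariant for the \emph{unperturbed} system and lives on precisely the correct boundary face. This is exactly where A1)--A3), \eqref{e:rob} and \eqref{e:robust1} are used together, and where the compactness assumption of \cite{BS19} is replaced by the robust Lyapunov estimate \eqref{e:rob}.
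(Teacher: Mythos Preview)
Your approach is valid in outline but differs substantially from the paper's. The paper does \emph{not} verify that the perturbed system's invasion rates satisfy the sign conditions; instead it bypasses the invariant-measure analysis entirely and shows that the key intermediate estimate---the Lyapunov-type inequality of Lemma~\ref{lm3.1} (and its extinction analogue Lemma~\ref{lm4.2})---holds directly for the perturbed process. Concretely, the paper first establishes Lemma~\ref{lm3.1} for the \emph{unperturbed} system with margin $1.5r^*$ rather than $r^*$, then uses \eqref{e:robust1} together with the uniform integrability coming from \eqref{e:rob} to show that, for $\delta$ small, $\E_\bz|\ln\tilde F_i(\tilde\BZ(t),\xi(t))-\ln F_i(\BZ(t),\xi(t))|$ is small uniformly over $|\bz|\le M$ and $t\le T$, so the perturbed process inherits the inequality with margin $r^*$. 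Once that estimate is in hand, the persistence and extinction proofs of Appendices~\ref{s:a}--\ref{s:b} run verbatim for $\tilde\BZ$. Your route---weak compactness of perturbed invariant measures, passage to the limit in $r_i^{(k)}(\mu_k)$, and induction on $n$ to control the face structure---is more conceptual and closer to the literal statement ``satisfies the same hypotheses,'' but requires you to handle the perturbed ergodic measures directly, which the paper avoids.

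One point in your sketch is not justified: you assert that the minorization \eqref{e:phi} is stable ``by Feller continuity after possibly shrinking $U$ and rescaling $\iota$.'' Feller continuity gives only weak convergence of $P_t(\bz,\cdot)$, which does not preserve a lower bound by a fixed nonzero measure; an arbitrarily small weak perturbation can destroy absolute continuity. Stability of \eqref{e:phi} (and of accessibility) would need an additional regularity hypothesis on the noise, not merely \eqref{e:robust1}. The paper's Appendix~\ref{s:c} in fact does not address the minorization or accessibility conditions at all---it treats only the drift estimate---so this is a gap in both treatments, but in yours it is an explicit unsupported claim.
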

This generalizes the results from Proposition 2 by \cite{SBA11}
to the setting when $\BZ$ is not restricted to a compact state
space and the dynamics is influenced both by the species
densities $\BX$ and by the auxiliary variables $\BY$.

\section{Continuous Time Results} \label{s:results}
We continue to use the notation from Section \ref{s:results0}, if not mentioned otherwise. We work on a complete probability space
$(\Omega,\F,\{\F_t\}_{t\geq0},\PP)$ with a filtration
$(\F_t)_{t\geq 0}$ satisfying the usual conditions.
Consider the system

\begin{equation}\label{e:system}
\begin{aligned}
dX_i(t)&=X_i(t) f_i(\BX(t), \BY(t))\,dt+X_i(t)g_i(\BX(t),
\BY(t))\,dE^i(t), ~i=1,\dots,n\\
dY_i(t) &= u_i(\BX(t), \BY(t))\,dt+ h_i(\BX(t),
\BY(t))\,dW_i(t), ~i=1,\dots,k
\end{aligned}
\end{equation}
taking values in $\Se=\R_+^n\times \R^{\kappa_0}$ where $\R_+^n:=[0,\infty)^n$ and
$\R_+^{n,\circ}:=(0,\infty)^n$. We assume
$(\BE(t), \BW(t))=(E^1(t),\dots, E^n(t), W_1(t),\dots, W_{\kappa_0}(t))^T=\Gamma^\top\BB(t)$ where
$\Gamma$ is a $(n+{\kappa_0})\times (n+{\kappa_0})$ matrix such that
$\Gamma^\top\Gamma=\Sigma=(\sigma_{ij})_{(n+{\kappa_0})\times (n+{\kappa_0})}$
and $\BB(t)=(B_1(t),\dots, B_{n+{\kappa_0}}(t))$ is a standard Brownian motion
on $\R^{n+{\kappa_0}}$ which is adapted to the filtration $(\F_t)_{t\geq 0}$.
The SDE \eqref{e:system} is describing the dynamics of $n$
interacting populations $\BX(t)=(X_1(t),\dots,X_n(t))_{t\geq
0}$.

\begin{rmk}
We note that we can also treat compact state spaces for the
species dynamics. That means we can study \eqref{e:system} when
$\BX$ takes values in a compact state space, which for our ecological applications will usually be $\Delta:=\{\bx\in
\R^n_+~|~\sum_i x_i=1\}$. In this setting we assume the drift
and the diffusion terms of $\BX$ are such that the dynamics
remains on $\Delta$, or, in other words, that $\Delta$ is an
invariant set for the dynamics. See \cite{SBA11} for more examples of what the theory looks like in a compact state space.
\end{rmk}
Let $\op$ be the infinitesimal generator of the process $\BZ$. For smooth enough functions $F:\R_+^n\times\R^{\kappa_0}\to \R$ the generator $\op$ acts as

\[
\begin{aligned}
\op F(\bz) &= \sum_i x_if_i(\bz)\frac{\partial F}{\partial x_i}(\bz) +  \sum_i  u_i(\bz)\frac{\partial F}{\partial x_i}(\bz)  + \frac{1}{2}\sum_{i,j}\sigma_{ij}x_ix_jg_i(\bz)g_j(\bz)\frac{\partial^2 F}{\partial x_i \partial x_j}(\bz) \\
&~~+ \frac{1}{2}\sum_{i,j}\sigma_{n+i, n+{\kappa_0}}h_{i}(\bz)h_j(\bz)\frac{\partial^2 F}{\partial y_i \partial y_j}(\bz)+ \frac{1}{2}\sum_{i,j}\sigma_{i,n+j}x_ig_i(\bz)h_j(\bz)\frac{\partial^2 F}{\partial x_i \partial y_j}(\bz).
\end{aligned}
\]
The following assumptions will be made in the continuous time
setting.
\begin{asp}\label{a:sde}
The coefficients of \eqref{e:system} satisfy the following
conditions:
\begin{enumerate}
\item $f_i, g_i, u_i, h_i:\R_+^n\times\R^{\kappa_0}\to \R$ are
locally Lipschitz for any $i=1,\dots,n$.
\item There exist numbers $C_4, \gamma_4, \gamma_5>0$ and a function $V\in
C^2$ satisfying:
\begin{itemize}
\item $V(\bz)\geq 1, \bz\in\Se$
\item $\lim_{|\bz|\to\infty} V(\bz)=\infty$
\item $V(\bz)\geq |\bx|^{\gamma_5}, \bz=(\bx,\by)\in\Se$
\item $$\op V(\bz)\leq \left(C_4-\gamma_4 \sum_{i=1}^n
(1+|f_i(\bz)|+|g_i(\bz)|^2\right) V(\bz), \bz\in\Se.$$
\end{itemize}
\end{enumerate}
\end{asp}
\begin{rmk}\label{r:sde_assum}
Assumption \ref{a:sde} generalizes Assumption 1.1 in \cite{HN16}. Here we assume the existence of a Lyapunov function to obtain a certain boundedness of the solution (in a moment sense). In many cases, the function $V(\bz)$ will be of the form $$V(\bz)=1+\sum_{i=1}^n{c_i x_i}+\sum_{j=1}^{\kappa_0} y_i^2.$$
	When the diffusion is not too large and the drift points towards the origin as $|\bz|$ is large then the assumption is often satisfied. We note that in ecological terms this boils down to requiring that when species densities are large, there is a strong negative drift due to competition. 

Under Assumption \ref{a:sde} one can show like in Lemma 3.1 from \cite{HN16} that $\BZ(t)$ is a Feller process that has pathwise unique solutions. Furthermore, if the process starts with $\BZ(0)\in \Se_+$ then with probability 1 the process stays forever in $\Se_+$.
\end{rmk}

One can associate to the Markov process $\BZ(t)=(\BX(t),\BY(t))$
the semigroup $(P_t)_{t\geq 0}$ defined by its action on bounded Borel
measurable functions $h:\Se\to\R$ via
\[
P_th(\bz)=\E_\bz[h(\BZ(t))], t\geq 0, \bz\in \Se.
\]
The operator $P_t$ can be seen to act by duality on
Borel probability measures $\mu$ by $\mu\to \mu P_t$ where $\mu
P_t$ is the probability measure for which
\[
\int_{\Se}h(\bx) (\mu P_t)(d\bx):=\int_{\Se}P_th(\bx) \mu(d\bx)
\]
for all $h\in C_b(\Se)$.
A probability measure $\mu$ on $\Se$ is called invariant if
$\mu P_t=\mu$ for all $t\geq 0$. The invariant probability
measure $\mu$ is called ergodic if it cannot be written as a
nontrivial convex combination of invariant probability measures.
For any species $i$ let $\Se^i:=\{\bz=(\bx,\by)\in\Se~:~x_i>0\}$
be the subset of the state space for which this species has
strictly positive density. If $\mu$ is an ergodic measure
$\mu(\Se^i)\in\{0,1\}$ we have defined the
\textit{species support} of $\mu$ by $\Se(\mu)=\{1\leq i\leq
n~:~\mu(\Se^i)=1\}$.

We can define in this setting for any $t>0$ the normalized random
occupation measures
\[
\tilde \Pi_t(\cdot) = \frac{1}{t}\int_0^t \1_{\{\BZ(s)\in\cdot\}}\,ds,
\]
the occupation measures
\[
 \Pi_{t,\bz}(\cdot) = \E_\bz\tilde \Pi_t (\cdot),
\]
and the expected per-capita growth rates
\[
r_i(\mu) =
\int_{\Se}\left(f_i(\bz)-\frac{\sigma_{ii}g_i^2(\bz)}{2}\right)\mu(d\bz), i=1,\dots,n.
\]
Many of the discrete time results from Section \ref{s:results0} will hold in the continuous time setting. They represent generalizations of the work by \cite{HN16}.
\begin{prop}\label{p:rate2}
Suppose $\mu$ is an ergodic invariant measure. Then for any
$i\in I$ the quantity $r_i(\mu)$ is well defined and finite.
Moreover
\[
r_i(\mu)=
\int_{\Se}\left(f_i(\bz)-\frac{\sigma_{ii}g_i^2(\bz)}{2}\right)\mu(d\bz)=0,
i\in \Se(\mu).
\]
\end{prop}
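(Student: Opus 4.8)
The plan is to mirror the proof of Proposition~\ref{p:rate} in the diffusion setting, using the Lyapunov bound of Assumption~\ref{a:sde} in place of its discrete-time counterpart, essentially as in \cite{HN16}. There are two things to do: show $r_i(\mu)$ is well defined and finite for every $i$, and show it vanishes for $i\in\Se(\mu)$.

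\emph{Finiteness.} First I would show $\int_\Se V\,d\mu<\infty$ for every invariant probability measure $\mu$, the continuous-time analogue of the moment bound used in the discrete case. This follows from the dissipativity encoded in Assumption~\ref{a:sde} by a Dynkin-plus-localization argument along $\tau_R:=\inf\{t:|\BZ(t)|\ge R\}$ (non-explosion being supplied by the same $V$), exactly as in Lemma~3.1 of \cite{HN16}. With $\int V\,d\mu<\infty$ in hand, I would integrate the pointwise inequality
\[
\gamma_4\,V(\bz)\sum_{i=1}^n\bigl(1+|f_i(\bz)|+|g_i(\bz)|^2\bigr)\le C_4\,V(\bz)-\op V(\bz)
\]
against $\mu$. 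Since $\mu$ is invariant one has $\int_\Se\op V\,d\mu=0$ (justified by the same localization, using $(\op V)^+\le C_4 V\in L^1(\mu)$), so the right-hand side integrates to $C_4\int V\,d\mu<\infty$; as $V\ge 1$, this forces $\int_\Se(|f_i|+|g_i|^2)\,d\mu<\infty$ for every $i$, hence $\int_\Se\bigl|f_i-\frac{1}{2}\sigma_{ii}g_i^2\bigr|\,d\mu<\infty$, so $r_i(\mu)$ is well defined and finite.

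\emph{Vanishing on the support.} Now fix $i\in\Se(\mu)$, so $\mu(\Se^i)=1$; since $dX_i=X_i\bigl(f_i\,dt+g_i\,dE^i\bigr)$, the set $\Se^i$ is invariant for the dynamics, so the process started from $\mu$ has $X_i(t)>0$ for all $t$, $\PP_\mu$-almost surely. Itô's formula for $\log X_i$ gives
\[
\frac{\log X_i(t)-\log X_i(0)}{t}=\frac1t\int_0^t\Bigl(f_i-\tfrac12\sigma_{ii}g_i^2\Bigr)(\BZ(s))\,ds+\frac1t\, M_t,\qquad M_t:=\int_0^t g_i(\BZ(s))\,dE^i(s).
\]
By Birkhoff's ergodic theorem applied to the $\mu$-integrable function $f_i-\frac12\sigma_{ii}g_i^2$, the first term tends $\PP_\mu$-a.s.\ to $r_i(\mu)$; and since $\langle M\rangle_t=\sigma_{ii}\int_0^t g_i^2(\BZ(s))\,ds\sim t\,\sigma_{ii}\int_\Se g_i^2\,d\mu$ by Birkhoff applied to $g_i^2\in L^1(\mu)$, the strong law for continuous local martingales gives $M_t/t\to 0$. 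Hence $t^{-1}\log X_i(t)\to r_i(\mu)$, $\PP_\mu$-a.s., and it remains to see that this limit is zero. For $r_i(\mu)\le 0$: the stationary bound $\E_\mu X_i(t)^{\gamma_5}\le\E_\mu V(\BZ(t))=\int_\Se V\,d\mu$ together with Markov's inequality yields $\sum_n\PP_\mu\bigl(X_i(n)>e^{\eps n}\bigr)<\infty$, so by Borel--Cantelli $\limsup_n n^{-1}\log X_i(n)\le\eps$; as the full limit $t^{-1}\log X_i(t)\to r_i(\mu)$ exists, this gives $r_i(\mu)\le\eps$ for every $\eps>0$. For $r_i(\mu)\ge 0$: if $r_i(\mu)<0$ then $X_i(t)\to 0$ $\PP_\mu$-a.s., so for each $\eps>0$ we would have $t^{-1}\int_0^t\1_{\{X_i(s)>\eps\}}\,ds\to 0$, whereas Birkhoff identifies this limit with $\mu(\{x_i>\eps\})$; letting $\eps\downarrow 0$ forces $\mu(\Se^i)=0$, contradicting $i\in\Se(\mu)$. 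Therefore $r_i(\mu)=0$.

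\emph{Main obstacle.} The only genuinely delicate point is the first step: legitimately obtaining $\int V\,d\mu<\infty$ and the identity $\int_\Se\op V\,d\mu=0$ for a $V$ that is unbounded and whose generator is only one-sidedly controlled. This is handled by careful localization at $\tau_R$ — Fatou on one side, monotone or dominated convergence (via $(\op V)^+\le C_4 V$) on the other, then $R\to\infty$ together with stationarity — following \cite{HN16}. Everything after that is the same Itô-plus-Birkhoff computation used to prove Proposition~\ref{p:rate}.
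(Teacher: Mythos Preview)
Your proposal is correct and is essentially the approach the paper intends: the paper omits the continuous-time proof, referring to \cite{HN16} and to the discrete-time Lemma~\ref{lmA.1}, and your argument is precisely the continuous-time transcription of that lemma --- Lyapunov moment bound $\Rightarrow$ $\int(|f_i|+g_i^2)\,d\mu<\infty$, then It\^o on $\log X_i$ combined with Birkhoff and the martingale SLLN. The one place where your write-up is slightly more explicit than the paper's sketch is the two-sided argument that the a.s.\ limit $t^{-1}\log X_i(t)$ must be zero (moment bound plus Borel--Cantelli for $\le 0$, and the Birkhoff contradiction for $\ge 0$); this is the standard way to close the loop and is exactly what \cite{BS19} and \cite{HN16} do in their respective settings.
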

\begin{rmk}
The above proposition provides a very powerful tool for
computing $r_\ell (\mu), \ell\notin \Se(\mu)$ when
$f_i(\bz)=\sum_j a_{ij}x_j$ and $g_i(\bz)=1$. In this setting we
get the linear system
\[
0=r_i(\mu)=\sum_{j}a_{ij}\hat x_j -\frac{\sigma_{ii}}{2}, i\in
\Se(\mu)
\]
where
\[
\hat x_j = \int_{\Se} x_j \mu(d\bz).
\]
If this system has unique solutions we can then compute
\[
r_\ell(\mu)=\sum_{j}a_{\ell j}\hat x_j -\frac{\sigma_{ii}}{2},
\ell\notin \Se(\mu).
\]
\end{rmk}

 A point $\tilde \bz\in \Se$ is said
to be \textit{accessible} from $\bz\in\Se$ if for every
neighborhood $U$ of $\tilde \bz$ there exists $t\geq 0$ such
that $P_t(\bz,U)= P_t\1_U(\bz)>0$. Define $$\Gamma_\bz:=\{\tilde
\bz\in\Se~|~\tilde \bz~\text{is accessible from}~\bz\}$$ and for
$A\subset \Se$
\[
\Gamma_A=\bigcap_{\bz\in A}\Gamma_\bz.
\]
Note that $\Gamma_A$ is the set of points which are accessible
from every point of $A$. We say a set $A$ is \textit{accessible}
if for all $\bz\in \Se_+$
\[
\Gamma_{\bz}\cap A\neq \emptyset.
\]

We have the following sequence of results. These can be seen as
the continuous time equivalents of the results from Section
\ref{s:results0}.
\begin{thm}\label{t:pers1_cont}
Suppose Assumption \ref{a:sde} holds and for all $\mu\in\Conv(\M)$ we have
\[
\max_{i} r_i(\mu)>0.
\]
Then \eqref{e:system} is almost surely stochastically persistent
and stochastically persistent in probability.
\end{thm}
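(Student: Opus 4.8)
The plan is to follow the strategy of the continuous-time persistence theorem of \cite{HN16}, the new difficulty being that every a priori bound on the process at infinity — automatic when the state space is compact, as in \cite{SBA11,BS19} — must now be extracted from the Lyapunov function $V$ of Assumption~\ref{a:sde}. First I would record the standard consequences of Assumption~\ref{a:sde} (cf.\ Remark~\ref{r:sde_assum} and \cite{HN16}): the solution $\BZ=(\BX,\BY)$ of \eqref{e:system} is non-explosive and Feller, it stays in $\Se_+$ (resp.\ $\Se_0$) whenever it starts there, and, applying Dynkin's formula to $V$ together with the drift inequality in Assumption~\ref{a:sde}, one obtains for every fixed $\bz$ the estimates
\[
\sup_{t\geq0}\E_\bz V(\BZ(t))<\infty,\qquad
\sup_{t>0}\frac1t\,\E_\bz\int_0^t\sum_{i=1}^n\big(1+|f_i(\BZ(s))|+|g_i(\BZ(s))|^2\big)\,ds<\infty .
\]
Since $V(\bz)\geq|\bx|^{\gamma_5}$ and $V(\bz)\to\infty$ as $|\bz|\to\infty$, the first estimate makes the families $(\widetilde\Pi_t)_{t>0}$ (a.s.) and $(\Pi_{t,\bz})_{t>0}$ tight, and the Feller property forces every weak$^*$ limit point of $\Pi_{t,\bz}$ (resp., a.s., of $\widetilde\Pi_t$) to be an invariant probability measure of \eqref{e:system}; the second estimate guarantees that $\widetilde r_i(\bz):=f_i(\bz)-\tfrac12\sigma_{ii}g_i^2(\bz)$ is integrable against all these measures, with suitable uniform integrability along $(\Pi_{t,\bz})$, and that the stochastic integrals below are true square-integrable martingales.

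Next I would convert the hypothesis into a boundary Lyapunov function. By the equivalence recorded in the remark following Theorem~\ref{t:pers1_disc} (proved in \cite{HN16}), the assumption $\max_i r_i(\mu)>0$ for all $\mu\in\Conv(\M)$ is equivalent to the existence of weights $p_1,\dots,p_n>0$ with $\sum_i p_i r_i(\mu)>0$ for every $\mu\in\M$; since $r_i$ is affine in $\mu$ and $\Conv(\M)$ is compact, there is $\delta>0$ with $\sum_i p_i r_i(\mu)\geq\delta$ for all $\mu\in\Conv(\M)$. Moreover any invariant probability measure $\mu$ splits as $\mu(\Se_+)\,\mu|_{\Se_+}+\mu(\Se_0)\,\mu|_{\Se_0}$ with both pieces invariant; because $\Se_+\subset\Se^i$ for every $i$, Proposition~\ref{p:rate2} together with an ergodic decomposition gives $r_i(\mu|_{\Se_+})=0$ for all $i$, while $\mu|_{\Se_0}\in\Conv(\M)$, so that
\[
R(\mu):=\sum_{i=1}^n p_i\,r_i(\mu)=\mu(\Se_0)\sum_{i=1}^n p_i\,r_i(\mu|_{\Se_0})\geq\delta\,\mu(\Se_0)
\]
for every invariant probability measure $\mu$. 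On $\Se_+$ set $S(\bz):=-\sum_i p_i\log x_i$; since $\op(\log x_i)(\bz)=f_i(\bz)-\tfrac12\sigma_{ii}g_i^2(\bz)=\widetilde r_i(\bz)$, one has $\op S=-R(\cdot)$ with $R(\bz):=\sum_i p_i\widetilde r_i(\bz)$, and $dS(\BZ(t))=-R(\BZ(t))\,dt-\sum_i p_i g_i(\BZ(t))\,dE^i(t)$.

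The core of the argument is then a Lyapunov estimate for $S$. Localizing with the exit times of $\BZ$ from $\{m^{-1}\le x_i\le m,\ |\bz|\le m\}$ (these tend to $\infty$ since $\BZ$ stays in $\Se_+$ and does not explode) and passing to the limit with the moment bounds above, Dynkin's formula and Fatou's lemma give
\[
\E_\bz S(\BZ(t))\;\leq\;S(\bz)-t\!\int_{\Se}R(\bz')\,\Pi_{t,\bz}(d\bz').
\]
On the other hand, since $\log(1+x)\le c(1+x^{\gamma_5})$ for a constant $c$, one has $\E_\bz S(\BZ(t))\ge-\sum_i p_i\,\E_\bz\log(1+X_i(t))\ge-C(\bz)$ using the uniform bound on $\E_\bz V(\BZ(t))$. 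Combining, $\int R\,d\Pi_{t,\bz}\le (S(\bz)+C(\bz))/t\to0$, so $\limsup_{t\to\infty}\int R\,d\Pi_{t,\bz}\le0$. If $\mu$ is any weak$^*$ limit point of $\Pi_{t,\bz}$ along $t_k\to\infty$, then by the uniform integrability of $R$ along the occupation measures $\int R\,d\mu=\lim_k\int R\,d\Pi_{t_k,\bz}\le0$, while $\int R\,d\mu=R(\mu)\ge\delta\mu(\Se_0)\ge0$; hence $\mu(\Se_0)=0$, i.e.\ every limit point of $\Pi_{t,\bz}$ is supported on $\Se_+$. Translating this into the quantitative conclusions — $\liminf_t\PP_\bz(\BZ(t)\notin\Se_\eta)>1-\eps$ for a suitable $\eta=\eta(\eps)$ (stochastic persistence in probability), and the corresponding almost sure statement for $\widetilde\Pi_t$ — is then done exactly as in \cite{HN16} and as in the discrete analogue Theorem~\ref{t:pers1_disc}: for the in-probability statement one iterates the Lyapunov inequality on a well-chosen sublevel set of $S$ and uses the Feller property; for the almost sure statement one repeats the argument pathwise, replacing $\Pi_{t,\bz}$ by $\widetilde\Pi_t$ and invoking the strong law of large numbers for the martingales $\int_0^t g_i(\BZ(s))\,dE^i(s)$, whose quadratic variations are again controlled by $V$.

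The main obstacle is making all of these limiting arguments rigorous in the non-compact setting: justifying Dynkin's formula and the passage to the limit for the unbounded function $S$ (the localization-and-Fatou step above), and establishing the uniform integrability of the invasion-rate integrands $\widetilde r_i$ along the occupation measures so that weak convergence of measures implies convergence of $\int R\,d\Pi_{t,\bz}$. Both are precisely the places where Assumption~\ref{a:sde} — especially the bound $\E_\bz\int_0^t\sum_i(1+|f_i|+|g_i|^2)(\BZ(s))\,ds=O(t)$ and the coercivity $V\geq|\bx|^{\gamma_5}$ — does the work that compactness does in \cite{SBA11,BS19}; modulo these estimates the argument is the same as in the compact case.
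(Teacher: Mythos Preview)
Your argument is sound in its core step --- showing, via the function $S(\bz)=-\sum_i p_i\log x_i$ together with the integrability supplied by Assumption~\ref{a:sde}, that every weak limit of the (expected or random) occupation measures is an invariant probability measure with no mass on $\Se_0$. This is a genuine proof strategy, but it is \emph{not} the one the paper follows. The paper declares the continuous-time proofs to be parallel to the discrete-time ones in Appendix~\ref{s:a}, and those proceed through the moment Lyapunov function $U(\bz)=V(\bz)\prod_i x_i^{-p_i}$: one first proves the multiplicative drift bound of Lemma~\ref{lmA.2}, then the averaged boundary estimate of Lemma~\ref{lm3.1}, and finally Proposition~\ref{prop2.1} and Theorem~\ref{thm3.1} combine these into a geometric Foster--Lyapunov inequality $\E_\bz U^\theta(\BZ(n^*T^*))\le\kappa\,U^\theta(\bz)+\tilde K$ with $\kappa<1$. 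Iterating this gives $\limsup_t\E_\bz U^\theta(\BZ(t))\le \tilde K/(1-\kappa)$ \emph{uniformly in} $\bz$, from which both persistence conclusions follow by a Markov-inequality argument (equation~\eqref{e:lya2}).

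The gap in your write-up is precisely in the ``translation'' paragraph. Saying this step is ``done exactly as in \cite{HN16}'' is not right: \cite{HN16} does not pass through your occupation-measure statement at all but works with $U^\theta$ from the outset, so you cannot simply invoke it. Two concrete points remain open in your route. First, stochastic persistence in probability is a statement about the one-time marginals $\PP_\bz(\BZ(t)\in\cdot)$, not about occupation measures; your inequality $\int R\,d\Pi_{t,\bz}\le (S(\bz)+C(\bz))/t$ controls time averages only and does not by itself bound $\PP_\bz(\BZ(t)\in\Se_\eta)$ for large $t$. Second, both persistence notions require the threshold $\eta=\eta(\eps)$ to be chosen uniformly over all $\bz\in\Se_+$, whereas every constant in your estimates depends on $\bz$ through $S(\bz)$ and $C(\bz)$. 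The paper's $U^\theta$ drift condition delivers this uniformity for free because the additive constant $\tilde K$ is independent of $\bz$; in your framework you would need an additional recurrence-to-a-compact-set argument (in the spirit of \cite{B18}, not \cite{HN16}) to obtain it, and that argument is not sketched. So either carry out the $U^\theta$ construction, or make the \cite{B18}-type compact-recurrence step explicit; as it stands the last paragraph is where the proof is incomplete.
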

Exactly like in Section \ref{s:results0} we define what it
means for a set to be accessible or for the state space to be
irreducible -- see \cite{MT} and Sections 4.2 and 4.3 by
\cite{B18} for more details.
\begin{thm}\label{t:pers2_cont}
 Suppose Assumption \ref{a:sde} holds and for all $\mu\in\Conv(\M)$ we have
\[
\max_{i} r_i(\mu)>0.
\] In addition, assume there exist $\bz^*\in \Gamma_{\Se_+}$, a neighborhood
$U$ of $\bz^*$, a non-zero measure $\iota$ on $\Se_+$, and a
probability measure $\gamma$ on $\R_{+}$ such that for
all $\bz\in U$
\begin{equation}\label{e:phi}
\int_0^\infty P_t(\bz,\cdot) \gamma(dt)\geq \iota(\cdot).
\end{equation} Then, with probability one, the occupation measures
$(\Pi_{t,\bz})$ converge weakly to a unique invariant
probability measure $\pi$ supported on $\Se_+$
for any $\bz\in\Se_+$.
\end{thm}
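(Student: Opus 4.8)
The plan is to follow the strategy behind Theorem~\ref{t:pers2_disc}, transposed to continuous time: the discrete transition operator is replaced by the semigroup $(P_t)_{t\geq 0}$, and the randomized step $\sum_i\gamma(i)P_i$ by the resolvent-type kernel $R(\bz,\cdot):=\int_0^\infty P_t(\bz,\cdot)\,\gamma(dt)$, which by Fubini satisfies $\mu R=\mu$ for every invariant probability measure $\mu$. First I would extract three facts from the hypotheses. (a) Assumption~\ref{a:sde} is a Foster--Lyapunov drift condition of Has'minskii type; as in Lemma~3.1 and Section~3 of \cite{HN16} it yields time-uniform moment bounds (a bound on $\sup_{t}\Pi_{t,\bz}(\widetilde V)$ for a coercive $\widetilde V$, and the a.s.\ analogue $\limsup_t\widetilde\Pi_t(\widetilde V)<\infty$), so the family $(\Pi_{t,\bz})_{t>0}$ is tight on $\Se$; being time averages of the Feller semigroup (Remark~\ref{r:sde_assum}), its weak$^*$ limit points are invariant probability measures. (b) Theorem~\ref{t:pers1_cont} applies, so the system is stochastically persistent in probability; hence for every $\eps>0$ there is $\eta>0$ with $\limsup_t\Pi_{t,\bz}(\Se_\eta)\leq\eps$, which (via Portmanteau applied to the open set $\{\min_i x_i<\eta\}$) forces every weak$^*$ limit point $\nu$ of $(\Pi_{t,\bz})$ to satisfy $\nu(\Se_0)=0$, i.e.\ $\nu$ is supported on $\Se_+$. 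Together (a) and (b) produce at least one invariant probability measure supported on $\Se_+$.

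Next, uniqueness. Let $\pi_1,\pi_2$ be ergodic invariant probability measures on $\Se_+$. Since $\bz^*\in\Gamma_{\Se_+}$ is accessible from every point of $\Se_+$ and $\supp\pi_j\subseteq\Se_+$, the standard accessibility argument --- using that $\bz\mapsto R(\bz,U)$ is lower semicontinuous, that $R(\bz,U)>0$ for every $\bz$ from which the open set $U$ is accessible, and that $\int R(\bz,U)\,\pi_j(d\bz)=\pi_j(U)$ --- shows $\pi_j(U)>0$ for $j=1,2$ (see \cite{MT} and Section~4 of \cite{B18}). Integrating the minorization $R(\bz,\cdot)\geq\iota(\cdot)$, $\bz\in U$, against $\pi_j=\pi_j R$ gives $\pi_j(\cdot)\geq\pi_j(U)\,\iota(\cdot)$. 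Thus $\pi_1$ and $\pi_2$ both dominate the nonzero measure $\min(\pi_1(U),\pi_2(U))\,\iota$, so they are not mutually singular; being ergodic they coincide. By the ergodic decomposition there is a unique invariant probability measure $\pi$ supported on $\Se_+$.

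Convergence is then immediate from compactness: by (a)--(b) the relatively compact family $(\Pi_{t,\bz})_{t>0}$ has all of its weak$^*$ limit points equal to $\pi$, hence $\Pi_{t,\bz}\Rightarrow\pi$ as $t\to\infty$ for every $\bz\in\Se_+$. The almost sure statement for the random occupation measures follows the same way: almost sure stochastic persistence (also from Theorem~\ref{t:pers1_cont}) together with the a.s.\ moment bound from Assumption~\ref{a:sde} makes $(\widetilde\Pi_t)$ a.s.\ tight on $\Se_+$, and the Feller property makes every $\PP_\bz$-a.s.\ subsequential weak$^*$ limit an invariant probability measure on $\Se_+$, hence equal to $\pi$, so $\widetilde\Pi_t\Rightarrow\pi$ $\PP_\bz$-a.s. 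The bulk of the work is routine given the cited machinery; the step I expect to require the most care is the tightness in (a)--(b) in this non-compact setting --- simultaneously ruling out escape of mass to infinity (which is exactly what the Lyapunov condition in Assumption~\ref{a:sde} controls) and escape of mass to the extinction boundary $\Se_0$ (which is exactly what Theorem~\ref{t:pers1_cont} provides) --- together with verifying that the resolvent minorization plus accessibility indeed upgrades to uniqueness; both are adaptations of \cite{MT,B18} to our hypotheses.
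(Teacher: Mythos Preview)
Your proposal is correct and follows essentially the same route the paper indicates: the paper omits the continuous-time proof entirely, remarking that it is ``very similar'' to the discrete-time case, and for the discrete analogue (Theorem~\ref{t:pers2_disc}) says only that once Theorem~\ref{t:pers1_disc} is proved the conclusion follows from the convergence results of \cite{MT} and \cite{B18}. You have spelled out precisely how that deduction goes --- Lyapunov tightness from Assumption~\ref{a:sde} (as in \cite{HN16}), persistence from Theorem~\ref{t:pers1_cont} to pin limit points on $\Se_+$, and the accessibility-plus-minorization argument for uniqueness --- which is exactly the content of the cited machinery.
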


\begin{thm}\label{t:pers3_cont}
Suppose that the assumptions of Theorem \ref{t:pers2_cont} hold,
and there is a positive number, $m^*>0$, for which
\[
P_{m^*}(\bz,\cdot)\geq \iota(\cdot)
\]
for all $\bz\in U$. Then there exists a unique
invariant probability measure $\pi$ on $\Se_+$ and
the distribution of $\BZ(t)$ converges in total variation to
$\pi$ exponentially fast whenever
$\BZ(0)=\bz\in\Se_+$.
\end{thm}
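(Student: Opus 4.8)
The plan is to upgrade the weak convergence of occupation measures supplied by Theorem~\ref{t:pers2_cont} to geometric convergence in total variation by verifying, for the time-$m^*$ skeleton chain $(\BZ(km^*))_{k\in\Z_+}$, the hypotheses of the Meyn--Tweedie/Harris ergodic theorem (see \cite{MT, B18}), and then interpolating back to continuous time. This mirrors the discrete-time passage from Theorem~\ref{t:pers2_disc} to Theorem~\ref{t:pers3_disc}; the only genuinely continuous-time ingredients are the Dynkin-type estimates coming from the generator.

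First I would establish a geometric Foster--Lyapunov drift on $\Se_+$. Assumption~\ref{a:sde} gives $V$ with $\op V(\bz)\leq(C_4-\gamma_4\sum_i(1+|f_i(\bz)|+|g_i(\bz)|^2))V(\bz)$, which by a Gronwall/Dynkin argument as in \cite{HN16} yields $P_tV(\bz)\leq e^{-ct}V(\bz)+b$ for suitable $c,b>0$, uniformly in $t$; thus $V$ controls excursions to infinity. This alone is insufficient on $\Se_+$ since the process may drift toward $\Se_0$, but the machinery behind the persistence Theorems~\ref{t:pers1_cont}--\ref{t:pers2_cont} (as in \cite{HN16, B18}) supplies an auxiliary function $V_0$ that blows up as $\dist(\bz,\Se_0)\to 0$ and satisfies a comparable contractive estimate. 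Taking a suitable combination $W$ of $V$ and $V_0$, one gets $W\to\infty$ both at infinity and near $\Se_0$ and, after possibly enlarging $m^*$ to a multiple, $P_{m^*}W(\bz)\leq\lambda W(\bz)+K\1_{\C}(\bz)$ with $\lambda\in(0,1)$, $K<\infty$, and $\C=\{W\leq R\}$ a compact subset of $\Se_+$.

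Next I would show that every such sublevel set $\C$ is a small set for (a fixed iterate of) $P_{m^*}$. The hypothesis $P_{m^*}(\bz,\cdot)\geq\iota(\cdot)$ for $\bz\in U$ is a \emph{fixed-time} minorization on the neighborhood $U$ of the accessible point $\bz^*\in\Gamma_{\Se_+}$. For each $\bz\in\C$, accessibility gives a time $t_\bz$ with $P_{t_\bz}(\bz,U)>0$; by the Feller property $\bz\mapsto P_{t_\bz}(\bz,U)$ is lower semicontinuous, so this persists on a neighborhood of $\bz$; compactness of $\C$ gives a finite subcover, and a Chapman--Kolmogorov decomposition through $U$ then shows that from every $\bz\in\C$ the law of $\BZ$ at some time dominates a fixed multiple of a common measure. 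Because the minorization occurs at the single deterministic time $m^*$ (rather than in the $\gamma$-averaged form of Theorem~\ref{t:pers2_cont}), these times can be aligned, so that $\C$ is small at one fixed time, equivalently the corresponding skeleton is $\nu$-irreducible, aperiodic, with $\C$ small. Then the Harris/Meyn--Tweedie theorem applies: there is a unique invariant probability measure for the skeleton, and constants $\tilde C<\infty$, $\theta\in(0,1)$ with $\|P_{km^*}(\bz,\cdot)-\pi\|_{W}\leq\tilde C\,\theta^{k}W(\bz)$, which dominates total variation since $W\geq 1$. Uniqueness of $\pi$ on $\Se_+$ and its invariance under the full semigroup are inherited from Theorem~\ref{t:pers2_cont}. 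A routine interpolation, writing $t=km^*+s$ with $s\in[0,m^*)$, using $P_s\pi=\pi$, the skeleton contraction, and the uniform bound $P_sW\leq c'W+b'$ for $s\in[0,m^*)$, then yields $\|P_t(\bz,\cdot)-\pi\|_{\mathrm{TV}}\leq C'e^{-\beta t}W(\bz)$ for all $\BZ(0)=\bz\in\Se_+$.

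The main obstacle is the small-set step: turning the local one-point minorization on $U$ into the statement that every compact subset of $\Se_+$ is small at a uniform fixed time. With non-degenerate noise this is immediate from the strong Feller property and support theorems, but here the noise may be degenerate, so the argument must run purely through accessibility, the Feller property, compactness, and Chapman--Kolmogorov, and it must carefully use that the minorization holds at the single time $m^*$ in order to rule out a periodicity obstruction and to obtain a genuine geometric rate rather than merely the weak convergence already given by Theorem~\ref{t:pers2_cont}.
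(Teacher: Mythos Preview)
Your approach is essentially the same as the paper's. The paper omits the continuous-time proof, stating it parallels the discrete case; there (Appendix~\ref{s:a}, Theorem~\ref{thm3.1}) the argument is precisely: build a Foster--Lyapunov function on $\Se_+$ that blows up at infinity and at $\Se_0$, prove a geometric drift inequality for a fixed-time skeleton, and then invoke the standard Meyn--Tweedie/Harris results of \cite{MT, B18} using the one-step minorization on $U$ for smallness and aperiodicity.

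The only point worth sharpening is your description of the Lyapunov function. You speak of ``a suitable combination $W$ of $V$ and $V_0$,'' which suggests something additive; in fact the paper's construction is multiplicative and quite specific: $U^\theta(\bz)=\bigl(V(\bz)\prod_{i=1}^n x_i^{-p_i}\bigr)^\theta$, where the weights $p_i>0$ come from the min--max characterization~\eqref{e.p} of the persistence condition and $\theta>0$ is chosen small via a Taylor/log-Laplace argument (Proposition~\ref{prop2.1}). The drift inequality $\E_\bz U^\theta(\BZ(n^*T^*))\leq\kappa\,U^\theta(\bz)+\tilde K$ is not obtained by a direct Gronwall/Dynkin bound on the generator but through the chain of Lemmas~\ref{lmA.2}--\ref{lm3.1} and a stopping-time decomposition (equations~\eqref{et1.2}--\eqref{et1.5}); your sketch correctly defers this to ``the machinery behind the persistence theorems,'' but be aware that this is where the real work sits. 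Your small-set and interpolation steps are handled exactly as you describe.
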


In order to prove extinction results we need to make the
following extra assumption, which is the analogue of Assumption
\ref{a:ext} from the discrete time setting.
\begin{asp}\label{a:e_cont}
There is $C>0$ such that
$$
\left|\sum_{i=1}^n
V_{x_i}(\bz)x_ig_i(\bz)+\sum_{i=1}^{\kappa_0}V_{y_i}(\bz)h_i(\bz)\right|\leq
CV(\bz)\sum_{i=1}^n (1+|f_i(\bz)|+|g_i(\bz)|^2), \bz\in\Se.
$$
\end{asp}

Suppose the notation is the same as the one from Section \ref{s:results0}.
\begin{thm}\label{t:ex_cont}
If $E_1$ is nonempty, then for any $I\in E_1$, there exists
$\alpha_I>0$ such that, for any compact extinction set
$\K^I\subset \Se^{I}_+$,
we have
$$
\lim_{\dist(\bz,\K^I)\to0, \bz\in
S^\circ}\PP_\bz\left\{\lim_{t\to\infty}\dfrac{\ln X_i(t)}t\leq
-\alpha_I, i\notin I\right\}=1.
$$
\end{thm}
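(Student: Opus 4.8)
The plan is to mirror the proof of the discrete‐time analogue Theorem \ref{t:ex}, replacing the difference‑equation estimates by It\^o calculus and the strong law of large numbers for continuous local martingales. Fix $I\in E_1$ with $I\neq\{1,\dots,n\}$, and record two consequences of $I\in E_1$. From \eqref{ae3.1}, together with the minimax/linear‑programming duality already used for the persistence criteria (cf.\ the remark after Theorem \ref{t:pers1_disc}) and the weak$^*$ compactness of $\Conv(\M^{I,+})$ (which follows from the Lyapunov bound in Assumption \ref{a:sde}, as in the remark on $\Conv(\M)$), there exist weights $p_i>0$, $i\notin I$, and a constant $\rho>0$ with $\sum_{i\notin I}p_i r_i(\mu)\le-2\rho$ for all $\mu\in\Conv(\M^{I,+})$; equivalently, using upper semicontinuity of $\mu\mapsto r_i(\mu)$ on $\Conv(\M^{I,+})$ (again via the moment bounds in Assumption \ref{a:sde}), $\max_{i\notin I}\sup_{\mu\in\Conv(\M^{I,+})}r_i(\mu)<0$, and we take $\alpha_I$ to be half of its absolute value. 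From \eqref{ae3.2}, the restriction of \eqref{e:system} to the invariant subspace $\Se^I$ satisfies the hypotheses of Theorem \ref{t:pers1_cont} (with $\Conv(\M^{I,\partial})$ playing the role of $\Conv(\M)$), so the $\Se^I$‑subsystem is stochastically persistent; in particular the persistence proof supplies a barrier function $\varphi_I$, built from $V$ and negative powers of the coordinates $x_j$, $j\in I$, which blows up near $\Se^I_0$ and has the supermartingale property that prevents the $\Se^I$‑subsystem from approaching $\Se^I_0$.

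Next I would set up the basic It\^o reduction. For $i\notin I$, It\^o's formula for $\ln X_i(t)$ gives
\[
\frac{\ln X_i(t)}{t}=\frac{\ln X_i(0)}{t}+\int_{\Se}\Big(f_i(\bz)-\tfrac{\sigma_{ii}g_i^2(\bz)}{2}\Big)\,\tilde\Pi_t(d\bz)+\frac{M_i(t)}{t},
\]
where $M_i(t)=\int_0^t g_i(\BZ(s))\,dE^i(s)$ is a local martingale whose quadratic variation grows at most linearly on the event we shall work on (by Assumption \ref{a:sde} and Assumption \ref{a:e_cont}), so that $M_i(t)/t\to0$ almost surely. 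Hence the theorem reduces to the following: with $\PP_\bz$‑probability tending to $1$ as $\dist(\bz,\K^I)\to0$ with $\bz\in\Se_+$, every weak$^*$ limit point of $(\tilde\Pi_t)_{t\ge0}$ lies in $\Conv(\M^{I,+})$; on that event, $\limsup_{t\to\infty}\ln X_i(t)/t\le\sup_{\mu\in\Conv(\M^{I,+})}r_i(\mu)\le-\alpha_I$ for all $i\notin I$, which is the asserted inequality.

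The main work is this concentration statement, and it is also the principal obstacle. The difficulty is that the drift $f_i-\sigma_{ii}g_i^2/2$ governing $\ln X_i$ is negative only \emph{on average} against measures of $\Conv(\M^{I,+})$, not pointwise, so $\ln X_i$ is not itself a supermartingale, and the non‑compact state space means the occupation measures could a priori leak mass to infinity or to boundary faces other than $\Se^I$. I would handle this with a stopping‑time argument on a composite Lyapunov function $\Psi$ defined on a neighborhood $N$ of $\K^I$ in $\Se_+$, of the rough form
\[
\Psi(\bz)=V(\bz)^{\theta}\Big(\textstyle\prod_{i\notin I}x_i^{p_i}\Big)\,\varphi_I(\bz),
\]
with $\theta>0$ small: the factor $V^\theta$ controls escape to infinity via Assumption \ref{a:sde}, with the It\^o correction terms kept subordinate by Assumption \ref{a:e_cont} (the continuous analogue of Assumption \ref{a:ext}); the factor $\prod_{i\notin I}x_i^{p_i}$ records extinction of the outside species and supplies the negative‑on‑average drift $\sum_{i\notin I}p_i(f_i-\sigma_{ii}g_i^2/2)\le-\rho$ near the face by \eqref{ae3.1}; and $\varphi_I$ keeps the process away from $\Se^I_0$. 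The precise form of $\Psi$ requires correction terms (as in the discrete‑time proof of Theorem \ref{t:ex} and in \cite{HN16}) to upgrade this ``negative on average'' behaviour to a genuine local supermartingale on $N$; granting that, a maximal inequality gives $\PP_\bz(\tau_N<\infty)\le C\,\Psi(\bz)/\inf_{\partial N}\Psi$ for the first exit time $\tau_N$ of $N$, and this tends to $0$ as $\bz\to\K^I$ because $\Psi(\bz)$ is of order $\prod_{i\notin I}x_i^{p_i}\to0$ near $\K^I\subset\Se^I_+$, while $\Psi$ is bounded below on $\partial N$. On $\{\tau_N=\infty\}$ the process stays in $N$ forever, hence $\tilde\Pi_t$ is eventually supported in $N$; passing to a nested sequence of such neighborhoods with summable exit probabilities and invoking Assumption \ref{a:e_cont} for tightness of $(\tilde\Pi_t)$, every weak$^*$ limit point is an invariant probability measure supported on $\Se^I_+$, i.e.\ in $\Conv(\M^{I,+})$, and moreover $\prod_{i\notin I}X_i(t)\to0$.

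Putting the pieces together yields the statement with $\alpha_I$ as above. I expect the construction and verification of the supermartingale $\Psi$ to be the delicate point: the bookkeeping must simultaneously confine the process away from infinity (needing $V$ and Assumption \ref{a:sde}), away from $\Se^I_0$ (needing \eqref{ae3.2} and the subsystem barrier $\varphi_I$), and drive the coordinates $i\notin I$ to zero (needing \eqref{ae3.1}), all while the generator terms arising from Assumption \ref{a:e_cont} stay lower order — exactly the estimate carried out for the discrete‑time Theorem \ref{t:ex} and, in the SDE setting, in \cite{HN16}. The remaining steps (the It\^o reduction, the martingale law of large numbers, and the identification of limit points with $\Conv(\M^{I,+})$) are then routine.
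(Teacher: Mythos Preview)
Your overall strategy is right and is exactly the one the paper intends: the paper gives no separate proof of Theorem \ref{t:ex_cont} but says the continuous-time argument mirrors the discrete-time proof of Theorem \ref{t:ex} (Appendix \ref{s:b}), which is precisely the Lyapunov/stopping-time scheme you sketch. The key ingredients you identify --- a composite function built from $V$, positive powers of the $x_i$ with $i\notin I$, and negative powers of the $x_j$ with $j\in I$, the passage from the ``negative on average'' condition \eqref{ae3.1} to a genuine supermartingale inequality via a time-averaged estimate (the continuous analogue of Lemma \ref{lm4.2}/Proposition \ref{prop4.1}), and a trapping/stopping-time argument --- are all the right ones.

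There is, however, a concrete technical gap in your Lyapunov function. With $\Psi(\bz)=V(\bz)^\theta\big(\prod_{i\notin I}x_i^{p_i}\big)\varphi_I(\bz)$ the trapping step fails whenever $|I^c|\ge 2$ (and even for $|I^c|=1$ if exit can occur through the $j\in I$ or $\by$ coordinates). Your maximal inequality needs $\inf_{\partial N}\Psi>0$, but on the boundary of any genuine neighborhood $N$ of $\K^I$ one can have a single $x_{i_0}$ of order $\delta$ while the remaining $x_i$, $i\notin I$, are arbitrarily small, so the product $\prod_{i\notin I}x_i^{p_i}$, and hence $\Psi$, can be made arbitrarily small on $\partial N$. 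Conversely, if you take $N=\{\Psi<\varsigma\}$ so that $\Psi\equiv\varsigma$ on $\partial N$, then $N$ is no longer a neighborhood of $\Se^I_+$: the sublevel set $\{\Psi<\varsigma\}$ does not force each $x_i$, $i\notin I$, to be small, so staying in $N$ does not force the occupation measures toward $\Se^I_+$, and your step (3) breaks down. The paper (discrete-time proof) avoids exactly this by taking a \emph{sum} rather than a product,
\[
W_\theta(\bz)=\sum_{i\notin I}\Big[V(\bz)\,\frac{x_i^{\check p}}{\prod_{j\in I}x_j^{\hat p_j}}\Big]^\theta,
\]
so that $W_\theta<\varsigma$ forces \emph{each} $x_i<\delta_e$ (see \eqref{e:ine}); the trap $\{W_\theta<\varsigma\}$ is then automatically contained in a true neighborhood of $\Se^I_+$.

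Once you use the sum form, the paper's route is also more direct than your occupation-measure detour: it shows $W_\theta(\BZ(t))$ decays exponentially on the good event (via a supermartingale with geometric rate and Borel--Cantelli, exactly as in \eqref{et3.8}--\eqref{et3.7}), and since $x_i^{\theta\check p}\le C\,W_\theta(\bz)$ this immediately gives $\limsup_{t\to\infty}\ln X_i(t)/t\le -\alpha_I$ for each $i\notin I$. The It\^o/occupation-measure reduction you propose is not wrong, but it is the machinery needed for the stronger Theorem \ref{t:ex2_cont}, not for the present statement.
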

Let $\mathcal{P}(n)$ be the set of all subsets of
$\{1,\dots,n\}$ and define $E_2:=\mathcal{P}(n)\setminus E_1$.
\begin{thm}\label{t:ex2_cont}
Assume either that $E_2=\emptyset$ or that
$\max_{i} r_i(\nu)>0$ for any $\nu$ with $\nu\in\M^{J,+}$ for
some $J\in E_2$.
If $\bigcup_{I\in E_1}\Se^{I}_+$ is accessible then
$$
\sum_{I\in E_1} p_{\bz,I}=1
$$
where
$$p_{\bz,I}:=\PP_\bz\left\{\U(\omega)\subset\Conv\left(\M^{I,+}\right)
~\text{and}~\lim_{t\to\infty}\frac{\ln
X_j(t)}{t}\in\left\{r_j(\mu):\mu\in\Conv\left(\M^{I,+}\right)\right\},
j\notin I\right\}.$$
\end{thm}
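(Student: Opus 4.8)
The plan is to combine the almost-sure tightness of the empirical occupation measures, coming from the Lyapunov structure in Assumptions \ref{a:sde} and \ref{a:e_cont}, with the local extinction estimate of Theorem \ref{t:ex_cont}, and then to glue the local pictures together by means of the accessibility hypothesis and a recurrence argument. Throughout, $\U(\omega)$ denotes the weak$^*$-limit set of $(\widetilde\Pi_t)_{t\ge0}$.

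\emph{Step 1: tightness, invariance, uniform integrability.} First I would use $\op V(\bz)\le\big(C_4-\gamma_4\sum_i(1+|f_i(\bz)|+|g_i(\bz)|^2)\big)V(\bz)$ from Assumption \ref{a:sde}, together with Assumption \ref{a:e_cont} to dominate the martingale part of $V(\BZ(t))$, and argue as in the corresponding lemma of \cite{HN16} that for every $\bz\in\Se_+$ the family $(\widetilde\Pi_t)_{t\ge0}$ is a.s.\ tight and that $\frac1t\int_0^t\big(1+\sum_i(1+|f_i|+|g_i|^2)\big)(\BZ(s))\,ds$ stays bounded along a.e.\ path. Hence, using the Feller property as in \cite{B18}, $\U(\omega)$ is a.s.\ a nonempty set of invariant probability measures, and the integrands $f_j-\tfrac{\sigma_{jj}}2g_j^2$ that govern $\tfrac{\ln X_j(t)}{t}$ are uniformly integrable against $\widetilde\Pi_t$, so one may pass to limits in them.

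\emph{Step 2: $\U(\omega)$ avoids repellers.} Take $\nu\in\U(\omega)$ and pass to its ergodic decomposition. If an ergodic $\mu$ appearing there has species support $\Se(\mu)=J$, then $\mu\in\M^{J,+}$; when $J\in E_2$ the hypothesis gives $\max_i r_i(\mu)>0$, i.e.\ $\mu$ is an accessible repeller, and the auxiliary-function argument underlying Theorem \ref{t:pers1_cont} (the function $\prod_i x_i^{p_i}$ localised near $\suppo\mu$) shows the trajectory cannot deposit occupation-measure mass there. Thus a.s.\ $\U(\omega)\subset\bigcup_{I\in E_1}\Conv(\M^{I,+})$. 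I would also record the purely combinatorial fact that no two distinct members of $E_1$ are nested: if $I\subsetneq I'$ with $I,I'\in E_1$, then every $\mu\in\M^{I,+}\subset\M^{I',\partial}$ would have to satisfy both $\max_{i\notin I}r_i(\mu)<0$ by \eqref{ae3.1} for $I$ and $\max_{i\in I'}r_i(\mu)>0$ by \eqref{ae3.2} for $I'$ (using $r_i(\mu)=0$ for $i\in I$ from Proposition \ref{p:rate2}), which is impossible on any species of $I'\setminus I$; so the faces $\Se^I_+$, $I\in E_1$, are pairwise incomparable.

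\emph{Step 3: absorption, full probability (the main obstacle), and the rates.} For $I\in E_1$, Theorem \ref{t:ex_cont} provides $\alpha_I>0$ and, around every compact $\K^I\subset\Se^I_+$, a neighbourhood inside $\Se_+$ from which, with probability arbitrarily close to $1$, $X_i(t)$ decays at rate at least $\alpha_I/2$ for all $i\notin I$; combined with \eqref{ae3.2} — which says the $\Se^I$-restricted dynamics is persistent, so the species of $I$ do not go extinct inside the $I$-face — this yields a genuine attracting neighbourhood $N_I$ with $\inf_{\bz\in N_I}\PP_\bz(\text{the trajectory stays in }N_I,\ X_i(t)\to0\text{ for }i\notin I,\ \text{and }\BX_I\text{ persists})>0$, and on that event $\widetilde\Pi_t$ concentrates on $\Conv(\M^{I,+})$ (here one uses Step 2 again, now for the restricted dynamics, to exclude the $\M^{I,\partial}$ part of a limit measure). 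Because the faces in $E_1$ are incomparable, once the trajectory enters and stays in one $N_I$ it can never approach a different one, so the whole of $\U(\omega)$ lands in a single $\Conv(\M^{I,+})$. The crux, which I expect to be the hardest part, is to promote these positive-probability statements to a full-probability one, i.e.\ $\sum_{I\in E_1}p_{\bz,I}=1$: here I would combine accessibility of $\bigcup_{I\in E_1}\Se^I_+$ with the recurrence to a fixed compact set from Step 1, and use the strong Markov property to see that the trajectory makes infinitely many excursions back to that set, each giving a probability bounded below of being absorbed into some $N_I$, whence a Borel--Cantelli/renewal argument forces absorption a.s.; the $E_2$-hypothesis is exactly what prevents an excursion from lingering indefinitely near a non-attracting interior face, since no ergodic measure on such a face is a sink. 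Finally, on $\{\U(\omega)\subset\Conv(\M^{I,+})\}$, writing for $j\notin I$
\[\frac{\ln X_j(t)}{t}=\frac{\ln X_j(0)}{t}+\int_\Se\Big(f_j-\tfrac{\sigma_{jj}}2 g_j^2\Big)\,d\widetilde\Pi_t+\frac{M_j(t)}{t}\]
with $M_j(t)=\int_0^t g_j(\BZ(s))\,dE^j(s)$, one has $\langle M_j\rangle_t=\int_0^t\sigma_{jj}g_j^2(\BZ(s))\,ds=O(t)$ by Step 1, hence $M_j(t)/t\to0$ by the strong law for local martingales, and passing the limit (licensed by Step 1) in the integral term shows every limit point of $\tfrac{\ln X_j(t)}{t}$ lies in $\{r_j(\mu):\mu\in\Conv(\M^{I,+})\}$, which is the stated conclusion.
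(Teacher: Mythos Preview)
Your proposal is essentially correct and follows the same route the paper takes: the paper explicitly omits the continuous-time proof, stating it is analogous to the discrete-time Theorem \ref{t:ex22}, whose proof (Appendix \ref{s:b}, Theorem \ref{t:exx2}) in turn defers to Lemmas 5.8, 5.9 and Theorem 5.2 of \cite{HN16} once the local extinction result (here Theorem \ref{t:ex_cont}) and the tightness lemma are in hand. Your Steps 1--3 are a faithful expansion of that machinery: tightness and invariance of weak limits from the Lyapunov structure, repeller avoidance for $E_2$-faces, and the absorption-plus-Borel--Cantelli argument combining Theorem \ref{t:ex_cont} with accessibility.

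Two small points worth tightening. First, the conclusion of your Step 2 is slightly overstated: what the ergodic-decomposition argument yields is that every \emph{ergodic component} of every $\nu\in\U(\omega)$ lies in some $\M^{I,+}$ with $I\in E_1$, not that $\nu$ itself lies in a single $\Conv(\M^{I,+})$---a mixture across distinct $I$'s is not in that union. The reduction to a single $I$ only comes after the absorption argument of Step 3, as you yourself note there. Second, in Step 3 the ``recurrence to a fixed compact set from Step 1'' is recurrence to a compact in $\Se$, not in $\Se_+$, so a uniform lower bound on the absorption probability does not follow from accessibility alone: the process may return to that compact set arbitrarily close to a boundary face where the probability of reaching a given $N_I$ could degenerate. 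The way \cite{HN16} closes this---and what your own Step 2 already sets up---is to argue that on the event of non-absorption the occupation measures must still place their ergodic mass on $\bigcup_{I\in E_1}\M^{I,+}$, forcing infinitely many entries into the basins $N_I$ themselves, each carrying a uniformly positive absorption probability by Theorem \ref{t:ex_cont}. With those two adjustments your outline matches the paper's intended argument.
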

\begin{rmk}
We note that if $(E_1,\dots, E_n)$ is non-degenerate, i.e. the
matrix $\Sigma$ is invertible, then each subspace $\Se^I_+$ is
accessible.
\end{rmk}
If we assume that there exists a unique attracting ergodic
measure in each attracting subspace we get a stronger result,
which is an immediate corollary of the above.

\begin{thm}\label{t:ex3_cont}
Assume that each $I\in E_1$ is such that $\M^{I,+} = \{\mu_I\}$, i.e. for any
$I\in E_1$ there exists a unique $\mu_I$ satisfying
\eqref{ae3.1} and \eqref{ae3.2}. Furthermore, assume that that
$E_2=\emptyset$ or that
$\max_{i} r_i(\nu)>0$ for any $\nu\in\M^{J,+}$ for some $J\in
E_2$.
If $\bigcup_{I\in E_1}\Se^{I}_+$ is accessible then
$$
\sum_{I\in E_1} p_{\bz,I}=1
$$
where
$$p_{\bz,I}:=\PP_\bz\left\{\U(\omega)=\{\mu_I\}
~\text{and}~\lim_{t\to\infty}\frac{\ln X_j(t)}{t}=r_j(\mu_I),
j\notin I\right\}.$$
Furthermore, if for some $I\in E_1$ the subspace $\Se^I_+$ is
accessible, then $p_{\bz,I}>0$.
\end{thm}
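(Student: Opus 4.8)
The plan is to read the first claim straight off Theorem~\ref{t:ex2_cont} and to obtain the strict positivity $p_{\bz,I}>0$ from the accessibility hypothesis together with Theorem~\ref{t:ex_cont}. For the first claim I would observe that under the hypothesis $\M^{I,+}=\{\mu_I\}$ the convex hull $\Conv(\M^{I,+})$ is the single measure $\mu_I$, so $\{r_j(\mu):\mu\in\Conv(\M^{I,+})\}=\{r_j(\mu_I)\}$ and the event defining $p_{\bz,I}$ in Theorem~\ref{t:ex2_cont} becomes
\[
\bigl\{\U(\omega)\subset\{\mu_I\}\ \text{and}\ \lim_{t\to\infty}\tfrac1t\ln X_j(t)=r_j(\mu_I),\ j\notin I\bigr\}.
\]
The only point requiring a word is that this event contains $\{\U(\omega)=\emptyset\}$; to identify it with the event defining $p_{\bz,I}$ in Theorem~\ref{t:ex3_cont} (where $\U(\omega)=\{\mu_I\}$) one needs $\{\U(\omega)=\emptyset\}$ to be null, and then the events over distinct $I\in E_1$ are automatically pairwise disjoint. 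I would get this from tightness of the random occupation measures $(\tilde\Pi_t)_{t>0}$: under Assumptions~\ref{a:sde} and \ref{a:e_cont} the Lyapunov inequality $\op V\le\bigl(C_4-\gamma_4\sum_{i=1}^n(1+|f_i|+|g_i|^2)\bigr)V$ yields $\limsup_{t\to\infty}\tfrac1t\int_0^t V(\BZ(s))\,ds<\infty$ almost surely, and since $V\to\infty$ at infinity the family $(\tilde\Pi_t)$ is a.s.\ tight, so $\U(\omega)\neq\emptyset$ a.s. With this, ``$\U(\omega)\subset\{\mu_I\}$'' can be replaced by ``$\U(\omega)=\{\mu_I\}$'', the two forms of $p_{\bz,I}$ agree, and $\sum_{I\in E_1}p_{\bz,I}=1$ is exactly the conclusion of Theorem~\ref{t:ex2_cont}.

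For the last assertion I would fix $I\in E_1$ with $\Se^I_+$ accessible and $\bz\in\Se_+$, pick a point $\bz^\dagger\in\Se^I_+\cap\Gamma_\bz$ and a compact set $\K^I\subset\Se^I_+$ containing $\bz^\dagger$ in its interior. By Theorem~\ref{t:ex_cont} there are $\alpha_I>0$ and a neighborhood $U$ of $\bz^\dagger$ with $\PP_{\bz'}\{\lim_{t\to\infty}\tfrac1t\ln X_i(t)\le-\alpha_I,\ i\notin I\}>\tfrac12$ for all $\bz'\in U\cap\Se_+$, and by accessibility there is $t_0\ge0$ with $P_{t_0}(\bz,U)>0$; the strong Markov property at $t_0$ then gives $\PP_\bz\{\lim_{t\to\infty}\tfrac1t\ln X_i(t)\le-\alpha_I,\ i\notin I\}\ge\tfrac12 P_{t_0}(\bz,U)>0$. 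On this event the coordinates $X_i$, $i\notin I$, decay exponentially, so eventually the process is a vanishing perturbation of the dynamics restricted to $\Se^I$; that restricted dynamics is stochastically persistent since \eqref{ae3.2} is precisely the hypothesis of Theorem~\ref{t:pers1_cont} for the $I$-subcommunity, and combined with the tightness above and $\M^{I,+}=\{\mu_I\}$ the only possible limit of its occupation measures is $\mu_I$. Tracking the negligible feedback of the small $I^c$-coordinates on the $f_j,g_j$ ($j\in I$) and on the auxiliary variables, I would upgrade this to $\U(\omega)=\{\mu_I\}$ and $\lim_{t\to\infty}\tfrac1t\ln X_j(t)=r_j(\mu_I)$, $j\notin I$, on a full-measure subset of that event, so that $p_{\bz,I}>0$.

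The only step that is not pure bookkeeping is this last upgrade: passing from ``the $I^c$-species decay exponentially'' (the output of Theorem~\ref{t:ex_cont}) to ``$\U(\omega)=\{\mu_I\}$ and $\tfrac1t\ln X_j(t)\to r_j(\mu_I)$''. This is an asymptotic-autonomy argument — the full process is eventually a small perturbation of the $I$-subsystem, whose long-run behavior is pinned down by its unique attracting measure $\mu_I$, and the invasion exponents $r_j(\mu_I)$ are recovered from the ergodic averages of $f_j-\tfrac12\sigma_{jj}g_j^2$ along the converging occupation measures, the martingale part of $\ln X_j(t)$ being controlled by the moment bounds of Assumption~\ref{a:sde}. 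Since exactly this reasoning is carried out inside the proof of Theorem~\ref{t:ex2_cont}, here it may simply be invoked, and Theorem~\ref{t:ex3_cont} is indeed an immediate corollary.
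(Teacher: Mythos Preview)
Your proposal is correct and follows the route the paper intends. The paper gives no proof for Theorem~\ref{t:ex3_cont} beyond calling it ``an immediate corollary of the above'' (i.e.\ of Theorem~\ref{t:ex2_cont}) and remarking that the continuous-time proofs mirror the discrete-time ones in Appendix~\ref{s:b}; your argument simply makes this explicit. The reduction of the first claim to Theorem~\ref{t:ex2_cont} via $\Conv(\{\mu_I\})=\{\mu_I\}$, together with the tightness observation ruling out $\U(\omega)=\emptyset$, is exactly right, and for the positivity of $p_{\bz,I}$ your combination of accessibility, Theorem~\ref{t:ex_cont}, and the ``upgrade'' step borrowed from the proof of Theorem~\ref{t:ex2_cont} (equivalently, from Lemmas~5.8--5.9 and Theorem~5.2 of \cite{HN16}, which the paper cites for that proof) is the intended mechanism.
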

The above theorems generalize the earlier persistence results by
\cite{SBA11} and the persistence and extinction results by
\cite{HN16}.

\subsection{Nondegenerate stochastic differential equations}
We present the simplest setting, with no auxiliary variables, where all the accessibility and irreducibility conditions are satisfied.
The dynamics of the $n$ species is given by
\begin{equation}\label{e:system_2}
dX_i(t)=X_i(t) f_i(\BX(t))\,dt+X_i(t)g_i(\BX(t))\,dE_i(t), ~i=1,\dots,n
\end{equation}
taking values in $[0,\infty)^n$.
Assume the following.
\begin{asp}\label{a.nonde}  The coefficients of \eqref{e:system} satisfy the following conditions:
\begin{itemize}
\item[(1)] $\diag(g_1(\bx),\dots,g_n(\bx))\Gamma^\top\Gamma\diag(g_1(\bx),\dots,g_n(\bx))=(g_i(\bx)g_j(\bx)\sigma_{ij})_{n\times n}$
is a positive definite matrix for any $\bx\in\R^{n}_+$.
\item[(2)] $f_i(\cdot), g_i(\cdot):\R^n_+\to\R$ are locally Lipschitz functions for any $i=1,\dots,n.$
\item[(3)] There exist $\bc=(c_1,\dots,c_n)\in\R^{n,\circ}_+$ and $\gamma_b>0$ such that
\begin{equation}\label{a.tight}
\limsup\limits_{\|x\|\to\infty}\left[\dfrac{\sum_i c_ix_if_i(\bx)}{1+\bc^\top\bx}-\dfrac12\dfrac{\sum_{i,j} \sigma_{ij}c_ic_jx_ix_jg_i(\bx)g_j(\bx)}{(1+\bc^\top\bx)^2}+\gamma_b\left(1+\sum_{i} (|f_i(\bx)|+g_i^2(\bx))\right)\right]<0.
\end{equation}
\end{itemize}
\end{asp}
Note that part (1) of Assumption \ref{a.nonde} means that the diffusion is nondegenerate. Meanwhile, \eqref{a.tight} is the special case of Assumption \ref{a:sde}, part (2) when there are no auxiliary variables and when the function $V$ is chosen to be $1+\bc^\top\bx$.
As shown in Example 1.1 from \cite{HN16} condition \eqref{a.tight} is satisfied for most ecological models \citep{SBA11, EHS15, HNY16, HN16,  B18}.
One can show \citep{HN16} that Assumption \ref{a.nonde} implies that all the irreducibility and accessibility conditions from Theorems \ref{t:pers1_cont}-\ref{t:ex3_cont} hold.

Consider $\mu\in\M$.
Assume $\mu\neq \bdelta^*$ where $\bdelta^*$ is the Dirac mass at the origin, that is, at $(0,\dots,0)$.. Since the diffusion $\BX$ is non degenerate in each subspace,
there exist $0<n_1<\dots< n_k\leq n$
such that $\suppo(\mu)= \R^\mu_+$ where
$$\R_+^\mu:=\{(x_1,\dots,x_n)\in\R^n_+: x_i=0\text{ if } i\in \Se(\mu)^c\}$$
for
$\Se(\mu):=\{n_1,\dots, n_k\}$ and
$\Se(\mu)^c:=\{1,\dots,n\}\setminus\{n_1,\dots, n_k\}$. If $\mu=\bdelta^*$ then we note that $\R^{\bdelta^*}_+=\{\0\}$.
Let
$$\R_+^{\mu,\circ}:=\{(x_1,\dots,x_n)\in\R^n_+: x_i=0\text{ if } i\in \Se(\mu)^c\text{ and }x_i>0\text{ if  }x_i\in \Se(\mu)\}$$ and $\partial\R_+^{\mu}:=\R_+^\mu\setminus\R_+^{\mu,\circ}$. In this setting all the irreducibility and accessibility conditions hold and we have the following results.
\begin{thm}\label{t:pers4_cont}
Suppose that for all $\mu\in\Conv(\M)$
\[
\max_i\lambda_i(\mu)>0.
\]
 Then there exists a unique
invariant probability measure $\pi$ on $\R_+^{n,\circ}$ and
the distribution of $\BX(t)$ converges in total variation to
$\pi$ exponentially fast whenever
$\BX(0)=\bx\in\R_+^{n,\circ}$.
\end{thm}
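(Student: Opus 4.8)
The plan is to obtain Theorem \ref{t:pers4_cont} as a direct application of Theorem \ref{t:pers3_cont} to the system \eqref{e:system_2}, regarded as the special case of \eqref{e:system} with no auxiliary variables; in that case $\Se = \R_+^n$, $\Se_0 = \partial\R_+^n$, $\Se_+ = \R_+^{n,\circ}$, the process $\BZ$ coincides with $\BX$, and the quantities $\lambda_i(\mu)$ in the statement are the invasion rates $r_i(\mu)$. To invoke Theorem \ref{t:pers3_cont} we need to verify, in this setting, (i) Assumption \ref{a:sde}; (ii) that $\max_i r_i(\mu)>0$ for all $\mu\in\Conv(\M)$; (iii) the existence of $\bz^*\in\Gamma_{\Se_+}$, a neighborhood $U$ of $\bz^*$, a nonzero measure $\iota$ on $\Se_+$ and a probability measure $\gamma$ on $\R_+$ with $\int_0^\infty P_t(\bz,\cdot)\,\gamma(dt)\geq\iota(\cdot)$ for $\bz\in U$; and (iv) the existence of $m^*>0$ with $P_{m^*}(\bz,\cdot)\geq\iota(\cdot)$ for $\bz\in U$. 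Of these, (ii) is exactly the hypothesis of the theorem, so only (i), (iii), (iv) require argument, and (iii) follows from (iv) by taking $\gamma=\delta_{m^*}$.

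For (i) I would use the Lyapunov function $V(\bx)=(1+\bc^\top\bx)^p$ with $\bc\in\R_+^{n,\circ}$ and $\gamma_b>0$ as in Assumption \ref{a.nonde}(3) and $p>0$ small. One has $V\geq 1$, $V(\bx)\to\infty$ as $|\bx|\to\infty$, and $V(\bx)\geq c|\bx|^{\gamma_5}$ for suitable $c>0$ and small $\gamma_5>0$; the local Lipschitz hypothesis is Assumption \ref{a.nonde}(2); and a direct computation of $\op V$, combined with the $\limsup$ condition \eqref{a.tight} (so that the negative quadratic term $-\tfrac12\sum\sigma_{ij}c_ic_jx_ix_jg_ig_j/(1+\bc^\top\bx)^2$ coming from the second-order part of $\op V$ together with the first-order part stays below a negative multiple of $1+\sum_i(1+|f_i|+g_i^2)$ for $|\bx|$ large), gives the drift inequality of Assumption \ref{a:sde}(2) for $p$ small enough. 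This is precisely the reduction indicated in the paragraph after Assumption \ref{a.nonde}, and it also yields, via Remark \ref{r:sde_assum}, that $\BX(t)$ is a Feller process with pathwise unique solutions that remains in $\R_+^{n,\circ}$ forever once started there.

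For (iii)--(iv) I would invoke Assumption \ref{a.nonde}(1): on $\R_+^{n,\circ}$ the diffusion matrix $(x_ix_jg_i(\bx)g_j(\bx)\sigma_{ij})_{n\times n}$ is positive definite and locally uniformly elliptic, so by standard parabolic regularity the transition kernel admits a jointly continuous, strictly positive density $p_t(\bx,\bx')$ on $(0,\infty)\times\R_+^{n,\circ}\times\R_+^{n,\circ}$, and by the Stroock--Varadhan support theorem every point of $\R_+^{n,\circ}$ is accessible from every point of $\R_+^{n,\circ}$, whence $\Gamma_{\Se_+}\supseteq\R_+^{n,\circ}$. Fixing any $\bz^*\in\R_+^{n,\circ}$, any $m^*>0$, a compact neighborhood $\bar U\ni\bz^*$ and a compact set $K\subset\R_+^{n,\circ}$ with nonempty interior, one gets $c:=\inf_{\bz\in\bar U,\,\bz'\in K}p_{m^*}(\bz,\bz')>0$, so $\iota(\cdot):=c\,\mathrm{Leb}(\cdot\cap K)$ is a nonzero measure on $\Se_+$ with $P_{m^*}(\bz,\cdot)\geq\iota(\cdot)$ for all $\bz\in U:=\mathrm{int}\,\bar U$. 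This gives (iv), hence (iii). With (i)--(iv) verified, Theorem \ref{t:pers3_cont} produces a unique invariant probability measure $\pi$ on $\Se_+=\R_+^{n,\circ}$ and exponential total-variation convergence of the law of $\BX(t)$ to $\pi$ from every $\BX(0)\in\R_+^{n,\circ}$, which is the claim.

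The step that carries the real content is (iii)--(iv): one must know that local nondegeneracy of the diffusion on the interior genuinely yields a strictly positive, jointly continuous transition density and full accessibility of $\R_+^{n,\circ}$, uniformly enough on a neighborhood of $\bz^*$ to extract the minorizing measure $\iota$. This is classical (hypoellipticity and the support theorem) and, moreover, in the present paper it has already been packaged into the statement that Assumption \ref{a.nonde} implies all the accessibility and irreducibility conditions of Theorems \ref{t:pers1_cont}--\ref{t:ex3_cont} (see the discussion after Assumption \ref{a.nonde} and \cite{HN16}, with \cite{MT,B18} for the abstract convergence input); so in the final write-up this theorem is essentially a corollary, and everything beyond citing those facts is routine bookkeeping.
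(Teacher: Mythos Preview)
Your proposal is correct and follows essentially the same approach as the paper: the paper treats Theorem \ref{t:pers4_cont} as an immediate corollary of Theorem \ref{t:pers3_cont}, noting (right after Assumption \ref{a.nonde}) that \eqref{a.tight} is precisely Assumption \ref{a:sde}(2) with $V=1+\bc^\top\bx$ and that, by \cite{HN16}, Assumption \ref{a.nonde} supplies all the required accessibility and irreducibility conditions. Your write-up simply unpacks these two citations, so there is nothing substantively different.
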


The following condition will imply extinction.
\begin{asp}\label{a.extnn}
There exists a $\mu\in\M$ such that
\begin{equation}\label{ae3.1a}
\max_{i\in \Se(\mu)^c}\{\lambda_i(\mu)\}<0.
\end{equation}
If $\R_+^\mu\ne\{\0\}$, suppose further that
for any $\nu\in\Conv(\M_\mu)$ , we have
\begin{equation}\label{ae3.2a}
\max_{i\in \Se(\mu)}\{\lambda_i(\nu)\}>0
\end{equation}
where $\M_\mu:=\{\nu'\in\M:\suppo(\nu')\subset\partial\R_+^\mu\}.$
\end{asp}
Define
\begin{equation}\label{e:M1}
\M^1:=\left\{\mu\in\M : \mu ~~\text{satisfies Assumption} ~~\ref{a.extnn}\right\},
\end{equation}
and
\begin{equation}\label{e:M2}
\M^2:=\M\setminus\M^1.
\end{equation}
To characterize the extinction of specific populations, we need some additional conditions.
\begin{asp}\label{a.extn2}
Suppose that there is a $\delta_1>0$ such that
$$
\lim\limits_{\|\bx\|\to\infty} \dfrac{\|\bx\|^{\delta_1}\sum_i g_i^2(\bx)}{1+\sum_i(|f_i(\bx)|+|g_i(\bx)|^2)}=0.
$$
\end{asp}
\begin{rmk}
This is a technical assumption which says that the noise terms do not grow too fast. We offer here some intuition. An application of It\^o's Lemma yields that		
$$
\dfrac{\ln X_i(t)}t=\dfrac{\ln X_i(0)}t+\dfrac1t\int_0^t\left[f_i(\BX(s))-\dfrac{g_i^2(\BX(s))\sigma_{ii}}2\right]ds+\dfrac1t\int_0^t g_i(\BX(s))dE_i(s).
$$

If $\BX$ is close to the support of an ergodic invariant measure $\mu$ for a long time,
then
$$\dfrac1t\int_0^t\left[f_i(\BX(s))-\dfrac{g_i^2(\BX(s))\sigma_{ii}}2\right]ds$$
can be approximated by the average with respect to $\mu$
$$r_i(\mu)=\int_{\partial \R^n_+}\left(f_i(\bx)-\dfrac{g_i^2(\bx)\sigma_{ii}}2\right)\mu(d\bx).$$
while the term $$\dfrac{\ln X_i(0)}t$$ is negligible as $t\to \infty$. We need Assumption \eqref{a.extn2} in order to guarantee that as $t\to\infty$ we have with probability 1 that
\[
\dfrac1t\int_0^t g_i(\BX(s))dE_i(s) \to 0.
\]
\end{rmk}
To describe exactly which populations go extinct,
we need an additional assumption which ensures that apart from those in $\Conv(\M^1)$,
invariant probability measures are ``repellers''.
\begin{asp}\label{a.extn3}
Suppose that one of the following is true
\begin{itemize}
  \item $\M^2=\emptyset$

  \item For any $\nu\in\Conv(\M^2)$, $\max_i\lambda_i(\nu)>0.$
\end{itemize}
\end{asp}

\begin{thm}\label{t:ex2cont}
Suppose that Assumptions \ref{a.nonde}, \ref{a.extn2} and \ref{a.extn3} are satisfied and $\M^1\neq \emptyset$.
Then for any $\bx\in\R^{n,\circ}_+$
\begin{equation}\label{e0-thm4.2}
\sum_{\mu\in\M^1}p_{\bx}(\mu)=1
\end{equation}
where
$$p_{\bx}(\mu):=\PP_\bx\left\{\U(\omega)=\{\mu\}\,\text{ and }\,\lim_{t\to\infty}\dfrac{\ln X_i(t)}t=\lambda_i(\mu)<0, i\in \Se(\mu)^c\right\}>0, \bx\in\R^{n,\circ}_+, \mu\in \M^1.$$
\end{thm}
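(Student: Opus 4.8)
The plan is to obtain Theorem~\ref{t:ex2cont} as the specialization of the general continuous-time extinction result Theorem~\ref{t:ex3_cont} (together with Theorem~\ref{t:ex_cont} for the one-sided rate) to the nondegenerate, no-auxiliary-variable case, after checking that Assumptions~\ref{a.nonde}, \ref{a.extn2} and \ref{a.extn3} imply all the hypotheses appearing there (irreducibility, accessibility, the second-moment control Assumption~\ref{a:e_cont}, uniqueness of the attractor in each subspace, and the $E_2$-repeller condition). The bulk of the argument is a dictionary between the face-indexed objects $E_1,E_2,\M^{I,+},\M^{I,\partial}$ and the nondegenerate notation $\M^1,\M^2,\M_\mu,\R_+^\mu$.

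First I would pin down the structure of $\M$. Part~(1) of Assumption~\ref{a.nonde} makes the diffusion of \eqref{e:system_2} nondegenerate in the interior $\R_+^{I,\circ}$ of every coordinate face, and \eqref{a.tight} is exactly Assumption~\ref{a:sde}(2) with $V=1+\bc^\top\bx$ and no auxiliary variable. As shown in \cite{HN16}, these facts yield: (a) on each face the restricted process either possesses a unique ergodic invariant probability measure, whose topological support is the entire closed face, or else leaves that face; (b) every $\Se^I_+$ is accessible and all accessibility/irreducibility conditions of Theorems~\ref{t:pers1_cont}--\ref{t:ex3_cont} hold, as does Assumption~\ref{a:e_cont}, and the family of occupation measures is tight so that $\U(\omega)\neq\emptyset$. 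Hence $\mu\mapsto\Se(\mu)$ identifies $\M$ with a subfamily of $\mathcal{P}(n)$; writing $I=\Se(\mu)$ we get $\M^{I,+}\in\{\emptyset,\{\mu_I\}\}$, $\R_+^\mu=\R_+^I$, $\R_+^{\mu,\circ}=\R_+^{I,\circ}$, $\partial\R_+^\mu=\bigcup_{J\subsetneq I}\R_+^J$, and, crucially, $\M_\mu=\M^{I,\partial}$ (since in the no-auxiliary case $\Se_0^I=\partial\R_+^\mu$ is closed, so $\nu(\Se_0^I)=1$ iff $\suppo(\nu)\subset\partial\R_+^\mu$).

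Next I would match the hypotheses. With the identification above, Assumption~\ref{a.extnn} for a measure $\mu$ with $\Se(\mu)=I$ reads \eqref{ae3.1a}$=$\eqref{ae3.1} and \eqref{ae3.2a}$=$\eqref{ae3.2}, i.e. it is precisely the condition ``$I\in E_1$'' of Definition~\ref{a.extn}; thus $\M^1=\{\mu_I:I\in E_1\}$ and $\M^2=\{\mu_J:J\in E_2\}$, with $\lambda_i(\mu)=r_i(\mu)$ well defined by Proposition~\ref{p:rate2}. Because each attracting subspace already carries the single measure $\mu_I$, the ``unique attractor per subspace'' hypothesis of Theorem~\ref{t:ex3_cont} is automatic, and Assumption~\ref{a.extn3} (positivity of $\max_i\lambda_i$ on all of $\Conv(\M^2)$) a fortiori implies the weaker alternative required there (positivity on each ergodic $\mu_J$, $J\in E_2$). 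Theorem~\ref{t:ex3_cont} then gives $\sum_{I\in E_1}p_{\bz,I}=1$ with $\U(\omega)=\{\mu_I\}$ and $\lim_{t\to\infty}\tfrac{\ln X_j(t)}{t}=r_j(\mu_I)$ for $j\notin I$; translating $I\leftrightarrow\mu_I$ yields \eqref{e0-thm4.2}, while $p_{\bx}(\mu)>0$ follows from the ``Furthermore'' clause of Theorem~\ref{t:ex3_cont} together with accessibility of each $\Se^I_+$. To get the sharp limit $\lambda_i(\mu)$ (rather than only $\le-\alpha_I$ as in Theorem~\ref{t:ex_cont}) I would apply It\^o's formula to $\ln X_i$ and invoke Assumption~\ref{a.extn2}, which via the strong law of large numbers for martingales forces $\tfrac1t\int_0^t g_i(\BX(s))\,dE_i(s)\to0$ a.s., so that along paths whose empirical measures converge to $\mu_I$ the running average of $f_i-\tfrac12\sigma_{ii}g_i^2$ converges to $\lambda_i(\mu_I)$.

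The main obstacle I expect is the identification in the first step: rigorously deducing, from nondegeneracy in each face and the results of \cite{HN16}, that $\M$ has the ``one ergodic measure per face, with full-face support'' structure, and in particular that $\M_\mu$ coincides with $\M^{I,\partial}$ — this is exactly what allows the purpose-built Assumptions~\ref{a.extnn} and \ref{a.extn3} to be read off as the general conditions of Definition~\ref{a.extn} and the hypotheses of Theorem~\ref{t:ex3_cont}. The martingale estimate of the last step, controlling $\tfrac1t\int_0^t g_i(\BX(s))\,dE_i(s)$ under the growth bound of Assumption~\ref{a.extn2}, is the other point requiring genuine care.
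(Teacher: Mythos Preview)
Your proposal is correct and matches the paper's own approach: the paper does not give a separate proof of Theorem~\ref{t:ex2cont} but treats it as the specialization of Theorem~\ref{t:ex3_cont} to the nondegenerate, no-auxiliary-variable setting, invoking \cite{HN16} for the fact that Assumption~\ref{a.nonde} forces at most one ergodic measure per face and supplies all accessibility/irreducibility conditions, and noting (in the remark following Assumption~\ref{a.extn2}) that this assumption is precisely what makes the It\^o martingale term $\tfrac1t\int_0^t g_i(\BX(s))\,dE_i(s)$ vanish. Your dictionary $\M^1\leftrightarrow\{\mu_I:I\in E_1\}$, $\M_\mu\leftrightarrow\M^{I,\partial}$, together with the observation that Assumption~\ref{a.extn3} on $\Conv(\M^2)$ is stronger than the ergodic-measure condition in Theorem~\ref{t:ex3_cont}, is exactly the reduction the paper has in mind.
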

\begin{figure}
\begin{center}
\begin{tikzpicture}[scale=0.7, line width = 1.2 pt]
\fill[fill=green!20!white] (0,0) -- (8,0) -- (4,-5) -- (-4,-5) --cycle;
\draw (1.5,-2.5) node {$\mu_{12}$};

\draw[->, line width = 2pt, green!70!black] (0,-0.5) -- (0,6);
\draw (0,6.2) node {$x_3$};
\draw[->, line width = 2pt, color=red!70!black](-0.5,0) -- (8,0);
\draw (8.3,0) node {$x_1$};
\draw[->, line width = 2pt, color=red!70!black](0.2,0.25) -- (-4,-5);
\draw (-4.1,-5.3) node {$x_2$};
\draw (-0.4,0.3) node {$\delta_0$};


\draw[green!70!black, line width =2pt] (0,2.5)--(0,4.5);
\draw (-0.5,3.5) node {$\mu_3$};

\draw[red!70!black, line width =2pt] (-1,-1.25)--(-3,-3.75);
\draw (-2.6,-2.5) node {$\mu_1$};

\draw[red!70!black, line width =2pt] (2,0)--(6,0);
\draw (4,0.3) node {$\mu_2$};

\filldraw[red!70!black] (0,0) circle (2pt);

\end{tikzpicture}

\caption{Example with attractors and repellers when the noise is nondegenerate. In this example we picked: $r_i(\delta_0)>0, i=1,2,3$, $r_2(\mu_1)>0, r_1(\mu_2)>0, r_1(\mu_3)<0, r_2(\mu_3)<0$, and $r_3(\mu_{12})<0$. This yields $E_1=\{\{1,2\}, \{3\}\}$, $\Se(\mu_1)=\{1\}, \Se(\mu_2)=\{2\}, \Se(\mu_3)=\{3\}, \Se(\mu_{12})=\{1,2\}$. Each subspace can have at most one ergodic measure. We have $\M^{\{1,2\},+}=\{\mu_{12}\}$ and $\M^{\{3\},+}=\{\mu_3\}$. Note that the support of every ergodic measure is the whole subspace it lives in, for example: $\mu_1$ lives on $\R_+^{\mu_1,\circ}:=\{(x_1,0,0)\in\R^3_+: ~x_1>0\}$ and $\mu_{12}$ lives on $\R_+^{\mu_{12},\circ}:=\{(x_1,x_2,0)\in\R^3_+: ~x_1, x_2>0\}$.}
\label{fig:figure2}

\end{center}
\end{figure}
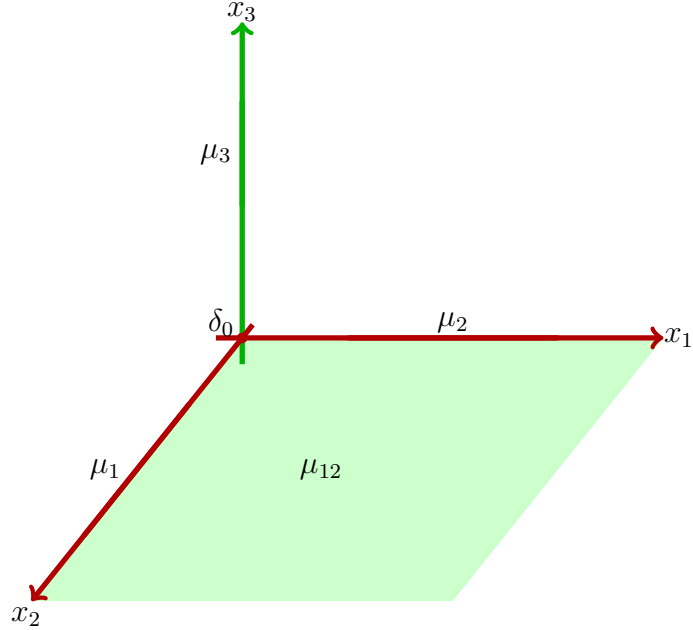
\begin{rmk}
The proofs of the continuous time SDE results from this section
are very similar to the ones for the discrete time results from
Section \ref{s:results0}. They have therefore been omitted.
\end{rmk}
An example showcasing the various definitions is presented in Figure \ref{fig:figure2}.

\section{Applications in discrete time}\label{s:app_disc}
\subsection{Dynamics in Markovian environments}
As formulated in terms of equations \eqref{e:system_discrete}, the discrete time system represents population dynamics subject to i.i.d environmental variation defined by the variable $\xi(t)$. However, the auxiliary variable $\BY(t)$ might also be an environmental variable. Physical environmental variables in ecological models commonly reflect no feedback from the population variables, although in nature such feedback occurs but is not necessarily strong or easy to characterize. If $\BY(t)$ has feedback only from itself, and not population densities, it becomes a Markov process and allows the equations \eqref{e:system_discrete} to represent population dynamics in a Markovian environment as first suggested by \cite{BS19}. In this case the system \eqref{e:system_discrete} simplifies to
\begin{equation}\label{e:1s1v_disc}
\begin{aligned}
X_i(t+1)&=X_i(t) F_i(\BX(t), \BY(t),\xi(t)),\\
\BY(t+1)&= G(\BY(t),\xi(t)).
\end{aligned}
\end{equation}

As formulated here, both $\BY(t)$ and $\xi(t)$ function as environmental variables, specifying the environment during the period $[t,t+1)$, which then influence the value of $\BX(t+1)$. As a special case of \eqref{e:system_discrete}, the general extinction and persistence theorems above continue to apply, although Assumptions \ref{a:main} and \ref{a:ext} might be easier to check. We illustrate the issues that apply to Markovian environments with examples for single species, and pairs of interacting species. We begin with the assumption that $\BY(t)$ has finitely many ergodic
measures $\mu^1_\BY,\dots, \mu^\ell_\BY$.

For an individual species $i$ we have
\[
r_i(\delta_0\times \mu^j_\BY) = \int_{\R^{\kappa_0}}\E\left[\ln
F_i(0,\by,\xi(1))\right]\mu^j_\BY(d\by).
\]
\subsubsection{One species}
Species $i$ will persist on its own if
\[
r_i(\delta_0\times \mu^j_\BY)>0
\]
for $j=1,\dots,\ell.$ This implies that at each ergodic probability measure of the environment, the log growth rate of species $i$ at zero has to be positive. Once one checks which species will persist on their own, the next step is to look at two species systems and see which two species systems persist in isolation. We do this in the next section.

We note that if the Markovian environment $\BY$ has only one ergodic measure $\mu_\BY$ then computations are significantly simplified.
\begin{rmk}
\cite{DF99} give abstract conditions for when the environment
\[
\BY(t+1)= G(\BY(t),\xi(t))
\]
has a unique invariant measure. Suppose for each $\xi$ the function $G(\cdot,\xi)$ is Lipschitz with constant $K_\xi$ and
\begin{itemize}
\item $\E [K_{\xi(1)}]<\infty$
\item $ \E |G(\by_0,\xi(1))-\by_0|<\infty$ for some $\by_0\in\R^{\kappa_0}$
 \item $\E [\ln K_{\xi(1)}]<\infty$
\end{itemize}
By Theorem 1.1 from \cite{DF99} the process $\BY(t)$ has a unique invariant probability measure $\mu_\BY$.
One example would be to assume that $\BY(t+1)=A\BY(t)+\xi(t)$.
Then, if $A$ has spectral radius less than one, the process $\BY(t)$ converges
(\cite{DF99}) to a unique invariant probability measure
$\mu_\BY$.
\end{rmk}
Suppose from now on that $\BY$ has a unique ergodic measure $\mu_\BY$. In this case the dynamics is determined by
\[
r_i(\delta_0\times \mu_\BY) = \int_{\R^{\kappa_0}}\E\left[\ln
F_i(0,\by,\xi(1))\right]\mu_\BY(d\by).
\]
If $r_i(\delta_0\times \mu_\BY)>0$ we have persistence. If
$r_i(\delta_0\times \mu_\BY)<0$ the species goes extinct almost
surely exponentially fast
\[
\lim_{t\to\infty}\frac{\ln X(t)}{t}=r_X(\delta_0\times \mu_\BY).\]

\begin{rmk}One other way of modeling the environment would be to assume
there are finitely many environmental states $\{1,\dots,\ell\}$
and $\BY(t)$ is an irreducible Markov chain with transition
probabilities $p_{ij}:=\PP[\BY(t+1)=j~|~\BY(t)=i]$. The Markov chain
will have a unique stationary distribution
$\nu:=(\nu_1,\dots,\nu_\ell)$. In this setting the only ergodic
measure on $\Se_0$ is $\delta_0\times\nu$ and the
persistence/extinction depends on the sign of
\[
r_X(\delta_0\times \nu) = \sum_{i=1}^\ell\nu_i\E\left[\ln
F(0,i,\xi(1))\right].
\]
If $r_X(\delta_0\times \nu)>0$ we have persistence. If
$r_X(\delta_0\times \nu) <0$ the species goes extinct almost
surely exponentially fast
\[
\lim_{t\to\infty}\frac{\ln X(t)}{t}=r_X(\delta_0\times \nu).
\]
\end{rmk}

\subsubsection{Two species}\label{s:disc_2d}
Consider two interacting species that experience the effects of a Markovian environment
\begin{equation}\label{e:2d_disc}
\begin{aligned}
X_1(t+1)&=X_1(t) f_1(X_1(t), X_2(t),\BY(t), \xi(t)),\\
X_2(t+1)&=X_2(t) f_2(X_1(t), X_2(t),\BY(t),\xi(t))\\
\BY(t+1) &= G(\BY(t),\xi(t)).
\end{aligned}
\end{equation}
We assume that $\BY(t)$ has only one ergodic probability measure $\mu_\BY$. Our results can be applied as follows. The
first step is to check that Assumptions \ref{a:main} and \ref{a:ext} hold - this is
done once one knows more detailed properties of the system
(noise and interaction terms). We exhibit how to check
assumptions in specific examples in the next sections. Once the
assumptions are checked, we look at the measure
$\delta_0\times\mu_\BY$
\[
r_i(\delta_0\times \mu_\BY) = \int_{\R^{\kappa_0}}\E\left[\ln
f_i(0,\by,\xi(1))\right]\mu_\BY(d\by).i=1,2.
\]
If $r_i(\delta_0\times \mu_\BY) >0$ then species $i$ survives on its own and the system
converges to a unique invariant probability measure $\mu_i$
supported on $\Se_+^i := \{\bx\in\Se~|~x_i\neq 0, x_j=0, i\neq
j\}$. Remember that the (random) set of weak$^*$-limit points of
the family of occupation measures $(\widetilde\Pi_t)_{t\in \N}$ is denoted
by $\U=\U(\omega)$. Thus, if we say that $\U(\omega)=\{\mu_1\}$,
this means that for the realization $\omega$ we have $\widetilde\Pi_t \to
\mu_1$ weakly.
\begin{enumerate}[label=(\roman*)]
\item Suppose $r_1(\delta_0\times \mu_\BY), r_2(\delta_0\times \mu_\BY)  >0$. The expected
per-capita growth rates can be computed via
  \[
  r_i(\mu_j)=\int_{(0,\infty)\times \R^{\kappa_0}}\E[\ln f_i(x,\by, \xi(1))]\mu_j(dx d\by).
  \]
  \begin{itemize}
\item If $r_1(\mu_2)>0$ and $r_2(\mu_1)>0$ we have coexistence
and convergence of the distribution of $\BX (t)$ to the unique
invariant probability measure $\pi$ on $\Se_+$.
\item If $r_1(\mu_2)>0$ and $r_2(\mu_1)<0$ we have the
persistence of $X_1$ and extinction of $X_2$. In other words,
for any $\bx\in\Se_+$
    \[
\PP_\bx\left\{\U(\omega)=\{\mu_1\}
~\text{and}~\lim_{t\to\infty}\frac{\ln X_2(t)}{t}=r_2(\mu_1)<0,
\right\}=1.
    \]
\item If $r_1(\mu_2)<0$ and $r_2(\mu_1)>0$ we have the
persistence of $X_2$ and extinction of $X_1$. In other words,
for any $\bx\in\Se_+$
    \[
\PP_\bx\left\{\U(\omega)=\{\mu_2\}
~\text{and}~\lim_{t\to\infty}\frac{\ln X_1(t)}{t}=r_1(\mu_2)<0,
\right\}=1.
    \]
\item If $r_1(\mu_2)<0$ and $r_2(\mu_1)<0$ we have that for any
$\bx\in\Se_+$
    \[
p_{\bx,j}:=\PP_\bx\left\{\U(\omega)=\{\mu_j\}
~\text{and}~\lim_{t\to\infty}\frac{\ln X_i(t)}{t}=r_i(\mu_j)<0,
i\neq j \right\}
    \]
    and
    \[
    p_{\bx,1}+ p_{\bx,2}=1.
    \]
  \end{itemize}
\item Suppose $r_1(\delta_0\times\mu_\BY)>0, r_2(\delta_0\times\mu_\BY)<0$. Then species
$1$ survives on its own and the system converges to the unique invariant
probability measure $\mu_1$ on $\Se^1_+$.
  \begin{itemize}
\item If $r_2(\mu_1)>0$ we have the persistence of both species
and convergence of the distribution of $\BX(t)$ to the unique
invariant probability measure $\pi$ on $\Se_+$.
\item If $r_2(\mu_1)<0$ we have the persistence of $X_1$ and the
extinction of $X_2$. In other words, for any $\bx\in\Se_+$
    \[
\PP_\bx\left\{\U(\omega)=\{\mu_1\}
~\text{and}~\lim_{t\to\infty}\frac{\ln X_2(t)}{t}=r_2(\mu_1)<0,
\right\}=1.
    \]
  \end{itemize}
\item Suppose $r_1(\delta_0\times\mu_\BY)<0, r_2(\delta_0\times\mu_\BY)<0$. Then both
species go extinct almost surely, i.e., for any $\bx\in\Se_+$
\[
\PP_\bx\left\{\lim_{t\to\infty}\frac{\ln
X_i(t)}{t}=r_i(\delta_0\times\mu_\BY)<0 \right\}, i=1,2.
\]
\end{enumerate}

These results are generalizations of those by \cite{E89, BS19,
SBA11}. More specifically, we do not require any monotonocity or compactness, we show which species persist and which go extinct, and we also prove that there is always convergence to an ergodic probability measure supported by the persistence or extinction sets. Higher dimensional systems can be treated on a case by case basis. Usually, it is not possible to find the expected per-capita growth rates if the system is not of Lotka-Volterra type and the dimension is higher than 3. For a full classification in continuous time of three dimensional systems see \cite{HNS20}.

\subsection{Structured populations}

Populations of plants and animals
generally have important internal structure.  The most obvious structure is that the organisms consist of different ages. Importantly, different aged organisms can have different mortality rates, and different contributions to reproduction. Another way of structuring populations, which can be more helpful than age in some cases, is simply by life stage, for example as, egg, larva, and adult.
In addition, populations do not live at points in space, but are spread over an area
or region, with migration across the region. Invariably, the environments occupied by the population in different places differ somewhat
in environmental characteristics, which can also be important to account for in the total growth of the population. When the region inhabited can be treated as discrete habitat patches, each with its own subpopulation of the species in question, the population structure is again discrete, just as stage, and age as an integer would be. There have been multiple studies of the
interplay between dispersal and environmental heterogeneity
\cite{H83, C85, C00, GH02, S04, RHB05, S10, CCL12, DR12}.
We are interested in models of $m$ interacting populations that
live in stochastic environments. The population is structured
because each individual from population $i$ can be in one of
$n_i$ individual states (these could be age, size or location).
Then $\bar X_i(t) = (X_{i1}(t),\dots,X_{in_i}(t))$ will be the
row vector of population densities of individuals in different
states for population $i$ at time $t\in\N$ and $\bar X_i(t)$
will live on $\R_+^{n_i}$. The population state will be given by
the row vector $\bar \BX(t) = (\bar X_1(t),\dots, \bar X_m(t))$
that will live on $\R_+^n$ for $n:=\sum_{i=1}^m n_i$. Just as
before, the environment will be represented by the sequence of
i.i.d random variables $\xi(1), \xi(2),\dots, \xi(t),\dots$ where
$\xi(t)$ is the environment at the time step $t$. Assume that the
environmental random variables are independent of the initial
condition $\bar \BX(0)$ and take values in a probability space
$E$ that is equipped with a $\sigma$-field and a probability
measure $m$. For all $i$, let $A_i(\xi,\bar \BX)=
(a_i^{j,k}(\xi,\bar \BX))$ be a non-negative $n_i\times n_i$
matrix whose $j-k$th entry corresponds to the individuals moving
from state $j$ to state $k$. One can then write the population
dynamics as
\begin{equation}\label{e:struc}
\bar X_i(t+1)=\bar X_i(t) A_i(\xi(t),\bar \BX(t)).
\end{equation}
These models have been studied in \cite{RS14} and in a
particular example in \cite{BS19} under the restrictive
assumption that the dynamics is constrained to a compact subset
$K\subset \R_+^n$. Moreover, there are no general extinction
results in \cite{RS14, BS19}.

We can transform \eqref{e:struc} to our framework as follows.
Define
\[
X_i(t) :=\sum_{j=1}^{n_i} X_{ij}(t)
\]
to be the population size of species $i$, $\BX(t) =
(X_1(t),\dots, X_m(t))$,
\[
Y_{ij}(t)= \frac{X_{ij}(t)}{X_i(t)}
\]
the fraction of population $i$ in state $j$ and $\BY_i(t) =
(Y_{i1}(t),\dots ,Y_{in_i}(t)), \BY(t) =
(\BY_1(t),\dots,\BY_m(t))$. Then, simple calculations show that
\begin{equation}\label{e:transf}
\begin{aligned}
X_i(t+1) &= X_i(t) \sum_{l,j} Y_{il}
a_i^{j,l}(\xi(t),\BX(t),\BY(t)),~ i=1,\dots,m\\
Y_{ij}(t+1) &= \frac{\sum_{l}Y_{il}
a_i^{j,l}(\xi(t),\BX(t),\BY(t))}{\sum_{l,j}Y_{il}
a_i^{j,l}(\xi(t),\BX(t),\BY(t))}, ~i=1,\dots,m~j=1,\dots,n_i.
\end{aligned}
\end{equation}
Note that this system is in the correct form \eqref{e:struc} and
the state space is given by $\Se :=
\R_+^m\times\Delta_1\times\dots\times\Delta_m$ where
$\Delta_i:=\{\bx\in \R_+^{n_i}~:~\sum_{j=1}^{n_i}x_j=1\}$ is the
simplex in $\R^{n_i}$.

The dynamics is well-defined on the extinction set $\Se_0 =
\partial R_+^m \times \Delta_1\times\dots\times\Delta_m$. For
example, if $X_1=0$ then the dynamics is
\begin{equation*}
\begin{aligned}
X_1(t+1)&=0\\
X_i(t+1) &= X_i(t) \sum_{l,j} Y_{il}
a_i^{j,l}(\xi(t),0,X_2(t),\dots, X_m(t),\BY(t)),~
i=2,\dots,m\\
Y_{ij}(t+1) &= \frac{\sum_{l}Y_{il}
a_i^{j,l}(\xi(t),0,X_2(t),\dots,
X_m(t),\BY(t))}{\sum_{l,j}Y_{il}
a_i^{j,l}(\xi(t),0,X_2(t),\dots, X_m(t) ,\BY(t))},
~i=1,\dots,m~j=1,\dots,n_i.
\end{aligned}
\end{equation*}
We will show how our theory can be used to generalize results by
\cite{RS14, BS19}. Suppose there is only one species. We abuse
notation and denote by $(X_1(t),\dots, X_n(t))$ the densities of
$n$ patches and let $X(t)= \sum_{i=1}^n X_i(t), Y_i(t) =
\frac{X_i(t)}{X(t)}$ be the total population size and population
fractions. We get the dynamics given by
\begin{equation}\label{e:transf_1d}
\begin{aligned}
X(t+1) &= X(t) \sum_{l,j} Y_{l}
a_i^{j,l}(\xi(t),X(t),\BY(t)),~ i=1,\dots,m\\
Y_{j}(t+1) &= \frac{\sum_{l}Y_{l}
a^{j,l}(\xi(t),X(t),\BY(t))}{\sum_{l,j}Y_{il}
a^{j,l}(\xi(t),\BX(t),\BY(t))}, ~j=1,\dots,n
\end{aligned}
\end{equation}
with state space $\Se := \R_+ \times\Delta_n$ and extinction set
$\Se_0 = \{0\}\times \Delta_n$. One can show under certain
assumptions that $\BY(t)$ has a unique invariant measure $\hat
\mu$ on $\Se_0$. We can use this to compute a growth rate
$r(\mu)$ such that the following theorem holds.
\begin{thm} Suppose that Assumption \ref{a:main} holds. If $r(\mu)>0$ then $\BX(t)$ is
stochastically persistent. In addition, under certain
irreducibility conditions, there exist a unique invariant
probability measure $\pi$ and the law of $\BX(t)$ converges to
$\pi$ in total variation exponentially fast.

If $r(\mu)<0$ and Assumption \eqref{a:ext} holds, then for all $\bx\in
\R_+$
\[
\PP_\bx \left(\lim_{t\to \infty}X(t)= 0\right)=1. \]
\end{thm}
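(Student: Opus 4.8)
The plan is to recognize the one-species structured model \eqref{e:transf_1d} as a special case of the general discrete-time system \eqref{e:system_discrete} and then read off the conclusions from the general persistence and extinction theorems. First I would set $n=1$ in the framework, take the single ``species'' density to be the total population size $X(t)$, let the auxiliary variable be the vector of patch frequencies $\BY(t)=(Y_1(t),\dots,Y_n(t))$, which lives in the compact invariant set $\Delta_n=\{\by\in\R_+^n:\sum_j y_j=1\}\subset\R^n$, and put
\[
F(x,\by,\xi):=\sum_{l,j}y_l\,a^{j,l}(\xi,x,\by),\qquad
G_j(x,\by,\xi):=\frac{\sum_l y_l\,a^{j,l}(\xi,x,\by)}{\sum_{l,j}y_l\,a^{j,l}(\xi,x,\by)},
\]
so that \eqref{e:transf_1d} becomes \eqref{e:system_discrete}. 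Next I would check the standing hypotheses: A1) and A2) of Assumption \ref{a:main} follow from continuity and measurability of the entries $a^{j,l}$, from strict positivity of $F$, and from a standing positivity assumption on the matrices $A$ guaranteeing that the denominator defining $G$ never vanishes on $\Se$; A3) and (for the extinction half) Assumption \ref{a:ext} are hypotheses of the statement, not to be proved here. Since $\by$ ranges over the bounded set $\Delta_n$, one has $|\bz|\to\infty\iff X\to\infty$, so it is worth recording that the Lyapunov function of A3) need only dominate a power of $X$.

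The key structural step is to pin down the ergodic invariant measures supported on the extinction set $\Se_0=\{0\}\times\Delta_n$. On $\Se_0$ one has $X(t)\equiv 0$, so the restricted process is exactly the Markov chain $\BY(t+1)=G(0,\BY(t),\xi(t))$ on $\Delta_n$, and every invariant measure of the joint chain on $\Se_0$ has $X$-marginal $\delta_0$ and $\BY$-marginal invariant for this chain. Since by hypothesis the $\BY$-chain has a \emph{unique} invariant probability measure $\hat\mu$ on $\Se_0$, and a unique invariant measure is automatically ergodic, I would conclude that $\M=\Conv(\M)=\{\mu\}$ with $\mu:=\delta_0\otimes\hat\mu$, and identify the single invasion rate \eqref{e:r} as
\[
r(\mu)=\int_{\Delta_n}\E\Big[\log\sum_{l,j}y_l\,a^{j,l}(\xi(1),0,\by)\Big]\,\hat\mu(d\by),
\]
which is finite by Proposition \ref{p:rate}; this is the $r(\mu)$ appearing in the statement.

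With $\M$ a singleton the remaining steps are direct invocations of the general theory. If $r(\mu)>0$, then since there is a single species and $\Conv(\M)=\{\mu\}$, the hypothesis ``$\max_i r_i(\nu)>0$ for all $\nu\in\Conv(\M)$'' of Theorem \ref{t:pers1_disc} holds, yielding almost sure stochastic persistence and stochastic persistence in probability; if in addition the minorization/accessibility conditions of Theorems \ref{t:pers2_disc}--\ref{t:pers3_disc} hold (this is what ``certain irreducibility conditions'' refers to), these theorems give the unique invariant probability measure $\pi$ on $\Se_+$, and combining the geometric drift bound in A3) with the minorization on a small set through the standard Meyn--Tweedie arguments upgrades the convergence of the law of $\BX(t)$ to $\pi$ to exponentially fast in total variation. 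If instead $r(\mu)<0$, then, $\mu$ being the only ergodic measure on $\Se_0$, the criterion of Theorem \ref{t:ext1_disc} holds (take $p_1=1$); with Assumption \ref{a:ext} and accessibility of $\Se_0$ this gives $\alpha>0$ with $\PP_\bx\big(\limsup_{t\to\infty}t^{-1}\ln d(\BZ(t),\Se_0)=-\alpha\big)=1$, and since $d(\bz,\Se_0)=X$ in this one-species setting we get $\limsup_{t\to\infty}t^{-1}\ln X(t)=-\alpha<0$ almost surely, hence $X(t)\to 0$ almost surely for every $\bx\in\R_+$.

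I expect the only real work beyond bookkeeping to be in the first two steps: checking that the change of variables \eqref{e:transf} is well posed (the denominator of $G$ does not vanish on $\Se$) and that A1)--A2) hold, and the structural observation that uniqueness of the $\BY$-invariant measure forces $\M$ to be a single point. The main obstacle, when the sharper convergence statements are wanted, is verifying in concrete structured models that the accessibility and minorization hypotheses of Theorems \ref{t:pers2_disc}--\ref{t:pers3_disc} and the accessibility of $\Se_0$ are actually satisfied; the probabilistic heavy lifting itself is already contained in Theorems \ref{t:pers1_disc}--\ref{t:ext1_disc}.
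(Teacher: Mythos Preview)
Your proposal is correct and follows exactly the approach the paper intends: the theorem is stated as a direct application of the general persistence and extinction theorems (Theorems \ref{t:pers1_disc}--\ref{t:pers3_disc} and \ref{t:ext1_disc}) to the one-species case, and the paper gives no separate proof. Your identification of the key structural point---that uniqueness of $\hat\mu$ forces $\M=\Conv(\M)=\{\delta_0\otimes\hat\mu\}$, reducing the persistence and extinction criteria to the sign of the single number $r(\mu)$---is precisely what makes the reduction work, and your observation that accessibility of $\Se_0$ is tacitly needed for the extinction conclusion (it is hidden in the paper's ``certain assumptions'') is accurate.
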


\subsection{Environment-density interaction models in ecology}
A key question in community ecology, which the results here are ideally placed to address, is the role of temporal environmental fluctuations in species coexistence, i.e. the mutual persistence of interacting species. Models in community ecology have identified situations in which environmental fluctuations are essential for coexistence, or at least promote it, undermine coexistence, or have no effect on it \citep{C94}. To systematize understanding, models of ecological community dynamics have been expressed in the following form,
\begin{equation}\label{e:ED}
N_j(t+1)=G_j(E_j(t),D_j(t))N_j(t),
\end{equation}
with $N_j(t)$ being the population density of species $j$ at time $t$. The quantity $G_j(E_j(t),D_j(t))$ is the multiplication rate, sometimes called the ``finite rate of increase,'' and corresponds to $F_i(\BX(t), \BY(t),\xi(t))$, in the developments above. The two arguments, $E_j(t)$ and $D_j(t)$, have special roles.  The first of these, $E_j(t)$, is a component of the environmental variable $\xi(t)$, and is referred to as the environmental response of species $j$. The second variable, $D_j(t)$, is the species response to density-dependent processes. Thus, $D_j(t)$ will reflect the densities of the interacting species even if it is not directly a function of them. For example, $D_j(t)$ might be a function of resource shortage, or predator abundance, which could be auxiliary variables $\BY(t)$ affected dynamically by the focal species densities, the $N_j(t)$. Originally, $D_j(t)$ was conceived as reflecting competition, and designated $C_j(t)$, but the new notation, $D_j(t)$, is intended to indicate a broader class of species interactions including ``apparent competition,'' which is generated by density-dependent predation \citep{KC10, CK10}. By way of contrast, the environmental response does not reflect species densities, although it affects them. It is important to note, however, that $D_j(t)$ will generally depend on the $E_j(t)$, but the slope of the relationship between $D_j(t)$ and $E_j(t)$ decreases to 0 as the density, $N_j(t)$, of species $j$ approaches 0, a fact that has critical effects on species coexistence \citep{C94, C19}. The relationship between $D_j(t)$ and $E_j(t)$ is quantified by a statistical covariance between $E_j(t)$ and $D_j(t)$, denoted covED$_{j}$. Positive covED$_{j}$ means that favorable environmental conditions for species $j$ are offset by unfavorable density effects limiting the gains in population that the species can have during favorable environmental conditions. However, when covED$_{j}$ is evaluated using an invariant measure for which species $j$ has zero density, $D_j(t)$ will not be a function of $E_j(t)$ and covED$_{j}$ can be low or even negative depending on how much $D_j(t)$ depends on the environmental responses of other species and how the environmental responses of the different species are correlated. These changes in covED$_{j}$ for different invariant measures have critical effects on species coexistence provided one other condition applies, which is generally referred to as "buffered population growth" \citep{C19}.

Buffered population growth is defined in terms of the quantity
\begin{equation}\label{e:interaction}
\gamma_j = \frac{\partial^2  g_j(E,D)}{\partial E \partial D},
\end{equation}
where $g_j(E_j(t),D_j(t)) = \ln G_j(E_j(t),D_j(t)) = \ln N_j(t+1)-\ln N_j(t)$, defining population growth on the log scale.  When $\gamma_j$ is negative, population growth is said to be buffered because it means that high values of $D_j(t)$ (unfavorable density effects) cause lower declines in $g_j(E_j(t),D_j(t))$ when the environment is unfavorable too (low $E_j(t)$). In effect, a double dose of unfavorable conditions is less than doubly unfavorable. As a consequence, $r_j(\mu)$, which is the same as the expected value of $\ln G_j(E_j(t),D_j(t))$, depends negatively on covED$_j$. The lower values of covED$_j$ encountered for invariant measures, $\mu$, for which species $j$ is extinct, thus favor positive $r_j(\mu)$ values and hence persistence of species $j$. This is the "storage effect" species coexistence mechanism. It is quantified for the whole community (the "community average approach" \citep{C08, YC15}) in terms of $\gamma_j$ multiplied by differences between covED values for invariant measures $\mu$ for which a species is extinct compared with those for which it is persistent. Importantly, the quantity $\gamma_j$ is a reflection of the life-history properties of the organisms encoded in the model equations, and covED$_j$ reflects how the direct ($E_j(t)$) and indirect responses ($D_j(t)$) of a species $j$ to the environment are related to each other \citep{CH88}. Below we study several different versions of this general model indicating how these key contributors to species coexistence emerge.

\subsubsection{Perennial organisms: lottery models}
The lottery model of \cite{CW81} was the first model of competition between species in which stochastic temporal environmental variation emerged as a mechanism of species coexistence. It provides an example of the storage effect coexistence mechanism in operation \citep{C83, C94}. In this model, the individual organisms
have exclusive sites or territories that provide the resources needed for the life
of that individual. It is assumed that the space available for
territories is strictly limiting. An individual organism must secure its own territory
if it is to mature (``recruit to adulthood'') and reproduce, but once it has secured a site, it retains
it for the rest of its life. We use a form of the model due to \cite{C83}. In this form,
the variables $N_i(t)$ are the adult densities of the species, and the environmental responses $E_i(t)$ are per capita juvenile production rates, or in simple terms, ``birth rates.'' Thus, $J_i(t) = E_i(t)N_i(t)$ is the number of juveniles of species $i$ in year $t$ seeking a site to mature as an adult.  The relative ability of a juvenile to secure a site is given is a function, $c_i(\mathbf{J}(t))$, of the juvenile densities of all species, and so the fraction of available sites secured during year $t$ by species $i$ is

\[
\frac{c_i(\mathbf{J}(t))J_i(t)}{\sum_{j=1}^n c_j(\mathbf{J}(t))J_j(t))}.
\]
The organisms compete for space given up by adult death, which in year $t$ is
\[
\sum_{j=1}^n \delta_j(t)N_j(t).
\]
where $\delta_j(t)$ is the fraction of the adult population of species $j$ that dies during $(t,t+1)$ and is assumed to be function of $\xi(t)$. As a consequence, the dynamics of the community are given by the following equations
\begin{equation}\label{e:lottery}
N_i(t+1) = (1-\delta_i(t)) N_i(t) + \left(\sum_{j=1}^n
\delta_j(t) N_j(t)\right)
\left(\frac{c_i
(\mathbf{J}(t))J_i(t)}{\sum_{j=1}^n c_j
(\mathbf{J}(t))J_j(t)}\right), ~i=1,\dots,n.
\end{equation}
To put this equation in the $G(E,D)$ form \eqref{e:ED}, we can first define
\[
D_i(t) = \frac{ \sum_{j=1}^n c_j(\mathbf{J}(t))J_j(t)}  {c_i(\mathbf{J}(t)) \sum_{j=1}^n\delta_j(t) N_j(t)}.
\]
The numerator here is the total demand for space taking account of competitive ability, and the denominator is the competitive ability of species $i$ times the supply of space, i.e. $D_j(t)$ is``demand over supply'' adjusted for relative competitive ability between species. The dynamical equation \eqref{e:lottery} now can be put in the form \eqref{e:ED}:

\begin{equation}\label{e:lotstandard}
N_i(t+1) = \left(1 - \delta_i(t)  + \frac{E_i(t)}{D_i(t)} \right)N_i(t).
\end{equation}
In this version of the lottery model, the function $G_i$ will also vary with the environment if $\delta_i(t)$ is temporally variable. Note that $\sum_{j=1}^n N_j(t)$ remains constant for all $t\in \Z_+$ and is equal to the total
area available for individual sites. We set this total space equal to 1 by choice of units. Therefore the dynamics take place on the compact state space
$\Delta:=\{\bx\in \R^n_+~|~\sum_i x_i=1\}$. Note that on a compact state space Assumptions \ref{a:main} and \ref{a:ext} are automatically satisfied.

If there are only two species, variations on this model have been analyzed in
detail by \cite{C82, C83} and \cite{SBA11}. Note that the extinction set is
$\Delta_0 = \{(1,0), (0,1)\}$. As such, we only have to compute
the per capita growth rates of the Dirac masses $\delta_{(1,0)},
\delta_{(0,1)}$. This is trivial and yields
\[
\lambda_1:=r_1(\delta_{(0,1)}) = \E\left[\ln\left(1-\delta_1(1)
+ \delta_1(1)
\frac{c_1(0,E_2(0))E_1(0))}{c_2(0,E_2(0))E_2(0)}\right)
\right]
\]
and
\[
\lambda_2:=r_2(\delta_{(1,0)}) = \E\left[\ln\left(1-\delta_2(1)
+ \delta_2(1)
\frac{c_2(E_1(0),0)E_2(0)}{c_1(E_1(0),0)E_1(0))}\right)
\right]
\]
It is not hard to see that the irreducibility and accessibility
assumptions are satisfied for many choices of $c_i$,
$\delta_i$ and $E_i$. For example, it is sufficient for $\delta_i(0) > 0$, a.s., and the ratio
$c_1(\mathbf{J}(0))E_1(0)/c_2(\mathbf{J}(0))E_2(t))$ to have a positive continuous probability density
function over the domain $(0,\infty)$, conditional on the $\delta_i(0)$ and $\mathbf{N}(0)$. Then next result presents generalizations of \cite[ Theorems 3.5, 5.1 and 5.2]{C82}.
\begin{thm}\label{t:lott}
The following hold:
\begin{itemize}
\item If $\lambda_i>0, i=1,2$ then the system is stochastically
persistent and furthermore the distribution of $(N_1(t), N_2(t))$ converges to a
unique invariant probability measure $\pi$ on
$\Delta\setminus\Delta_0= \{(x_1, x_2)\in\R_+^2, x_1, x_2>0,
x_1+x_2=1\}$.
\item If $\lambda_i>0$ and $\lambda_j<0$ then $\PP(N_i(t)\to 1,
N_j(t) \to 0)=1$.
\item If $\lambda_1<0$ and $\lambda_2<0$ then $\PP(N_1(t)\to 1
~\text{or}~ N_2(t) \to 0)=1$ and $\PP(N_i(t)\to 1)>0, i=1,2$ .
\end{itemize}
\end{thm}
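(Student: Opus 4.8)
The plan is to derive all three statements by feeding this very simple model into the general discrete-time theorems. First I record the structural facts: since $N_1(t)+N_2(t)\equiv 1$, the dynamics \eqref{e:lotstandard} lives on the compact one-dimensional simplex $\Delta\cong[0,1]$, so Assumptions \ref{a:main} and \ref{a:ext} hold automatically, and the accessibility/irreducibility conditions required by Theorems \ref{t:pers2_disc}, \ref{t:pers3_disc} and \ref{t:ex33} hold under the stated hypotheses on $\delta_i$ and on the density of $c_1(\mathbf{J})E_1/c_2(\mathbf{J})E_2$; in particular the one-step kernel satisfies a minorization $P_1(\bz,\cdot)\ge\iota(\cdot)$ near an accessible interior point, and each of $\Se_+=\Delta\setminus\Delta_0$, $\{(1,0)\}$ and $\{(0,1)\}$ is accessible. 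Because $\Delta_0=\{(1,0),(0,1)\}$, the set of ergodic measures supported on the extinction set is $\M=\{\delta_{(1,0)},\delta_{(0,1)}\}$; Proposition \ref{p:rate} gives $r_1(\delta_{(1,0)})=r_2(\delta_{(0,1)})=0$, while $r_2(\delta_{(1,0)})=\lambda_2$ and $r_1(\delta_{(0,1)})=\lambda_1$ by definition, so for $\mu=p\,\delta_{(1,0)}+(1-p)\,\delta_{(0,1)}\in\Conv(\M)$ one has $r_1(\mu)=(1-p)\lambda_1$ and $r_2(\mu)=p\,\lambda_2$.

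If $\lambda_1,\lambda_2>0$, then for every $p\in[0,1]$ the numbers $(1-p)\lambda_1$ and $p\,\lambda_2$ cannot both vanish, so the criterion \eqref{e:per} holds on $\Conv(\M)$; Theorem \ref{t:pers1_disc} gives almost sure stochastic persistence and persistence in probability, and Theorem \ref{t:pers3_disc}, using the one-step minorization with $m^*=1$, gives a unique invariant probability measure $\pi$ on $\Delta\setminus\Delta_0$ to which the law of $(N_1(t),N_2(t))$ converges in total variation. For the exclusion cases I compute $E_1$ from Definition \ref{a.extn}. If $\lambda_1>0>\lambda_2$: $\{1\}\in E_1$, since $\Se_+^{\{1\}}=\{(1,0)\}$ and $\Se_0^{\{1\}}=\emptyset$, so \eqref{ae3.1} reads $r_2(\delta_{(1,0)})=\lambda_2<0$ and \eqref{ae3.2} is vacuous; $\{2\}\notin E_1$ because $r_1(\delta_{(0,1)})=\lambda_1>0$; $\{1,2\}\notin E_1$ because \eqref{ae3.2} fails at $\delta_{(1,0)}$, where $\max_i r_i(\delta_{(1,0)})=\max\{0,\lambda_2\}=0$; and $\emptyset\notin E_1$ since $\Se^{\emptyset}\cap\Delta=\emptyset$. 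Hence $E_1=\{\{1\}\}$, $\delta_{(1,0)}$ is the unique measure on $\Se_+^{\{1\}}$ satisfying \eqref{ae3.1} and \eqref{ae3.2}, the $E_2$-hypothesis of Theorem \ref{t:ex33} holds because $\M^{\emptyset,+}=\M^{\{1,2\},+}=\emptyset$ and $\max_i r_i(\delta_{(0,1)})=\lambda_1>0$, and $\bigcup_{I\in E_1}\Se_+^{I}=\{(1,0)\}$ is accessible; Theorem \ref{t:ex33} then yields $p_{\bz,\{1\}}=1$, i.e. $\U(\omega)=\{\delta_{(1,0)}\}$ and $t^{-1}\ln N_2(t)\to\lambda_2<0$ almost surely, whence $N_2(t)\to0$ and $N_1(t)=1-N_2(t)\to1$. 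The case $\lambda_1,\lambda_2<0$ is the same computation with $E_1=\{\{1\},\{2\}\}$, unique attractors $\delta_{(1,0)}$ and $\delta_{(0,1)}$, $E_2=\{\emptyset,\{1,2\}\}$, and $\M^{\emptyset,+}=\M^{\{1,2\},+}=\emptyset$; Theorem \ref{t:ex33} gives $p_{\bz,\{1\}}+p_{\bz,\{2\}}=1$ with each term strictly positive by accessibility of the two monoculture points, which is exactly $\PP_\bz\{N_1(t)\to1\text{ or }N_2(t)\to1\}=1$ together with $\PP_\bz\{N_i(t)\to1\}>0$ for $i=1,2$.

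The only step that is not bookkeeping, and the one I expect to be the real obstacle, is proving $\M^{\{1,2\},+}=\emptyset$ whenever $\min(\lambda_1,\lambda_2)<0$, i.e. that the two-species lottery model has no coexistence (interior) ergodic invariant measure once some invasion rate is negative; this is genuinely needed, since an interior ergodic $\nu$ has $r_1(\nu)=r_2(\nu)=0$, which would both break the $E_2$-hypothesis of Theorem \ref{t:ex33} and make the asserted conclusions false on $\mathrm{supp}\,\nu$. I would establish it through the log-ratio $v(t)=\ln N_1(t)-\ln N_2(t)$, whose one-step increment is $\ln\big(1-\delta_1(t)+E_1(t)/D_1(t)\big)-\ln\big(1-\delta_2(t)+E_2(t)/D_2(t)\big)$: its conditional expectation tends to $-\lambda_2$ as $N_1\to1$ and to $\lambda_1$ as $N_1\to0$, while the increments are bounded because the per-capita factors are bounded above and below on the compact $\Delta$. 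Hence when $\lambda_2<0$ the drift of $v$ is bounded away from $0$ and points toward $+\infty$ for $v$ large (and, in the bistable case $\lambda_1<0$, toward $-\infty$ for $v$ very negative), so by a standard supermartingale-plus-irreducibility argument $v(t)$ escapes to $+\infty$, respectively to $\{+\infty,-\infty\}$, almost surely from every interior starting point; consequently no probability measure on $\Delta\setminus\Delta_0$ can be invariant, giving $\M^{\{1,2\},+}=\emptyset$. Equivalently, this non-coexistence fact is essentially contained in the classical two-species lottery analysis of \cite{C82}, from which it may simply be quoted.
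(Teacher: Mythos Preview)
Your approach is correct and matches the paper's: Theorem~\ref{t:lott} is presented without a separate proof, as a direct application of the general discrete-time persistence and extinction theorems on the compact simplex, and your bookkeeping with $E_1$, $E_2$, and the invasion rates is accurate.

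Your final paragraph, however, rests on a misreading of the definitions. By construction $\M$ consists only of ergodic invariant measures supported on the extinction set $\Se_0$, and $\M^{I,+}:=\{\mu\in\M:\mu(\Se_+^I)=1\}$. For $I=\{1,2\}$ one has $\Se_+^{\{1,2\}}=\Se_+=\Delta\setminus\Delta_0$, which is disjoint from $\Se_0$; hence $\M^{\{1,2\},+}=\emptyset$ \emph{automatically}, regardless of the signs of $\lambda_1,\lambda_2$. You are conflating this with the distinct question of whether the process admits an interior invariant probability measure. Such a measure would not belong to $\M$, so it never enters the $E_2$-hypothesis of Theorem~\ref{t:ex33}; the non-existence of an interior invariant measure is a \emph{consequence} of that theorem under its stated hypotheses (accessibility of $\bigcup_{I\in E_1}\Se_+^I$ plus the boundary growth-rate conditions), not an additional assumption you must verify. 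The ``real obstacle'' you anticipate is therefore illusory, and the log-ratio drift argument you sketch is unnecessary. Incidentally, the cleaner reason $\{1,2\}\notin E_1$ is that condition~(1) of Definition~\ref{a.extn} already fails, since $\M^{\{1,2\},+}=\emptyset$; your argument via condition~(2) is redundant.
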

Approximations to these low-density growth rates, $\lambda_i$, have been developed in number of publications for both the two-species and multispecies cases \citep{C89, C94, C03}.  Most important, the approximations reveal the biological circumstances leading to the three different possibilities in Theorem \ref{t:lott}. In the case where the functions $c_j$ are simply constants, there is no possibility of coexistence unless the environment varies. Most important, sufficient variation in the ratio $E_1/E_2$ about the value 1, while the $\E[\ln(1 - \delta_i(0))]$ are bounded from $-\infty$, guarantees coexistence, i.e. stochastic persistence of both species \citep{CW81}. This example is important as the first demonstration of how a stochastically varying environment could promote species coexistence. This ability was subsequently traced to a negative value for $\gamma_i$ and covED  \citep{CH88, C94}. This negative value of $\gamma_i$ means that the growth of the population is less sensitive to the density-dependent variable $D_i(t)$, in this case competition, when environmental conditions for recruitment, as given by $E_i(t)$, are poor. The outcome is that the population can increase strongly during environmentally favorable times without suffering catastrophic losses from competition under environmentally poor conditions. This property defines the storage component of the storage effect coexistence mechanism. This feature arises quite generally from common life-history properties of natural populations, in particular, consistently high adult survival while recruitment to the adult population is highly variable \citep{CH88}.

The full storage effect mechanism combines this property with covED, which explains why a species at a lower density gains boosts to population growth by having a higher frequency of times when $E_i$ is high while $D_i$ is low. For example, for constant $\delta's$, and species 1 present alone ($ 1 \in \Se(\mu), 2 \notin \Se(\mu)$ in a two-species setting), covED$_1$ is positive because then $D_1 = E_1/\delta_1$. However, in this setting $D_2 = (c_1(E_1,0)/c_2(E_1,0))E_1\delta_1)$, which need not be strongly correlated with $E_2$ leading to a low covED$_2$ for species 2. As a result of this low covED, species 2 can have times when it is strongly favored by the environment while experiencing low competition. The fact that it also has times when it is disfavored by the environment and experiences high competition is not so important due to the negative value of $\gamma_1$.  Species 1, however, has no such advantages due to its high covED, and so there is an overall net gain to species 2 in this setting, and generally to a species at low density invading the population of another species. This is the storage effect mechanism, and explains how the variable environment boosts the $\lambda's$ and thereby promotes coexistence in the lottery and other models, as discussed further below.

These coexistence results depended on highly variable juvenile production rates as the choice for $E_i(t)$. In contrast, if adult survival is made highly variable and $1 - \delta_i(t)$ is chosen as $E_i(t)$, $\gamma_i$ is positive, making the population growth rate of a species more sensitive to competition under unfavorable environmental conditions.  As a consequence, increasing variation in $(1 - \delta_1(t))/(1 - \delta_2(t))$ about 1 leads to the third possibility in Theorem \ref{t:lott}, where $\lambda_1$ and $\lambda_2$ are both negative, and one of the species must go extinct, without the identity of that species being predictable (``random exclusion'') due to a negative storage effect from positive $\gamma_i$ \citep{CW81}.
Finally, the second case, where $\lambda_i > 0$ and $\lambda_j < 0$, giving certain persistence of one species, and certain extinction of the other, occurs when $\delta_1 = \delta_2 = 1$, regardless of the magnitude of environmental variation.  In this case $\gamma_i$ = 0, and there is no storage effect, either positive or negative.  Each of the three outcomes in Theorem \ref{t:lott} is also possible without environmental variation but due to dependence of the $c_i(\mathbf{J}(t))$ on the juvenile densities $\mathbf{J(t)}$. For example, if this dependence specifies that intraspecific effects are stronger than interspecific effects, then coexistence occurs, as is well-understood in deterministic models \cite{C18}. However, the key interest here is how the stochastic environment can create these outcomes.

\subsubsection{Perennial organisms: Ricker recruitment
variation}
Like the lottery model, this model has long-lived adults, and juveniles that compete for the opportunity to recruit into the adult population, but in this case competition is not for space, but instead for a specific resource of richness $S(t)$ that fluctuates over time with the environment. Competition takes the Ricker form \citep{C94}, and the equations can be written
\begin{equation}\label{e:perr}
N_i(t+1)=N_i(t)\left(1-\delta_i+S(t)e^{E_i(t)-D(t)}\right)
\end{equation}
with
\begin{equation}\label{e:Cj}
D(t)=\sum_{j=1}^n \alpha_j e^{E_j(t)}N_j(t).
\end{equation}
Here $E_j(t)$ can be interpreted as the logarithm of the per capita number of
births in the interval $(t,t+1)$. The density response variable, $D(t)$, is the same for all species in this case, and so is not subscripted. This model also takes the $G(E,D)$ form \eqref{e:ED} if $S(t)$ is a constant. More generally, this form can be retained if $E_j(t)$ is replaced by $E_j(t) + \ln S(t)$, but in the definition of $D$ this extra component, $ \ln S(t)$, would need to be subtracted from $E_j(t)$ to preserve the model. Most important, like the lottery model with variable juvenile production, in this model the interaction coefficient $\gamma_i$ is negative, and covED is present due to the dependence of $D_j(t)$ on the  $E_j(t)$. These features allow environmental variation to enable coexistence by the storage effect \citep{C94, C03}. In the absence of environmental variation, coexistence at a stable equilibrium is not possible due to the fact that all species share the same density response variable $D(t)$ \citep{CH97}. However, sustained deterministic fluctuations are a possibility in this model, and can support coexistence under some circumstances by a mechanism termed ``relative nonlinearity'' \citep{KC08}. The theory presented here, however, enables rigorous demonstration of when coexistence can occur in a stochastic environment.

The next lemma gives conditions under which the general assumptions hold for this class of models.
\begin{lem}\label{l_per}
Suppose $\{(E_i(t), S(t))_{i=1,\dots, n}\}, t\in\N$ is a sequence of $n+1$-dimensional random variables, i.i.d. over $t$ such that $\E  \left[S(t)e^{E_j(t)}\right]^2<\infty.$ Then the model given by \eqref{e:per} and \eqref{e:Cj} satisfies Assumption \ref{a:main} by taking a small enough $\gamma_3>0$,
and
\[
V(\bz) = \sum_j z_j + 1.
\]
Assumption
\ref{a:ext} holds with
\[
\phi(\bz) = \delta V(\bz)
\]
for some $\delta>0$.
Moreover, if the support of $\ln S(t) + max_j (E_j(t)\ln \delta_j)$ contains values less than 0 then the boundary is accessible.
\end{lem}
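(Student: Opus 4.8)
The plan is to cast \eqref{e:perr}--\eqref{e:Cj} in the form \eqref{e:system_discrete} with \emph{no} auxiliary variable, so $\bz=\bx$, $\Se=\R_+^n$, $\xi=(E_1,\dots,E_n,S)$ and
\[
F_i(\bx,\xi)=1-\delta_i+Se^{E_i-D(\bx,\xi)},\qquad D(\bx,\xi)=\sum_{j=1}^n\alpha_je^{E_j}x_j .
\]
I use the model's natural structural hypotheses $\delta_i\in(0,1)$, $\alpha_i>0$, and write $\delta_{\min}=\min_i\delta_i$, $\delta_{\max}=\max_i\delta_i$, $\alpha_{\min}=\min_i\alpha_i$. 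Assumptions A1)--A2) are immediate: $D$ is affine in $\bx$, so each $F_i$ is continuous in $\bx$ and jointly measurable, and $F_i\ge1-\delta_i>0$; A2) is vacuous since there is no auxiliary variable. For $V(\bx)=\sum_ix_i+1$, part A3)-(i) holds with $\gamma_1=1$ because $\sum_ix_i\ge|\bx|$ on $\R_+^n$, and A3)-(ii) is clear.

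The real work is A3)-(iii). The key point is that the recruitment part of $V(\bx\circ F)$ is bounded \emph{uniformly in $\bx$}: from $D\ge\alpha_{\min}\sum_ix_ie^{E_i}$ and $ue^{-u}\le e^{-1}$ one gets
\[
\sum_ix_iSe^{E_i-D}=Se^{-D}\sum_ix_ie^{E_i}\le\frac{S}{\alpha_{\min}}De^{-D}\le\frac{S}{\alpha_{\min}e},
\]
hence $V(\bx\circ F)=\sum_ix_iF_i+1\le(1-\delta_{\min})\sum_ix_i+\bigl(1+\tfrac{S}{\alpha_{\min}e}\bigr)$. Using $D\ge0$ one also has $F_i\le1+Se^{E^*}$ with $E^*:=\max_iE_i$ and $F_i^{-1}\le(1-\delta_{\max})^{-1}=:M$, so $h(\bx,\xi)\le(1+M+Se^{E^*})^{\gamma_3}$. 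Multiplying these bounds and taking expectations,
\[
\E\bigl[V(\bx\circ F)\,h\bigr]\le(1-\delta_{\min})\Bigl(\textstyle\sum_ix_i\Bigr)\E\bigl[(1+M+Se^{E^*})^{\gamma_3}\bigr]+\E\Bigl[\bigl(1+\tfrac{S}{\alpha_{\min}e}\bigr)(1+M+Se^{E^*})^{\gamma_3}\Bigr].
\]
Since $(Se^{E^*})^2\le\sum_j(Se^{E_j})^2$ is integrable by hypothesis, $(1+M+Se^{E^*})^{\gamma_3}\le1+(1+M+Se^{E^*})^2$ is an integrable dominating function, so by dominated convergence $\E[(1+M+Se^{E^*})^{\gamma_3}]\to1$ as $\gamma_3\downarrow0$; taking $\gamma_3$ small makes $\rho:=(1-\delta_{\min})\E[(1+M+Se^{E^*})^{\gamma_3}]<1$, while the last expectation is a finite constant $C$ (Cauchy--Schwarz and the moment hypothesis, using also $\E[S^2]<\infty$). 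As $(1-\delta_{\min})\sum_ix_i\le V(\bx)$, this gives A3)-(iii) with constants $\gamma_1=1$, $\gamma_3$, $\rho$, $C$.

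Next I would verify Assumption \ref{a:ext} with $\phi(\bx)=(1-\rho)V(\bx)$. Inequality \eqref{e1:ext} is free from $h\ge1$: $\E_\bx V(\BX(1))=\E[V(\bx\circ F)]\le\E[V(\bx\circ F)h]\le\rho V(\bx)+C=V(\bx)-\phi(\bx)+C$. For \eqref{e2:ext}, because each $\delta_i$ is a \emph{constant} one has $V(\BX(1))-PV(\bx)=\sum_ix_i\bigl(Se^{E_i-D}-\E[Se^{E_i-D}]\bigr)$, so $\E_\bx\bigl(V(\BX(1))-PV(\bx)\bigr)^2\le\E\bigl[(\sum_ix_iSe^{E_i-D})^2\bigr]\le(\alpha_{\min}e)^{-2}\E[S^2]$, a constant; and $\E|\log F(\bx,\xi)-\E\log F(\bx,\xi)|^2=\sum_i\mathrm{Var}(\log F_i)$, where $0\le\log F_i-\log(1-\delta_i)=\log\bigl(1+\tfrac{Se^{E_i-D}}{1-\delta_i}\bigr)\le\tfrac{Se^{E_i}}{1-\delta_i}$, so $\mathrm{Var}(\log F_i)\le(1-\delta_i)^{-2}\E[(Se^{E_i})^2]<\infty$. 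Hence the left side of \eqref{e2:ext} is bounded by a constant while $\phi(\bx)\ge1-\rho>0$, so \eqref{e2:ext} holds once $\delta_\phi$ is large enough.

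For accessibility of $\Se_0$, suppose $\ln S+\max_j(E_j-\ln\delta_j)$ takes a value $v<0$ in its support, and fix $v'\in(v,0)$ with $p:=\PP\bigl(\ln S+\max_j(E_j-\ln\delta_j)\le v'\bigr)>0$. On this event $Se^{E_j}\le\delta_je^{v'}$ for every $j$, hence, using $D\ge0$, $F_j\le1-\delta_j(1-e^{v'})\le q:=1-\delta_{\min}(1-e^{v'})<1$. Since the $\xi(t)$ are i.i.d., for any $\bx\in\Se_+$ the event that the first $T$ environments all lie in this set has probability $p^T>0$, and on it $\sum_jX_j(T)\le q^T\sum_jx_j$; given $\eps>0$ and $T$ with $q^T\sum_jx_j<\eps$ this yields $P_T(\bx,B_\eps(\0))>0$, so $\0\in\Gamma_\bx\cap\Se_0$ and $\Se_0$ is accessible. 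I expect the only genuinely nontrivial step to be A3)-(iii): one must see that although $V(\bx\circ F)$ and $h$ can each be large, their product stays controlled because the competition factor $e^{-D}$ pins $\sum_ix_iSe^{E_i-D}$ to a bounded quantity — so $V(\bx\circ F)$ behaves like $(1-\delta_{\min})\sum_ix_i$ up to bounded fluctuations — and then that the multiplier in front of $\sum_ix_i$ drops below $1$ only after $h$ is absorbed, which is exactly what forces ``$\gamma_3$ small''. Pushing the requisite integrability through using only a second-moment bound on the $Se^{E_j}$ is the routine but slightly fiddly part.
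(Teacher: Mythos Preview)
Your proof is correct and follows essentially the same route as the paper's: both take $V(\bx)=1+\sum_i x_i$, exploit the key observation that the recruitment term $\sum_i x_iSe^{E_i-D}$ is uniformly bounded in $\bx$ via $ue^{-cu}\le(ce)^{-1}$, and then choose $\gamma_3$ small so that the factor multiplying $\sum_ix_i$ drops below $1$. The only organisational difference is that you bound $V(\bx\circ F)$ and $h$ separately and multiply (using dominated convergence to push the coefficient under $1$), whereas the paper splits $V(\bx\circ F)h$ into survival and recruitment pieces and controls the latter through $N_je^{(1+\gamma_3)(E_j-D)}\le Ke^{\gamma_3E_j}$; your verification of \eqref{e2:ext} is in fact more complete, and your flagging of the implicit use of $\E[S^2]<\infty$ is a point the paper's argument also needs but does not make explicit.
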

\begin{rmk}
We have chosen to work with $E_j(t)$ as well as $S(t)$ as part
of the environment $\xi$. So in this case there is no $\BY$
auxiliary variable, $\bz=\bx$, $\BX=(N_1(t),\dots,N_n(t))$ and
$\xi(t):=(E_1(t),\dots,E_n(t),S(t))$. We could instead treat
$S(t)$ as an auxiliary variable - the results would be
unchanged.
\end{rmk}
\begin{proof}
See Appendix \ref{s:d}.
\end{proof}

\subsubsection{Perennial organisms: Ricker recruitment without covED}
Models with Ricker recruitment have sometimes been proposed without covED, for example by \cite{E89}. The equation for the dynamics of $N_i(t)$ remain as defined for the previous example
 \eqref{e:perr}, but $D$ now gains a subscript, because it differs by species:
\begin{equation}\label{e:ricker}
D_i(t)=\sum_{j=1}^n \alpha_{ij}N_j(t).
\end{equation}
Here the $\alpha_{ij}$ are positive constants with $\alpha_{ii}$ = 1 and the others less than 1. Thus, the $\alpha$ coefficients differ by the species $i$ experiencing the competition, not just by the species $j$ causing it.  However, the most important difference with the previous model is the assumption that $D_i(t)$ involves the adult densities $N_j(t)$ not the juvenile densities $\exp(E_j(t))N_j(t)$. It is thus assumed that juveniles suffer from competition with the adult organisms, which restricts their ability to recruit into the adult population, while in the previous model instead, the juveniles are competing with each other for specific resources that juveniles need to make this transition. The outcome mathematically is that covED is not present, although $\gamma_i$ remains negative. Thus, the storage effect coexistence mechanism cannot function. However, coexistence can occur by other means. As $S(t)$ no longer plays any additional role, we shall assume $S(t) \equiv 1$  and that $((E_i(t))_{i = 1,..,n,})_{t \in \Z^+}$ is a sequence of i.i.d random
variables.
The next lemma gives conditions under which we can apply our general results.
\begin{lem}\label{l_ricker}
Suppose $\E[ (E_i(t))^2]<\infty.$ Then the model given by
\eqref{e:ricker} satisfies Assumption \ref{a:main} by taking a small
enough $\gamma_3>0$, and
\[
V(\bx) = \sum_j x_j + 1.
\]
Assumption
\ref{a:ext} holds with
\[
\phi(\bx) = \delta V(\bx)
\]
for some $\delta>0$.
Moreover, if the support of $\max_j (E_j(t) + \ln \delta_j)$ contains values less than 0 then the
boundary is accessible.
\end{lem}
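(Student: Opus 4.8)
The plan is to verify, one at a time, the three assertions of the lemma: that \eqref{e:ricker} satisfies Assumption \ref{a:main} with $V(\bx)=1+\sum_j x_j$ and $\gamma_3$ taken small, that Assumption \ref{a:ext} holds with $\phi=\delta V$, and that the stated support hypothesis makes $\Se_0$ accessible. Since there is no auxiliary variable here, $\bz=\bx$, $\Se=\R_+^n$, condition A2) is vacuous, and A1) is immediate: $F_i(\bx,\xi)=1-\delta_i+e^{E_i-D_i(\bx)}$ is continuous in $\bx$, jointly measurable, and strictly positive (we take $\delta_i\in(0,1)$, the biologically relevant range of high but imperfect adult survival). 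The argument follows the template of Lemma \ref{l_per}, using here that $\alpha_{ii}=1$, so I only highlight the key steps.

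For A3), parts i)--ii) hold with $\gamma_1=1$ since $V(\bx)\ge 1+|\bx|$ and $V\to\infty$. For iii), the structural input is $\alpha_{ii}=1$, which gives $D_i(\bx)=\sum_k\alpha_{ik}x_k\ge x_i$ and hence the key estimate $x_iF_i(\bx,\xi)=x_i(1-\delta_i)+x_ie^{E_i}e^{-D_i}\le x_i(1-\delta_i)+e^{-1}e^{E_i}$, using $\sup_{u\ge0}ue^{-u}=e^{-1}$; summing over $i$ yields $V(\bx\circ F(\bx,\xi))\le(1-\delta)V(\bx)+\delta+e^{-1}\sum_i e^{E_i}$ with $\delta:=\min_i\delta_i>0$, which already gives, in expectation, the `weaker' Lyapunov inequality discussed after Assumption \ref{a:main}. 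To reinstate the weight $h$, note $1-\delta_i\le F_i(\bx,\xi)\le 1+e^{E_i}$, so $\max\{F_i,1/F_i\}\le C_0(1+e^{E_i})$ with $C_0:=(1-\max_k\delta_k)^{-1}$, whence $h(\bx,\xi)\le\big(C_0\max_i(1+e^{E_i})\big)^{\gamma_3}$, a function of $\xi$ alone converging to $1$ as $\gamma_3\downarrow0$. Combining this with the pointwise bound on $V(\bx\circ F)$ and using the moment assumption gives $\E[V(\bx\circ F)h]\le\rho V(\bx)+C$ with $\rho=(1-\delta)\,\E\big[(C_0\max_i(1+e^{E_i}))^{\gamma_3}\big]$; since $1-\delta<1$ and the expectation tends to $1$ as $\gamma_3\downarrow0$, taking $\gamma_3$ small makes $\rho<1$. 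I expect this to be the crux: one must pick $\gamma_3$ small enough to keep $\rho<1$ while keeping integrable the powers of $e^{E_i}$ appearing in $\rho$ and in the cross terms making up $C$, and the dominated-convergence step has to be justified.

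Next, Assumption \ref{a:ext} with $\phi(\bx)=\delta V(\bx)$. Inequality \eqref{e1:ext} is the weaker bound rearranged: $\E_\bx V(\BZ(1))=1+\sum_i x_i\,\E F_i\le(1-\delta)V(\bx)+C=V(\bx)-\phi(\bx)+C$. For \eqref{e2:ext}, $V(\BZ(1))-PV(\bx)=\sum_i x_i\big(F_i(\bx,\xi(1))-\E F_i\big)$, so $\E_\bx\big(V(\BZ(1))-PV(\bx)\big)^2=\sum_{i,k}x_ix_k\,\mathrm{Cov}(F_i,F_k)\le\big(\sum_i x_i\sqrt{\mathrm{Var}\,F_i}\big)^2$, and since $\mathrm{Var}(F_i)=\mathrm{Var}(e^{E_i})e^{-2D_i}$ with $D_i\ge x_i$ we get $x_i\sqrt{\mathrm{Var}\,F_i}\le e^{-1}\sqrt{\mathrm{Var}(e^{E_i})}$, so this term is bounded uniformly in $\bx$. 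Similarly $\E\big|\log F(\bx,\xi(1))-\E\log F(\bx,\xi(1))\big|^2=\sum_i\mathrm{Var}(\ln F_i)$, and the elementary bound $|\ln F_i|\le\ln2+|E_i|+\ln\tfrac{1}{1-\delta_i}$ (split according to whether $F_i\ge1$, via $\ln(1+e^t)\le\ln2+t^+$, or $F_i<1$, via $F_i\ge1-\delta_i$) gives $\sum_i\mathrm{Var}(\ln F_i)\le C(1+\sum_i\E E_i^2)<\infty$, again uniformly in $\bx$. Both quadratic terms being bounded while $\phi(\bx)=\delta V(\bx)\ge\delta>0$, \eqref{e2:ext} holds for $\delta_\phi$ large.

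Finally, accessibility of $\Se_0$: it suffices to show $\0$ is accessible from every $\bx\in\Se_+$. The support hypothesis provides an environment $e^*$, lying in the support of the law of $\xi(t)$, with $e^{E_i(e^*)}<\delta_i$ for every $i$; by continuity of $E_i(\cdot)$ a small enough neighborhood $O$ of $e^*$ has $\PP(\xi(t)\in O)>0$ and $F_i(\bx,e)=1-\delta_i+e^{E_i(e)-D_i(\bx)}\le 1-\delta_i+e^{E_i(e)}\le q<1$ uniformly over $\bx\in\R_+^n$ and $e\in O$. Iterating, if $\xi(0),\dots,\xi(k-1)\in O$ then $X_i(k)\le q^k X_i(0)$, so for $k$ large $\BZ(k)$ lies in any prescribed neighborhood $U$ of $\0$; hence $P_k(\bx,U)\ge\PP(\xi(0)\in O)^k>0$, so $\0\in\Gamma_\bx$ and $\Se_0$ is accessible. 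This disposes of the three assertions.
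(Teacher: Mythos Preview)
Your approach is precisely what the paper intends: it says the proof is ``very similar to the one of Lemma~\ref{l_per} and is therefore omitted,'' and you have carried out that adaptation. There is, however, a genuine integrability gap that you paper over with the phrase ``using the moment assumption.'' The stated hypothesis $\E[E_i^2]<\infty$ is not enough for several of your bounds: the constant $C$ in A3)\,iii) contains the term $\E\big[e^{E_i}\,h(\xi)\big]$, and your bound on the first quadratic term in \eqref{e2:ext} uses $\mathrm{Var}(e^{E_i})<\infty$. A law with $\E[E_i^2]<\infty$ but no exponential moment (e.g.\ density proportional to $(1+x^4)^{-1}$) already gives $\E_{\bx}[V(\BZ(1))]=\infty$ from any $\bx\neq 0$, so A3)\,iii) fails outright. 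What your argument actually needs, paralleling Lemma~\ref{l_per}'s hypothesis $\E[(Se^{E_j})^2]<\infty$ with $S\equiv 1$, is $\E\big[e^{2E_i}\big]<\infty$; under that assumption every one of your estimates goes through. This is almost certainly a typo in the lemma statement, but you should flag it rather than invoke ``the moment assumption'' as though a second moment of $E_i$ sufficed.

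A smaller slip occurs in the accessibility step: you claim the support hypothesis yields $e^{E_i(e^*)}<\delta_i$, but $\max_j(E_j+\ln\delta_j)<0$ only gives $E_j<-\ln\delta_j$, i.e.\ $e^{E_j}<1/\delta_j$, which does not force $F_i<1$. The contraction $F_i\le q<1$ that you (correctly) want needs $E_i<\ln\delta_i$, so the correct hypothesis is that $\max_j(E_j-\ln\delta_j)$ takes negative values with positive probability---another sign typo inherited from the paper's statement of Lemma~\ref{l_per}.
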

\begin{proof}
The proof is very similar to the one of Lemma \ref{l_per} and is
therefore ommitted.
\end{proof}

Suppose we have only two species in \eqref{e:ricker}. Then, using
Lemma \ref{l_ricker} and following the general approach from
Section \ref{s:disc_2d} we need to first look at
\[
r_i(\delta_0) = \E[\ln f_i(0,E_i(1))] = \E [\ln (\exp(E_i(1))+1 - \delta_i)].
\]
Suppose that $r_i(\delta_0)>0, i=1,2$. Then $X_1, X_2$ survive
on their own and have invariant probability measures $\mu_1,
\mu_2$ on $(0,\infty)$. By Section \ref{s:disc_2d} we need to
look at
\[
r_i(\mu_j)=\int_{(0,\infty)}\E\ln [1-\delta_i+
\exp(E_i(1)-\alpha_{ij}x) ]\mu_j(dx).
\]
Even though it is hard to compute $r_i(\mu_j)$ explicitly, one
can see (\cite{E89}) that if $|\delta_1-\delta_2|$ is sufficiently small
then $r_1(\mu_2)>0$ and $r_2(\mu_1)>0$ and one gets coexistence
by the results from Section \ref{s:disc_2d}.
If however, $r_i(\mu_j)<0$ we get the extinction of species $i$.
Coexistence here depends directly on the fact that $\alpha_{ij}$ was assumed less than 1, while $\alpha_{jj}$ = 1.
This is coexistence that would occur in the absence of environmental fluctuations, and is just the classical outcome that
coexistence occurs if intraspecific competition, $\alpha_{jj}$, exceeds interspecific competition $\alpha_{ij}$. The presence of $\delta_i$ in this model means that these $\alpha$ coefficients are not exactly the classical ones of \cite{C00a}, but when the $\delta_i$ are equal they become equivalent to them. As mentioned above, coexistence by the storage effect cannot occur in this model because covED is zero even though $\gamma_i < 0$. Environmental fluctuations may still have a significant role in coexistence, however, in cases in which the $\delta_i$ values differ greatly between species. Then, the coexistence mechanism relatively nonlinearity can occur \citep{KC08}.

\subsubsection{Annual plants with seed banks and predation}
Here instead of perennial organisms, we consider annual plant species. This case has been very important in empirical studies of
coexistence in a variable environment.  For these species, the growing plant survives for less than a year.  It flowers and produces seed once, at the end of its
life. The seeds can be eaten by animals, for example rodents and ants (``seed predators''), and any seeds that escape become mixed in the soil as the ``seed bank.'' Seeds in the seed bank generally have environmentally-dependent germination. This means that in any year, only a fraction of the seeds in the seed bank germinate, and that fraction
depends on the specific environmental conditions of that year. A fraction of the seeds that do not germinate generally survive to the next year. The state variable in annual plant models, $N_i(t)$ for species $i$, is generally the number of seeds in the seed bank at the end of the year after all the adults have died, and seed predation has already occurred. The density $P(t)$ of seed predators is an auxiliary variable. The dynamical equations can now be given as
\[
\begin{aligned}
N_j(t+1)=&N_j(t)\left(s_j(1-E_j(t))+E_j(t)Y_j(t)e^{-A_j(t)-C(t)}\right)\\
P(t+1)=&\sum_{j=1}^n E_j(t)Y_j(t)N_j(t)e^{-C(t)}(1-e^{-A_j(t)})+s_pP(t)\\
=&P(t)\left(s_p+\sum_{j=1}^n E_j(t)Y_j(t)N_j(t)e^{-C(t)}\frac{1-e^{-A_j(t)}}{P(t)}\right)
\end{aligned}
\]
with
\[
C(t)=\sum_{k=1}^nG_j(t)N_j(t).
\]
and
$$A_j(t)=a_j(\mathbf{N}(t),\mathbf{E}(t))P(t).$$
Here $E_j(t)\in (0,1)$ is the germination fraction, $A_j(t)$ is mortality due to predation and $Y_j(t)$, which is maximum seed yield, is expected to vary from year to year
depending on how much rain and nutrients there are. Note that here $Y_j(t)$ is not an auxiliary variable, but an environmental variable. Instead, the predator, $P(t)$, is an auxiliary variable. More details about these models can be found in the work by \cite{KC09, KC10, CK10}.
Most important, by defining $D_j(t) = C(t) + A_j(t)$, this model fits the standard $G(E,D)$ form \eqref{e:ED}. The critical quantity $\gamma_j$ is negative, and covED is present. This means that coexistence by the storage effect can occur in this model. Here, there is a potential for two forms of it, a storage effect due to competition and a storage effect due to predation.

We assume that
$(E_j(t),Y_j(t))_{j=1,\dots,n; t\in \N}$ is a sequence of i.i.d random variables and $\E [(Y_1(1))^2]<\infty$ and $a_j$ is bounded below by a positive nonrandom constant. Note that $\frac{1-e^{-A_j(t)}}{P(t)}$ is continuous even at $P(t)=0$.
Due to the fact that
$E_j(t)N_j(t)e^{-C(t)}$ and $\frac{1-e^{-A_j(t)}}{P(t)}$ are bounded above by a constant as long as $a_j$ is bounded below,
we can follow arguments in Lemma \ref{l_per} to show that
our assumptions hold.

\subsubsection{Additive models: Unbounded offspring distribution}
Theoretical ecologists often use models of the form
\begin{equation}\label{e:log_normal}
N_i(t+1)=N_i(t)e^{E_i(t) - f_i(\BN(t))}
\end{equation}
where $f_i$ is a positive function representing the survivorship
and $\E_1(t),\dots,\E_n(t)$ are continuous random variables, and generally unbounded. For example, they could be normally distributed with
 the nonzero variance representing
the $\ln$ of mean number of offspring produced by the individuals of each
species at time $t$. We can identify $f_i(\BN(t))$ as $D_i(t)$ and we see that this model takes the $G(E,D)$ form of
the previous models, but because $\ln G_i(t) = E_i(t) - D_i(t)$, $\gamma_i = 0$, and the storage effect cannot occur.

For specific forms of $f_i$ we can show
explicitly that our assumptions hold. Moreover, if
$(E_1(t),\dots,E_n(t))_{t\in\N}$ is i.i.d one can also show
for specific $f_i$'s that the process is irreducible (see
\cite{E89} in the case when $f_i$ defines a Ricker model). Note
that when the support of $E_i$ is $\R$ the dynamics from
\eqref{e:log_normal} will not live in a compact state space so
one cannot use the results from \cite{BS19}.

\subsubsection{Additive models: Discrete Lotka-Volterra}

Consider $n$ species interacting according to
\begin{equation}\label{e:LVd}
X_i(t+1)=X_i(t)\exp\left(b_i(t)+\sum_j
a_{ij}(t)X_j(t)\right), i=1,2,\dots,n.
\end{equation}
These equations have been introduced by \cite{HHJ87} as discrete
time analogues of Lotka--Volterra differential equations.This form of discrete
Lotka-Volterra is what we have called Ricker above.  To draw the relationships
with these models, we can equate $E_i(t)$ with $b_i(t)$
and $D_i(t)$ with $-\sum_j a_{ij}(t)X_j(t)$. Then it is seen that $\gamma_i = 0$ and the storage effect cannot occur.
Moreover, as $D_i(t)$ is linear, relative nonlinearity \citep{C94} cannot occur either.  In fact, the only role that the variable environment has is to allow the model
to explore all of the state space, although this is an important role that leads to more complete results than in the deterministic case.

These equations
have further been analyzed by \cite{SBA11, BS19} in a stochastic
setting. Suppose the system is hierarchically ordered, that is,
there exists a permutation of the indices such that $a_t^{ii}<0$
for all $i, t\in \Z_+$ and $a^{ij}(t)\leq 0$ for all $i\leq j$
and $t\in \Z_+$. By \cite{HHJ87, BS19}, if the system is
hierarchically ordered, the coefficients $a_{ii}(t)$ are bounded
above by some negative number, and the coefficients $a_{ij}(t),
b_i(t)$ are bounded, one can show using a comparison argument
that there exists $K>0$ such that $\BX(t)$ enters $[0, K]^n$ and
never leaves it. We can then work under the assumption of a
compact state space.
\begin{rmk}
If the state space of $\BX(t)$ is not compact, the system
\eqref{e:LVd} can experience abrupt fluctuations from high
densities to very small densities. These sudden crashes make it
impossible to use the persistence and extinction criteria we
have developed - see Section 5 in the work by \cite{C82} for a
simple example of why violent population declines cannot be
allowed.
\end{rmk}
We can make use of the linearity of the system \eqref{e:LVd} and
Proposition \ref{p:rate} to compute $r_i(\mu)$ for any ergodic
measure $\mu$. In order to showcase the results, assume there
are only two species, so that
\begin{equation}\label{e:2d_disc}
\begin{aligned}
X_1(t+1)&=X_1(t)e^{b_1(t+1)+
a_{11}(t+1)X_1(t)+a_{12}(t+1)X_2(t)} ,\\
X_2(t+1)&=X_2(t)e^{b_2(t+1)+
a_{22}(t+1)X_2(t)+a_{21}(t+1)X_1(t)}.
\end{aligned}
\end{equation}
We first look at the Dirac delta measure $\delta_0$ at the
origin $(0,0)$
\[
r_i(\delta_0) = \E \ln b_i(1), i=1,2.
\]
If $r_i(\delta_0)>0$ then species $i$ survives on its own and
converges to a unique invariant probability measure $\mu_i$
supported on $\Se_+^i := \{\bx\in\Se~|~x_i\neq 0, x_j=0, i\neq
j\}$. Moreover,
\[
r_i(\mu_i) = \E [b_i(1)] + \E [a_{ii}(1)] \int
x_i\,\mu_i(dx_i)=0
\]
which implies
\[
\int x_i\,\mu_i(dx_i) = \frac{\E [b_i(1)]}{ \E [-a_{ii}(1)]}.
\]
One can use this to compute the per-capita growth rates
\begin{equation}\label{e:growth_2d}
r_i(\mu_j)= \E [b_i(1)] + \E [a_{ij}(1)] \int x_j\,\mu_j(dx_j) =
\E [b_i(1)] + \E [a_{ij}(1)] \frac{\E [b_j(1)]}{ \E
[-a_{jj}(1)]}.
\end{equation}

To ensure that the boundary of the state space is accessible,
one can assume for example that the $a_{ij}$s and the $b_i$'s
are absolutely continuous with respect to Lebesgue measure and
if $b_i>0$ then an interval of the form $(0,L)$ lies in its
support. Having the expressions \eqref{e:growth_2d} for
$r_1(\mu_2)$ and $r_2(\mu_1)$ we can make use of the discussion
from Section \ref{s:disc_2d} to classify the dynamics. See \cite{H19} for a more complete discussion of 2d Ricker models.

We note that these results are in a sense more complete than
what is known in the deterministic setting for discrete
Lotka-Volterra systems where the classification of the long term behavior is not fully known \citep{RS15, RS16, G19}.

\section{Applications in continuous time}\label{s:cont}
\subsection{Structured populations}
 The survival of an organism is influenced by both biotic
(competition for resources, predator-prey interactions) and
abiotic (light, precipitation, availability of resources)
factors. Since these factors are space-time dependent, all types
of organisms have to choose their dispersal strategies: If they
disperse they can arrive in locations with different
environmental conditions while if they do not disperse they face
the temporal fluctuations of the local environmental conditions.
The dispersion strategy impacts key attributes of a population
including its spatial distribution and temporal fluctuations in
its abundance. Continuous-space discrete-time population models
that disperse and experience uncorrelated, environmental
stochasticity have been studied by \cite{HTW88, HTW88b, HTW90}.
They show that the leading Lyapunov exponent $ r$ of the
linearization of the system around the extinction state usually
determines the persistence and extinction of the population.
\cite{ERSS13} studied a linear stochastic model that describes
the dynamics of populations that continuously experience
uncertainty in time and space. In \cite{HNY16} the authors
generalized \cite{ERSS13} to a
density-dependent model of stochastic population growth
that captures the interactions between dispersal and
environmental heterogeneity. We will showcase how one can
recover and extend the results from \cite{HNY16}.

Suppose we have a population with overlapping generations, which
live in a spatio-temporally heterogeneous environment
consisting of $n$ distinct patches. The growth rate of each
patch is determined by both deterministic and stochastic
environmental inputs. We denote by $ X_i(t)$ the
population abundance at time $t\geq 0$ of the $i$th patch and
write $ \BX(t)=(X_1(t),\dots,X_n(t))$ for the vector of
population abundances.

Consider the system
\begin{equation}\label{e4.0}
d X_i(t)=\left(X_i(t)\left(a_i-b_i(X_i(t))\right)+\sum_{j=1}^n
D_{ji}X_j(t)\right)dt+X_i(t)dE_i(t), \, i=1,\dots,n,
\end{equation}
where $D_{ij}\geq0$ for $j\ne i$ is the per-capita rate at which
the population in patch $i$ disperses to patch $j$,
$D_{ii}=-\sum_{j\ne i} D_{ij}$ is the total per-capita
immigration rate out of patch $i$, $\BE(t)=(E_1(t),\dots,
E_n(t))^T=\Gamma^\top\BB(t)$, $\Gamma$ is a $n\times n$ matrix
such that
$\Gamma^\top\Gamma=\Sigma=(\sigma_{ij})_{n\times n}$
and $\BB(t)=(B_1(t),\dots, B_n(t))$ is a vector of independent
standard Brownian motions adapted to the filtration
$\{\F_t\}_{t\geq 0}$.
We make the following assumptions.

\begin{asp}\label{a:competition}
For each $i=1,\dots,n$ the function $b_i:\R_+\mapsto\R$ is
locally Lipschitz and vanishing at $0$. Furthermore, there are
$M_b>0$, $\gamma_b>0$ such that
\begin{equation}\label{e:b}
\dfrac{\sum_{i=1}^n x_i(b_i(x_i)-a_i)}{\sum_{i=1}^n
x_i}>\gamma_b\text{ for any } x_i\geq0, i=1,\dots,n \text{
satisfying } \sum_{i=1}^n x_i\geq M_b
\end{equation}
\end{asp}
\begin{asp}\label{a:dispersion}
The dispersal matrix $D$ is \textit{irreducible}.
\end{asp}
\begin{asp}\label{a:nonsingular}
The covariance matrix $\Sigma$ is non-singular.
\end{asp}
\begin{rmk}
Condition \eqref{e:b} is biologically reasonable because it holds if the $b_i$'s are sufficiently
large for large $x_i$'s. Below, we give some simple scenarios under which Assumption \ref{a:competition} is satisfied.
\begin{itemize}
\item[a)] Suppose $b_i:[0,\infty)\to [0,\infty), i=1,\dots, n$ are locally Lipschitz and vanishing at $0$. Assume that there exist $\gamma_b>0, \tilde M_b>0$ such that
\[
\inf_{x\in  [\tilde M_b,\infty)} b_i(x) - a_i-\gamma_b>0,~ i =1,\dots,n
\]
Then  Assumption \ref{a:competition} holds (see \cite{HNY16}).
\item[b)] We note that (a) is satisfied if for $i=1,\dots,n$ the function $b_i:\R_+\mapsto \R$ is locally Lipschitz, vanishing at $0$
and satisfies $\lim_{x\to \infty} b_i(x)=\infty$.

\item[c)] One natural choice for the competition functions, which is widely used throughout the literature,
is $b_i(x)=\kappa_i x, x\in (0,\infty)$ for some $\kappa_i> 0$. In this case the competition terms become $-x_ib(x_i) = - \kappa_i x_i^2$. It is easy to see that these functions satisfy (b) above.
\end{itemize}
\end{rmk}

Assumption \ref{a:dispersion} is equivalent to forcing the
entries of the matrix $P_t=\exp(tD)$ to be strictly positive for
all $t>0$. This means that it is possible for the population to
disperse between any two patches. Assumption \ref{a:nonsingular}
says that our randomness is non-degenerate, and thus truly
$n$-dimensional.

We define the total abundance of our population at time $t\geq
0$ via
$X(t):=\sum_{i=1}^n  X_i(t)$ and
let $Y_i(t):=\frac{ X_i(t)}{X(t)}$ be the proportion of
the total population that is in patch $i$ at time $t\geq0$. Set
$\BY(t)=(Y_1(t),\dots, Y_n(t))$.
   An application of It\^o's lemma to \eqref{e4.0} yields
\begin{equation}\label{e4.1}
\begin{split}
dY_i(t)=&Y_i(t)\left(a_i-\sum_{j=1}^na_jY_j(t)-b_i(X(t)Y_i(t))+\sum_{j=1}^nY_j(t)b_j(X(t)Y_j(t))\right)dt+\sum_{j=1}^{n}D_{ji}Y_j(t)dt\\
&+Y_i(t)\left(\sum_{j,k=1}^n\sigma_{kj}Y_k(t)Y_j(t))-\sum_{j=1}^n\sigma_{ij}Y_j(t)\right)dt
+Y_i(t)\left[dE_i(t)-\sum_{j=1}^n Y_j(t)dE_j(t)\right]\\
dX(t)=&X(t)\left(\sum_{i=1}^n(a_iY_i(t)-Y_i(t)b_i(X(t)Y_i(t)))\right)dt+X(t)\sum_{i=1}^nY_i(t)dE_i(t)
\end{split}
\end{equation}
We can rewrite \eqref{e4.1} in the following compact equation
for $(\BY(t), X(t))$ where $\bb(\bx)=(b_1(x_1),\dots,
b_n(x_n))$.
\begin{equation}\label{eq.bys}
\begin{split}
d\BY(t)=&\left(\diag(\BY(t))-\BY(t)\BY^\top(t)\right)\Gamma^\top
d\BB(t)\\
&+\BD^\top\BY(t)dt+\left(\diag(\BY(t))-\BY(t)\BY^\top(t)\right)(\ba-\Sigma
\BY(t)-\bb(X(t)\BY(t)))dt\\
dX(t)=&X(t)\left[\ba-{\bb(X(t)\BY(t))}\right]^\top\BY(t)dt+X(t){\BY(t)}^\top
\Gamma^\top d\BB(t),
\end{split}
\end{equation}
where $\BY(t)$ lies in the simplex
$\Delta:=\{(y_1,\dots,y_n)\in\R^{n}_+: y_1+\dots+y_n=1\}$.
Let $\Delta^{\circ}=\{(y_1,\dots,y_n)\in\R^{n,\circ}_+:
y_1+\dots+y_n=1\}$ be the interior of $\Delta$.

Consider equation \eqref{eq.bys} on the boundary $((\by,x):
\by\in\Delta, x=0)$ (that is, we set $X(t)\equiv 0$ in the
equation for $\BY(t)$). We have
the following system
\begin{equation}\label{eq.by}
\begin{split}
d\tilde\BY(t)=&\left(\diag(\tilde\BY(t))-\tilde\BY(t)\tilde\BY^\top(t)\right)\Gamma^\top
d\BB(t)\\
&+\BD^\top\tilde\BY(t)dt+\left(\diag(\tilde\BY(t))-\tilde\BY(t)\tilde\BY^\top(t)\right)(\ba-\Sigma
\tilde\BY(t))dt
\end{split}
\end{equation}
on the simplex $\Delta$.

\cite{ERSS13} proved that the process $(\tilde\BY(t))_{t\geq 0}$
is an irreducible Markov process, which has the strong Feller
property and admits a unique invariant probability measure
$\nu^*$ on $\Delta$. Let
\begin{equation}\label{lambda}
r_X(\nu^*)=\int_{\Delta}\left(\ba^\top{\bf y}-\frac12{\bf
y}^\top\Sigma{\bf y}\right)\nu^*(d{\bf y}).
\end{equation}

\begin{thm}\label{t:struc_cont}
The following hold:
\begin{itemize}
  \item Suppose that $ r_X(\nu^*)>0$.
The process $ \BX(t) = ( X_1(t),\dots,
X_n(t))_{t\geq 0}$ has a unique invariant probability measure
$\pi$ on $\R^{n,\circ}_+$ that is absolutely continuous with
respect to the Lebesgue measure and
\begin{equation}
\lim\limits_{t\to\infty} A^{t}\|P_\BX(t, \mathbf{x},
\cdot)-\pi(\cdot)\|_{\text{TV}}=0,
\;\mathbf{x}\in\R^{n,\circ}_+,
\end{equation}
for some constant $A>0$. Here $P_\BX(t,\mathbf{x},\cdot)$ is the
transition probability of $( \BX(t))_{t\geq 0}$.
  \item Suppose that $ r_X(\nu^*)<0$.
For any $i=1,\dots,n$ and any $\mathbf{x} = (x_1,\dots,x_n)\in
\R_+^{n}$,
\begin{equation}
\PP_\bx\left\{\lim_{t\to\infty}\frac{\ln { X}_i(t)}{t}=
r_X(\nu^*)\right\}=1.
\end{equation}
\end{itemize}
\end{thm}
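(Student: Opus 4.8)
The plan is to recast \eqref{e4.0}, via the change of variables to $(\BY(t),X(t))$ recorded in \eqref{eq.bys}, as an instance of the general continuous-time system \eqref{e:system} with one population variable, namely the total abundance $X=\sum_i X_i$, and with auxiliary variable $\BY$ confined to the compact simplex $\Delta$. With this identification the extinction set is $\Se_0=\Delta\times\{0\}$, on which the $\BY$-dynamics reduces exactly to \eqref{eq.by}; by the result of \cite{ERSS13} quoted above, that process is irreducible, strong Feller, and has the unique invariant probability measure $\nu^*$. Hence $\M$ consists of the single measure $\mu_0$ supported on $\Se_0$ whose $\Delta$-marginal is $\nu^*$, so $\Conv(\M)=\{\mu_0\}$; and since each $b_i$ vanishes at $0$, the unique invasion rate is
\[
r_X(\mu_0)=\int_{\Delta}\Big(\ba^\top\by-\tfrac12\by^\top\Sigma\by\Big)\,\nu^*(d\by)=r_X(\nu^*)
\]
by \eqref{lambda}.

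First I would verify that the hypotheses of the general theorems hold for the transformed system. For Assumption \ref{a:sde} one takes $V(\bz)=1+\big(\sum_i x_i\big)^{\gamma}$ for a small $\gamma\in(0,1)$ and a sufficiently small constant $\gamma_4$; since the mass-flow dispersal terms cancel in the $\sum_i$-balance, the drift inequality reduces to the competition bound \eqref{e:b} of Assumption \ref{a:competition}, exactly as in Example 1.1 of \cite{HNY16}. Assumption \ref{a:e_cont} follows from the same computation because the diffusion vector fields of \eqref{e4.0} are linear in $\bx$. The accessibility and minorization conditions required by Theorems \ref{t:pers2_cont}, \ref{t:pers3_cont} and by the extinction results come from combining the strong Feller property and irreducibility of $\tilde\BY$ on $\Delta$ with the fact that, by Assumption \ref{a:nonsingular}, the $\BX$-diffusion has the non-degenerate diffusion matrix $(\sigma_{ij}x_ix_j)$ on $\Se_+$: a support/control argument (the dispersal matrix being irreducible by Assumption \ref{a:dispersion}) shows that $\R^{n,\circ}_+$ is accessible from every point of $\Se_+$, that $\Se_0$ is accessible, and that the transition kernel possesses a positive, smooth density on $\Se_+$, which yields both a one-step minorization at a fixed time $m^*$ and the absolute continuity of any invariant measure charging $\Se_+$. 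This step essentially reproduces the argument of \cite{HNY16} and is the technical heart of the proof.

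Granting the above, the dichotomy follows from the general results. If $r_X(\nu^*)>0$, then $\max_i r_i(\mu)>0$ for the only $\mu\in\Conv(\M)$, so Theorem \ref{t:pers3_cont} yields a unique invariant probability measure on $\Se_+$ with exponential total-variation convergence; transporting it through the diffeomorphism $(\by,x)\mapsto x\by$ of $\Delta^{\circ}\times(0,\infty)$ onto $\R^{n,\circ}_+$ produces $\pi$, which is absolutely continuous by interior ellipticity, and gives $\lim_{t\to\infty}A^{t}\|P_\BX(t,\bx,\cdot)-\pi\|_{\mathrm{TV}}=0$ for some $A>0$ and all $\bx\in\R^{n,\circ}_+$. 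If $r_X(\nu^*)<0$, then $\mu_0$ satisfies Assumption \ref{a.extnn}: \eqref{ae3.1a} reads exactly $r_X(\nu^*)<0$ and \eqref{ae3.2a} is vacuous since $\R_+^{\mu_0}=\{\0\}$; hence $\M^1=\{\mu_0\}$ and $\M^2=\emptyset$ (there is no invariant measure on $\Se_+$), so Theorem \ref{t:ex2_cont} applies with $E_1=\{\emptyset\}$ and gives $\U(\omega)=\{\mu_0\}$ together with $\tfrac1t\ln X(t)\to r_X(\nu^*)$ almost surely, for any starting point.

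It remains to pass from the rate for the total abundance to the rate for each patch. Write $X_i(t)=X(t)Y_i(t)$, so $\tfrac1t\ln X_i(t)=\tfrac1t\ln X(t)+\tfrac1t\ln Y_i(t)$; since $Y_i(t)\le1$ the upper bound is immediate. For the lower bound note that, because $X(t)\to0$, the $\BY$-equation \eqref{eq.bys} is asymptotically the boundary equation \eqref{eq.by}, whose unique invariant law $\nu^*$ charges all of $\Delta$; consequently $\BY(t)$ cannot spend an asymptotically positive fraction of time near $\partial\Delta$, which forces $\tfrac1t\ln Y_i(t)\to0$ almost surely (for $\bx\ne\0$; when some but not all $x_i$ vanish, irreducibility of the dispersal matrix makes $X_i(t)>0$ for every $t>0$). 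Hence $\tfrac1t\ln X_i(t)\to r_X(\nu^*)$ for each $i$, as claimed. I expect the accessibility-and-minorization step of the second paragraph to be the main obstacle: one must show, uniformly over a neighborhood of a suitable point of $\Se_+$, that a fixed-time transition kernel dominates a common nonzero measure, which requires carefully patching the interior non-degeneracy of the $\BX$-noise with the delicate boundary behavior of $\tilde\BY$ from \cite{ERSS13}, whereas the Lyapunov verification and the coordinate change are comparatively routine.
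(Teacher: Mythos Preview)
Your proposal is correct and follows the approach the paper intends. The paper presents Theorem~\ref{t:struc_cont} as an application of the general continuous-time machinery of Section~\ref{s:results}: perform the change of variables $(X_1,\dots,X_n)\mapsto(X,\BY)$ recorded in \eqref{eq.bys}, so that the system has a single population variable $X$ with auxiliary variable $\BY\in\Delta$; identify the unique boundary ergodic measure via \cite{ERSS13}; then invoke Theorems~\ref{t:pers3_cont} and \ref{t:ex2_cont}/\ref{t:ex3_cont}. The paper does not spell out a separate proof, relying instead on the fact that this route recovers and extends \cite{HNY16}, and your sketch fills in exactly those details, correctly flagging the accessibility/minorization verification (carried out in \cite{HNY16}) as the main technical ingredient and noting that the Lyapunov check reduces to Assumption~\ref{a:competition}.
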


\subsection{Continuous-time models with a resource variable}
Suppose we have a guild of $n$ species $X_1,\dots,X_n$ whose dynamics is given by
\begin{equation}\label{e1.5.2}
dX_i(t) = X_i(t)( c_{i}(t)R(\BX(t))-m_i) + X_i(t)\sigma_idB_i(t)
\end{equation}
Here $R(X_1(t),\dots,X_n(t))$ is the resource abundance and $c_i(t)$ is the resource uptake rate. The resource abundance is assumed to be given by the algebraic equation
\begin{equation}\label{e2.5.2}
R(\bx) =  R_{\max} - \sum_{j=1}^n c_j(t)x_j(t), \bx\in\R_+^n.
\end{equation}
This model is a special case of the resource competition model with fast resource dynamics of \cite{LC16}. The $c_j(t)$ are environmentally varying resource uptake rates and hence positively affect the growth rates of the species while negatively affecting resource growth and abundance. Because resource uptake is assumed fast, \eqref{e2.5.2} represents the equilibrium of the resource with the consumer densities at time $t$.

One way of modelling the $c_j(t)$ is by using an Ornstein-Uhlenbeck process
\begin{equation}\label{e3.5.2}
d\BU(t) = -\gamma(\BU(t)-\bal)dt + \BA d\BW(t)
\end{equation}
where $\bal$ is the mean of $\BU(t)$ and $\BA$ is a constant matrix. It is well-known that the process $\BU(t)$ converges as $t\to\infty$ to a stationary distribution $\mu_\BX$ that is normal and has mean $\bal$ and covariance matrix $M:=\BA \BA^T/(2\gamma)$. Define
\begin{equation}\label{e4.5.2}
c_i(t) =  \phi_i(U_i(t)):=\frac{A_ie^{U_i(t)}}{1 + B_ie^{U_i(t)}} + C_i , t\geq 0
\end{equation}
where $A_i, C_i, B_i>0$
then we have $0<C_i\leq c_i(t)\leq \frac{A_i}{B_i} + C_i$.

\begin{pron}
Suppose $(\BX(t),\BU(t))$ is given by \eqref{e1.5.2}, \eqref{e2.5.2}, \eqref{e3.5.2} and \eqref{e4.5.2}. Then, setting  $V(\bx,\bu):=1+|\bx|+|\bu|^2$, Assumptions \ref{a:sde} and \ref{a:e_cont} will hold for $(\BX(t),\BU(t))$.
\end{pron}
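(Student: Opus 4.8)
The plan is to verify Assumptions \ref{a:sde} and \ref{a:e_cont} directly from the explicit coefficients, the key input being that the uptake rates and the resource term are bounded. Each $\phi_i(u)=\dfrac{A_ie^{u}}{1+B_ie^{u}}+C_i$ is smooth with $\phi_i,\phi_i'$ bounded, so $c_i(t)=\phi_i(U_i(t))$ obeys $0<C_i\le c_i\le A_i/B_i+C_i$; set $c_{\min}:=\min_iC_i>0$. In the notation of \eqref{e:system} the species drift is $f_i(\bx,\bu)=\phi_i(u_i)R(\bx,\bu)-m_i$ with $R(\bx,\bu)=R_{\max}-\sum_j\phi_j(u_j)x_j$, the species diffusion is $g_i\equiv\sigma_i$, and the $\BU$-equation has affine drift $-\gamma(\bu-\bal)$ and constant diffusion $\BA$. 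Assumption \ref{a:sde}(1) is immediate, since $f_i$ is affine in $\bx$ with coefficients that are smooth bounded functions of $\bu$ and the Ornstein--Uhlenbeck coefficients are affine and constant. Moreover $R$ is a resource abundance, so $0\le R\le R_{\max}$; hence $|f_i|\le (A_i/B_i+C_i)R_{\max}+m_i$ and $|g_i|^2=\sigma_i^2$ are bounded, $\Phi(\bz):=\sum_i(1+|f_i(\bz)|+|g_i(\bz)|^2)\le M_\Phi<\infty$, and $c_{\min}|\bx|\le\sum_j\phi_j(u_j)x_j\le R_{\max}$ confines $|\bx|$ to $[0,R_{\max}/c_{\min}]$.

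For Assumption \ref{a:sde}(2) take $V(\bx,\bu)=1+|\bx|+|\bu|^2$ with $|\bx|=\sum_i x_i$; the first three bullets hold with $\gamma_5=1$. Since $V$ is affine in $\bx$ the multiplicative species noise drops out of $\op V$, so
\[
\op V(\bx,\bu)=\sum_i x_if_i(\bx,\bu)-2\gamma|\bu|^2+2\gamma\,\bal^\top\bu+\operatorname{tr}(\BA\BA^\top).
\]
Using $0\le R\le R_{\max}$, $|\bx|\le R_{\max}/c_{\min}$ and $2\gamma\bal^\top\bu\le\gamma|\bu|^2+\gamma|\bal|^2$, the first and third terms are bounded and $\op V\le C'-\gamma|\bu|^2$ for a constant $C'$. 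The required inequality $\op V\le(C_4-\gamma_4\Phi)V$ then follows by choosing any $\gamma_4>0$ and $C_4\ge\gamma_4M_\Phi+\max(C',0)$: indeed $(C_4-\gamma_4\Phi)V\ge(C_4-\gamma_4M_\Phi)V\ge C'\ge\op V$ since $V\ge1$. For Assumption \ref{a:e_cont}, with the same $V$ one has $V_{x_i}=1$, $V_{y_j}=2u_j$, $x_ig_i=\sigma_ix_i$ and each $h_j$ constant, so its left-hand side $\bigl|\sum_i\sigma_ix_i+2\sum_ju_jh_j\bigr|$ grows at most linearly in $(|\bx|,|\bu|)$, while the right-hand side $C\,V\,\Phi\ge nC(1+|\bx|+|\bu|^2)$ grows quadratically in $|\bu|$; the inequality holds for $C$ large.

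The step I expect to need the most care is the point used above that $R(\bx,\bu)\ge 0$, equivalently that $\bx$ remains in the bounded set $\{\sum_j\phi_j(u_j)x_j\le R_{\max}\}$, is actually valid for the dynamics — one should either check this region is forward invariant (using that on $\{R=0\}$ each species has strictly negative drift $-m_iX_i$) or restrict the state space $\Se$ to it. If instead $R$ is kept un-truncated and one works on all of $\R_+^n\times\R^{\kappa_0}$, then $|f_i|$ grows linearly in $|\bx|$, $\Phi$ is unbounded, and the product $\gamma_4V\Phi$ generates a term of order $|\bx|\,|\bu|^2$ that is not dominated by the $-c_{\min}^2|\bx|^2$ from $-(\sum_j\phi_j(u_j)x_j)^2$ and the $-|\bu|^2$ from the mean reversion in $\op V$; splitting it by Young's inequality only produces an uncontrollable $|\bu|^4$ or $|\bx|^2|\bu|^2$, so one would have to use a Lyapunov function with a coupling term like $(1+|\bx|)(1+|\bu|^2)$, or argue region by region. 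That is the real difficulty.
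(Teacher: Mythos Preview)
Your argument and the paper's diverge at the crucial point. The paper never assumes $R\ge 0$ or that $|\bx|$ stays bounded; instead it records the two estimates
\[
\op|\bu|^2\le a_1-a_2|\bu|^2,\qquad
\op\textstyle\sum_i x_i\le R_{\max}\max_i\{A_i/B_i+C_i\}\sum_ix_i-\min_i\{C_i^2\}\bigl(\sum_ix_i\bigr)^2,
\]
the second obtained by writing $\sum_i x_if_i=S(R_{\max}-S)-\sum_im_ix_i$ with $S=\sum_j\phi_j(u_j)x_j\in[c_{\min}|\bx|,\,c_{\max}|\bx|]$, and then appeals to these for the Lyapunov condition. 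So the paper's mechanism is the quadratic damping $-S^2\le -c_{\min}^2|\bx|^2$ in $\op|\bx|$, not a priori confinement of~$\bx$.

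The gap you flag is genuine and not fixable as you suggest: the region $\{R\ge 0\}=\{\sum_j\phi_j(u_j)x_j\le R_{\max}\}$ is \emph{not} forward invariant for the SDE. On $\{R=0\}$ each $X_i$ indeed has drift $-m_iX_i$, but the scalar $\sum_j\phi_j(U_j)X_j$ carries a nondegenerate diffusion there (contributions $\phi_j(U_j)X_j\sigma_j\,dB_j$ and $X_j\phi_j'(U_j)$ times the OU noise), so the process crosses into $\{R<0\}$; restricting $\Se$ to this set is therefore not available. Hence your main line---bounding $\Phi$ via bounded $|f_i|$---does not close. Your final paragraph, however, is on the mark: working on all of $\R_+^n\times\R^{\kappa_0}$ with $V=1+|\bx|+|\bu|^2$, the term $\gamma_4\Phi V$ produces a cross contribution of order $|\bx|\,|\bu|^2$ that $-c_{\min}^2|\bx|^2-a_2|\bu|^2$ does not absorb (fix $|\bx|$ large and send $|\bu|\to\infty$). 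The paper's proof records only the two displayed estimates and does not write this step out; the remedy you propose---a coupled Lyapunov function such as $(1+|\bx|)(1+|\bu|^2)$, which makes the $\bu$-damping scale with $1+|\bx|$---is what would actually close the inequality.
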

\begin{proof}
We can check that there exist constants $a_1, a_2>0$ such that
\begin{equation}\label{e5.5.2}
\op |\bu|^2\leq a_1-a_2 \|\bu\|^2 \text{ for any } \bu\in\R^n.
\end{equation}
Since $C_i\leq c_i(t)\leq \frac{A_i}{B_i}+C_i$,
we can see that
\begin{equation}\label{e6.5.2}
\op \sum_{i} x_i\leq  R_{\max}\max_i\left\{\frac{A_i}{B_i}+C_i\right\} \sum_{i} x_i -\min_i\{C_i^2\}\sum_{i} x_i^2
\end{equation}
In view of \eqref{e5.5.2} and \eqref{e6.5.2} and using the Lyapunov function $V(\bx,\bu):=1+|\bx|+|\bu|^2$ we can see that Assumptions \ref{a:sde} and \ref{a:e_cont} hold for the joint process $(\BX(t), \BU(t))$.
 \end{proof}
Related ecological systems have been
studied by \cite{AM80, LC16}.  Most important, this model provides a continuous-time analogue of the models in G(E,D) form in discrete-time examples. In fact, the equations can be rewritten as
$$dX_i(t) = X_i(t)( c_{j}(t)(R_{\max} - m_i) - c_i(t)D_i(t))dt + X_i(t)\sigma_idB_i(t)$$ or $$dX_i(t) = X_i(t)g_i(E_i(t), D_i(t))dt + X_i(t)\sigma_idB_i(t)$$
where, $D_i(t)= \sum_{j=1}^n c_j(t)x_j(t)$, we identify $c_i(t)$ as $E_i(t)$, and $g_i(E_i(t), D_i(t))$ with $\ln G_i(E_i(t), D_i(t))$. We see immediately that $\gamma_i$ is negative, and covED is present. Of most interest in this case, as shown by \cite{LC16}, the $r_i(\mu)$ can be developed explicitly in terms of $\gamma_i$ and covED giving a very clear example mathematically of how the storage effect operates.  It is the sole mechanism of coexistence in this model.

\subsection{One species and one auxiliary variable}

\begin{equation}\label{e:sde_1d}
\begin{aligned}
dX(t)&=X(t) f(X(t), Y(t))\,dt+X(t)g(X(t), Y(t))\,dB(t),\\
dY(t) &= u(Y(t))\,dt+ h(Y(t))\,dW(t).
\end{aligned}
\end{equation}
Suppose we know that $Y(t)$ has a unique invariant probability
measure $\mu_Y$ on $\R$ and Assumption \ref{a:sde} is satisfied (see Remark \ref{r:sde_assum} and Assumption \ref{a.nonde}). Then the long term behaviour is
determined by the expected per capita growth rate

\[
r_X(\delta_0\times \mu_Y) = \int_\R
\left(f(0,y)-\frac{g^2(0,y)}{2}\right)\,\mu_Y(dy).
\]
If $r_X(\delta_0\times \mu_Y)>0$ we have persistence while if
$r_X(\delta_0\times \mu_Y)<0$ then $X$ goes extinct almost surely and
\[
\lim_{t\to\infty}\frac{\ln X(t)}{t}= r_X(\delta_0\times \mu_Y).
\]
\subsection{Stochastic replicator dynamics}
An important class of continuous time dynamics is the one called
by \cite{FH92} \textit{stochastic replicator dynamics}. We set $\Delta:=\{\bx\in
\R^n_+~|~\sum_i x_i=1\}$ and let $\Delta_0 = \{\bx\in\Delta~:~x_j=0~\text{for some}~j\}$ be the extinction set, The
fitness of population $i$ is described by a function
$f_i:\Delta\to\R$ and the number of individuals in population
$i$ is given by
\begin{equation}\label{e:rep1}
dU_i(t)=U_i(t)(f_i(\BX_t)+\sigma_idB_i(t)),
\end{equation}
where $$X_i(t)=\frac{U_i(t)}{\sum_{j}U_j(t)}.$$
The stochastic replicator dynamics has been studied recently by various authors \citep{I05,BHS08, HI09, SBA11}.
\cite{GH03} studied the permanence and impermanence of deterministic replicator dynamics. \cite{BHS08} looked at these conditions for stochastic replicator dynamics. When the deterministic replicator dynamics is permanent and the noise level small, they showed that the stochastic dynamics admits a unique invariant probability measure whose mass is concentrated near the maximal interior attractor of the unperturbed system. When the deterministic dynamics is impermanent and the noise level small or large, \cite{BHS08} showed that the stochastic dynamics converges to the boundary of the state space at an exponential rate. In \cite{SBA11} the authors were able to give conditions for persistence and convergence to a stationary distribution. Here we recover and strengthen the results of \cite{SBA11} by proving stronger persistence as well as extinction results.

Applying Ito's formula to \eqref{e:rep1} one can see (\cite{SBA11}) that
\[
dX_i(t)=X_i(t)
F_i(\BX(t))\,dt+X_i(t)\sum_jg_{ij}(\BX(t))\,dB_i(t),
~i=1,\dots,n
\]
where
\[
F_i(\bx) = f_i(\bx)-\sigma^2_{ii}x_i - \sum_j
x_j(f_j(\bx)-\sigma^2_{jj}x_j)
\]
and
\[
g_{ij}(\bx)= (\delta_{ij}-x_j)\sigma_j.
\]
It is immediate that for two types the only possible ergodic
measures on $\Delta_0$ are the Dirac masses at $(1,0)$ and
$(0,1)$. This then yields
\[
r_1(\delta_{(0,1)}) = f_1(0,1)-f_2(0,1) -
\frac{1}{2}(\sigma_1^2-\sigma_2^2)
\]
and
\[
r_2(\delta_{(1,0)}) = f_2(1,0)-f_1(1,0) -
\frac{1}{2}(\sigma_1^2-\sigma_2^2).
\]
We next follow the example studied by \cite{SBA11} and show how
we can now give a complete description of the dynamics. Suppose
there are three interacting species and $f_1(\bx)=\mu_1+bx_3,
f_2(\bx)=\mu_2$ and $f_3(\bx)=\mu_3-cx_1$. Interactions between
types $1$ and $3$ provide a benefit $b>0$ to type $1$ and a cost
$c>0$ to type $3$. Assume the following inequality holds
\begin{equation}\label{e:replicator_ineq}
\mu_3-\frac{\sigma_3^2}{2}>
\mu_2-\frac{\sigma_2^2}{2}>\mu_1-\frac{\sigma_1^2}{2}>0.
\end{equation}

By \cite{SBA11} it is easy to see that if $Y_1(0)=0, Y_3(0)>0$
then $\BX(t)$ converges almost surely to $(0,0,1)$ since
$Y_i(t)=Y_i(0)\exp\left(\left(\mu_i-\frac{\sigma_i^2}{2}\right)t
+\sigma_iB_i(t)\right)$ and \eqref{e:replicator_ineq} holds.
Similarly, if $Y_3(0)=0, Y_1(0)>0$ then $\BX(t)$ converges
almost surely to $(0,1,0)$.

One can compute the expected per capita growth rates when $\mu$
is the Dirac mass function at $(0,1,0), (1,0,0)$, and $(0,0,1)$:
\begin{equation}\label{e:replicator}
\begin{split}
r_2(1,0,0)&= \mu_2-\frac{\sigma_2^2}{2}-\mu_1+\frac{\sigma_1^2}{2}>0\\
r_3(1,0,0)&= \mu_3-c-\frac{\sigma_3^2}{2}-\mu_2+\frac{\sigma_2^2}{2}\\
r_1(0,1,0)&= \mu_1-\frac{\sigma_1^2}{2}-\mu_2+\frac{\sigma_2^2}{2}<0\\
r_3(0,1,0)&= \mu_3-\frac{\sigma_3^2}{2}-\mu_2+\frac{\sigma_2^2}{2}>0\\
r_1(0,0,1)&= \mu_1+b-\frac{\sigma_1^2}{2}-\mu_3+\frac{\sigma_3^2}{2}\\
r_2(0,0,1)&= \mu_2-\frac{\sigma_2^2}{2}-\mu_3+\frac{\sigma_3^2}{2}<0.
\end{split}
\end{equation}
Note the above imply by Theorem \ref{t:ex3_cont} or Theorem \ref{t:ex2cont} that it is not possible to have ergodic probability measures on $\Delta_{12}:=\{\bx\in\Delta~|~x_1>0, x_2>0, x_3=0\}$ or on $\Delta_{23}:=\{\bx\in\Delta~|~x_1=0, x_2>0, x_3>0\}$. We have the following possibilities.

\textbf{1)}
Assume
\begin{equation}\label{e:ineq}
b>\mu_3-\frac{\sigma_3^2}{2}-\mu_1+\frac{\sigma_1^2}{2}>c.
\end{equation}
Then $r_1(0,0,1)>0$ and $r_3(1,0,0)>0$. This implies that there
exists a unique invariant probability measure $\mu_{13}$ on
$\Delta_{13}:=\{\bx\in\Delta~|~x_1>0, x_2=0, x_3>0\}$ and by \cite{SBA11} that
\[
r_2(\mu_{13})=\frac{b\sigma_3^2-(b-c)\sigma_2^2-2br_3+2(b-c)r_2+2cr_1+2bc}{2(b-c)}.
\]
If $r_2(\mu_{13})>0$ the populations coexist and $\BX(t)$ converges to its
unique stationary distribution $\pi$ on $\Delta_+:=\Delta\setminus
\Delta_0$. See Figure \ref{fig:figure3} for some intuition of where the different ergodic measures live on the simplex $\Delta$.

\begin{figure}
\begin{center}
\begin{tikzpicture}[scale=0.8, line width = 1.2 pt]

\draw[->] (0,-0.5) -- (0,5);
\draw (0,5.2) node {$x_3$};
\draw[->](-0.5,0) -- (6,0);
\draw (6.3,0) node {$x_1$};
\draw[->](0.2,0.25) -- (-3,-3.75);
\draw (-3.1,-4.05) node {$x_2$};

\fill[fill=red!20!white, opacity=0.4] (0,3) -- (4,0) -- (-2,-2.5) -- cycle;
\draw (0.6,0.8) node {$\pi$};

\draw[line width=2pt,color=red!70!black] (0,3) -- (4,0);
\draw (2,2) node {$\mu_{13}$};
\filldraw[red!70!black] (0,3) circle (1.5pt);
\filldraw[red!70!black] (4,0) circle (1.5pt);
\filldraw[red!70!black] (-2,-2.5) circle (1.5pt);
\draw (0.8,3.2) node {$(0,0,1)$};
\draw (4.6,0.3) node {$(1,0,0)$};
\draw (-2.8,-2.3) node {$(0,1,0)$};

\end{tikzpicture}

\caption{Coexistence for a three species stochastic replicator system. The red measures signify repellers and $\pi$ is the invariant probability measure the system converges to.}
\label{fig:figure3}

\end{center}
\end{figure}
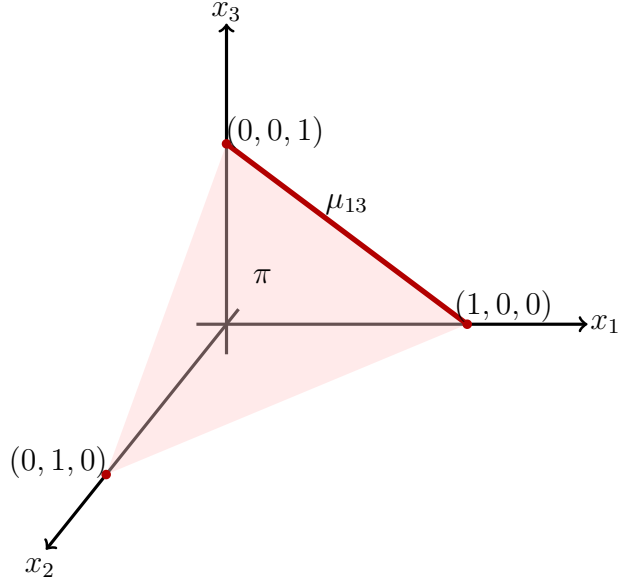

If $r_2(\mu_{13})<0$ we have
   \[
\PP_\bx\left\{\U(\omega)=\{\mu_{13}\}
~\text{and}~\lim_{t\to\infty}\frac{\ln X_2(t)}{t}=r_2(\mu_{13})
\right\}=1
    \]
for all $\BX(0)=\bx\in\Delta_+$.

\textbf{2)} Assume \eqref{e:ineq} does not hold and
\[
b< \mu_3-\frac{\sigma_3^2}{2}-\mu_2+\frac{\sigma_2^2}{2}.
\]
Then $r_1(0,0,1)<0$ and $r_2(0,0,1)<0$ and $\delta_{(0,0,1)}$ is
a transversal attractor. As a result
\[
\PP_\bx\left\{\U(\omega)=\{\delta_{(0,0,1)}\}
~\text{and}~\lim_{t\to\infty}\frac{\ln X_j(t)}{t}=r_j((0,0,1)),
j=1,2 \right\}=1
    \]
for all $\BX(0)=\bx\in\Delta_+$.

\textbf{3)} Assume \eqref{e:ineq} does not hold and
\[
b,c> \mu_3-\frac{\sigma_3^2}{2}-\mu_2+\frac{\sigma_2^2}{2}.
\]
The only ergodic invariant measures are the Dirac measures at
the vertices.

If $b>c$ we can show that \citep{SBA11}
\[
\max_i r_i(\mu)>0, \mu \in \Conv \{\delta_{(1,0,0)}, \delta_{(0,1,0)}, \delta_{(0,0,1)}\}
\]
we have coexistence and $\BX(t)$ converges to a unique invariant probability measure $\pi$ on $\Delta_+$.

If $b<c$ persistence is not possible. However, we are not in the setting of any of our extinction theorems because none of the ergodic measures is an attractor. Different methods have to be used - we show in \cite{HNS20} that
the process converges to the extinction set $\Delta_0$
exponentially fast.

\subsection{Lotka Volterra: Two prey and one predator}
Assume we have two prey and one predator interacting according
to the following Lotka--Volterra system
\begin{equation}\label{e1-ex2}
\begin{cases}
dX_1(t)=X_1(t)[a_1-a_{11}X_1(t)-a_{13}X_3(t)]dt+X_1(t)dE_1(t)\\
dX_2(t)=X_2(t)[a_2-a_{22}X_2(t)-a_{23}X_3]dt+X_2(t)dE_2(t)\\
dX_3(t)=X_3(t)[-a_3-a_{33}X_3(t)+a_{13}X_1(t)+a_{23}X_2(t)]dt+X_3(t)dE_3(t).\\
\end{cases}
\end{equation}
Assume that $a_i, b_i, c_{ii}>0, i=1,2,3$, $c_{23}, c_{31},
c_{12}, c_{21}, c_{31}, c_{32}\geq0.$ Assume that
$r_1(\delta_0)=a_1-\frac{\sigma_{11}}{2}>0$ and
$r_2(\delta_0)=a_2-\frac{\sigma_{22}}{2}>0$ so that species $1$
and $2$ can survive on their own. This implies by Theorem
\ref{t:pers2_cont} that there exist unique ergodic probability
measures $\mu_1$ and $\mu_2$ on
$\R_{1+}:=(0,\infty)\times\{0\}\times\{0\}$ and
$\R_{2+}:=\{0\}\times(0,\infty)\times\{0\}$.

Note that $r_1(\delta_0)=-a_3-\frac{\sigma_{33}}{2}<0$ which
implies that, as expected, the predator cannot survive without
any of the prey species.

Next, we can also see that for $i=1,2$
\[
0=r_i(\mu_i)=a_i-
\frac{\sigma_{ii}}{2}-b_i\int_{\R_{i+}}x_i\mu(dx_i)
\]
which implies
\[
\int_{\R_{i+}}x_i\mu(dx_i) =
\frac{a_i-\frac{\sigma_{ii}}{2}}{a_{ii}}
\]
Using this we can compute for $j=1,2$
\[
\lambda_j(\mu_i)= \int_{\partial\R^3_+}\left(a_j+\sum_{\ell=1}^3
a_{j\ell } x_\ell-\dfrac{\sigma_{j j}}2\right)\mu_i(d\bx)=a_j+
a_{ji} \int_{\partial\R^\circ_{i+}}
x_i\mu_i(d\bx)-\dfrac{\sigma_{jj}}2 =
a_j-\dfrac{\sigma_{jj}}2>0.
\]
This implies by Theorem \ref{t:pers2_cont} that there exists a
unique invariant probability measure $\mu_{12}$ on
$\R_{12+}=(0,\infty)\times (0,\infty)\times \{0\}$.

The predator's expected per capita growth rates are
\[
r_3(\mu_i)=-a_3-\frac{\sigma_{33}}{2}
+a_{i3}\left(\frac{a_i-\frac{\sigma_{ii}}{2}}{a_{ii}}\right),
i=1,2.
\]

Suppose that $r_3(\mu_i)<0, i=1,2$. Then the faces $\R_{13+},
\R_{23+}$ do not support any invariant probability measures.
Lastly, in order to compute $r_3(\mu_{12})$ we note that
\[
r_{1}(\mu_{12})= r_{2}(\mu_{12})=0
\]
which yields
\[
r_3(\mu_{12})=-a_3-\frac{\sigma_{33}}{2} +
a_{13}\left(\frac{a_1-\frac{\sigma_{11}}{2}}{a_{11}}\right)+
a_{23}\left(\frac{a_2-\frac{\sigma_{22}}{2}}{a_{22}}\right).
\]

If $r_3(\mu_{12})>0$ we have the existence of a unique invariant
probability measure $\mu_{123}$ on $(0,\infty)^3$ and the
coexistence of all three species. If $r_3(\mu_{12})<0$ we have
the extinction of the predator and the coexistence of the two
prey species
\[
\PP_\bx\left\{\U(\omega)=\{\mu_{12}\}
~\text{and}~\lim_{t\to\infty}\frac{\ln X_3(t)}{t}=r_3(\mu_{12})
\right\}=1, \bx\in (0,\infty)^3.
\]
We note that Lotka--Volterra SDE models have been studied
extensively recently \cite{HN17, HN17b}. A full
classification of three-dimensional competitive and
predator-prey stochastic Lotka--Volterra systems
can be found in recent work by \cite{HNS20}.
In the model here, each of the prey species experiences intraspecific competition from the negative term $a_{ii}X_i(t)$ in its growth rate. They also experience predation through the term $a_{i3}X_3(t)$, but as the predator depends on these prey species, increasing in density when the prey are abundant, it is a mediator of apparent competition, i.e. the prey species interact with each other negatively through their effects on the predator. In relation to the previous models, $D_i(t) = a_{ii}X_i(t) + a_{i3}X_3(t)$ for prey species $i$. As the model is additive ($\gamma_i = 0$), the storage effect cannot occur and therefore cannot affect the coexistence of the prey species. Moreover, as the per capita growth rates are linear in the species densities, coexistence by relative nonlinearity \citep{C94, KC08} cannot occur either. Although the invasion rates, $r_i(\mu)$, do depend on variances, implying an effect of stochastic variation on persistence, these effects are more akin to a parameter change than to a strong effect of stochastic variation because substituting $a_i$ for $a_i - \sigma_{ii}/2$ in these formulae reproduces the deterministic invasion rates.
\section{Discussion}\label{s:discussion}
Theoretical models in ecology have long been dominated by deterministic models, not because such models were seen as superior or adequate, but because the theory for stochastic models seemed insuperably difficult. Early stochastic models in ecology focused on Markov jump processes \citep{K1948, B1960}, which provided some solutions and useful approximations, but such processes are only stochastic because they have discrete population sizes, and their stochastic elements correspond to independent uncertain events in the lives of individuals, such as mortality and birth. Laws of large numbers in such processes mean that for reasonably large population sizes a deterministic model will be adequate \citep{K1970}. Stochastic environmental variation, on the other hand, which is pervasive in nature and has strong effects on population dynamics, remains important no matter how large the population might be \citep{C78}. However, the dynamical equations in such situations are generally nonlinear and only in limited circumstances yield to analytical solution. Some early approaches used linear approximation \citep{MayRM1974}, and such linear approaches are still being used to approximate population fluctuations in some circumstances \citep{Ripa2003}. It was not until \cite{T78,T81} and \cite{TG80} that the essential nonlinearity of models for multiple interacting species was taken into account. However, these models were Lotka-Volterra models in discrete and continuous time, and were linear additive models, i.e. their per capita growth rates had environmental and density effects combining additively, precluding the storage effect as discussed above, and the density effects were linear precluding coexistence due to relative nonlinearity, as discussed above. In this sense they were disappointing because the stochastic environment added very little to the understanding already available from deterministic models. Soon after, however, stochastic environment models in which the environmental fluctuations have strong effects on species coexistence were developed \cite{CW81, C82, A84}. Most important, the strong effects of the stochastic environments in these models resulted directly from nonlinearities introduced by adding realistic biology to simpler models in which a stochastic environment had no such strong effects \citep{C94}.

These findings that stochastic environmental variation may allow robust species coexistence in situations in which coexistence is precluded in a deterministic environment might have been a great stimulus to the development of stochastic environment models in ecology.  However, the difficulty of their analysis meant that few theoretical ecologists followed this path. An important development was quadratic approximations to the invasion growth rates, $r(\mu)$, in these models \citep{C94, C19}, which led to an understanding of the storage effect coexistence mechanism and relative nonlinearity \citep{AM80, AH02}. However, a rigorous understanding of how these invasion rates are able to demonstrate stochastic persistence and coexistence beyond two-species settings was lacking until the developments of \cite{SBA11}. Since that time, this understanding has been extended, culminating most recently in \cite{BS19} in which auxiliary variables were introduced. As discussed here, auxiliary variables $Y$ allow a substantial increase in the richness of the models that can be analyzed by an understanding of the invasion growth rates of the focal interacting species, the $X$ and $N$ variables in our development here. As we have discussed above, they allow populations to have a great deal of structure, they allow the environmental fluctuations to be Markovian rather than merely i.i.d, and they allow dynamic mediators of the interactions between species to be considered. For example, here we consider predators to be an auxiliary variable when studying the coexistence of annual plants subject to seed predation, and in other models, e.g. \citep{LC16, C20} resources that the organisms compete for are auxiliary variables. Remarkably, despite addition of the auxiliary variables, stochastic persistence of the focal species is nevertheless still determined by their invasion growth rates alone.

The major contributions of this work are key extensions of the recent results of \cite{BS19}, and thorough connection of these mathematical developments with the ecological literature. Here we have removed the compactness requirement on the state space. Although it has been argued that population densities are necessarily bounded in nature, it is difficult to identify exactly what those bounds are, and many very useful population models do not put strict bounds on how large a population can be. For example, the simple logistic model in a variable environment leads to a gamma distribution for population size \citep{MayRM1974}. Probability distributions in which unrealistically high populations have extremely low probability make much more sense, and the results here accommodate this situation. Similarly past work in discrete time has placed strict limits on the magnitude of environmental variables, with the justification often being the finite capabilities of organisms. However, the central limit theorem, which leads to many valuable models in science, yields the unbounded normal distribution. Many variables in biology are found to be well approximated by the normal distribution on the log scale. Thus, a variable such as reproductive rate might be well approximated by a lognormal distribution, yet with its infinite tail violates the conditions for establishing persistence in the otherwise extremely valuable work of \cite{BS19}. We have found here, that rather than impose strict bounds on random variables, bounds on an expected value involving the multiplication rates and a Lyapunov-like function are sufficient. In addition, we extend previous work by characterizing the situations where some of the species become extinct, and the remainder converge to invariant measures on the boundary of the full state space. As a consequence we are able to gain a full picture of the long-term outcomes for these stochastic population models.

The second major contribution of this work is to explain how these mathematical developments connect with ecological theory showing where it validates some past work, and sets the stage for exciting new developments in the future. Although we do not have space to give the full details of how species coexistence results from environmental fluctuations in these models, we have sketched the major issues, and shown how the various ecological models fit within a family of models whose structure determines when the the storage effect coexistence mechanism will be active, promoting coexistence in a variable environment. We have also touched on the coexistence mechanism relative nonlinearity. Both of these mechanisms naturally arise from the mathematical structure of the model, which is a function of the biological details \citep{C94}. A concise development of formulae for the invasion rates, $r(\mu)$, is given in \cite{C19} exhibiting explicitly the involvement of the storage effect and relative nonlinearity in these rates. Thus, although in general we have not been able to give explicit formulae for the invasion rates in this article, that information is to be found in the ecological literature in general terms in  \citep{C94} and \cite{C19}, and in more specific terms for the individual models from the citations above where the models are discussed. It needs to be emphasized that it can often be difficult to understand how environmental variation will affect outcomes in ecological models, but an understanding of the structure of the model in the terms discussed here goes a long way to determining when strong effects of environmental variation on species coexistence will occur. A great many models in ecology simply take on linear additive forms, which miss key features of nature of special relevance to how environmental variation can promote species coexistence. Environmental variation can still have effects in such models by permitting the state variables to explore all of the state space, simplifying the results and thereby allowing a simpler and more complete description of the possible long-term outcomes of the dynamics as arise here for the Ricker and Lotka-Volterra models. Of most importance, however, the stage is set for a great deal of synergy between the ecological applications and the mathematical theory of stochastic persistence. Where the mathematical theory can develop rigorous stochastic persistence and extinction conditions, the ecological theory, exemplified by the discussion of the $G(E,D)$ form for the discrete time multiplication rates, leads to an understanding of where major effects of environmental variation on stochastic extinction and coexistence are to be expected.

\textbf{Acknowledgements:} The authors acknowledge generous
support from the NSF through the grants DMS-1853463 for
Alexandru Hening, DMS-1853467 for Dang Nguyen, and DEB-1353715 for Peter Chesson.
\bibliographystyle{agsm}
\bibliography{LV}
\appendix

\section{Persistence proofs}\label{s:a}

\begin{lm}\label{lmA.1}

For any $\bz\in\Se, t\in\N$ we have the bounds

$$\E_\bz(V(\BZ(t))\leq \rho^t
V(\bz)+C\sum_{s=0}^t\rho^s\leq\rho^t V(\bz)+\dfrac{C}{1-\rho},$$
and
$$
\begin{aligned}
\E_\bz h(\BZ(t), \xi(t))
\leq&
\E_\bz \left[V(\BZ(t)) h(\BZ(t), \xi(t))\right]
\leq& \rho \E_\bz V(\BZ(t))+C\\
\leq&\rho^{t+1} V(\bz)+\dfrac{C}{1-\rho}.
\end{aligned}
$$
If a function $\psi$ satisfies $$\lim_{\bz\to\infty}
\frac{\psi(\bz)}{\E V(\bz,\xi(t))h(\bz,\xi(t))}=0$$
then $\psi$ is $\mu$-integrable for any invariant probability
measure $\mu$.
Moreover,  if $\mu(\Se_+)=1$ then
$r_i(\mu)=0$ for any $i\in I$.
\end{lm}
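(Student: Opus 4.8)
\emph{Plan.} The lemma contains two moment bounds, an integrability criterion, and the vanishing of the invasion rates on $\Se_+$; the first three are iterations of Assumption~\ref{a:main}\,A3)-iii), and only the last needs real work. \textbf{Moment bounds.} I would first record $h\ge1$ (since $\max\{a,1/a\}\ge1$ for $a>0$ and $\gamma_3>0$) and $V\ge1$ (A3)-i)), so A3)-iii) already gives the one-step drift $\E_\bz V(\BZ(1))\le\E_\bz[V(\BZ(1))h(\bz,\xi(0))]\le\rho V(\bz)+C$; iterating with the Markov property and inducting on $t$ gives $\E_\bz V(\BZ(t))\le\rho^tV(\bz)+C\sum_{s=0}^t\rho^s\le\rho^tV(\bz)+C/(1-\rho)$. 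For the bound on $\E_\bz h(\BZ(t),\xi(t))$ one uses $V\ge1$ to pass to $\E_\bz[V(\BZ(t+1))h(\BZ(t),\xi(t))]$, then conditions on $\BZ(t)$ (which is independent of $\xi(t)$) and applies A3)-iii) to get $\le\rho\E_\bz V(\BZ(t))+C$, and the first bound rewrites the right side as $\rho^{t+1}V(\bz)+C/(1-\rho)$.

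\textbf{Integrability of $\psi$.} The key preliminary is $\int V\,d\mu<\infty$ for every invariant $\mu$, which I would get without circularity: for each $K>0$ and $t\in\N$, invariance gives $\int(V\wedge K)\,d\mu=\int\E_\bz[V(\BZ(t))\wedge K]\,\mu(d\bz)\le\int\min\{\rho^tV(\bz)+C/(1-\rho),\,K\}\,\mu(d\bz)$, and letting $t\to\infty$ by dominated convergence (dominant the constant $K$) the right side tends to $\min\{C/(1-\rho),K\}$; monotone convergence in $K$ then yields $\int V\,d\mu\le C/(1-\rho)$. Writing $W$ for the left-hand side of A3)-iii) (the denominator appearing in the hypothesis on $\psi$), we have $W\le\rho V+C$, so $\psi/W\to0$ at infinity forces $|\psi|\le\rho V+C$ outside a bounded set; together with local boundedness of $\psi$ and finiteness of $\mu$ this gives $\psi\in L^1(\mu)$.

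\textbf{Vanishing of $r_i(\mu)$ on $\Se_+$.} Suppose $\mu(\Se_+)=1$, so $X_i>0$ $\mu$-a.s.\ and $\log X_i$ is finite $\mu$-a.e. By \eqref{e:system_discrete}, $\log X_i(1)-\log X_i(0)=\log F_i(\BZ(0),\xi(0))$; since $|\log F_i(\bz,\xi)|\le\gamma_3^{-1}\log h(\bz,\xi)\le\gamma_3^{-1}h(\bz,\xi)$, the function $\psi_i(\bz):=\E|\log F_i(\bz,\xi(1))|\le\gamma_3^{-1}\E h(\bz,\xi(1))\le\gamma_3^{-1}(\rho V(\bz)+C)$ lies in $L^1(\mu)$ by the previous step, so $r_i(\mu)$ is well defined. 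To evaluate it I would truncate: for $N\in\N$ put $\ell^{(N)}(\bz)=\log x_i\vee(-N)$, which satisfies $-N\le\ell^{(N)}\le\log^+x_i\le\gamma_1^{-1}V$, hence $\ell^{(N)}\in L^1(\mu)$ and, by invariance, $\int\big(\E_\bz[\ell^{(N)}(\BZ(1))]-\ell^{(N)}(\bz)\big)\,\mu(d\bz)=0$. Because $x\mapsto x\vee(-N)$ is $1$-Lipschitz, $|\ell^{(N)}(\BZ(1))-\ell^{(N)}(\BZ(0))|\le|\log F_i(\BZ(0),\xi(0))|$ uniformly in $N$, so the integrand is dominated by $\psi_i\in L^1(\mu)$, and it converges pointwise $\mu$-a.e.\ to $\E[\log F_i(\bz,\xi(1))]$ as $N\to\infty$ (monotone convergence inside $\E_\bz$, using $\E_\bz|\log X_i(1)|<\infty$). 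Dominated convergence then gives $r_i(\mu)=\lim_{N\to\infty}\int\big(\E_\bz[\ell^{(N)}(\BZ(1))]-\ell^{(N)}(\bz)\big)\,\mu(d\bz)=0$; only invariance of $\mu$ was used, so this holds for each $i$. The one genuine obstacle lies here: the naive identity $\int\E_\bz[\log X_i(1)]\,d\mu=\int\log X_i\,d\mu$ is only formal, since $\int\log^-X_i\,d\mu$ may be $+\infty$, and it is precisely the truncation $\ell^{(N)}$, together with the $N$-uniform increment bound and the $\mu$-integrability of $\E|\log F_i(\cdot,\xi)|$ (the one indispensable use of A3)), that makes the passage $N\to\infty$ rigorous.
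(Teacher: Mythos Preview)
Your argument is correct. The moment bounds and the integrability criterion proceed essentially as in the paper (the paper merely labels the latter ``standard'' and cites \cite{HN16}, whereas you spell out a clean truncation $V\wedge K$ followed by $t\to\infty$, then $K\to\infty$; note only that your appeal to ``local boundedness of $\psi$'' is an implicit regularity assumption not written in the lemma statement, though it is the intended reading).

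The genuine difference is in the proof that $r_i(\mu)=0$. The paper establishes $\int\E|\log F_i(\cdot,\xi)|^k\,d\mu<\infty$, invokes the strong law of large numbers for martingales to get
\[
\lim_{T\to\infty}\frac1T\sum_{t=0}^T\bigl(\log F_i(\BZ(t+1),\xi(t+1))-P\log F_i(\BZ(t))\bigr)=0\quad\text{$\mu$-a.s.},
\]
and then cites Lemma~3 and Proposition~1 of \cite{BS19} (which combine this with Birkhoff's ergodic theorem applied to $\log X_i(T)/T$) to conclude. Your route is more direct and entirely self-contained: you bypass the pathwise limit theorems by working with the truncation $\ell^{(N)}=\log x_i\vee(-N)$, using the $1$-Lipschitz increment bound $|\ell^{(N)}(\BZ(1))-\ell^{(N)}(\bz)|\le|\log F_i(\bz,\xi)|$ (uniform in $N$) together with $\mu$-integrability of $\bz\mapsto\E|\log F_i(\bz,\xi)|$ to pass to the limit under $\int\cdot\,d\mu$ by dominated convergence. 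Your approach is shorter and uses only invariance of $\mu$ rather than ergodicity, at the cost of not yielding the almost-sure martingale limit \eqref{e5-A1a}, which the paper records separately and reuses later.
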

\begin{proof}

By Assumption A3) we have $PV(\bz)\leq \rho V(\bz)+C$. Using
this and the Markov property yields
$\E_\bz(V(\BZ(t+1))\leq\rho \E_\bz V(\BZ(t))+C$. As a result
$$\E_\bz(V(\BZ(t))\leq \rho^t
V(\bz)+C\sum_{s=0}^t\rho^s\leq\rho^t V(\bz)+\dfrac{C}{1-\rho}.
$$
On the other hand Assumption A3) and the above imply that
\begin{equation}\label{er1a1}
\begin{aligned}
\E_\bz \left[V(\BZ(t+1)) h(\BZ(t), \xi(t))\right]
\leq& \rho \E_\bz V(\BZ(t))+C\\
\leq&\rho^{t} V(\bz)+\dfrac{C}{1-\rho}.
\end{aligned}
\end{equation}
Since by A3) part i) $V(\bz)\geq 1$ we also get that
$$
\begin{aligned}
\E_\bz h(\BZ(t), \xi(t))
\leq&\rho^{t+1} V(\bz)+\dfrac{C}{1-\rho}.
\end{aligned}
$$
This implies that if $\mu$ is an ergodic invariant probability
measure then
$$\E_\mu h(\BZ(t), \xi(t)) = \int \E h(\bz, \xi(1))
\mu(d\bz)\leq \frac{C}{1-\rho}.$$

Since $$|\log F_i(\bz,\xi)|^k\leq c_k \max\left\{F_i(\bz,\xi),\frac{1}{F_i(\bz,\xi)}\right\}^{\gamma_3}\leq c_k h(\bz,\xi)$$
for some $c_k>0$,  we see that for any $k\in\N$ there exists
$C_k>0$ such that
$$ \int \E |\log F_i(\bz,\xi(1))|^k\mu(d\bz)\leq C_k.
$$
The strong law of large numbers for martingales implies that for
$\mu$ almost every $\bz$ we have
 \begin{equation}\label{e5-A1a}
 \lim_{T\to\infty}
\frac1T \sum_0^T\left(\log F_i(\BZ(t+1))-P \log
F_i(\BZ(t))\right)=0, \,\text{when}~ \BZ(0)=\bz.
\end{equation}
Having \eqref{e5-A1a},
we can follow the arguments by \cite{BS19}[Lemma 3 and
Proposition 1] to obtain that
if $\mu(\Se^{I}_+)=1$ then
$r_i(\mu)=0$ for any $i\in I$.

Finally, because of the boundedness \eqref{er1a1}, it is standard to show that if a function $\psi$ satisfies $$\lim_{\bz\to\infty}
\frac{\psi(\bz)}{\E V(\bz,\xi(t))h(\bz,\xi(t))}=0$$
then $\psi$ is $\mu$-integrable for any invariant probability
measure $\mu$. See Lemma 3.3 in \cite{HN16} for a similar proof.
\end{proof}

\begin{lm}\label{lmA.2}
There exist $M, C_2, \gamma_4>0,\rho_2\in(0,1)$ such that
$$
\E_\bz\left[ V(\BZ(1))\prod_{i=1}^n X_i^{p_i}(1)\right]\leq
\left(\1_{\{|\bz|<M\}}(C_2-\rho_2)+\rho_2\right)
V(\bz)\prod_{i=1}^nx_i^{p_i}, ~\bz\in \Se
$$
for any $\bp=(p_1,\dots,p_n)\in\R^n$ satisfying
\begin{equation}\label{e:p}
|\bp|_1:=\sum|p_i|\leq\gamma_4.
\end{equation}
\end{lm}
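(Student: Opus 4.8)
The plan is to exhibit the estimate as a perturbation of Assumption A3)-iii). First I would observe that
\[
V(\bz)\prod_{i=1}^n x_i^{p_i}
\]
evaluated at the image point $\BZ(1)=(\bx\circ F(\bz,\xi(1)),G(\bz,\xi(1)))$ equals
\[
V(\BZ(1))\prod_{i=1}^n x_i^{p_i}\prod_{i=1}^n F_i(\bz,\xi(1))^{p_i},
\]
so that, after taking $\E_\bz$,
\[
\E_\bz\!\left[V(\BZ(1))\prod_{i=1}^n X_i^{p_i}(1)\right]
=\Big(\prod_{i=1}^n x_i^{p_i}\Big)\,
\E\!\left[V(\bx\circ F(\bz,\xi(1)),G(\bz,\xi(1)))\prod_{i=1}^n F_i(\bz,\xi(1))^{p_i}\right].
\]
Thus everything reduces to controlling the expectation on the right by $\big(\1_{\{|\bz|<M\}}(C_2-\rho_2)+\rho_2\big)V(\bz)$ for all $\bp$ with $|\bp|_1\le\gamma_4$.

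The key step is the pointwise bound on the extra factor $\prod_i F_i(\bz,\xi)^{p_i}$ in terms of $h(\bz,\xi)$. Since each factor satisfies $F_i(\bz,\xi)^{|p_i|}\le \max\{F_i,1/F_i\}^{|p_i|}$, and since $\max\{F_i,1/F_i\}\ge 1$, we get
\[
\Big|\prod_{i=1}^n F_i(\bz,\xi)^{p_i}\Big|\le \prod_{i=1}^n\Big(\max_j\max\{F_j(\bz,\xi),\tfrac1{F_j(\bz,\xi)}\}\Big)^{|p_i|}
=\Big(\max_j\max\{F_j,\tfrac1{F_j}\}\Big)^{|\bp|_1}\le h(\bz,\xi)
\]
provided $|\bp|_1\le\gamma_3$, where $h$ is the function from \eqref{e:h}. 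Hence, choosing $\gamma_4:=\gamma_3$ (or $\gamma_4:=\min\{\gamma_3,1\}$ if one also wants $\prod x_i^{p_i}$-type remainders elsewhere to be tame), A3)-iii) gives directly
\[
\E\!\left[V(\bx\circ F(\bz,\xi(1)),G(\bz,\xi(1)))\prod_{i=1}^n F_i(\bz,\xi(1))^{p_i}\right]
\le \E\big[V(\bx\circ F,G)\,h(\bz,\xi(1))\big]\le \rho V(\bz)+C.
\]

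It remains to convert the additive $+C$ into the multiplicative form in the statement, using A3)-ii) ($V\to\infty$ at infinity) and the fact that $V\ge1$. Pick $\rho_2\in(\rho,1)$. Since $V(\bz)\to\infty$ as $|\bz|\to\infty$, there is $M>0$ such that $C\le(\rho_2-\rho)V(\bz)$ whenever $|\bz|\ge M$; for such $\bz$ we obtain $\rho V(\bz)+C\le\rho_2 V(\bz)$. For $|\bz|<M$ we simply bound $\rho V(\bz)+C\le(\rho+C)V(\bz)\le C_2 V(\bz)$ with $C_2:=\rho+C$ (using $V\ge1$). Combining the two regimes yields
\[
\E\!\left[V(\bx\circ F,G)\prod_i F_i^{p_i}\right]\le\big(\1_{\{|\bz|<M\}}(C_2-\rho_2)+\rho_2\big)V(\bz),
\]
and multiplying back by $\prod_i x_i^{p_i}\ge0$ gives the claim. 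The only mild subtlety — the "main obstacle," though it is minor here — is making sure the exponent budget is spent consistently: the single constant $\gamma_4$ must be small enough that the $h$-domination holds uniformly over the whole ball $|\bp|_1\le\gamma_4$, which is exactly why one takes $\gamma_4\le\gamma_3$; no integrability issue arises because the domination is pointwise in $\xi$ before taking expectations, so A3)-iii) is applied verbatim.
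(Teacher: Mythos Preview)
Your proof is correct and follows essentially the same route as the paper: factor out $\prod_i x_i^{p_i}$, dominate $\prod_i F_i(\bz,\xi)^{p_i}$ pointwise by $h(\bz,\xi)$, apply A3)-iii), and then absorb the additive constant $C$ into a multiplicative bound using $V\to\infty$ and $V\ge1$. The only cosmetic difference is that the paper uses the slightly sharper intermediate inequality $\prod_i F_i^{p_i}\le 1+\tfrac{|\bp|_1}{\gamma_3}h$ (via $x^p\le 1+px$) and then chooses $\gamma_4<\gamma_3$ so that $(1+\gamma_4/\gamma_3)\rho_1<1$, whereas you bound $\prod_i F_i^{p_i}\le h$ directly and may take $\gamma_4=\gamma_3$; for the present lemma your simpler bound is entirely sufficient.
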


\begin{proof}
%

We note that for $w_1,\dots,w_n>0$
$$
\begin{aligned}
\prod_{i=1}^n w_i^{p_i}\leq& \prod_{i=1}^n
\left(w_i^{-1}\vee{w_i}\right)^{|p_i|}\\
\leq& \left(\max_{i=1,\dots,n}\left\{
w_i^{-1}\vee{w_i}\right\}\right)^{\sum_i|p_i|}\\
\leq&
1+\frac{\sum_i|p_i|}{\gamma_3}\left(\max_{i=1,\dots,n}\left\{
w_i^{-1}\vee{w_i}\right\}\right)^{\gamma_3}
\end{aligned}
$$
if $\sum_i |p_i|<\gamma_3.$
The last inequality follows from the inequality $x^p\leq 1+px$ for $x\geq 0, p\in(0,1)$.
Thus,

$$
\prod_{i=1}^n F_i^{p_i}(\bz,\xi)
\leq 1+ \frac{|\bp|_1}{\gamma_3} h(\bz,\xi).
$$
Since $\lim_{\bz\to\infty} V(\bz)=\infty$, we can select
$\rho_1\in(\rho,1)$, $M>0$ and $C_1>0$ such that
$$\rho V(\bz)+C\leq (C_1\1_{\{|z|<M\}}+\rho_1)V(\bz).$$
Let $\gamma_4\in(0,\gamma_3)$ be such that
$\left(1+\frac{\gamma_4}{\gamma_3}\right)\rho_1=:\rho_2<1.$
There exists $C_2>0$ satisfying
$$(C_1\1_{\{|z|<M\}}+\rho_1)\left(1+\frac{\gamma_4}{\gamma_3}\right)
\leq (C_2-\rho_2)\1_{\{|z|<M\}}+\rho_2.$$
The above estimates together with \eqref{e:system_discrete},
Lemma \ref{lmA.1}, and \eqref{e:p} yield

$$
\begin{aligned}
\E_\bz \left[V\left(\BZ(1)\right)\prod_{i=1}^n X_i^{p_i}(1)\right]
=&\prod_{i=1}^nx_i^{p_i} \E \left[V(\bx \circ F(\bz,\xi),
G(\bz,\xi))\prod_{i=1}^n F_i^{p_i}(\bz,\xi)\right]\\
\leq&\prod_{i=1}^nx_i^{p_i} \E \left[V(\bx \circ F(\bz,\xi),
G(\bz,\xi))\left(1+\frac{|\bp|_1}{\gamma_3}h(\bz,\xi)\right)\right]\\
\leq & \prod_{i=1}^nx_i^{p_i}(\rho
V(\bz)+C)\left(1+\frac{|\bp|_1}{\gamma_3}\right)\\
\leq & \prod_{i=1}^nx_i^{p_i}(C_1\1_{\{|z|<M\}}+\rho_1)
V(\bz)\left(1+\frac{|\bp|_1}{\gamma_3}\right)\\
\leq &((C_2-\rho)\1_{\{|z|<M\}}+\rho_2)
\prod_{i=1}^nx_i^{p_i}V(\bz).
\end{aligned}
$$
\end{proof}
We will denote by $B_M :=\{\bu\in\Se:\|\bu\|\leq M\}$ the
closed ball of radius $M>0$ around the origin.
Let $\M$ be the set of ergodic invariant probability measures of
$\BX$ supported on the boundary
$\Se_0:=\partial\R^n_+\times\R^{\kappa_0}$. Remember that for a subset
$\wtd\M\subset \M$ we denote by $\Conv(\wtd\M)$ the convex hull of
$\wtd\M$,
that is the set of probability measures $\pi$ of the form
$\pi(\cdot)=\sum_{\mu\in\wtd\M}p_\mu\mu(\cdot)$
with $p_\mu\geq 0,\sum_{\mu\in\wtd\M}p_\mu=1$.

Consider $\mu\in\M$.
Assume $\mu(\{0\}\times\R^{\kappa_0})=0$. Since $\mu$ is ergodic there exist $0<n_1<\dots< n_k\leq n$
such that $\suppo(\mu)\subset
S^\mu:=\R^\mu_+\times\R^{\kappa_0}$ where
$$\R_+^\mu:=\{(x_1,\dots,x_n)\in\R^n_+: x_i=0\text{ if } i\in
I_\mu^c\}$$
for
$I_\mu:=\Se(\mu)=\{n_1,\dots, n_k\}$ and
$I_\mu^c:=\{1,\dots,n\}\setminus\{n_1,\dots, n_k\}$.

$$\R_+^{\mu,\circ}:=\{(x_1,\dots,x_n)\in\R^n_+: x_i=0\text{ if }
i\in I_\mu^c\text{ and }x_i>0\text{ if }x_i\in I_\mu\}$$ and
$\partial\R_+^{\mu}:=\R_+^\mu\setminus\R_+^{\mu,\circ}.$

The following condition ensures persistence.
\begin{asp}\label{a.coexn}
For any $\mu\in\Conv(\M)$ one has
$$\max_i r_i(\mu)>0,$$
where
$$ r_i(\mu):=\int_{\Se_0}\left[\E \ln
F_i(\bz,\xi)\right]\mu(d\bz).$$

\end{asp}

%
%
Remember that the occupation measures are defined as
\[
\Pi_{t,\bz}(\cdot)=\frac{1}{t}\sum_{s=0}^t\PP_\bz(\BZ(s)\in\cdot)\,ds,~\bz\in\Se,
t\in\Z_+
\]
\begin{lm}\label{lm2.4}
Suppose the following
\begin{itemize}
\item The sequences $(\bz_k)_{k\in N}\subset
\R_+^n\times\R^{\kappa_0}, (T_k)_{k\in \N}\subset \N$ are such
that $\|\bz_k\|\leq M$, $T_k>1$ for all $k\in \N$ and
$\lim_{k\to\infty}T_k=\infty$.

\item The sequence $(\Pi_{T_k,\bz_k})_{k\in \N}$
converges weakly to an invariant probability measure
$\pi$.

\item The function $h:\R^n_+\times\R^{\kappa_0}\to\R$ is any
continuous function satisfying
$$\lim_{\bz\to\infty}\frac{|h(\bz)|}{V(\bz)\max_{i=1}^n\{(x_i^{\gamma_3}\wedge
x^{-\gamma_3}_i)\}}
=0
$$
\end{itemize}
Then one has
\[\lim_{k\to\infty}\int_{\R^n_+}h(\bx)\Pi_{T_k,\bz_k}(d\bx)=
\int_{\R^n_+}h(\bx)\pi(d\bx).\]
\end{lm}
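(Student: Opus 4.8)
\textit{The plan} is to deduce the convergence of the integrals from the assumed weak convergence $\Pi_{T_k,\bz_k}\Rightarrow\pi$ by proving that the continuous (and possibly unbounded) function $h$ is uniformly integrable along the family $\{\Pi_{T_k,\bz_k}\}_k$, after which the conclusion is the standard passage from weak convergence to convergence of integrals. Concretely, for $A>0$ set $h_A:=(h\wedge A)\vee(-A)$, which is bounded and continuous; weak convergence gives $\int h_A\,d\Pi_{T_k,\bz_k}\to\int h_A\,d\pi$, while for any $\nu\in\{\Pi_{T_k,\bz_k}\}_k\cup\{\pi\}$ one has $|h-h_A|\le|h|\,\1_{\{|h|>A\}}$, so $\int h\,d\nu-\int h_A\,d\nu$ is bounded in absolute value by $\int_{\{|h|>A\}}|h|\,d\nu$, which will go to $0$ as $A\to\infty$ uniformly in $k$; a $3\eps$-argument then gives $\int_\Se h\,d\Pi_{T_k,\bz_k}\to\int_\Se h\,d\pi$ (the $\int_{\R^n_+}$ in the statement being shorthand for $\int_\Se$).

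The two quantitative ingredients are a uniform moment bound and a domination of $h$. For the moment bound: since $\|\bz_k\|\le M$ and $V$ is bounded on the compact ball $B_M$, Lemma \ref{lmA.1} gives $\E_{\bz_k}[V(\BZ(s))]\le\rho^sV(\bz_k)+\tfrac{C}{1-\rho}\le\sup_{B_M}V+\tfrac{C}{1-\rho}$ for every $s\ge0$; averaging over $s=0,\dots,T_k$ and using $\sum_s\rho^s\le(1-\rho)^{-1}$ yields a constant $B$ with $\sup_k\int_\Se V\,d\Pi_{T_k,\bz_k}\le B<\infty$. Writing $W(\bz):=V(\bz)\max_i\{x_i^{\gamma_3}\wedge x_i^{-\gamma_3}\}$, each factor $x_i^{\gamma_3}\wedge x_i^{-\gamma_3}$ is $\le1$, so $W\le V$ and hence $\sup_k\int_\Se W\,d\Pi_{T_k,\bz_k}\le B$ as well (if one insists on keeping the finer denominator, the same bound instead follows from Lemma \ref{lmA.2} applied to the exponent vectors $\pm\gamma_3 e_i$, possibly after shrinking $\gamma_3$ so that $|\bp|_1\le\gamma_4$). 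For the domination: the hypothesis on $h$ says precisely that $|h(\bz)|/W(\bz)\to0$ as $|\bz|\to\infty$, so for each $\eps>0$ there is $R_\eps$ with $|h|\le\eps W$ on $\{|\bz|\ge R_\eps\}$; since $h$ is continuous and $\{|\bz|\le R_\eps\}$ is compact, $C_\eps:=\sup_{|\bz|\le R_\eps}|h(\bz)|<\infty$, whence $|h(\bz)|\le\eps W(\bz)+C_\eps$ on all of $\Se$. In particular $|h|\le\eps V+C_\eps$, and integrating $PV\le\rho V+C$ against $\pi$ gives $\int_\Se V\,d\pi\le C/(1-\rho)$, so $h\in L^1(\pi)$.

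Uniform integrability now follows quickly: for $A>0$,
\[
\int_{\{|h|>A\}}|h|\,d\Pi_{T_k,\bz_k}\le\eps\int_\Se W\,d\Pi_{T_k,\bz_k}+\frac{C_\eps}{A}\int_\Se|h|\,d\Pi_{T_k,\bz_k}\le\eps B+\frac{C_\eps(\eps B+C_\eps)}{A},
\]
where the first inequality uses $|h|\le\eps W+C_\eps$ together with $\Pi_{T_k,\bz_k}(\{|h|>A\})\le A^{-1}\int_\Se|h|\,d\Pi_{T_k,\bz_k}$, and the second uses $\int_\Se W\,d\Pi_{T_k,\bz_k}\le B$ and $\int_\Se|h|\,d\Pi_{T_k,\bz_k}\le\eps B+C_\eps$. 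Hence $\limsup_{A\to\infty}\sup_k\int_{\{|h|>A\}}|h|\,d\Pi_{T_k,\bz_k}\le\eps B$, and since $\eps>0$ is arbitrary, $h$ is uniformly integrable along $\{\Pi_{T_k,\bz_k}\}_k$. Feeding this into the truncation scheme of the first paragraph completes the proof.

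\textit{The only step with real content} is the uniform-in-$k$ moment bound, which is exactly where the hypothesis $\|\bz_k\|\le M$ and Assumption A3 (through Lemmas \ref{lmA.1}--\ref{lmA.2}) are genuinely used; everything else is the textbook fact that weak convergence together with uniform integrability of a continuous function implies convergence of its integrals. The one piece of bookkeeping I anticipate is reconciling the exponent $\gamma_3$ in the statement with the admissible range $|\bp|_1\le\gamma_4$ of Lemma \ref{lmA.2} if one invokes the finer estimate — but, as noted, $W\le V$ makes that detour unnecessary.
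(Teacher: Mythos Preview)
Your argument is correct and is exactly the standard uniform-integrability route one expects here: bound $\sup_k\int V\,d\Pi_{T_k,\bz_k}$ using Lemma~\ref{lmA.1} and $\|\bz_k\|\le M$, dominate $|h|\le\eps W+C_\eps\le\eps V+C_\eps$ from the growth hypothesis, conclude uniform integrability, and finish with a truncation/$3\eps$ argument. The paper does not give an independent proof of Lemma~\ref{lm2.4}; it simply cites Lemma~3.4 of \cite{HN16}, whose proof is precisely this uniform-integrability argument in the continuous-time setting, so your approach matches the intended one. Your observation that $W\le V$ (since each $x_i^{\gamma_3}\wedge x_i^{-\gamma_3}\le1$) is a nice shortcut that lets you avoid invoking Lemma~\ref{lmA.2} and sidesteps the $\gamma_3$ versus $\gamma_4$ bookkeeping entirely.
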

\begin{proof}
The proof is almost identical to that of Lemma 3.4 by
\cite{HN16} and is therefore omitted.
\end{proof}
It is shown in \cite[Lemma 4]{SBA11} by the min-max principle
that Assumption \ref{a.coexn} is equivalent to the existence of
$\mathbf p>0$ such that
\begin{equation}\label{e.p}
\min\limits_{\mu\in\M}\left\{\sum_{i}p_i
r_i(\mu)\right\}:=2r^*>0.
\end{equation}
By rescaling if necessary, we can assume that
$|\bp|_1=\gamma_4$.
\begin{lm}\label{lm3.1}
Suppose that Assumption \ref{a.coexn} holds. Let $\bp$ and $r^*$
be as in \eqref{e.p}.
There exists an integer $T^*>0$ such that, for any $T>T^*$,
$\bx\in\partial\R^n_+, \bz= (\bx,\by) \in B_M$ one has
\begin{equation}\label{lm3.1-e0}
\sum_{t=0}^T\E_\bz\left(\ln V(\BZ(t+1))-\ln V(\BZ(t))-\sum
p_i\ln F_i(\BZ(t), \xi(t))\right)\leq-r^*(T+1).
\end{equation}
\end{lm}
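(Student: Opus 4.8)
The plan is to split the sum into its telescoping part, which is harmless, and the part carrying the $\ln F_i$ contributions, which is controlled by the persistence hypothesis \eqref{e.p} through a compactness argument on occupation measures living on the invariant boundary $\Se_0$; throughout, fix $\bp$ and $r^*$ as in \eqref{e.p}. First I would dispose of the telescoping terms: since $V\ge 1$, Jensen's inequality together with the moment bound $\E_\bz V(\BZ(t))\le\rho^t V(\bz)+C/(1-\rho)$ from Lemma \ref{lmA.1} gives, for every $\bz\in B_M$ and every $T$,
\[
\sum_{t=0}^{T}\E_\bz\big(\ln V(\BZ(t+1))-\ln V(\BZ(t))\big)=\E_\bz\ln V(\BZ(T+1))-\ln V(\bz)\le \ln\Big(\sup_{B_M}V+\tfrac{C}{1-\rho}\Big)=:K_1,
\]
a finite constant depending only on $M,\rho,C$.

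Next, since $\bx\in\partial\R_+^n$ the process started at $\bz=(\bx,\by)$ stays in the invariant set $\Se_0$ for all time, and because $\xi(t)$ is independent of $\BZ(t)$ and distributed as $\xi(1)$ we have $\E_\bz\ln F_i(\BZ(t),\xi(t))=\E_\bz g_i(\BZ(t))$ with $g_i(\bz):=\E\ln F_i(\bz,\xi(1))$. Writing $\Pi_{T,\bz}$ for the occupation measure as in Lemma \ref{lm2.4} (here supported on $\Se_0$), the left-hand side of \eqref{lm3.1-e0} is at most $K_1-T\sum_i p_i\int_{\Se_0}g_i\,d\Pi_{T,\bz}$, so it suffices to find $T^*$ with
\[
\sum_i p_i\int_{\Se_0}g_i\,d\Pi_{T,\bz}\ \ge\ r^*+\frac{r^*+K_1}{T}\qquad\text{for all }T>T^*,\ \bz=(\bx,\by)\in B_M,\ \bx\in\partial\R_+^n .
\]

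I would prove this by contradiction. If it failed there would be $T_k\to\infty$ and $\bz_k\in B_M\cap\Se_0$ with $\sum_i p_i\int g_i\,d\Pi_{T_k,\bz_k}<r^*+o(1)$. The family $(\Pi_{T_k,\bz_k})_k$ is tight, because $\int V\,d\Pi_{T,\bz}\le \frac{V(\bz)}{(T+1)(1-\rho)}+\frac{C}{1-\rho}$ is bounded on $B_M$ while $V\to\infty$ at infinity; so along a subsequence $\Pi_{T_k,\bz_k}\to\pi$ weakly. The standard estimate $\big|\int Pf\,d\Pi_{T,\bz}-\int f\,d\Pi_{T,\bz}\big|\le 2\|f\|_\infty/T$ for $f\in C_b(\Se)$, combined with the Feller property of $\BZ$, forces $\pi P=\pi$, and $\pi(\Se_0)=1$ since $\Se_0$ is closed and each $\Pi_{T_k,\bz_k}$ is supported on it; hence $\pi\in\Conv(\M)$ and \eqref{e.p} yields $\sum_i p_i r_i(\pi)\ge 2r^*$. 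On the other hand the $g_i$ are continuous (A1 and dominated convergence) and, using $\ln u\le (e\gamma)^{-1}u^{\gamma}$ for $u\ge1$ with $u=\max_j\max\{F_j,1/F_j\}$ together with A3-iii), satisfy $|g_i(\bz)|\le C_\gamma\,V(\bz)^{\gamma/\gamma_3}$ for any small $\gamma>0$; after shrinking $\gamma_3$ if necessary (which only weakens A3) this growth is slow enough that Lemma \ref{lm2.4} applies and gives $\int g_i\,d\Pi_{T_k,\bz_k}\to\int g_i\,d\pi=r_i(\pi)$. Then $\sum_i p_i\int g_i\,d\Pi_{T_k,\bz_k}\to\sum_i p_i r_i(\pi)\ge 2r^*>r^*$, contradicting $\sum_i p_i\int g_i\,d\Pi_{T_k,\bz_k}<r^*+o(1)$; this proves the displayed inequality and hence the lemma.

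The step I expect to be the main obstacle is the passage to the limit in the integrals of the \emph{unbounded} functions $g_i$ against the occupation measures: one must check that each $g_i$ obeys the growth condition required by Lemma \ref{lm2.4}, and at the same time verify that the weak limit $\pi$ of the boundary occupation measures is genuinely an invariant probability measure concentrated on $\Se_0$, so that the persistence assumption \eqref{e.p} can be invoked for $\pi$. The telescoping estimate and the tightness bound are routine by comparison.
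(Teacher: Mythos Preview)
Your proposal is correct and follows essentially the same route as the paper: first control the telescoping $\ln V$ contribution uniformly over $B_M$ via Lemma~\ref{lmA.1}, then run a contradiction argument on the occupation measures $\Pi_{T_k,\bz_k}$ along $\Se_0$, using tightness, invariance of the weak limit, and Lemma~\ref{lm2.4} to pass the unbounded integrands $g_i$ to the limit and contradict \eqref{e.p}. The paper's own proof is terser---it simply records the two limits and defers the compactness/contradiction step to \cite[Lemma~4.1]{HN16}---so your more explicit account of tightness, invariance of $\pi$, and the growth estimate on $g_i$ fills in exactly what the paper leaves implicit; note, incidentally, that Jensen's inequality applied to $|\ln F_i|\le \gamma_3^{-1}\ln h(\bz,\xi)$ already gives the sharper logarithmic bound $|g_i(\bz)|\le \gamma_3^{-1}\ln\big(\rho V(\bz)+C\big)$, which makes the verification of the hypothesis of Lemma~\ref{lm2.4} cleaner and removes any need to ``shrink $\gamma_3$''.
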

\begin{proof}
In view of Lemma \ref{lmA.1},
$$\sup_{t\in\N, \|\bz\|\leq M} \E_\bz V(\BZ(t))<\infty$$
which implies, since
\[
\frac{\E_\bz \ln V(\BZ(T))}{T+1}\leq \frac{\E_\bz
V(\BZ(T))}{T+1}
\]
that
$$\lim_{T\to\infty} \sup_{\|\bz\|\leq M}
\frac1{T+1}\sum_{t=0}^T\E_\bz\left(\ln V(\BZ(t+1))-\ln
V(\BZ(t))\right)=0.$$

With \eqref{e.p} and Lemma \ref{lmA.2} and Lemma \ref{lm2.4},
we can argue by contradiction in the same manner as in Lemma 4.1
by \cite{HN16} to show that
$$\limsup_{T\to\infty} \frac1{T+1}\sup_{\|\bz\|\leq M}
\sum_{t=0}^T\E_\bz\left(-\sum p_i\ln F_i(\BZ(t),
\xi(t))\right)\leq-2r^*.$$
Combining the two above limits finishes the proof.
\end{proof}
Let $n^*\in\N$ be such that
\begin{equation}\label{e:n*}
\rho_2^{1-n^*}>C_2.
\end{equation}

\begin{prop}\label{prop2.1}
Define $U:\Se_+\to\R_+$ by
$$U(\bz)=V(\bz)\prod_{i=1}^nx_{i}^{-p_i}$$ with $\bp$ and $r^*$
satisfying \eqref{e.p} and $T^*>0$ satisfying the assumptions of
Lemma \ref{lm3.1}.
There exist numbers $\theta\in\left(0,\frac{\gamma_4}2\right)$,
$K_\theta>0$, such that for any $T\in[T^*,n^*T^*]\cap \Z$ and
$\bz\in\Se_+, \|\bz\|\leq M$,
$$\E_\bz U^\theta(\BZ(T))\leq
U^\theta(\bx)\exp\left(-\frac{1}{2}\theta r^*T\right)
+K_\theta.$$
\end{prop}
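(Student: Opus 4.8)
The plan is to exploit the multiplicative decomposition
\[
U^\theta(\BZ(T))=U^\theta(\bz)\exp(\theta S_T),\qquad
S_T:=\sum_{t=0}^{T-1}\Big(\ln V(\BZ(t+1))-\ln V(\BZ(t))-\sum_{i=1}^n p_i\ln F_i(\BZ(t),\xi(t))\Big),
\]
which is legitimate for $\bz\in\Se_+$: then $\BZ(t)\in\Se_+$ almost surely (each $F_i>0$ by A1), and telescoping the factors $X_i(t+1)/X_i(t)=F_i(\BZ(t),\xi(t))$ shows $S_T=\ln U(\BZ(T))-\ln U(\bz)$. Everything thus reduces to bounding $\E_\bz e^{\theta S_T}$ for $\bz\in\Se_+\cap B_M$ and $T\in[T^*,n^*T^*]\cap\Z$, and I would do this by treating separately $\dist(\bz,\Se_0)<\delta$ and $\dist(\bz,\Se_0)\ge\delta$ for a small $\delta$ fixed below.

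First I would record uniform exponential moments of $S_T$. For $\bq\in\R^n$ with $|\bq|_1\le\gamma_4$ set $\Phi_\bq(\bu):=V(\bu)\prod_i u_i^{q_i}$. Lemma \ref{lmA.2}, applied with $\bp$ replaced by $\bq$, gives $\E[\Phi_\bq(\BZ(1))\mid\BZ(0)=\bu]\le C_2\,\Phi_\bq(\bu)$ for all $\bu\in\Se$, so the Markov property yields $\E_\bz\Phi_\bq(\BZ(T))\le C_2^{\,n^*T^*}\Phi_\bq(\bz)$ (taking $C_2\ge1$ without loss). Using this with $\bq=\mp\beta\bp$ for $\beta\in(0,1]$, together with $V\ge1$ on $\Se$ (so $V^\beta\le V$, $V^{-\beta}\le1$) and $V_M:=\sup_{\|\bu\|\le M}V(\bu)<\infty$ (finite, as already used in the proof of Lemma \ref{lm3.1}), one obtains, uniformly over $\bz\in\Se_+\cap B_M$,
\[
\E_\bz e^{\beta S_T}\le C_2^{\,n^*T^*}V_M,\qquad \E_\bz e^{-\beta S_T}\le C_2^{\,n^*T^*}V_M^2\qquad(\beta\in(0,1]).
\]
In particular $\E_\bz e^{|S_T|}\le\Lambda:=C_2^{\,n^*T^*}V_M(1+V_M)$, and since $y^2e^{-(1-\theta)y}\le16e^{-2}$ for $y\ge0$ and $\theta\le\tfrac12$, I get $\sup_{\bz\in\Se_+\cap B_M}\E_\bz\big[S_T^2e^{\theta|S_T|}\big]\le A_3:=16e^{-2}\Lambda$ for all $\theta\in(0,\tfrac12]$ and $T\in[T^*,n^*T^*]$.

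Next I would transfer the drift estimate of Lemma \ref{lm3.1} to a neighborhood of $\Se_0$. For $\bz\in\Se$ put $Q_T(\bz):=\sum_{t=0}^{T-1}\E_\bz\big(\ln V(\BZ(t+1))-\ln V(\BZ(t))-\sum_i p_i\ln F_i(\BZ(t),\xi(t))\big)$, which is finite by Lemma \ref{lmA.1} and equals $\E_\bz S_T$ on $\Se_+$; by the Feller property of $\BZ$ and the moment bounds of Lemma \ref{lmA.1} (which give local uniform integrability of $\ln V(\BZ(t))$ and of $\ln F_i(\BZ(t),\xi(t))$, as $0\le\ln V\le V$ and $|\ln F_i|^k\le c_kh$), $Q_T$ is continuous on $\Se$. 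Lemma \ref{lm3.1} gives $Q_T(\bz)\le-r^*T$ for $\bz\in\Se_0\cap B_M$ and $T$ in the relevant range (enlarging $T^*$ if necessary); since $\{T^*,\dots,n^*T^*\}$ is finite and $\Se_0\cap B_M$ is compact, continuity produces $\delta\in(0,1)$ with $Q_T(\bz)\le-\tfrac34r^*T$ whenever $\bz\in B_M$, $\dist(\bz,\Se_0)<\delta$, $T\in[T^*,n^*T^*]$. Now fix $0<\theta<\min\{\gamma_4/2,\ 1/2,\ r^*T^*/(4A_3)\}$. If $\bz\in\Se_+\cap B_M$ and $\dist(\bz,\Se_0)<\delta$, then $e^x\le1+x+x^2e^{|x|}$ and $\E_\bz S_T=Q_T(\bz)$ give
\[
\E_\bz e^{\theta S_T}\le1+\theta Q_T(\bz)+\theta^2A_3\le1-\tfrac34\theta r^*T+\theta^2A_3\le1-\tfrac12\theta r^*T\le e^{-\frac12\theta r^*T},
\]
so $\E_\bz U^\theta(\BZ(T))=U^\theta(\bz)\E_\bz e^{\theta S_T}\le U^\theta(\bz)e^{-\frac12\theta r^*T}$. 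If instead $\dist(\bz,\Se_0)\ge\delta$, then $\prod_i x_i^{-\theta p_i}\le\delta^{-\gamma_4}$, and the first step with $\bq=-\theta\bp$ gives $\E_\bz U^\theta(\BZ(T))\le\E_\bz\Phi_{-\theta\bp}(\BZ(T))\le C_2^{\,n^*T^*}\Phi_{-\theta\bp}(\bz)\le C_2^{\,n^*T^*}V_M\delta^{-\gamma_4}=:K_\theta$. Combining the two cases (adding the nonnegative $K_\theta$ in the first) yields the assertion with this $\theta$ and $K_\theta$.

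The step I expect to be the main obstacle is the continuity of $Q_T$, i.e.\ propagating the boundary drift estimate of Lemma \ref{lm3.1} off $\Se_0$ while $\ln V$ and the $\ln F_i$ are unbounded; this is exactly where the Feller property has to be combined with the uniform-integrability consequences of Lemma \ref{lmA.1}. A secondary point requiring care is that all the exponential-moment estimates must be uniform over the set $\Se_+\cap B_M$, whose closure is not compact — which works only because Lemma \ref{lmA.2} is a pointwise inequality valid on all of $\Se$, the boundary $\Se_0$ included.
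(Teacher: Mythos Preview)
Your argument is correct and follows essentially the same route as the paper: decompose $U^\theta(\BZ(T))=U^\theta(\bz)e^{\theta S_T}$, use Lemma~\ref{lmA.2} to get two-sided exponential moments of $S_T$ uniformly over $B_M$, transfer the drift bound of Lemma~\ref{lm3.1} off $\Se_0$ by Feller continuity plus uniform integrability, and then split into the near-boundary and away-from-boundary cases. The only noteworthy difference is that where the paper invokes \cite[Lemma~3.5]{HN16} to bound the second derivative of the log-MGF $\theta\mapsto\ln\E_\bz e^{\theta S_T}$ and then Taylor-expands it, you work directly with the MGF via the elementary inequality $e^x\le 1+x+x^2e^{|x|}$ together with the pointwise bound $y^2e^{-(1-\theta)y}\le 16e^{-2}$; this avoids the external reference and is a mild simplification, while the paper's version yields slightly cleaner constants. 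Your explicit discussion of why $Q_T$ extends continuously across $\Se_0$ (the ``main obstacle'' you flag) is also something the paper only indicates by the phrase ``the Feller property of $(\BZ(t))$''.
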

\begin{proof}

\begin{equation}\label{e:G}
\begin{aligned}
\ln U(\BZ(T))=&\ln U(\BZ(0)) + \sum_{t=0}^{T-1} \left(\ln
U(\BZ(t+1))-\ln U(\BZ(t))\right)\\
=&\ln U(\BZ(0)) + G(T)
\end{aligned}
\end{equation}
where
\begin{equation}
\begin{aligned}
G(T)=\sum_{t=0}^{T-1} \left(\ln V(\BZ(t+1))-\ln V(\BZ(t))-\sum
p_i\ln F_i(\BZ(t), \xi(t))\right).
\end{aligned}
\end{equation}
In view of \eqref{e:G} and Lemma \ref{lmA.2}
\begin{equation}\label{e3.4_2}
\E_\bz \exp( G(T))=\dfrac{\E_\bz U(\BZ(T))}{U(\bz)}\leq(C_2)^T.
\end{equation}
Let $\hat U(\cdot):\R^{n,\circ}_+\times\R^{\kappa_0}\mapsto\R_+$
be defined by $\hat U(\bz)=V(\bz)\prod_{i=1}^n x_i^{p_i}$.
We also have
\begin{equation}\label{vhat-1}
\dfrac{\E_\bz \hat U(\BZ(T))}{\hat U(\bz)}\leq  (C_2)^T.
\end{equation}
Note that
\begin{equation}\label{vhat-2}
U^{-1}(\bz)=\hat U(\bz)\frac{1}{V^2(\bz)}\leq \hat U(\bz).
\end{equation}
Using \eqref{vhat-2} and \eqref{vhat-1} yields
\begin{equation}\label{e3.5}
\begin{aligned}
\E_\bz \exp(-G(T))=&\dfrac{\E_\bz U^{-1}(\BZ(T))}{U^{-1}(\bz)}\\\leq&\dfrac{\E_\bz\hat U(\BZ(T))}{V^2(\bz)U^{-1}(\bx)}\\
\leq& \dfrac{\E_\bz\hat U(\BZ(T))}{\hat U(\bz)}\\
\leq& (C_2)^T.
\end{aligned}
\end{equation}

By \eqref{e3.4_2} and \eqref{e3.5} the assumptions of
\cite[Lemma 3.5]{HN16} hold for the random variable $G(T)$.
Therefore,
there exists $\tilde K_2\geq 0$ such that
$$0\leq \dfrac{d^2\tilde\phi_{\bz,T}}{d\theta^2}(\theta)\leq
\tilde K_2\,\text{ for all
}\,\theta\in\left[0,\frac{1}2\right),\,
\bz\in\R^{n,\circ}_+\times \R^{\kappa_0}, \|\bz\|\leq M, T\in
[T^*,n^*T^*]\cap \Z$$
where
$$\tilde\phi_{\bz,T}(\theta)=\ln\E_\bz \exp(\theta G(T)).$$
In view of Lemma \ref{lm3.1} and the Feller property of
$(\BZ(t))$,
there exists a $\tilde\delta>0$ such that
if $\|\bz\|\leq M$, $\dist(\bz,\partial\R^n_+)<\tilde\delta$ and
$T\in [T^*,n^*T^*]\cap \Z$
then
\begin{equation}\label{e3.6}
\begin{aligned}
\E_\bz G(T)\leq -\dfrac34r^*T.
\end{aligned}
\end{equation}
Another application of \cite[Lemma 3.5]{HN16}  yields
$$\dfrac{d\tilde\phi_{\bz,T}}{d\theta}(0)=\E_\bz G(T)\leq
-\dfrac34 r^*T\,\text{ for }\, \bz\in\R^{n,\circ}_+\times
\R^{\kappa_0}, \|\bz\|\leq M,
\dist(\bz,\partial\R^n_+)<\tilde\delta, T\in [T^*,n^*T^*]\cap
\Z.$$
By a Taylor expansion around $\theta=0$, for $\|\bz\|\leq M,
\dist(\bz,\partial\R^n_+)<\tilde\delta, T\in [T^*,n^*T^*]\cap
\Z$ and $\theta\in\left[0,\frac{1}2\right)$ we have
$$\tilde\phi_{\bx,T}(\theta)\leq -\dfrac34
r^*T\theta+\theta^2\tilde K_2 .$$
If we choose any $\theta\in\left(0,\frac{1}2\right)$ satisfying
$\theta<\frac{r^*T^*}{4\tilde K_2}$, we obtain that
\begin{equation}\label{e3.10}
\tilde\phi_{\bz,T}(\theta)\leq -\dfrac12 r^*T\theta\,\,\text{
for all }\,\bz\in\R^{n,\circ}\times\R^{\kappa_0},\|\bz\|\leq M,
\dist(\bz,\partial\R^n_+)<\tilde\delta, T\in [T^*,n^*T^*]\cap
\Z.
\end{equation}
In light of \eqref{e3.10}, we have for all
$\theta<\frac{r^*T^*}{4\tilde K_2}$, $\|\bz\|\leq M,
0<\dist(\bz,\partial\R^n_+)<\tilde\delta, T\in [T^*,n^*T^*]$
that
\begin{equation}\label{e3.11}
\dfrac{\E_\bz U^\theta(\BZ(T))}{U^\theta(\bz)}=\exp
\tilde\phi_{\bz,T}(\theta)\leq\exp\left(-\frac{1}{2}r^*T\theta\right).
\end{equation}
In view of Lemma \ref{lmA.2},
we have for $\bz$ satisfying $\|\bz\|\leq M,
\dist(\bz,\partial\R^n_+)\geq\tilde\delta$ and $T\in
[T^*,n^*T^*]$ that
\begin{equation}\label{e3.12}
\E_\bz U^\theta(\BZ(T))\leq (C_2)^{\theta
n^*T^*}\sup\limits_{\|\bz\|\leq M,
\dist(\bz,\partial\R^n_+)\geq\tilde\delta}\{U^\theta(\bz)\}=:K_\theta<\infty.
\end{equation}
Combining \eqref{e3.11} and \eqref{e3.12} we are done.
\end{proof}

\begin{thm}\label{thm3.1}
Suppose that Assumption \ref{a.coexn} holds. Let $\theta$ be as
in Proposition \ref{prop2.1}, $T^*$ as in Lemma \ref{lm3.1} and
$n^*$ as in \eqref{e:n*}.
There exist numbers $\kappa=\kappa(\theta,T^*)\in(0,1)$ and
$\tilde K=\tilde K(\theta,T^*)>0$ such that
\begin{equation}\label{e:lya}
\E_\bz U^\theta(\BZ(n^*T^*))\leq \kappa U^\theta(\bz)+\tilde
K\,\text{ for all }\, \bz\in\R^{n,\circ}_+\times\R^{\kappa_0}.
\end{equation}
We have\begin{equation}\label{e:lya2}
\limsup_{t\to\infty}\PP_\bz\{|X_i(t)|\vee |X^{-1}_i(t)|>m \text{
for some } i=1,\dots, n\}\leq c_2 m^{-c_3}
\end{equation}
for some positive $c_2, c_3>0.$
Moreover, for any compact set $K\subset
\R^{n,\circ}\times\R^{\kappa_0}$,
\begin{equation}\label{e:lya1} \PP_\bz(\tau_K>k)\leq
c_KU^\theta(\bz)\kappa^k
\end{equation}
If the Markov chain $\BZ(t)$ is irreducible and aperiodic on
$\R^{n,\circ}_+\times\R^{\kappa_0}$,
and a compact set is petite, then there is $c_4>1$ such that
$$ c_4^t\|P_t(\bz,\dots)-\pi\|_{TV}\to 0\text{ as }
t\to\infty.$$
\end{thm}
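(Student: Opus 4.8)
The plan is to deduce all four assertions of Theorem~\ref{thm3.1} from the single inequality \eqref{e:lya}, so the bulk of the work is establishing \eqref{e:lya}. The two inputs are Proposition~\ref{prop2.1}, which controls $\E_\bz U^\theta(\BZ(T))$ when $\|\bz\|\le M$ and $T\in[T^*,n^*T^*]\cap\Z$, and a one-step drift for $U$ valid far from the origin. To get the latter I would apply Lemma~\ref{lmA.2} with exponent vector $-\bp$ (legitimate since $|-\bp|_1=\gamma_4$), obtaining $PU(\bz)\le(\1_{\{|\bz|<M\}}(C_2-\rho_2)+\rho_2)U(\bz)$ for all $\bz\in\Se_+$; in particular $PU(\bz)\le\rho_2U(\bz)$ when $\|\bz\|>M$ and $PU(\bz)\le C_2U(\bz)$ always. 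Since $\theta\in(0,1)$, Jensen's inequality upgrades these to $PU^\theta(\bz)\le\rho_2^\theta U^\theta(\bz)$ for $\|\bz\|>M$ and $PU^\theta(\bz)\le C_2^\theta U^\theta(\bz)$ everywhere.

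With $m:=n^*T^*$, the inequality \eqref{e:lya} for $\|\bz\|\le M$ is just Proposition~\ref{prop2.1} with $T=m$, so assume $\|\bz\|>M$ (and that $C_2>1$, else $PU\le C_2U$ is already a strict contraction and \eqref{e:lya} is immediate). Let $\nu$ be the first entrance time of $\BZ$ into $B_M$. The supermartingale built from the contraction $\rho_2^\theta$ outside $B_M$ gives $\E_\bz[U^\theta(\BZ(m))\1_{\{\nu\ge m\}}]\le\rho_2^{\theta m}U^\theta(\bz)$ and $\E_\bz[U^\theta(\BZ(j))\1_{\{\nu=j\}}]\le\rho_2^{\theta j}U^\theta(\bz)$ for $1\le j<m$. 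On $\{\nu=j\}$, apply the strong Markov property at $\nu$ and bound $\E_{\BZ(j)}U^\theta(\BZ(m-j))$ with $\BZ(j)\in B_M$: if $m-j\ge T^*$, Proposition~\ref{prop2.1} gives a factor $e^{-\theta r^*(m-j)/2}$ plus $K_\theta$; if $m-j<T^*$, use the crude factor $C_2^{\theta(m-j)}\le C_2^{\theta T^*}$ from the first paragraph and note $j>(n^*-1)T^*$, so the net factor for that excursion is at most $(C_2\rho_2^{\,n^*-1})^{\theta T^*}$, which is $<1$ by the choice $\rho_2^{1-n^*}>C_2$ in \eqref{e:n*}. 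Summing the at most $m$ contributions: the $m-j\ge T^*$ terms carry $\rho_2^{\theta j}e^{-\theta r^*(m-j)/2}$ whose maximum over $j$ tends to $0$ as $T^*\to\infty$ (with $\theta$ bounded away from $0$, since the second-derivative constant $\tilde K_2$ in Proposition~\ref{prop2.1} grows only linearly in $T^*$), and the $m-j<T^*$ terms total at most $T^*(C_2\rho_2^{\,n^*-1})^{\theta T^*}U^\theta(\bz)$; after possibly enlarging $T^*$ the total coefficient is some $\kappa<1$ with a finite additive $\tilde K$, which is \eqref{e:lya}.

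The remaining claims follow by standard arguments. Iterating \eqref{e:lya} along the $m$-skeleton gives $\E_\bz U^\theta(\BZ(km))\le\kappa^kU^\theta(\bz)+\tilde K/(1-\kappa)$, and interpolating over the at most $m$ intermediate steps via $PU^\theta\le C_2^\theta U^\theta$ yields $\limsup_{t\to\infty}\E_\bz U^\theta(\BZ(t))<\infty$; since $U(\bz)=V(\bz)\prod_i x_i^{-p_i}$ and $V(\bz)\ge|\bx|^{\gamma_1}+1$, a short case check (after first shrinking $\gamma_4$ and rescaling $\bp$ so that $\gamma_4<\gamma_1$) gives $U^\theta(\bz)\ge c\,(\max_i(x_i\vee x_i^{-1}))^{c_3}$ for suitable $c,c_3>0$, and Markov's inequality then yields \eqref{e:lya2}. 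Because $U^\theta$ is continuous, strictly positive, and tends to $\infty$ both at $\partial(\R^{n,\circ}_+\times\R^{\kappa_0})$ and at infinity, every sublevel set $C_L=\{U^\theta\le L\}$ is compact; taking $L$ with $\tilde K\le\tfrac12(1-\kappa)L$ makes $\E_\bz U^\theta(\BZ(m))\le\tfrac{1+\kappa}2U^\theta(\bz)$ off $C_L$, and the usual excursion estimate on the $m$-skeleton gives the geometric hitting-time bound \eqref{e:lya1} for any compact set containing some $C_L$ (and, under the irreducibility hypothesis, for every compact set). Finally, since a compact set is petite by assumption, \eqref{e:lya} furnishes the geometric-drift condition toward a suitable petite sublevel set of $U^\theta$; together with irreducibility and aperiodicity of $\BZ$ on $\R^{n,\circ}_+\times\R^{\kappa_0}$ (the latter forcing the unique $m$-skeleton invariant measure $\pi$ to be $P$-invariant), the geometric ergodicity theorem of \cite{MT} applies to the $m$-skeleton and, transferred to all $t$ via total-variation contractivity of $P_t$, gives some $c_4>1$ with $c_4^t\|P_t(\bz,\cdot)-\pi\|_{TV}\to0$.

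The hard part is the second paragraph. Proposition~\ref{prop2.1} is only available from starting points in $B_M$, and when $\BZ$ starts far out and first returns to $B_M$ only during the last fewer-than-$T^*$ steps of the window $[0,m]$, there is simply no room left to invoke it; the inequality $\rho_2^{1-n^*}>C_2$ defining $n^*$ is exactly what guarantees that the geometric gain $\rho_2^{\theta j}$ amassed while outside $B_M$ outweighs the worst-case amplification $C_2^{\theta(m-j)}$ of such a short terminal excursion, so that the bound closes up to a genuine contraction.
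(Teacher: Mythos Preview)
Your strategy is the paper's strategy: split by the first entrance time $\nu$ of $\BZ$ into $B_M$, apply Proposition~\ref{prop2.1} when at least $T^*$ steps remain, and use the crude bound $C_2^{\theta(m-j)}$ otherwise, with $n^*$ chosen via \eqref{e:n*} so that the crude piece is still contractive. The gap is in how you combine the pieces.

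You bound each term separately by $\E_\bz[\1_{\{\nu=j\}}U^\theta(\BZ(j))]\le\rho_2^{\theta j}U^\theta(\bz)$ and then \emph{sum} over $j$, producing a coefficient of the form
\[
\rho_2^{\theta m}+\sum_{j\le(n^*-1)T^*}\rho_2^{\theta j}e^{-\theta r^*(m-j)/2}+\sum_{j>(n^*-1)T^*}C_2^{\theta(m-j)}\rho_2^{\theta j}.
\]
This sum picks up a factor of order $m=n^*T^*$, which you then try to absorb by ``possibly enlarging $T^*$'' and asserting that $\theta$ stays bounded away from $0$ because $\tilde K_2$ grows only linearly in $T^*$. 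Both moves are problematic: the theorem fixes $T^*$ as in Lemma~\ref{lm3.1}, and the linear growth of $\tilde K_2$ is neither proved nor obvious (it depends on Lemma~3.5 of \cite{HN16} applied with $\E e^{\pm G(T)}\le C_2^{n^*T^*}$, which a priori could make $\tilde K_2$ grow faster).

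The paper avoids the summation loss entirely by using the supermartingale inequality as a \emph{single} bound rather than term by term. From $\E_\bz\bigl[\rho_2^{-\theta(\tau\wedge m)}U^\theta(\BZ(\tau\wedge m))\bigr]\le U^\theta(\bz)$ one gets, after the same three-way split,
\[
U^\theta(\bz)\ \ge\ \E_\bz\bigl[\1_{\{\tau\le(n^*-1)T^*\}}U^\theta(\BZ(\tau))\bigr]+\rho_2^{-\theta(n^*-1)T^*}\E_\bz\bigl[\1_{\{(n^*-1)T^*<\tau<m\}}U^\theta(\BZ(\tau))\bigr]+\rho_2^{-\theta m}\E_\bz\bigl[\1_{\{\tau\ge m\}}U^\theta(\BZ(m))\bigr].
\]
Now substitute the forward bounds from Proposition~\ref{prop2.1} and Lemma~\ref{lmA.2} into the right-hand side to compare each summand with the corresponding piece of $\E_\bz U^\theta(\BZ(m))$; the result is $U^\theta(\bz)\ge\kappa^{-1}\E_\bz U^\theta(\BZ(m))-e^{\theta r^*T^*/2}K_\theta$ with
\[
\kappa=\max\Bigl\{e^{-\theta r^*T^*/2},\ C_2^{\theta T^*}\rho_2^{\theta(n^*-1)T^*},\ \rho_2^{\theta n^*T^*}\Bigr\}<1
\]
directly, with the given $T^*$ and no summation. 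Equivalently, in your framework: write each coefficient as $(\text{something}\le\kappa)\cdot\rho_2^{-\theta j}$ and then invoke the full supermartingale inequality $\sum_j\rho_2^{-\theta j}\E_\bz[\1_{\{\nu=j\}}U^\theta(\BZ(j))]+\rho_2^{-\theta m}\E_\bz[\1_{\{\nu\ge m\}}U^\theta(\BZ(m))]\le U^\theta(\bz)$ once, rather than the cruder term-by-term estimate. Your treatment of the remaining assertions \eqref{e:lya2}, \eqref{e:lya1} and geometric ergodicity is fine and matches the paper's appeal to standard results.
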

\begin{proof}
Define
\begin{equation}\label{e:tau}
\tau=\inf\{t\geq0: \|\BZ(t)\|\leq M\}.
\end{equation}
By Lemma \ref{lmA.2} for all $\bz\in\Se$
$$PU(\bz)\leq \rho_2 U(\bz), ~\|\bz\|\geq M.$$
This implies that the process $\rho^{-t}_2 U(\BZ(t))$ is a
supermartingale and therefore
$$
\begin{aligned}
\E_\bz&\left[ \rho_2^{-\theta(\tau\wedge
n^*T^*)}U^\theta(\BZ(\tau\wedge n^*T^*))\right] \leq
U^\theta(\bz), \bz\in\Se.
\end{aligned}
$$
Thus,
\begin{equation}\label{et1.2}
\begin{aligned}
U^\theta(\bz)\geq&
\E_\bz\left[ \rho_2^{-\theta(\tau\wedge
n^*T^*)}U^\theta(\BZ(\tau\wedge n^*T^*))\right]\\
=&
\E_\bz \left[\1_{\{\tau\leq
(n^*-1)T^*\}}\rho_2^{-\theta(\tau\wedge
n^*T^*)}U^\theta(\BZ(\tau\wedge n^*T^*))\right]\\
&+\E_\bz \left[\1_{\{
(n^*-1)T^*<\tau<n^*T^*\}}\rho_2^{-\theta(\tau\wedge
n^*T^*)}U^\theta(\BZ(\tau\wedge n^*T^*))\right]\\
&+ \E_\bz \left[\1_{\{\tau\geq
n^*T^*\}}\rho_2^{-\theta(\tau\wedge
n^*T^*)}U^\theta(\BZ(\tau\wedge n^*T^*))\right]\\
\geq&
\E_\bz \left[\1_{\{\tau\leq
(n^*-1)T^*\}}U^\theta(\BZ(\tau))\right]\\
&+\rho_2^{-\theta(n^*-1)T^*}\E_\bz \left[\1_{\{
(n^*-1)T^*<\tau<n^*T^*\}}U^\theta(\BZ(\tau))\right]~~\\
&+\rho_2^{-\theta n^*T^*}\E_\bz \left[\1_{\{\tau\geq
n^*T^*\}}U^\theta(\BZ(n^*T^*))\right].\\
\end{aligned}
\end{equation}

By the strong Markov property of $\BZ(t)$ and
Proposition \ref{prop2.1}, we obtain
\begin{equation}\label{et1.3}
\begin{aligned}
\E_\bz&\left[ \1_{\{\tau\leq
(n^*-1)T^*\}}U^\theta(\BZ(n^*T^*))\right]\\
&\leq
\E_\bz \left[\1_{\{\tau\leq
(n^*-1)T^*\}}\big[K_\theta+e^{-\frac{1}{2}\theta
r^*(n^*T^*-\tau)}U^\theta(\BZ(\tau))\big]\right]\\
&\leq K_\theta+ \exp\left(-\frac{1}{2}\theta
r^*T^*\right)\E_\bz\left[\1_{\{\tau\leq
(n^*-1)T^*\}}U^\theta(\BZ(\tau))\right]
\end{aligned}
\end{equation}
Similarly, the strong Markov property of $\BZ(t)$, Jensen's
inequality and
Lemma \ref{lmA.2} imply
\begin{equation}\label{et1.4}
\begin{aligned}
\E_\bz&\left[
\1_{\{(n^*-1)T^*<\tau<n^*T^*\}}U^\theta(\BZ(n^*T^*))\right]\\
&\leq
\E_\bz \left[\1_{\{(n^*-1)T^*<\tau<n^*T^*\}}C_2^{\theta
(n^*T^*-\tau)}U^\theta(\BZ(\tau))\right]\\
&\leq C_2^{\theta
T^*}\E_\bz\left[\1_{\{(n^*-1)T^*<\tau<n^*T^*\}}U^\theta(\BZ(\tau))\right].
\end{aligned}
\end{equation}
Applying \eqref{et1.3} and \eqref{et1.4} to \eqref{et1.2} yields\begin{equation}\label{et1.5}
\begin{aligned}
U^\theta(x)
\geq&
\E_\bz \left[\1_{\{\tau\leq
(n^*-1)T^*\}}U^\theta(\BZ(\tau))\right]\\
&+\rho_2^{-\theta (n^*-1)T^*}\E_\bz \left[\1_{\{
(n^*-1)T^*<\tau<n^*T^*\}}U^\theta(\BZ(\tau))\right]\\
&+\rho_2^{-\theta n^*T^*} \E_\bz \left[\1_{\{\tau\geq
n^*T^*\}}U^\theta(\BZ(n^*T^*))\right]\\
\geq& \exp\left(\frac{1}{2}\theta
r^*T^*\right)\E_\bz\left[\1_{\{\tau\leq
(n^*-1)T^*\}}U^\theta(\BZ(n^*T^*))\right]-\exp\left(\frac{1}{2}\theta
r^*T^*\right)K_\theta\\
&+C_2^{-\theta T^*}\rho_2^{-\theta (n^*-1)T^*}\E_\bz
\left[\1_{\{
(n^*-1)T^*<\tau<n^*T^*\}}U^\theta(\BZ(n^*T^*))\right]\\
&+\rho_2^{-\theta n^*T^*} \E_\bz \left[\1_{\{\tau\geq
n^*T^*\}}U^\theta(\BZ(n^*T^*))\right]\\
\geq & \kappa^{-1}\E_\bz
U^\theta(\BZ(n^*T^*))-K_\theta\exp\left(\frac{1}{2}\theta
\rho^*T^*\right)
\end{aligned}
\end{equation}
where $\kappa=\max\left\{\exp\left(-\frac{1}{2}\theta
r^*T^*\right), C^{\theta T^*}\rho_2^{\theta(n^*-1)T^*},
\rho_2^{\theta n^*T^*}\right\}<1$ by \eqref{e:n*}.
The proof of \eqref{e:lya} is complete by taking
\[
\tilde K=K_\theta\exp\left(\frac{1}{2}\theta
\rho^*T^*\right)\kappa.
\]
Having \eqref{e:lya}, the claims \eqref{e:lya1} and
\eqref{e:lya2} follow by \cite{BS19}[Proposition 3.3 and Theorem
3.1].
\end{proof}

\section{Extinction Proofs}\label{s:b}

For $I\subset\{1,\dots,n\}$, denote by
$\M^I, \M^{I,+}, \M^{I,\partial}$ the set of ergodic
probability measures
on $\Se^I, \Se^{I}_+, \Se_0^I$ respectively.

\begin{lm}\label{lmB.1}
Assume that there exists a function $\phi:\Se\to\R_+$ and
constants $C, \delta_\phi>0$ such that for all $\bz\in\Se$
\begin{equation}\label{e:Vphi}
P V(\bz)\leq V(\bz)-\phi(\bz)+C
\end{equation}
and
\begin{equation}\label{e1-B3}
\E_{\bz}\left(V(\BZ(1))-PV(\bz)\right)^2+\E\left|\log
F(\bz,\xi(1))-\E\log F(\bz,\xi(1))\right|^2\leq \delta_\phi
\phi(\bz).
\end{equation}
Then, the family of random occupation measures $(\widetilde\Pi_t)_{t\in
\N}$ is tight.
and with probability one
  \begin{equation}\label{e5-B3}
\lim_{T\to\infty}\frac1T\sum_{t=0}^T\left(\log
F(\BZ(t),\xi(t))-\E\left[\log
F(\BZ(t),\xi(t))\big|\F_t\right]\right)=0.
\end{equation}
where $(\F_t)_{t\in \N}$ is the filtration generated by the
process $\BZ$.
\end{lm}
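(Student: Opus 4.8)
The plan is to obtain both conclusions from two scalar martingales, one built from the Lyapunov function $V$ and one from $\log F$, exploiting the self-improving nature of the drift bound \eqref{e:Vphi}. Throughout I would work with the filtration $\mathcal{G}_t:=\sigma\big(\BZ(0),\xi(0),\dots,\xi(t-1)\big)$, which contains the filtration $\F_t$ generated by $\BZ$ and with respect to which $\xi(t)$ is independent of the past while $\BZ(t)$ is measurable. Consequently, for any integrable $H$, $\E[H(\BZ(t),\xi(t))\mid\F_t]=\E[H(\BZ(t),\xi(t))\mid\mathcal{G}_t]=\widehat H(\BZ(t))$, where $\widehat H(\bz):=\E[H(\bz,\xi(1))]$ is the ``integrate out $\xi$'' operator; in particular the conditional expectation in \eqref{e5-B3} equals $g(\BZ(t))$ with $g(\bz):=\E[\log F(\bz,\xi(1))]$, which is finite since $\E h(\bz,\xi(1))<\infty$ forces $\E|\log F_i(\bz,\xi(1))|^2<\infty$, and $PV(\bz)$ is exactly $\widehat V(\bz)$.

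First I would introduce the Dynkin martingale $M_T:=V(\BZ(T))-V(\BZ(0))-\sum_{t=0}^{T-1}\big(PV(\BZ(t))-V(\BZ(t))\big)$; by Lemma \ref{lmA.1} this is an $L^2$ $(\mathcal{G}_T)$-martingale with $M_0=0$, increments $\Delta M_{t+1}=V(\BZ(t+1))-PV(\BZ(t))$, and predictable quadratic variation $\langle M\rangle_T=\sum_{t=0}^{T-1}\E[(\Delta M_{t+1})^2\mid\mathcal{G}_t]$. The first half of \eqref{e1-B3} bounds the conditional second moment of each increment by $\delta_\phi\,\phi(\BZ(t))$, so $\langle M\rangle_T\le\delta_\phi\sum_{t=0}^{T-1}\phi(\BZ(t))$, while \eqref{e:Vphi} rearranges to $V(\BZ(t))-PV(\BZ(t))\ge\phi(\BZ(t))-C$ and, together with $V\ge0$, yields the pathwise estimate $\sum_{t=0}^{T-1}\phi(\BZ(t))\le M_T+V(\BZ(0))+CT$. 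Chaining these gives the \emph{self-bounding} inequality $\langle M\rangle_T\le\delta_\phi\big(M_T+V(\BZ(0))+CT\big)$ for all $T$, a.s.

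Next I would prove $M_T/T\to0$ a.s.\ from the self-bounding inequality: on $\{\langle M\rangle_\infty<\infty\}$ the martingale converges a.s., while on $\{\langle M\rangle_\infty=\infty\}$ the strong law of large numbers for martingales gives $M_T/\langle M\rangle_T\to0$, and substituting this back forces $\langle M\rangle_T=O(T)$ and hence $M_T=o(T)$. It then follows that, a.s., $\limsup_T\frac1T\sum_{t=0}^{T-1}\phi(\BZ(t))\le C$; and since $PV(\bz)\le\rho V(\bz)+C$ (a consequence of Assumption \ref{a:main} already used in Lemma \ref{lmA.1}), the same martingale gives $(1-\rho)\sum_{t=0}^{T-1}V(\BZ(t))\le M_T+V(\BZ(0))+CT$, so $\limsup_T\frac1T\sum_{s=1}^{T}V(\BZ(s))<\infty$ a.s. Because $\lim_{|\bz|\to\infty}V(\bz)=\infty$, this last bound yields a.s.\ tightness of $(\widetilde\Pi_t)_{t\in\N}$: with $K_R:=\{\bz\in\Se:|\bz|\le R\}$ and $m_R:=\inf_{|\bz|>R}V(\bz)\to\infty$ one has $\widetilde\Pi_T(\Se\setminus K_R)=\frac1T\sum_{s=1}^{T}\1_{\{|\BZ(s)|>R\}}\le m_R^{-1}\cdot\frac1T\sum_{s=1}^{T}V(\BZ(s))$, so on the full-measure event above $\limsup_T\widetilde\Pi_T(\Se\setminus K_R)\to0$ as $R\to\infty$.

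Finally, for \eqref{e5-B3} I would put $D_t:=\log F(\BZ(t),\xi(t))-g(\BZ(t))$; by the first paragraph $\E[D_t\mid\mathcal{G}_t]=0$, and the second half of \eqref{e1-B3} gives $\E[|D_t|^2\mid\mathcal{G}_t]\le\delta_\phi\,\phi(\BZ(t))$, so $N_T:=\sum_{t=0}^{T-1}D_t$ is an $L^2$ $(\mathcal{G}_T)$-martingale with $\langle N\rangle_T\le\delta_\phi\sum_{t=0}^{T-1}\phi(\BZ(t))$, which by the previous step is eventually at most $\delta_\phi(C+1)T$ a.s. Arguing exactly as for $M$ (convergence on $\{\langle N\rangle_\infty<\infty\}$, and $N_T=o(\langle N\rangle_T)=o(T)$ otherwise) I get $N_T/T\to0$ a.s., which is \eqref{e5-B3} because $\frac1T\sum_{t=0}^{T}D_t=\frac{T+1}{T}\cdot\frac{N_{T+1}}{T+1}$. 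The hard part is the step $M_T/T\to0$: since $\E V(\BZ(t))$, and hence $\E\phi(\BZ(t))$, may grow linearly in $t$, the routine $\sum_t t^{-2}\E|\Delta M_{t+1}|^2<\infty$ argument for a martingale strong law breaks down, and one genuinely has to use the self-referential bound --- $\langle M\rangle_T$ controlled by a constant times $M_T+T$ --- coming from the interplay of \eqref{e:Vphi} and \eqref{e1-B3}, together with the dichotomy on whether $\langle M\rangle_\infty$ is finite.
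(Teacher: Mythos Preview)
Your proof is correct and follows essentially the same route as the paper: build the Dynkin martingale from $V$, bound its predictable quadratic variation by $\delta_\phi\sum\phi(\BZ(t))$, feed the drift inequality back in to get a self-bounding estimate, apply the martingale SLLN to obtain $\limsup_T T^{-1}\sum_t\phi(\BZ(t))<\infty$, and then repeat the SLLN argument for the $\log F$ martingale. The only noteworthy difference is in the tightness step: the paper concludes tightness directly from the bound on $T^{-1}\sum_t\phi(\BZ(t))$ together with $\phi(\bz)\to\infty$ (an assumption it invokes in the proof but does not list in the lemma), whereas you instead use the stronger drift $PV\le\rho V+C$ from Assumption~\ref{a:main} to bound $T^{-1}\sum_t V(\BZ(t))$ and then appeal to $V(\bz)\to\infty$, which is already part of the standing hypotheses---a small but genuine improvement.
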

\begin{proof}
Suppose $\BZ(0)=\bz\in\Se$. We have
\begin{equation}\label{e2-B3}
\begin{aligned}
V(\BZ(t+1))\leq& V(\BZ(t))-\phi(\BZ(t))+C
+(V(\BZ(t+1))-PV(\BZ(t))\\
\leq & V(\BZ(t))-\frac12\phi(\BZ(t))+2C
-\left(\frac12\phi(\BZ(t))+C -(V(\BZ(t+1))-PV(\BZ(t)))\right)
 \end{aligned}
 \end{equation}
We see from \eqref{e1-B3} that the quadratic variation of the
martingale $V(\BZ(t+1))-PV(\BZ(t))$ is bounded by $\delta
\phi(\BZ(t))$. As a result we can use the strong law of large
numbers for martingales and the bound \eqref{e1-B3}, to get
$$
 \lim_{T\to \infty}
\frac{\frac1T\sum_{t=0}^T(V(\BZ(t+1))-PV(\BZ(t)))}{\frac1T\sum_{t=0}^T \left(\phi(\BZ(t))+C\right)}=\lim_{T\to \infty}
\frac{\sum_{t=0}^T(V(\BZ(t+1))-PV(\BZ(t)))}{\sum_{t=0}^T \left(\phi(\BZ(t))+C\right)}=0\,\text{ a.s.}
$$
which implies
 \begin{equation}\label{e3-B3}
 \limsup_{T\to \infty}
\frac1T\sum_{t=0}^T \left(\frac12\phi(\BZ(t))+C
-(V(\BZ(t+1))-PV(\BZ(t)))\right)\leq 0 \text{ a.s.}
 \end{equation}
 Taking sums in \eqref{e2-B3}, noting that
 \[
\liminf_{T\to \infty}\frac1T \sum_0^T (V(\BZ(t+1)) - V(\BZ(t)))=
\liminf_{T\to 0}\frac1T (V(\BZ(T+1)) - V(\bz)) \geq 0\text{
a.s.}
 \]
 (because $V$ is nonnegative)
  and using \eqref{e3-B3} yields that with probability one
 $$\liminf_{T\to \infty}
 \frac1T\sum_0^T \left(-\frac12\phi(\BZ(t))+2C\right)\geq 0.
 $$
 As a result
 \begin{equation}\label{e4-B3}
 \limsup_{T\to \infty}
 \frac1T\sum_0^T \phi(\BZ(t))\leq 4C,
 \end{equation}
almost surely.
Since $\lim_{|\bz|\to\infty}\phi(\bz)=\infty$, the boundedness of $\widetilde{\Pi}_T\phi$ in \eqref{e4-B3}  implies that the family of randomized
occupation measures $\{\widetilde{\Pi}_t, t\in\N\}$ is tight.
Moreover, the strong law of large numbers together with
\eqref{e4-B3} and \eqref{e1-B3} implies \eqref{e5-B3}.
\end{proof}

\begin{thm}\label{t:exxx}
If $E_1$ is nonempty, then for any $I\in E_1$, there exists
$\alpha_I>0$ such that, for any a compact set $\K^I\subset
\Se^{I}_+$,
we have
$$
\lim_{\dist(\bz,\K^I)\to0, \bz\in
S^\circ}\PP_\bz\left\{\lim_{t\to\infty}\dfrac{\ln X_i(t)}t\leq
-\alpha_I, i\in I^c\right\}=1.
$$
\end{thm}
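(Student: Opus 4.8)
The plan is to obtain the statement by combining the two halves of the definition of $E_1$: property (2) confines the species of $I$ to a compact part of $\Se^I_+$, while property (1) forces the remaining species to decay geometrically. This is the discrete-time analogue of the extinction theorem of \cite{HN16}, and the architecture of the proof is the same. First I would dispose of $I=\emptyset$, where $\Se^I_+=\Se^\emptyset$, $\M^{I,+}=\M^\emptyset$, and the claim is just asymptotic extinction of every species, proved exactly as Theorem \ref{t:ext1_disc}; so assume $I\neq\emptyset$. Property (2) of Definition \ref{a.extn} says $\max_{i\in I}r_i(\nu)>0$ for all $\nu\in\Conv(\M^{I,\partial})$, which is precisely Assumption \ref{a.coexn} for the Markov chain obtained by restricting \eqref{e:system_discrete} to the invariant subspace $\Se^I$. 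Running the persistence machinery of Lemma \ref{lm3.1}, Proposition \ref{prop2.1} and Theorem \ref{thm3.1} on this restricted chain produces strictly positive weights $p_i$ ($i\in I$), an exponent $\theta>0$, a rate $r^*>0$, and --- after rescaling the $p_i$ and enlarging $n^*$ as in \eqref{e:n*} --- an integer $N=n^*T^*$ that may be taken as large as we like, for which the function
\[
U_I(\bz):=V(\bz)\prod_{i\in I}x_i^{-p_i},
\]
a function of $(\BX_I,\BY)$ alone and hence finite on a full neighbourhood of $\Se^I_+$ inside $\Se$, satisfies $\E_\bz U_I^\theta(\BZ(N))\le U_I^\theta(\bz)e^{-\frac12\theta r^*N}+K_\theta$ for $\bz\in\Se^I_+$, $\|\bz\|\le M$; since $U_I$ does not see the coordinates $x_j$ with $j\notin I$, continuity of $F$ and $G$ extends this (with slightly worse constants) to every $\bz\in\Se$ with $\max_{j\notin I}x_j\le\eps_1$, up to the first time the trajectory leaves the strip $\{\max_{j\notin I}x_j\le\eps_1\}$. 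Finally, property (1) together with weak$^*$-compactness of the invariant probability measures on $\overline{\Se^I_+}$ (and Lemma \ref{lmA.1}, which supplies the needed integrability of $\ln F_j$) provides $\alpha_I^{0}>0$, depending only on the model, with $r_j(\mu)\le-2\alpha_I^{0}$ for all $j\notin I$, $\mu\in\Conv(\M^{I,+})$.

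The heart of the argument is a drift estimate, one for each fixed $j\notin I$. Choose $\beta>0$ with $\beta+\sum_{i\in I}p_i<\gamma_4$, but (after first taking $N$ large) large enough that the contraction below survives, and set $W_j(\bz):=U_I^\theta(\bz)\,x_j^{\beta}$. Writing $X_j(N)^\beta=x_j^\beta\exp\big(\beta\sum_{s<N}\ln F_j(\BZ(s),\xi(s))\big)$ and $U_I^\theta(\BZ(N))=U_I^\theta(\bz)\exp(\theta G_I(N))$, with $G_I(N)=\sum_{s<N}\big(\ln V(\BZ(s+1))-\ln V(\BZ(s))-\sum_{i\in I}p_i\ln F_i(\BZ(s),\xi(s))\big)$, one gets
\[
\E_\bz W_j(\BZ(N))=W_j(\bz)\;\E_\bz\exp\!\Big(\theta G_I(N)+\beta\textstyle\sum_{s<N}\ln F_j(\BZ(s),\xi(s))\Big).
\]
I would bound the last expectation by Cauchy--Schwarz: $\E_\bz\exp(2\theta G_I(N))$ is controlled by Proposition \ref{prop2.1}, so on $\{U_I(\bz)\ge U_0\}$ (for a large enough threshold $U_0$, along stopped paths still inside the strip) it is at most $\tfrac32$; and $\E_\bz\exp(2\beta\sum_{s<N}\ln F_j(\BZ(s),\xi(s)))$ is close to $\exp(2\beta N r_j(\mu))\le e^{-4\beta N\alpha_I^{0}}$ for some $\mu\in\Conv(\M^{I,+})$, because on those paths the confinement of $\BX_I$ drives the occupation measure of the restricted coordinates into a small neighbourhood of $\Conv(\M^{I,+})$, where property (1) applies --- this step uses Lemma \ref{lm2.4} and the quadratic-variation control of $\ln F_j$ from Assumption \ref{a:ext}/Lemma \ref{lmB.1}. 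Since $N$ may be chosen so large that $\sqrt{3/2}\,e^{-2\beta N\alpha_I^{0}}<1$ while still $\beta<\gamma_4/2$, this yields $\mathfrak q\in(0,1)$ with
\[
\E_\bz\big[W_j(\BZ(N\wedge\sigma))\big]\le\mathfrak q\,W_j(\bz)+\mathfrak K\,\1_{\{\dist(\bz,\Se^I_0)<\tilde\delta\}},\qquad \bz\in\Se,\ \max_{k\notin I}x_k\le\eps_1,
\]
where $\sigma$ is the first time $\max_{k\notin I}X_k$ exceeds $\eps_1$ or $\|\BZ\|$ exceeds $M$. The additive term is needed only on the compact region near $\Se^I_0$, and there property (2) (via Lemma \ref{lm3.1} for the restricted chain) makes $G_I$ strictly negative, so it is removed by the strong-Markov bootstrap of the proof of Theorem \ref{thm3.1}, giving a genuine contraction $\E_\bz W_j(\BZ(N'\wedge\sigma))\le\mathfrak q' W_j(\bz)$ with $\mathfrak q'<1$ for a suitable multiple $N'$ of $N$.

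With this in hand the conclusion follows the usual pattern. Iterating gives $\E_\bz W_j(\BZ(kN'\wedge\sigma))\le(\mathfrak q')^k W_j(\bz)$; since $U_I\ge V\ge 1$ we have $x_j^\beta\le W_j$, and $W_j(\bz)\to0$ as $\bz\to\K^I$. A maximal inequality then bounds $\PP_\bz(\max_{k\notin I}X_k(t)>\eps_1\text{ for some }t)$ by $C\eps_1^{-\beta}W_j(\bz)$ plus the probability that $\|\BZ(t)\|$ ever exceeds $M$, the latter uniformly small in $\bz$ near $\K^I$ for $M$ large by the Lyapunov bound of Assumption \ref{a:main}; hence $\PP_\bz(\sigma=\infty)\to1$ as $\dist(\bz,\K^I)\to0$. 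On $\{\sigma=\infty\}$ the contraction holds for the un-stopped chain, and a Borel--Cantelli/supermartingale argument (as in the derivation of \eqref{e:lya1}--\eqref{e:lya2}) gives $\limsup_{t\to\infty}\frac1t\ln W_j(\BZ(t))\le\frac{\ln\mathfrak q'}{N'}$ a.s., so $\limsup_{t\to\infty}\frac1t\ln X_j(t)\le\frac{\ln\mathfrak q'}{\beta N'}=:-\alpha_I<0$; intersecting the finitely many good events over $j\notin I$ and letting $\dist(\bz,\K^I)\to0$ yields the theorem, with $\alpha_I$ depending only on the model and not on $\K^I$. The main obstacle is the drift estimate of the second paragraph: it is where properties (1) and (2) of $E_1$ must be made to cooperate quantitatively --- one confining $\BX_I$, the other depressing $X_j$ --- uniformly over the non-compact excursions of $\BZ$ (visible only through $V$), while balancing the choices of $N$ and $\beta$ so that a genuinely contractive bound coexists with $\beta<\gamma_4$; carrying this out rigorously is the computation to be done in the spirit of Proposition \ref{prop2.1} and Theorem \ref{thm3.1}.
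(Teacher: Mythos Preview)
Your Lyapunov function $W_j(\bz)=U_I^\theta(\bz)\,x_j^\beta$ coincides (after renaming $\beta=\theta\check p$) with the paper's $W(\bz,j)^\theta=\big(V(\bz)\,x_j^{\check p}/\prod_{i\in I}x_i^{\hat p_i}\big)^\theta$, and the supermartingale/Borel--Cantelli endgame is correct. The difference --- and the gap --- is in how you obtain the contraction.

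Your Cauchy--Schwarz step does not work as written. You claim that on $\{U_I(\bz)\ge U_0\}$ the second factor $\E_\bz\exp\big(2\beta\sum_{s<N}\ln F_j(\BZ(s),\xi(s))\big)$ is close to $\exp(2\beta Nr_j(\mu))$ for some $\mu\in\Conv(\M^{I,+})$ because ``the confinement of $\BX_I$ drives the occupation measure \dots into a small neighbourhood of $\Conv(\M^{I,+})$.'' But on $\{\|\bz\|\le M,\,U_I(\bz)\ge U_0\}$ the initial point $\bz$ is \emph{close to} $\Se^I_0$ (that is what $U_I$ large means when $V$ is bounded), and for any fixed $N$ the occupation measure $\Pi_{N,\bz}$ can sit near $\Conv(\M^{I,\partial})$, where $r_j(\nu)$ for $j\notin I$ is uncontrolled and may be positive. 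Property~(2) pushes $\BX_I$ away from $\Se^I_0$ only asymptotically, not uniformly over the first $N$ steps from initial conditions arbitrarily close to $\Se^I_0$. Thus your second-factor bound fails precisely on the region where your first-factor bound holds; splitting via Cauchy--Schwarz decouples the two ingredients that must act together.

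The paper's route is to merge properties (1) and (2) \emph{before} passing to exponential moments. One chooses $\hat p_i>0$, $i\in I$, from property~(2) so that $\sum_{i\in I}\hat p_i r_i(\nu)>0$ for every $\nu\in\Conv(\M^{I,\partial})$, then takes $\check p>0$ small enough that the single inequality
\[
\sum_{i\in I}\hat p_i r_i(\nu)\;-\;\check p\max_{j\in I^c}r_j(\nu)\;>\;3r_e
\]
holds for \emph{all} $\nu\in\Conv(\M^I)$: on $\M^{I,+}$ the first sum vanishes (Proposition~\ref{p:rate}) and property~(1) makes the second term positive; on $\M^{I,\partial}$ the first sum is positive and smallness of $\check p$ absorbs the second. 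This gives, via the analogue of Lemma~\ref{lm3.1}, a uniform mean-drift estimate for $\ln W(\BZ(t+1),j)-\ln W(\BZ(t),j)$ over the entire strip $\{\|\bz\|\le M,\ x_k<\delta_e\text{ for }k\in I^c\}$, which the Taylor-expansion argument of Proposition~\ref{prop2.1} converts directly into a pure contraction $\E_\bz W_\theta(\BZ(T))\le e^{-\frac12\theta r_e T}W_\theta(\bz)$ with no additive term and no case split. The remainder (truncation to $\varsigma\wedge W_\theta$, supermartingale bound on the exit probability, Borel--Cantelli) then proceeds as you sketched.
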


\begin{thm}\label{t:exx2}
If $E_2$ is empty
or $\max_{i}\{ r_i(\nu)\}>0$ for any $\nu$ with
$\mu(\Se^{J}_+)=1$ for some $J\in E_2$ and $\cup_{I\in
E_1}\Se^{I}_+$ is accessible then
$$
\sum_{I\in E_1} p_{\bz,I}=1
$$
where
$$p_{\bz,I}=\PP_\bz\left\{\emptyset\neq
\U(\omega)\subset\Conv\{\M^{I,+}\}
~\text{and}~\lim_{t\to\infty}\frac{\ln
X_j(t)}{t}\in\{r_j(\mu):\mu\in\Conv(\M^{I,+})\}, j\in
I^c\right\}.$$
\end{thm}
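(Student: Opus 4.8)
\emph{Step 1 (tightness and the limit set).}
The plan follows the template of the extinction results in \cite{HN16, B18, BS19}, now in the non-compact state space: first identify the structure of the weak$^*$ limit set $\U(\omega)$ of the occupation measures, then use the invasion rates $r_i(\cdot)$ as a Morse--Lyapunov functional on the set of invariant measures to confine $\U(\omega)$, almost surely, to a single block $\Conv(\M^{I,+})$ with $I\in E_1$, and finally read off the exponential extinction rates of the species in $I^c$. Since Assumption \ref{a:ext} holds, Lemma \ref{lmB.1} gives that $\PP_\bz$-a.s.\ the family $(\widetilde\Pi_t)_{t\in\N}$ is tight and that the martingale averages in \eqref{e5-B3} vanish. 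Because $\widetilde\Pi_{t+1}(g)-\widetilde\Pi_t(g)=(t+1)^{-1}(g(\BZ(t+1))-\widetilde\Pi_t(g))\to0$ for every bounded continuous $g$, the set $\U(\omega)$ is a.s.\ nonempty, compact and connected; by the Feller property and Lemmas \ref{lmA.1}--\ref{lmA.2} every element of $\U(\omega)$ is an invariant probability measure, and after ergodic decomposition $\U(\omega)\subset\Conv\big(\bigcup_{I\subset\{1,\dots,n\}}\M^{I,+}\big)$.

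\emph{Step 2 (exclusion of repellers; localisation to one $I\in E_1$).}
Call an invariant measure $\pi$ a \emph{repeller} if $\max_i r_i(\pi)>0$. By the hypothesis on $E_2$, every invariant measure supported on a subspace $\Se^J$ with $J\in E_2$ (and $\M^{J,+}\neq\emptyset$) is a repeller, and by \eqref{ae3.2} every invariant measure charging some $\M^{I,\partial}$ with $I\in E_1$, $I\neq\emptyset$, is a repeller. The key claim is that $\PP_\bz$-a.s.\ $\U(\omega)$ contains no repeller. This is proved as in \cite[Section 3]{HN16} and \cite[Lemma 3, Proposition 1]{BS19}: if some $\pi\in\U(\omega)$ were a repeller, pick $i$ with $r_i(\pi)>0$ and, via the min--max identity \eqref{e.p} applied in a weak$^*$ neighbourhood of $\pi$, weights $\bp>0$ with $\sum_k p_k r_k(\pi')>0$ uniformly over invariant $\pi'$ near $\pi$; combining this with the martingale SLLN \eqref{e5-B3}, the Lyapunov bounds of Lemma \ref{lmA.2}, the convergence of integrals against weakly convergent occupation measures (as in Lemma \ref{lm2.4}), and $V\ge|\bx|^{\gamma_1}+1$, one forces $|\BX(t_k)|\to\infty$ along the times realising $\pi$, contradicting $\widetilde\Pi_{t_k}\to\pi$ together with $\sup_t\E_\bz V(\BZ(t))<\infty$. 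Granting this, $\U(\omega)$ is a connected set of invariant measures avoiding all repellers. Since distinct blocks $\overline{\Conv(\M^{I,+})}$, $I\in E_1$, meet only at repellers (measures charging some $\M^{I,\partial}$), and since, by Theorem \ref{t:exxx} applied once the trajectory enters a small compact neighbourhood of some $\Se_+^{I}$ with $I\in E_1$, the coordinates in $I^c$ then decay at rate $\le-\alpha_I$ and cannot return, $\U(\omega)$ cannot be spread over two blocks and must lie inside a single $\Conv(\M^{I,+})$ with $I\in E_1$; accessibility of $\bigcup_{I\in E_1}\Se_+^I$ is what prevents the descent along $\max_i r_i(\cdot)$ from terminating at an invariant set unreachable from the attracting subspaces, i.e.\ guarantees $I\in E_1$, via the attractor--convergence machinery of \cite{B18}. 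When $I=\{1,\dots,n\}\in E_1$ this is just stochastic persistence, given by Theorem \ref{thm3.1}.

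\emph{Step 3 (extinction rates and conclusion).}
On the a.s.\ event of Step 2, let $I=I(\omega)\in E_1$ be such that $\U(\omega)\subset\Conv(\M^{I,+})$, and put $\psi_j(\bz):=\E[\ln F_j(\bz,\xi(1))]$. For $j\in I^c$,
\[
\frac{\ln X_j(t)}{t}=\frac{\ln X_j(0)}{t}+\frac1t\sum_{s=0}^{t-1}\big(\ln F_j(\BZ(s),\xi(s))-\psi_j(\BZ(s))\big)+\frac1t\sum_{s=0}^{t-1}\psi_j(\BZ(s)).
\]
The first term is $O(1/t)$; the second vanishes $\PP_\bz$-a.s.\ by \eqref{e5-B3}; and the third equals $\int\psi_j\,d\widetilde\Pi_t$ up to an $O(1/t)$ error, where $\psi_j$ is $\pi$-integrable for every invariant $\pi$ by Lemma \ref{lmA.1} and has the controlled growth of Lemma \ref{lm2.4}, so that its limit points as $t\to\infty$ lie in $\{r_j(\pi):\pi\in\U(\omega)\}\subset\{r_j(\mu):\mu\in\Conv(\M^{I,+})\}$. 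By \eqref{ae3.1} these numbers are strictly negative, so $X_j(t)\to0$ with rate in that set, which is the defining event of $p_{\bz,I}$. Since, by Steps 1--2, the complement of $\bigcup_{I\in E_1}\{\text{this event holds for }I\}$ is contained in the $\PP_\bz$-null event $\{\U(\omega)\text{ contains a repeller}\}$, we conclude $\sum_{I\in E_1}p_{\bz,I}=1$.

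\emph{Main obstacle.}
The crux is Step 2: proving, uniformly over the non-compact state space, that $\U(\omega)$ almost surely contains no repelling invariant measure, and then combining connectedness of $\U(\omega)$, the exponential attraction of Theorem \ref{t:exxx}, and accessibility to confine $\U(\omega)$ to a single $\Conv(\M^{I,+})$ with $I\in E_1$. This requires the full strength of the Lyapunov estimates (Lemmas \ref{lmA.1}--\ref{lmA.2}), the martingale SLLN (Lemma \ref{lmB.1}), the min--max reformulation \eqref{e.p}, and the chain-recurrence structure of weak$^*$ limit sets --- essentially the non-compact extension of the methods of \cite{B18, HN16, BS19}.
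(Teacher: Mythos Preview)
Your Steps 1 and 3 are essentially correct and match what is needed. The difficulty is in Step 2, where there is a genuine gap.

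The contradiction you derive from assuming a repeller $\pi\in\U(\omega)$ does not close. You argue that $\sum_k p_k r_k(\pi)>0$ forces $|\BX(t_k)|\to\infty$ along the subsequence realising $\pi$, and that this contradicts $\widetilde\Pi_{t_k}\to\pi$ together with $\sup_t\E_\bz V(\BZ(t))<\infty$. Neither premise yields a contradiction: an $L^1$ bound on $V(\BZ(t))$ does not preclude almost-sure divergence of $|\BX(t_k)|$ along subsequences, and tightness of the \emph{time-averaged} measures $\widetilde\Pi_{t_k}$ says nothing about the endpoint $\BZ(t_k)$. A viable repair is to first establish $t^{-1}\ln V(\BZ(t))\to 0$ a.s.\ (which, via $V(\bz)\ge|\bx|^{\gamma_1}$, would indeed contradict $t_k^{-1}\ln X_i(t_k)\to r_i(\pi)>0$), but this statement is not among the lemmas you invoke and requires its own argument from Assumption \ref{a:ext}. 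Your appeal to the min--max identity \eqref{e.p} ``in a weak$^*$ neighbourhood of $\pi$'' with strictly positive $\bp$ is also not quite the right tool: \eqref{e.p} gives one global $\bp$ under the persistence hypothesis, whereas what is actually used near a specific boundary face is the signed-weight construction \eqref{e3.3} of Lemma \ref{lm4.2} and Proposition \ref{prop4.1}.

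The paper sidesteps this entirely by taking a different route. It does not analyse $\U(\omega)$ globally; its proof reduces in one line to \cite[Lemmas 5.8, 5.9 and Theorem 5.2]{HN16}, with Theorem \ref{t:exxx} as the only new input. The mechanism is a trapping argument: accessibility of $\bigcup_{I\in E_1}\Se_+^I$ guarantees that from any $\bz\in\Se_+$ the process enters, with positive probability and in finite time, an arbitrarily small neighbourhood of some compact $\K^I\subset\Se_+^I$; once there, the supermartingale $\wtd W_\theta$ built in the proof of Theorem \ref{t:exxx} gives $\PP(\eta^*=\infty)\ge 1-\eps$, i.e.\ the process is trapped with high probability. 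The strong Markov property combined with a Borel--Cantelli/zero--one argument then upgrades this to almost-sure eventual trapping in a single block $\Se_+^I$, after which $\U(\omega)\subset\Conv(\M^{I,+})$ and the extinction rates for $j\in I^c$ follow exactly as in your Step 3. This route is shorter and avoids the delicate repeller-exclusion step on the weak$^*$ limit set altogether.
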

\begin{proof}
Once Theorem \ref{t:exxx} is proved,
Theorem \ref{t:exx2} can be obtained using the fact that any
weak limit of a family of random occupation measures
is an invariant probability measure supported on $S_0$
(\cite{B18, HN16}) and by using the arguments from Lemma 5.8,
Lemma 5.9 and Theorem 5.2 by \cite{HN16}.
\end{proof}
Fix $I\in E_1$. Since by Lemma \ref{lmB.1} the family
$(\widetilde\Pi_t)_{t\in \N}$ of random occupation measures is tight,
condition \eqref{ae3.2} is equivalent to the existence of
$0<\hat p_i<\gamma_3/n, i\in I$
$$\inf_{\nu\in \Conv (\M^{I,\partial})}\sum_{i\in I}\hat p_i
r_i(\nu)>0.$$
As a result, there exists a small $\check p\in (0,\gamma_3/n)$
such that
\begin{equation}\label{e3.2}
\begin{aligned}
\sum_{i\in I}\hat p_i r_i(\nu)-\check p\max_{i\notin I}\{
r_i(\nu)\}>0 \text{ for any }\nu\in\Conv(\M^{I,\partial}).
\end{aligned}
\end{equation}
Define $\tilde p_i$ by
$\tilde p_i=\hat p_i$ if $i\in I_\mu$ and $\tilde p_i=-\check p$ if
$i\in I_\mu^c$.
In view of \eqref{e3.2}, \eqref{ae3.1} and Lemma \ref{lmB.1},
there is $ r_e>0$ such that for any $\nu\in\Conv(\M^I)$,
\begin{equation}\label{e3.3}
\sum_{i\in I}\hat p_i r_i(\nu)-\check p\max_{i\in I^c}\left\{
r_i(\nu)\right\}>3r_e.
\end{equation}
\begin{lm}\label{lm4.2}
Let $I\in E_1$ and suppose that Assumption \ref{a.extn} holds.
Suppose $\hat p_i, \check p, r_e$ are the quantities from
\eqref{e3.3} and $n^*$ is defined by \eqref{e:n*}.
There exist constants $T_e\geq 0$, $\delta_e>0$ such that, for
any $T\in[T_e,n^*T_e]\cap \Z$, $\|\bz\|\leq M, x_i<\delta_e,
i\in I^c$, we have
\begin{equation}\label{e3.4}
\begin{aligned}
\sum_{t=0}^T\E_\bz\left(\ln V(\BZ(t+1))-\ln V(\BZ(t))-\sum_{i\in
I} \ln F_i(\BZ(t), \xi(t))+\check p \max_{i\in I^c} \ln
F_i(\BZ(t), \xi(t))\right)dt\leq-r_e(T+1) .
\end{aligned}
\end{equation}
\end{lm}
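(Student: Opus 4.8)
This lemma is the discrete-time analogue of Lemma~\ref{lm3.1} but now ``localized'' near the subspace $\Se^I_+$ rather than near the full boundary $\partial\R^n_+$, and with the weight vector $\tilde p$ that is positive on the coordinates of $I$ and slightly negative on $I^c$. I would follow the structure of the proof of Lemma~\ref{lm3.1} essentially verbatim. First, write the left-hand side of \eqref{e3.4} as a sum of two pieces: the telescoping part
\[
\sum_{t=0}^T\E_\bz\bigl(\ln V(\BZ(t+1))-\ln V(\BZ(t))\bigr)
\]
and the ``growth-rate'' part
\[
\sum_{t=0}^T\E_\bz\Bigl(-\sum_{i\in I}\ln F_i(\BZ(t),\xi(t))+\check p\max_{i\in I^c}\ln F_i(\BZ(t),\xi(t))\Bigr).
\]

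\textbf{Step 1: kill the telescoping term.} By Lemma~\ref{lmA.1}, $\sup_{t\in\N,\|\bz\|\le M}\E_\bz V(\BZ(t))<\infty$, and since $\ln V\le V$ this gives
\[
\lim_{T\to\infty}\sup_{\|\bz\|\le M}\frac{1}{T+1}\sum_{t=0}^T\E_\bz\bigl(\ln V(\BZ(t+1))-\ln V(\BZ(t))\bigr)=0,
\]
exactly as in Lemma~\ref{lm3.1}. This contributes $o(T)$ uniformly over the initial ball.

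\textbf{Step 2: control the growth-rate term via a contradiction argument.} I claim
\[
\limsup_{T\to\infty}\frac{1}{T+1}\sup_{\substack{\|\bz\|\le M\\ x_i<\delta_e,\,i\in I^c}}\sum_{t=0}^T\E_\bz\Bigl(-\sum_{i\in I}\ln F_i(\BZ(t),\xi(t))+\check p\max_{i\in I^c}\ln F_i(\BZ(t),\xi(t))\Bigr)\le -2r_e,
\]
for a suitable $\delta_e>0$. Suppose not; then there are sequences $\bz_k$ with $\|\bz_k\|\le M$ and $\dist(\bz_k,\Se^I_+)\to 0$ (the condition $x_i<\delta_e$ with $\delta_e\downarrow 0$ forces this, together with boundedness) and times $T_k\to\infty$ along which the averaged quantity exceeds $-2r_e$. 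Passing to a subsequence, the occupation measures $\Pi_{T_k,\bz_k}$ converge weakly to some invariant probability measure $\pi$, which is supported on $\Se^I$ because $\dist(\bz_k,\Se^I)\to0$ and $\Se^I$ is closed and invariant; hence $\pi\in\Conv(\M^I)$. Using Lemma~\ref{lmA.2} (with the exponent vector $\tilde p$, which satisfies $|\tilde p|_1\le\gamma_4$ after possibly shrinking $\hat p,\check p$) together with Lemma~\ref{lm2.4} to pass to the limit in the unbounded integrands $\log F_i$ — whose growth is dominated by $V\max_i\{x_i^{\gamma_3}\wedge x_i^{-\gamma_3}\}$ via $|\log F_i|\le c\,h$ — one gets
\[
-\sum_{i\in I}\hat p_i\, r_i(\pi)+\check p\max_{i\in I^c}r_i(\pi)\ge -2r_e,
\]
i.e. $\sum_{i\in I}\hat p_i r_i(\pi)-\check p\max_{i\in I^c}r_i(\pi)\le 2r_e$, contradicting \eqref{e3.3}, which gives strict inequality $>3r_e$ over all of $\Conv(\M^I)$. (Here I also use $\E\bigl[\max_{i\in I^c}\ln F_i(\BZ(t),\xi(t))\bigr]\le\sum_{i\in I^c}\E[(\ln F_i)^+]$ and the convexity/continuity needed to identify $\limsup$ with the value at $\pi$; the max over the finite set $I^c$ is handled by the same weak-convergence argument applied to each coordinate.)

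\textbf{Step 3: combine and extract finite $T_e$.} Adding Steps 1 and 2, for $T$ large enough the full averaged expression in \eqref{e3.4} is $\le -\tfrac32 r_e$, hence $\le -r_e(T+1)/(T+1)\cdot(T+1)$; choosing $T_e$ so that the $\limsup$ bounds hold for all $T\ge T_e$ and, crucially, also for all $T\in[T_e,n^*T_e]$ (a finite window, so a single $T_e$ works by the uniformity in Step 2), and shrinking $\delta_e$ to the radius that makes the contradiction argument valid on that window, gives \eqref{e3.4}. The constant $\delta_e$ is obtained exactly as $\tilde\delta$ in the proof of Proposition~\ref{prop2.1}: from the Feller property and the fact that the strict inequality is open.

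\textbf{Main obstacle.} The one genuinely delicate point is Step 2: making the weak-convergence passage to the limit rigorous for the \emph{unbounded} functions $\bz\mapsto\E[\log F_i(\bz,\xi(1))]$ and $\bz\mapsto\E[\max_{i\in I^c}\log F_i(\bz,\xi(1))]$, and simultaneously ensuring the limiting measure $\pi$ lives on $\Se^I$ (not merely on $\Se_0$). The first is handled by Lemma~\ref{lm2.4} combined with the uniform integrability coming from Lemma~\ref{lmA.1}--\ref{lmA.2} and the domination $|\log F_i|^k\le c_k h$; the second follows because $\{x_i=0\}$ is closed and invariant for each $i\in I^c$, so $\pi(\{x_i>\eta\})\le\liminf_k\Pi_{T_k,\bz_k}(\{x_i>\eta\})$ can be made arbitrarily small as $\delta_e\to0$. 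Everything else is a direct transcription of the arguments already carried out in Lemmas~\ref{lmA.1}--\ref{lm3.1} and Proposition~\ref{prop2.1}, so I would keep the exposition brief and refer back to those.
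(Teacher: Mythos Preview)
Your proposal is correct and follows exactly the approach the paper intends: the paper's own proof of Lemma~\ref{lm4.2} reads in its entirety ``This is very similar to the proof of Lemma~\ref{lm3.1} and is therefore omitted,'' and what you have written is precisely that transcription---kill the telescoping $V$-term via Lemma~\ref{lmA.1}, run the same contradiction/weak-convergence argument as in Lemma~\ref{lm3.1} (now against \eqref{e3.3} over $\Conv(\M^I)$ instead of \eqref{e.p} over $\Conv(\M)$), and obtain $\delta_e$ either by letting $\delta_e\downarrow 0$ along the contradicting sequence or, equivalently, by the Feller extension used for $\tilde\delta$ in Proposition~\ref{prop2.1}. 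Your identification of the one genuinely new ingredient---forcing the limit measure $\pi$ to sit on $\Se^I$ so that \eqref{e3.3} applies---is exactly the adaptation required.
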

\begin{proof}
This is very similar to the proof of Lemma \ref{lm3.1} and is
therefore omitted.
\end{proof}
\begin{prop}\label{prop4.1}
Let $I\in E_1$ and suppose that Assumption \ref{a.extn} holds.
There exists $\theta\in(0,1)$ such that for any
$T\in[T_e,n^*T_e]\cap \Z$ and $\bz\in\Se_+$ satisfying $
\|\bz\|\leq M,$ $x_i<\delta_e,$ $i\in I^c$ one has
$$\E_\bz W_\theta(\BZ(T))\leq \exp\left(-\frac{1}{2}\theta
r_eT\right) W_\theta(\bz)$$
where
$M, T_e, \hat p_i,\check p, \delta_e, n^*$ are as in Lemma
\ref{lm4.2} and
$$W_\theta(\bz):=\sum_{i\in I^c}\left[V(\bz)\dfrac{x_i^{\check
p}}{\prod_{j\in I} x_j^{\hat p_j}}\right]^\theta, \bz\in\Se_+.$$
\end{prop}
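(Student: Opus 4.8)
The plan is to run the argument of Proposition \ref{prop2.1}, but term by term in the sum defining $W_\theta$, using that the standing hypothesis $x_i<\delta_e$, $i\in I^c$, already confines $\bz$ to the region where the drift estimate of Lemma \ref{lm4.2} holds, so that there is no boundary/interior dichotomy and, correspondingly, no additive constant in the conclusion. Fix $i\in I^c$ and set $W^{(i)}(\bz):=V(\bz)\,x_i^{\check p}\prod_{j\in I}x_j^{-\hat p_j}$, so that $W_\theta=\sum_{i\in I^c}(W^{(i)})^\theta$ on $\Se_+$ (where all coordinates are positive, so $W^{(i)}<\infty$). Using $X_k(t+1)=X_k(t)F_k(\BZ(t),\xi(t))$ and telescoping,
\begin{align*}
\ln W^{(i)}(\BZ(T))&=\ln W^{(i)}(\bz)+G_i(T),\\
G_i(T)&:=\sum_{t=0}^{T-1}\Big(\ln\tfrac{V(\BZ(t+1))}{V(\BZ(t))}+\check p\ln F_i(\BZ(t),\xi(t))-\sum_{j\in I}\hat p_j\ln F_j(\BZ(t),\xi(t))\Big),
\end{align*}
so that $\E_\bz(W^{(i)})^\theta(\BZ(T))=(W^{(i)}(\bz))^\theta\,\E_\bz\exp(\theta G_i(T))$. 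Since $I^c$ is finite, it suffices to produce a single $\theta\in(0,1)$ with $\E_\bz\exp(\theta G_i(T))\le\exp(-\tfrac12\theta r_eT)$, uniformly over $i\in I^c$, over $\bz\in\Se_+$ with $\|\bz\|\le M$ and $x_i<\delta_e$ for $i\in I^c$, and over $T\in[T_e,n^*T_e]\cap\Z$; summing over $i\in I^c$ then yields the statement with no loss.

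First I would establish the two-sided exponential moment bounds $\E_\bz\exp(\pm G_i(T))\le C_2^{\,T}$ (with $C_2>1$, up to a fixed multiplicative constant $\sup_{B_M}V<\infty$, harmless on $\|\bz\|\le M$), exactly as in the steps \eqref{e3.4_2}--\eqref{e3.5} of Proposition \ref{prop2.1}: for the ``$+$'' bound apply Lemma \ref{lmA.2} to the exponent vector carrying $\check p$ in slot $i$, $-\hat p_j$ in slot $j$ for $j\in I$, and $0$ elsewhere, which is admissible once $\check p$ and the $\hat p_j$ have been chosen small enough that the $\ell^1$-norm of this vector is $\le\gamma_4$; for the ``$-$'' bound use the same reciprocal device as in Proposition \ref{prop2.1} (from $1/V\le 1$, $\exp(-G_i(T))$ is controlled by $V\,X_i^{-\check p}\prod_{j\in I}X_j^{\hat p_j}$) and invoke Lemma \ref{lmA.2} again. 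With these bounds, \cite[Lemma 3.5]{HN16} applies to $G_i(T)$: the cumulant function $\tilde\phi_{i,\bz,T}(\theta):=\ln\E_\bz\exp(\theta G_i(T))$ satisfies $0\le\tilde\phi_{i,\bz,T}''(\theta)\le\tilde K_2$ on $[0,\tfrac12)$ for a constant $\tilde K_2$ uniform in $i,\bz,T$ as above, and $\tilde\phi_{i,\bz,T}'(0)=\E_\bz G_i(T)$.

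Next I would bound the drift. Since $\ln F_i\le\max_{i'\in I^c}\ln F_{i'}$ pointwise, $G_i(T)$ is dominated by the sum appearing in Lemma \ref{lm4.2}, whence $\E_\bz G_i(T)\le-r_eT$ (applying Lemma \ref{lm4.2}, after a harmless downward adjustment of $T_e$ to absorb the off-by-one between the $T$-term sum $G_i(T)$ and the $(T{+}1)$-term sum in that lemma). A Taylor expansion of $\tilde\phi_{i,\bz,T}$ at $\theta=0$ then gives $\tilde\phi_{i,\bz,T}(\theta)\le-r_eT\theta+\tfrac12\tilde K_2\theta^2$, and choosing any fixed $\theta\in(0,\tfrac12)$ with $\theta\le r_eT_e/\tilde K_2$ (so that $\tfrac12\tilde K_2\theta\le\tfrac12 r_eT_e\le\tfrac12 r_eT$) forces $\tilde\phi_{i,\bz,T}(\theta)\le-\tfrac12 r_eT\theta$, i.e. $\E_\bz\exp(\theta G_i(T))\le e^{-\frac12\theta r_eT}$. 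Summing over $i\in I^c$,
\[
\E_\bz W_\theta(\BZ(T))=\sum_{i\in I^c}(W^{(i)}(\bz))^\theta\,\E_\bz\exp(\theta G_i(T))\le e^{-\frac12\theta r_eT}W_\theta(\bz).
\]

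I expect the only genuine work to be the uniform two-sided exponential moment estimates: one must check that the exponent vectors arising here (mixing the positive weight $\check p$ on a coordinate in $I^c$ with the negative weights $-\hat p_j$, $j\in I$) still have $\ell^1$-norm below the threshold $\gamma_4$ of Lemma \ref{lmA.2}, which means the magnitudes of $\check p$ and of the $\hat p_j$ selected at \eqref{e3.2}--\eqref{e3.3} must be kept suitably small, and that the reciprocal functional is still controlled by the Lyapunov estimate of Lemma \ref{lmA.2}. Everything afterwards---the application of \cite[Lemma 3.5]{HN16}, the drift bound from Lemma \ref{lm4.2}, and the Taylor/convexity step---is routine, and, by contrast with Proposition \ref{prop2.1}, no second regime (and hence no additive term) appears because the hypothesis $x_i<\delta_e$, $i\in I^c$, already restricts $\bz$ to the region where the drift estimate is valid.
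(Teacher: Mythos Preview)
Your proposal is correct and follows essentially the same approach as the paper: the paper's proof simply defines $W(\bz,i):=V(\bz)\,x_i^{\check p}\prod_{j\in I}x_j^{-\hat p_j}$, asserts that ``similarly to Proposition~\ref{prop2.1}, by making use of Lemma~\ref{lm4.2}'' one obtains $\E_\bz W^\theta(\BZ(T),i)\le e^{-\frac12\theta r_eT}W^\theta(\bz,i)$ term by term, and then sums over $i\in I^c$. You have accurately reconstructed the intended argument, including the key observation that the hypothesis $x_i<\delta_e$, $i\in I^c$, eliminates the interior regime (and hence the additive constant $K_\theta$) that was present in Proposition~\ref{prop2.1}, and that the pointwise bound $\check p\ln F_i\le \check p\max_{i'\in I^c}\ln F_{i'}$ reduces the drift estimate to Lemma~\ref{lm4.2}.
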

\begin{proof}
For $i\in I^c$, let
$W(\bz,i):=V(\bz)\dfrac{x_i^{\check p}}{\prod_{j\in I} x_j^{\hat
p_j}}.$
Similarly to Proposition \ref{prop2.1}, by making use of Lemma
\ref{lm4.2}, one can find a $\theta>0$ such that for
$T\in[T_e,n^*T_e]\cap \Z$,
$\bz\in\R^{n,\circ}_+\times\R^{\kappa_0}$ with $\|\bz\|\leq M,$
and $x_i<\delta_e$ we have
$$\E W^\theta(\BZ(T),i)\leq \exp\left(-\frac{1}{2}\theta
r_eT\right) W^\theta(\bz, i).$$
The proof is complete by noting that
$$W_\theta(\bz)=\sum_{i\in I^c}W^\theta(\bz,i).$$
\end{proof}

\begin{proof}[Proof of Theorem \ref{t:exxx}]
Let $I\in E_1$.
By Lemma \ref{lmA.2} for any $i\in I^c$
$$P W(\bz,i)\leq \rho_2 W(\bz,i), |\bz|\geq M.$$
Then using Jensen's inequality,
 we have
\begin{equation}\label{et3.1}
P W_\theta(\bz)\leq \rho_2^\theta W_\theta(\bz) \text{ if }
|\bz|\geq M.
\end{equation}
Define the constants
$$C_U:=\sup\left\{\dfrac{\prod_{i\in I} x_i^{\hat p_i}}{V(\bz)}:
\bz\in\R^{n,\circ}_+\times\R^{\kappa_0}\right\}<\infty,$$
$$
\varsigma:=\dfrac{\delta_e^{\check p\theta}}{C_U^\theta}$$
and the stopping time
$$\eta:=\inf\left\{t\geq0: W_\theta(\BZ(t))\geq
\varsigma\right\}.$$

Clearly, if $W_\theta(\bz)<\varsigma$, then $\eta>0$ and
for any $i\in I^c$, we get
\begin{equation}\label{e:ine}
X_i(t)\leq \delta_e\,, t\in [0,\eta).
\end{equation}
Let $$\wtd W_\theta(\bz):=\varsigma\wedge W_\theta(\bz).$$
We have from the concavity of $x\mapsto x\wedge \varsigma$ that
$$\E_\bz \wtd W_\theta(\BZ(T))\leq\varsigma\wedge \E_\bz
W_\theta(\BZ(T)).$$
Let $\tau$ be defined as in \eqref{e:tau}. By \eqref{et3.1} we
have that
$$
\begin{aligned}
\E_\bz&\left[ \rho_2^{-\theta(\tau\wedge\eta\wedge
n^*T_e)}W_\theta(\BZ(\theta\gamma_b(\tau\wedge\xi\wedge
n^*T_e))\right]
&\leq W_\theta(\bz), \bz\in\Se_+.
\end{aligned}
$$
As a result for all $\bz\in\Se_+$
\begin{equation}\label{et3.3}
\begin{aligned}
W_\theta(\bz)\geq&
\E_\bz\left[ \rho_2^{-\theta(\tau\wedge\eta\wedge
n^*T_e)}W_\theta(\BZ(\tau\wedge\eta\wedge n^*T_e))\right]\\
\geq& \E_\bz
\left[\1_{\{\tau\wedge\eta\wedge(n^*-1)T_e=\tau\}}W_\theta(\BZ(\tau))\right]\\
&+ \E_\bz
\left[\1_{\{\tau\wedge\eta\wedge(n^*-1)T_e=\eta\}}W_\theta(\BZ(\eta))\right]\\
&+\rho_2^{-\theta (n^*-1)T_e} \E_\bz
\left[\1_{\{(n^*-1)T_e<\tau\wedge\eta\leq
n^*T_e\}}W_\theta(\BZ(\tau\wedge\eta))\right]\\
&+\rho_2^{-\theta n^*T_e} \E_\bz \left[\1_{\{\tau\wedge\eta>
n^*T_e\}}W_\theta(\BZ(n^*T_e))\right].\\
 \end{aligned}
\end{equation}

By the strong Markov property of $(\BZ(t))$ and
Proposition \ref{prop4.1} (which we can use because of
\eqref{e:ine}) we see that for all $\bz\in\Se_+$
\begin{equation}\label{et3.4}
\begin{aligned}
\E_\bz&\left[
\1_{\{\tau\wedge\eta\wedge(n^*-1)T_e=\tau\}}W_\theta(\BZ(n^*T_e))\right]\\
&\leq
\E_\bz
\left[\1_{\{\tau\wedge\eta\wedge(n^*-1)T_e=\tau\}}\exp\left(-\frac{1}{2}\theta
r_e(n^*T_e-\tau)\right)W_\theta(\BZ(\tau))\right] \\
 &\leq \exp\left(-\frac{1}{2}\theta  r_eT_e)\right)
\E_\bz\left[\1_{\{\tau\wedge\eta\wedge(n^*-1)T_e=\tau\}}W_\theta(\BZ(\tau))\right].
\
 \end{aligned}
\end{equation}
Similarly, by the strong Markov property of $(\BZ(t))$ and
Lemma \ref{lmA.2}, we obtain for any $\bz\in\Se_+$ that
\begin{equation}\label{et3.5}
\begin{aligned}
\E_\bz&\left[ \1_{\{(n^*-1)T_e<\tau\wedge\eta\leq
n^*T_e\}}W_\theta(\BZ(n^*T_e))\right]\\
&\leq
\E_\bz \left[\1_{\{(n^*-1)T_e<\tau\wedge\eta\leq
n^*T_e\}}C_2^{\theta
(n^*T_e-\tau\wedge\eta)}W_\theta(\BZ(\tau\wedge\eta ))\right] \\
&\leq C_2^{\theta
T_e}\E_\bz\left[\1_{\{(n^*-1)T_e<\tau\wedge\eta\leq
n^*T_e\}}W_\theta(\BZ( \tau\wedge\eta ))\right].
 \end{aligned}
\end{equation}
If $W_\theta(\bz)<\varsigma$ then applying \eqref{et3.4},
\eqref{et3.5} and the inequality $\wtd W_\theta(\BZ(n^*T_e))\leq
W_\theta(\BZ(n^*T_e\wedge\eta))$ to \eqref{et3.3} yields
\begin{equation}\label{et3.8}
\begin{aligned}
\wtd W_\theta(\bz)=W_\theta(\bz)
\geq& \E_\bz
\left[\1_{\{\tau\wedge\eta\wedge(n^*-1)T_e=\tau\}}W_\theta(\BZ(\tau))\right]\\
&+ \E_\bz
\left[\1_{\{\tau\wedge\eta\wedge(n^*-1)T_e=\eta\}}W_\theta(\BZ(\eta))\right]\\
&+\rho_2^{-\theta (n^*-1)T_e} \E_\bz
\left[\1_{\{(n^*-1)T_e<\tau\wedge\eta\leq
n^*T\}}W_\theta(\BZ(\tau\wedge\eta))\right]\\
&+\rho_2^{-\theta n^*T_e} \E_\bz \left[\1_{\{\tau\wedge\eta>
n^*T_e\}}W_\theta(\BZ(n^*T_e))\right]\\
\geq&\exp\left(\frac{1}{2}\theta r_eT_e)\right)\E_\bz
\left[\1_{\{\tau\wedge\eta\wedge(n^*-1)T_e=\tau\}}W_\theta(\BZ(n^*T_e))\right]\\
&+ \E_\bz \left[\1_{\{\tau\wedge\eta\wedge(n^*-1)T_e=\eta\}}\wtd
W_\theta(\BZ(n^*T_e))\right]\\
&+\rho_2^{-\theta (n^*-1)T_e}C_2^{-\theta T_e}\E_\bz
\left[\1_{\{(n^*-1)T_e<\tau\wedge\eta\leq
n^*T_e\}}W_\theta(\BZ(n^*T_e))\right]\\
&+\rho_2^{-\theta n^*T_e}\E_\bz \left[\1_{\{\tau\wedge\eta>
n^*T_e\}}W_\theta(\BZ(n^*T_e))\right]\\
\geq& \E_\bz \wtd W_\theta(\BZ(n^*T_e)) \,\quad\text{ (since }
\wtd W_\theta(\cdot)\leq W_\theta(\cdot)).
 \end{aligned}
\end{equation}
Clearly, if $W_\theta(\bz)\geq\varsigma$ then
\begin{equation}\label{et3.8a}
\E_\bz \wtd W_\theta(\BZ(n^*T_e)) \leq \varsigma=\wtd
W_\theta(\bz).
\end{equation}
As a result of \eqref{et3.8}, \eqref{et3.8a} and the Markov
property of $(\BZ(t))$, the sequence
$\{w_2(k): k\geq 0\}$ where $w_2(k):=\wtd
W_\theta(\BZ(kn^*T_e))$ is a supermartingale.
Define the discrete stopping time $$\eta^*:=\inf\{k\in\N: W_\theta(\BZ(kn^*T_e))\geq\varsigma\}.$$
Moreover, we also deduce from \eqref{et3.8} that
$$
\E_\bz \1_{\{\eta^*>1\}}W_\theta(\BZ(n^*T_e))\leq \kappa_e
W_\theta(\bz), \bz\in\Se_+
$$
where $\kappa_e^{-1}=\min\left\{\rho_2^{-\theta
(n^*-1)T_e}C_2^{-\theta T_e},\exp\left(\frac{1}{2}\theta
r_eT_e)\right),\rho_2^{-\theta n^*T_e}\right\}>1.$
As a result, $\{w_3(k): k\geq 0\}$ with
$$w_3(k):=\kappa_e^{-k}\1_{\{\eta^*>k\}}W_\theta(\BZ(kn^*T_e))$$is also a supermartingle.
For any $\eps\in(0,1)$, if $W_\theta(\bz)\leq \varsigma\eps$ we
have
\begin{equation}\label{e:EY_ineq}
\E_\bz w_2(k\wedge\eta^*)\leq\E_\bz w_2(0)=W_\theta(\bz)\leq
\varsigma\eps\,, k\geq 0.
\end{equation}
Subsequently, \eqref{e:EY_ineq} combined with the Markov
inequality and the fact that $w_2(\eta)=W_\theta(\BZ(\eta
n^*T_e))\geq\varsigma$ yields
$$\PP_\bz\{\eta^*<k\}\leq \varsigma^{-1}\E_\bz
w_2(k\wedge\eta^*)\leq \eps, ~\text{if}~k\in \N,
W_\theta(\bz)\leq\varsigma\eps.
$$
Next, let $k\to\infty$ to get
\begin{equation}\label{et3.7}
\PP_\bz\{\eta^*<\infty\}\leq \eps\,\text{ if
}\,W_\theta(\bz)\leq\varsigma\eps.
\end{equation}
Let $\kappa_3\in(\kappa_e,1)$.
Since $P_t \wtd W_\theta(\bz)\leq C_2^{\theta t} \wtd
W_\theta(\bz)$ for any $t>0, \bz\in\Se_+$,
we have that
$$\PP_\bz\left\{ \wtd W_\theta(\BZ(t))\leq c, t=1,\dots,
n^*T_e\right\}\leq \dfrac{n^*T_e C_2^{\theta n^*T_e}\wtd
W_\theta(\bz)}{c}.$$
The last inequality, the Markov property of $\BZ(t)$, and the
fact that $(w_3(k))_{k\geq 0}$ is a supermartingale imply that
for any $\bz\in\Se_+$
$$
\begin{aligned}\PP_\bz&\left\{\1_{\{\eta^*>k\}}\wtd
W_\theta(\BZ(t))\leq
\left(\frac{\kappa_e}{\kappa_3}\right)^{-k}, t\in [kn^*T_e+1,
(k+1)n^*T_e]\right\}\\
&\leq n^*T_e C_2^{\theta n^*T_e}\E_\bz
w_3(k)\dfrac{\kappa_e^{-k}}{\kappa_3^{-k}}\leq n^*T_e
C_2^{\theta n^*T_e}\dfrac{\E_\bz w_3(0)}{\kappa_3^{-k}}\leq
n^*T_e C_2^{\theta n^*T_e}\dfrac{W_\theta(\bz)}{\kappa_3^{-k}}
\end{aligned}
$$
Since $\sum_k\left(n^*T_e C_2^{\theta
n^*T_e}\dfrac{W_\theta(\bz)}{\kappa_3^{-k}}\right)<\infty$,
we deduce from the Borel-Cantelli lemma that
$$\PP_\bz\left\{\limsup_{k\to \infty} \sup_{t\in [kn^*T_e+1,
(k+1)n^*T_e]}\1_{\{\eta^*>k\}}\wtd
W_\theta(\BZ(t))\left(\frac{\kappa_3}{\kappa_e}\right)^{k}<1\right\}=1.$$
This and \eqref{et3.7} imply that if $W_\theta(\bz)\leq
\varsigma\eps$ then
$$\PP_\bz\left\{\limsup_{t\to \infty}
\left(\left(\frac{\kappa_3}{\kappa_e}\right)^{\frac{t}{n^*T_e+1}}
W_\theta(\BZ(t))\right)<1\right
\}\geq1-\eps.$$
Since
$\sup_{\bz\in\R^n\times\R^{\kappa_0}}\frac{x_i^{\theta\check
p}}{W_\theta(\bz)}<\infty, i\in I^c$,
we can easily obtain the extinction result from Theorem
\ref{t:ex}.

\end{proof}
\section{Robustness proofs}\label{s:c}
Let $\circ$ denote the element-wise product and $\1_n$ be the
vector in $\R^n$ whose components are all $1$.
Assume that the function $V$ from Assumption $A3)$ satisfies the
robust estimate
\begin{equation*}
\E \left[V(\bx^\top (F(\bz,\xi)\circ(\1_n+\wtd\eps_1)),
G(\bz,\xi)+\wtd\eps_2))h(\bz,\xi)\right]\leq \rho V(\bz)+C
\end{equation*}
for any vectors $\wtd\eps_1,\wtd\eps_2\in\R^n$ such that
$|\wtd\eps_1|\vee|\wtd\eps_2|<\delta$.
Note that $$\tilde h(\bz,\xi)=\max_{i=1}^n \left\{\wtd
F_i(\bz,\xi),\frac{1}{\wtd F_i(\bz,\xi)}\right\}^{\gamma_3}.$$
Consequently $\tilde h(\bz,\xi)\leq e^\delta h(\bz,\xi)$ and
$$\E \left[V(\bx^\top (\wtd F(\bz,\xi)), \wtd G(\bz,\xi))\wtd
h(\bz,\xi)\right]\leq \rho V(\bz)+C$$
if $\delta>0$ is sufficiently small.
This shows that there exist
$C_2>0,\rho_2\in(0,1),\gamma_4\in(0,\gamma_3)$ such that
$$
\E_\bz V\left(\wtd \BZ(1)\prod_{i=1}^n\wtd
X_i^{p_i}(1)\right)\leq
\left(\1_{\{|\bz|<M\}}(C_2-\rho_2)+\rho_2\right)
V^{\frac12}(\bz)\prod_{i=1}^nx_i^{p_i}
$$
for any $\bp=(p_1,\dots,p_n)\in\R^n$ satisfying
$$|\bp|_1:=\sum|p_i|\leq\gamma_4.$$

Analogously to Lemma \ref{lmA.1} one can show the following
uniform bound
\begin{equation}\label{t-u-int}
\sup_{|\bz|\leq M, t\in\N} \left(\E_\bz
V(\wtd\BZ(t))+\E_{\bz}\tilde
h(\wtd\BZ(t),\xi(t))\right)=K_{M}<\infty
\end{equation}
when $\delta$ is sufficiently small.

Slight changing in the factor on the right hand side of
\ref{lm3.1-e0}, we have 
\begin{lm}\label{lmC.1}
Suppose that Assumption \ref{a.coexn} holds. Let $\bp$ and $r^*$
be as in \eqref{e.p}.
There exists a $\wtd T^*>0$ such that, for any $T>\wtd T^*$,
$\bz\in\Se_0, \|\bz\|\leq M$ one has
\begin{equation}\label{lmC1-e1}
\sum_{t=0}^T\E_\bz\left(\ln V(\BZ(t+1))-\ln V(\BZ(t))-\sum
p_i\ln F_i(\BZ(t), \xi(t))\right)\leq-1.5r^*(T+1).
\end{equation}
\end{lm}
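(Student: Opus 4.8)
The plan is to rerun the argument behind Lemma~\ref{lm3.1} essentially verbatim, the only change being to extract a sharper multiplicative constant. The point is that the min--max constant in \eqref{e.p} equals $2r^{*}$, so the proof of Lemma~\ref{lm3.1} in fact produces an asymptotic rate close to $-2r^{*}$, and the bound $-1.5\,r^{*}(T+1)$ sits comfortably below it once $T$ is large. First I would record, exactly as in the proof of Lemma~\ref{lm3.1}, that Lemma~\ref{lmA.1} gives $\sup_{t\in\N,\,\|\bz\|\le M}\E_\bz V(\BZ(t))<\infty$; since $\ln V\le V$ and the increments telescope, this yields
\[
\lim_{T\to\infty}\ \sup_{\|\bz\|\le M}\ \frac{1}{T+1}\sum_{t=0}^{T}\E_\bz\bigl(\ln V(\BZ(t+1))-\ln V(\BZ(t))\bigr)=0 .
\]

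Next I would establish
\[
\limsup_{T\to\infty}\ \frac{1}{T+1}\ \sup_{\|\bz\|\le M}\ \sum_{t=0}^{T}\E_\bz\Bigl(-\sum_i p_i\ln F_i(\BZ(t),\xi(t))\Bigr)\le -2r^{*},
\]
by the contradiction argument from Lemma~4.1 of \cite{HN16}: were it false, there would be $\bz_k$ with $\|\bz_k\|\le M$ and $T_k\to\infty$ along which the normalized sums stay above $-2r^{*}$; extracting a weakly convergent subsequence of the occupation measures $\Pi_{T_k,\bz_k}$ with limit an invariant probability measure $\pi$ supported on $\Se_0$, and using Lemma~\ref{lmA.2} together with Lemma~\ref{lm2.4} and the pointwise bound $|\log F_i|^{k}\le c_k h$ to pass the unbounded integrand $-\sum_i p_i\ln F_i$ through the weak limit, one arrives at $\int(-\sum_i p_i\ln F_i)\,d\pi\ge -2r^{*}$, contradicting \eqref{e.p}. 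Adding the two displays, for every $\eps>0$ there is $\wtd T^{*}=\wtd T^{*}(\eps)$ such that for all $T\ge\wtd T^{*}$ and all $\bz=(\bx,\by)$ with $\|\bz\|\le M$ and $\bx\in\partial\R^n_+$,
\[
\frac{1}{T+1}\sum_{t=0}^{T}\E_\bz\Bigl(\ln V(\BZ(t+1))-\ln V(\BZ(t))-\sum_i p_i\ln F_i(\BZ(t),\xi(t))\Bigr)\le -2r^{*}+\eps ;
\]
taking $\eps=r^{*}/2$ gives \eqref{lmC1-e1}.

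The only substantive point, and the step I expect to be the main obstacle, is the uniform-integrability issue hidden inside the contradiction argument: weak convergence of the occupation measures does not on its own control the integral of the unbounded function $-\sum_i p_i\ln F_i$, so one must lean on the moment bounds of Lemma~\ref{lmA.2} and the convergence criterion of Lemma~\ref{lm2.4} (as in \cite{HN16}) to justify passing to the limit. Finally, I would remark on why the factor $1.5\,r^{*}$ is worth isolating: it leaves a margin of $\tfrac12 r^{*}(T+1)$ so that, in the robustness argument, after replacing $\BZ$ by the perturbed process $\wtd\BZ$ and $F_i$ by $\wtd F_i$, the closeness estimate \eqref{e:robust1} together with the uniform bound \eqref{t-u-int} degrades the corresponding sum by at most this margin, and the hypothesis \eqref{lm3.1-e0} of Lemma~\ref{lm3.1} is recovered for $\wtd\BZ$ with the original constant $r^{*}$.
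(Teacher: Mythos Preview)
Your proposal is correct and follows exactly the approach the paper intends: the paper's own justification of Lemma~\ref{lmC.1} is the single clause ``Slight changing in the factor on the right hand side of \eqref{lm3.1-e0},'' i.e.\ rerun the proof of Lemma~\ref{lm3.1} and note that the contradiction argument actually delivers an asymptotic rate $-2r^{*}$, so any constant strictly above $-2r^{*}$ (in particular $-1.5r^{*}$) is available for large $T$. Your write-up is in fact more explicit than the paper's, and your closing remark about the role of the $\tfrac12 r^{*}$ margin in absorbing the perturbation error via \eqref{e:robust1} and \eqref{t-u-int} is precisely how the paper uses this lemma in the robustness argument.
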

On the other hand, it obviously follows from \eqref{e:robust1}
that for any $\eps>0, M>0, T>0, n_0>0$, we have
$$\E_\bz\left\{\1_{\{|\wtd\BZ(t))|\vee|\BZ(t)|<n_0\}}|\wtd\BZ(t))-\BZ(t)|\right\}\leq\eps,\,
\bz\leq M, t\in\{0,\dots, T\}$$
when $\delta$ is sufficiently small.
This and the uniform integrability \eqref{t-u-int} imply that,
for any $\eps, T>0$, there exists $\delta>0$ such that for any
$\delta$-pertubation of \eqref{e:system}, we have
\begin{equation}\label{lmC1-e2}
\begin{aligned}
\E_{\bz}|\ln \wtd F_i(\wtd\BZ(t))-\ln F_i(\BZ(t))|
\leq&\E_{\bz}|\ln \wtd F_i(\wtd\BZ(t))-\ln\wtd  F_i(\BZ(t))|\\
&+\E_{\bz}|\ln\wtd  F_i(\BZ(t))-\ln F_i(\BZ(t))|
\eps,\, \bz\in\Se_0,|\bz|\leq M, t\in\{0,\dots, T\}.
\end{aligned}
\end{equation}

By virtue of \eqref{lmC1-e1} and \eqref{lmC1-e2},
for sufficiently small $\delta$,
\begin{equation}
\sum_{t=0}^T\E_\bz\left(\ln V(\wtd\BZ(t+1))-\ln
V(\wtd\BZ(t))-\sum p_i\ln F_i(\wtd\BZ(t)),
\xi(t))\right)\leq-r^*(T+1)
\end{equation}
for any $\bz\in\Se_0, |\bz|\leq M$ and $T\geq\wtd T^*.$
With this estimate,
it follows from the arguments in Appendix \ref{s:a} that
Theorem \ref{thm3.1} holds for $\wtd\BZ(t)$.
Similarly, Theorem \ref{t:ex} and Theorem \ref{t:ex22} hold.

\section{Proof of Lemma \ref{l_per}}\label{s:d}
\begin{lem}
Suppose $\{(E_i(t), S(t))_{i=1,\dots, n}\}, t\in\N$ is a sequence of $n+1$-dimensional random variables, i.i.d. over $t$ such that $\E  \left[S(t)e^{E_j(t)}\right]^2<\infty.$ Then the model given by \eqref{e:per} and \eqref{e:Cj} satisfies Assumption \ref{a:main} by taking a small enough $\gamma_3>0$,
and
\[
V(\bz) = \sum_j z_j + 1.
\]
Assumption
\ref{a:ext} holds with
\[
\phi(\bz) = \delta V(\bz)
\]
for some $\delta>0$.
Moreover, if the support of $\ln S(t) + max_j (E_j(t)\ln \delta_j)$ contains values less than 0 then the boundary is accessible.
\end{lem}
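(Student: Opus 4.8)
The plan is to verify the three assertions in turn. Here $\bz=\bx=(x_1,\dots,x_n)$ (there is no auxiliary variable, so $\kappa_0=0$, $\bz=\bx$ and $G$ is absent), the fitness functions are $F_i(\bx,\xi)=1-\delta_i+Se^{E_i-D(\bx,\xi)}$ with $D(\bx,\xi)=\sum_j\alpha_je^{E_j}x_j\ge0$, and I abbreviate $\delta_*:=\min_i\delta_i\in(0,1)$, $\alpha_*:=\min_i\alpha_i>0$ and $\Lambda(\xi):=\max_iSe^{E_i}$. Conditions A1) and A2) of Assumption \ref{a:main} are immediate: $F_i$ is smooth in $\bx$ for each $\xi$, jointly Borel, and $F_i\ge1-\delta_i>0$, while A2) is vacuous. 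Since $V(\bx)=1+\sum_jx_j\ge1+|\bx|$, condition A3)-i) holds with $\gamma_1=1$ and A3)-ii) is clear, so the content of the first assertion is the contraction inequality A3)-iii).

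The engine is a Ricker-type saturation bound. From $D\ge\alpha_*\sum_ie^{E_i}x_i$ we get $\sum_ix_ie^{E_i}\le D/\alpha_*$, hence
\[
W:=Se^{-D}\sum_ix_ie^{E_i}\le\frac{S}{\alpha_*}\,De^{-D}\le\frac{S}{e\alpha_*},
\]
a bound \emph{uniform in $\bx$}; we will also use the complementary linear bound $W\le\bigl(\sum_ix_i\bigr)S\max_ie^{E_i}=\Lambda\sum_ix_i$. Thus $V(\bx\circ F)=1+\sum_ix_i(1-\delta_i)+W\le1+(1-\delta_*)\sum_ix_i+S/(e\alpha_*)$. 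For the weight $h$, the inequalities $F_i\le1+Se^{E_i}$ and $1/F_i\le1/(1-\delta_*)$ give $h(\bx,\xi)\le\bar h_{\gamma_3}(\xi):=C_0^{\gamma_3}(1+\Lambda)^{\gamma_3}$ with $C_0$ not depending on $\gamma_3$; for $\gamma_3\le1$ this is dominated by the integrable function $(1\vee C_0)(1+\Lambda)$ (the moment hypothesis gives $\Lambda\in L^2\subset L^1$), so $\E\bar h_{\gamma_3}\to1$ as $\gamma_3\downarrow0$ by dominated convergence. Multiplying the two displays and integrating, the dominant term is $(1-\delta_*)\E\bar h_{\gamma_3}\sum_ix_i\le(1-\delta_*)\E\bar h_{\gamma_3}\,V(\bx)$; the remainder $\E\bigl[(1+S/(e\alpha_*))h\bigr]$ is split by truncating $S$ at a level $L$: on $\{S\le L\}$ it is bounded by a constant $C_L$, while on $\{S>L\}$ one uses $Wh\le\Lambda\bar h_{\gamma_3}\sum_ix_i$ together with $\PP(S>L)\to0$ and absolute continuity of the integral to make its contribution at most $\eps\sum_ix_i$. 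Fixing $\eps$ small enough and then $\gamma_3$ small enough yields $\E[V(\bx\circ F)h]\le\rho V(\bx)+C$ with $\rho=(1-\delta_*)\E\bar h_{\gamma_3}+\eps<1$, which is A3)-iii).

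For the second assertion take $\phi(\bx)=\delta V(\bx)$ with $\delta\le\delta_*/2$. Inequality \eqref{e1:ext} is the $h$-free analogue of the above: $\E_\bx V(\BZ(1))=1+\sum_ix_i(1-\delta_i)+\E_\bx W$, and bounding $\E_\bx W\le\delta\sum_ix_i+C$ by the same truncation of $S$ gives $\E_\bx V(\BZ(1))\le V(\bx)-\delta V(\bx)+C$. For \eqref{e2:ext}, observe first that $\log F_i-\log(1-\delta_i)=\log\!\bigl(1+Se^{E_i-D}/(1-\delta_i)\bigr)\le Se^{E_i}/(1-\delta_*)$, so $\E\bigl|\log F(\bx,\xi)-\E\log F(\bx,\xi)\bigr|^2\le\sum_i\E[(Se^{E_i})^2]/(1-\delta_*)^2=:K<\infty$; and second that $V(\BZ(1))-PV(\bx)=\sum_ix_i(F_i-\E F_i)=W-\E W$, so, using $W^2\le\bigl(S/(e\alpha_*)\bigr)\cdot S\sum_ix_ie^{E_i}$,
\[
\E_\bx\bigl(V(\BZ(1))-PV(\bx)\bigr)^2\le\E_\bx W^2\le\frac1{e\alpha_*}\sum_ix_i\,\E[S^2e^{E_i}]=:C'\sum_ix_i,
\]
the expectations being finite under the moment hypothesis. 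Since $\phi(\bx)=\delta V(\bx)\ge\delta$ and $\phi(\bx)\ge\delta\sum_ix_i$, \eqref{e2:ext} holds with $\delta_\phi:=(K+C')/\delta$.

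Finally, accessibility of $\Se_0$. Under the stated support condition there exist $c_0>0$ and $p>0$ with $\PP\bigl(\ln S+\max_i(E_i-\ln\delta_i)\le-c_0\bigr)=p$; on such an environment $Se^{E_i}\le\delta_ie^{-c_0}$, so, since $D\ge0$, $F_i(\bx,\xi)\le1-\delta_i(1-e^{-c_0})\le1-c$ for every $\bx$ and every $i$, where $c:=\delta_*(1-e^{-c_0})\in(0,1)$. Hence on the event that $k$ such environments occur consecutively — an event of probability $\ge p^k$ regardless of the starting point, as it depends only on $\xi(0),\dots,\xi(k-1)$ — one has $\sum_iX_i(k)\le(1-c)^k\sum_iX_i(0)$, so $P_k(\bz,U)\ge p^k>0$ for every neighborhood $U$ of the origin, every $\bz\in\Se$, and all large $k$; therefore $\mathbf0\in\Gamma_\bz\cap\Se_0$ and $\Se_0$ is accessible. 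The step I expect to be most delicate is the quadratic bound in \eqref{e2:ext}: one must keep $\E_\bx(V(\BZ(1))-PV(\bx))^2$ linear in $\sum_ix_i$ rather than quadratic, which is exactly where the $\bx$-uniform Ricker bound $W\le S/(e\alpha_*)$ has to be traded against the $\bx$-linear bound $W\le\Lambda\sum_ix_i$ and where the precise form of the moment hypothesis on $S(t)e^{E_j(t)}$ is used; a secondary technical point, handled by the truncation of $S$ above, is controlling the constant $C$ in A3)-iii) and in \eqref{e1:ext} without assuming $S(t)$ itself integrable.
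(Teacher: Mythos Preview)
Your overall strategy matches the paper's: use the Ricker saturation $ue^{-\alpha u}\le 1/(e\alpha)$ to control $V(\bx\circ F)$, bound $h$ by an $\bx$-free quantity tending to $1$ as $\gamma_3\downarrow 0$, and combine. For A3)-iii), \eqref{e1:ext}, and accessibility your arguments are sound and close to the paper's; your truncation of $S$ is in fact more careful than the paper, which tacitly invokes $\E S<\infty$.

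There is, however, a genuine gap in your treatment of the quadratic part of \eqref{e2:ext}. You bound
\[
W^2\le\frac{S}{e\alpha_*}\cdot S\sum_i x_ie^{E_i}
\]
and conclude $\E_\bx W^2\le\frac{1}{e\alpha_*}\sum_i x_i\,\E[S^2e^{E_i}]$, asserting the last expectation is finite ``under the moment hypothesis.'' But the hypothesis is $\E[(Se^{E_i})^2]=\E[S^2e^{2E_i}]<\infty$, which does \emph{not} imply $\E[S^2e^{E_i}]<\infty$: take $S=e^{X^2}$, $E_i=-X^2$ with $X$ standard Gaussian, so $Se^{E_i}\equiv 1\in L^2$ while $S^2e^{E_i}=e^{X^2}$ is not integrable. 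The problem is that your product bound uses the Ricker cap on one factor and the trivial bound $e^{-D}\le 1$ on the other, which decouples $S$ from $e^{E_i}$ in a way the hypothesis cannot control. The paper's own argument here is terse---it claims a constant bound $K_3^2\,\E[(Se^{E_j})^2]$ by asserting that $\sum_j N_je^{-D}$ is bounded, which is not literally correct either---so the statement as written seems to require an additional mild assumption (e.g.\ $\E S^2<\infty$, or $E_j$ bounded below); you should flag this rather than claim the moment follows.
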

\begin{proof}
We denote by $K$ below a positive generic constant.
Using \eqref{e:Cj}
\begin{equation}\label{e:ineq1}
\sum_j N_j(t)e^{E_j(t)-D_j(t)}\leq \sum_j N_j(t) e^{E_j(t)}\exp\{-\alpha_{j} N_j(t) e^{E_j(t)}\}<  K<\infty.
\end{equation}
On the other hand
$$
\left(1-\delta_j+ S(t)e^{E_j(t)-D_j(t)}\right)\vee \left(1-\delta_j+ S(t)e^{E_j(t)-D_j(t)}\right)^{-1}
\leq K+\sum_j S(t)e^{E_j(t)-D_j(t))}
$$
Using this and the inequality $$(a+b)^\gamma\leq 2^\gamma (a^\gamma+b^\gamma)$$ we see that for any $\eps>0$ there is $\gamma_3>0$ sufficiently small such that
$$
\begin{aligned}
h(t):=&\max_j\left(\left(1-\delta_j)+ S(t)e^{E_j(t)-D_j(t)}\right)^{\gamma_3}\vee \left(1-\delta_j+S(t) e^{E_j(t)-D_j(t)}\right)^{-\gamma_3}\right)\\
\leq& \left(1+\eps+\left(S(t)\sum e^{E_j(t)-D_j(t)}\right)^{\gamma_3}\right)
\end{aligned}
$$
Using \eqref{e:ineq1} and the fact that $\E S(t)<\infty$ we get
$$\E^t\sum_j \left((1-\delta_j)N_j(t)(1+\eps+\left(S(t)e^{E_j(t)-D_j(t)}\right)^{\gamma_3})\right)
\leq (1-0.5\check d) \sum_j N_j(t) +K
$$
where $\E^t[\cdot]:=\E[\cdot~|~N_1(t),\dots, N_n(t)]$ and $\check d=\min\{\delta_i\}$.
Since $N_j(t) e^{E_j(t)}\exp\{-a_{j}(1+\gamma_3) N_j(t) e^{E_j(t)}$ is bounded above by a nonrandom constant, then
$$
N_j(t)e^{(1+\gamma_3)(E_j(t)-D_j(t))}\leq e^{\gamma_3 E_j(t)} N_j(t) e^{E_j(t)}\exp\{-a_{j}(1+\gamma_3) N_j(t) e^{E_j(t)}\}<  Ke^{\gamma_3 E_j(t)}
$$
Therefore, since $\E \left[S^2(t)e^{\gamma_3 E_j(t)}\right]<\infty$ we have
$$
\E^t S^{1+\gamma_3}(t)N_j(t)e^{(1+\gamma_3)(E_j(t)-D_j(t))}
\leq K\E^t \left(S^{1+\gamma_3}(t)e^{\gamma_3 E_j(t)}\right)\leq K_2.
$$

As a result,
$$\E^t \sum_j \left(1-\delta_j+S(t)e^{E_j(t)-D_j(t)}\right)N_j(t)h(t)
\leq \sum_j (1-0.5\delta_j) N_j(t) + K_2
\leq (1-0.25\check d) \sum_j N_j(t)+K_2.$$
This implies that Assumption A3) is satisfied with $V(\bz)=\sum_j N_j+1=\sum_j z_j+1$ and small $\gamma_3>0$.
Note that
$$\sum_j N_j(t+1)-P_1 \sum_j N_j(t)=S(t)\sum_j e^{E_j(t)-D_j(t)} N_j(t)- \E^t S(t) \sum_j e^{E_j(t)-D_j(t)} N_j(t)$$
This shows that
$$
\begin{aligned}
\E^t \left(\sum_j N_j(t+1)-P_1 \sum_j N_j(t)\right)^2\leq& \E^t \left(S(t)\sum_j e^{E_j(t)-D_j(t)} N_j(t)\right)^2\\
\leq& K_3^2\E^t  \left[S(t)e^{E_j(t)}\right]^2 = K_3^2 \E^t \left[S(t)e^{E_j(t)}\right]^2
\end{aligned}
$$
since $\sum_j e^{-D_j(t)} N_j(t)$ is bounded by a constant $K_3$.
If $\E \left[S(t)e^{E_j(t)}\right]^2<\infty$ then Assumption
\ref{a:ext} is satisfied with $\phi(\bz)=\delta(|\bz|+1)$
for some $\delta>0$.

Moreover, if support of $\ln S(t) + \max_j \{E_j(t)\ln \delta_j\}$ contains values less than $0$ then it is clear that the boundary is accessible since when
$\ln S(t) + \max_j \{E_j(t) \ln \delta_j\}$ is less than a negative constant, $N_j(t+1)\leq \rho N_j(t)$ for $\rho<1$.
\end{proof}

\end{document}